\keywords{domain, powerdomain, nondeterminism, probability, functional representation, predicate transformer, d-cone, domain-theoretic functional analysis}
\newcounter{customInlineItem}
\newenvironment{custominlinelist}{%
\setcounter{customInlineItem}{0}
\def\item{\stepcounter{customInlineItem}(\arabic{customInlineItem})~}%
}{}
\renewcommand{\epsilon}{\varepsilon}
\renewcommand{\phi}{\varphi}
\newcommand{\cO}{\mathcal O}
\newcommand{\cS}{\mathcal S}
\newcommand{\cP}{\mathcal P}
\newcommand{\cH}{\mathcal H}
\newcommand{\cL}{\mathcal L}
\newcommand{\cV}{\mathcal V}
\newcommand{\type}{\!:\!}
\newcommand{\R}{\mathbb{R}}
\newcommand{\I}{\mathbb{I}}
\newcommand{\dbdownarrow}{\rlap{\raise.25ex\hbox{$\shortdownarrow$}}\raise-.25ex\hbox{$\shortdownarrow$}}
\newcommand{\dda}{\rlap{\raise.25ex\hbox{$\shortdownarrow$}}\raise-.25ex\hbox{$\shortdownarrow$}}
\newcommand{\dua}{\rlap{\raise-.25ex\hbox{$\shortuparrow$}}\raise.25ex\hbox{$\shortuparrow$}}
\newcommand{\dbuparrow}{\rlap{\raise-.25ex\hbox{$\shortuparrow$}}\raise.25ex\hbox{$\shortuparrow$}}
\newcommand{\dsup}{\mathop{\bigvee{}^{^{\,\makebox[0pt]{$\scriptstyle\uparrow$}}}}}
\newcommand{\myconv}{\mathop{\rm conv}}
\newcommand{\id}{\mathop{\rm id}}
\newcommand{\cchull}[1]{\overline{{\rm conv} #1}}
\newcommand{\diamondplus}{\mathop{\Diamond\mkern-13.9mu\raise.22ex\hbox{$+$}}}
\newcommand{\diamonddot}{\mathop{\Diamond\mkern-9.5mu\raise.2ex\hbox{$\cdot$}}\,}
\renewcommand{\diamonddot}{\cdot_{_{_{\mbox{}\!\!\!\!S}}}\kern-.1em}
\renewcommand{\diamondplus}{+_{_{_{\mbox{}\!\!\!S}}}\kern-.2em}
\renewcommand{\odot}{\cdot_{_{_{\mbox{}\!\!\!\!P}}}\kern-.2em}
\renewcommand{\oplus}{+_{_{_{\mbox{}\!\!\!P}}}\kern-.3em}
\renewcommand{\boxdot}{\cdot_{_{_{\mbox{}\!\!\!\!H}}}\kern-.2em}
\renewcommand{\boxplus}{+_{_{_{\mbox{}\!\!\!H}}}\kern-.3em}
\DeclareMathOperator{\ua}{\uparrow\!}
\DeclareMathOperator{\da}{\downarrow\!}
\newcommand{\cDw}{\mathcal{D}_{\omega}}
\newcommand{\cMwp}{\mathcal{M}^+_{\omega}}
\newcommand{\cPwp}{\mathcal{P}^+_{\omega}}
\newcommand{\oRp}{\overline{\mathbb{R}}_{\mbox{\tiny +}}}
\newcommand{\RP}{\mathbb{R}_{\mbox{\tiny +}}}
\newcommand{\Rpp}{\mathbb{R}_{\mbox{\tiny $>\!\!0$}}}
\newcommand{\Rp}{\RP}
\newcommand{\eqdef}{=_{\mbox{\tiny $\mathit{def}$}}}
\newcommand{\un}[1]{\underline{#1}}
\newcommand{\ov}[1]{\overline{#1}}
\newcommand{\half}{1/2}
\newcommand{\norm}[1]{|\!| #1 |\!|}
\newcommand{\display}{:}
\newcommand{\iso}{\cong}
\newcommand{\KS}{Kegelspitze}
\newcommand{\KSs}{Kegelspitzen}
\newcommand{\Cone}{{\sf{Cone}}}
\newcommand{\dCone}{{\sf{d\mbox{-}Cone}}}
\newcommand{\Lin}{{\mathcal L}_{\rm lin}}
\newcommand{\SubL}{{\mathcal{L}}_{\rm sub}}
\newcommand{\SuperL}{{\mathcal{L}}_{\rm sup}}
\newcommand{\llem}{ \ll_{\mathrm{EM}}}
\newcommand{\leqem}{ \leq_{\mathrm{EM}}}
\newcommand{\sub}{L}
\newcommand{\sideP}{\sideset{}{_P}}
\newcommand{\nonor}{\cup}
\newcommand{\nonand}{\cap}
\newcommand{\bignonor}{\bigcup}
\newcommand{\ev}{\mathrm{ev}}
\newcommand{\PT}{\mathrm{PT}}
\newcommand{\EXT}{\mathrm{EXT}}
\newcommand{\RES}{\mathrm{RES}}
\newcommand{\CCSA}{\rm{CCSA}}
\newcommand{\mycut}[1]{}
\newcommand{\makered}[1]{#1}
\newcommand{\makeblue}[1]{#1}
 \theoremstyle{plain}\newtheorem{standardconstr}[thm]{Standard Construction}
\begin{document}

\title{Mixed powerdomains for probability and nondeterminism}

\author[K.~Keimel]{Klaus Keimel\rsuper a}
\address{{\lsuper{a}}Fachbereich Mathematik, Technische Universit\"{a}t Darmstadt} 

\author[G.~D.~Plotkin]{Gordon D. Plotkin\rsuper b}
\address{{\lsuper{b}}Laboratory for Foundations of Computer Science, School of Informatics, University of Edinburgh}




\begin{abstract}
We consider mixed powerdomains combining ordinary nondeterminism and probabilistic nondeterminism. We characterise them as free algebras for suitable 
(in)equation-al theories; we establish functional representation theorems; and we show equivalencies between state transformers and appropriately healthy predicate 
transformers. The extended nonnegative reals serve as `truth-values'. As usual with powerdomains, everything comes in three flavours: lower, upper, and order-convex. 
The powerdomains are suitable convex sets of subprobability valuations, corresponding to resolving nondeterministic choice before probabilistic choice. Algebraically this 
corresponds to the probabilistic choice operator distributing over the nondeterministic choice operator. (An alternative approach to combining the two forms of 
nondeterminism would be to resolve probabilistic choice first, arriving at a  domain-theoretic version of random sets. However, as we also show, the algebraic approach 
then runs into difficulties.)

Rather than working directly with valuations, we take a domain-theoretic functional-analytic approach, employing domain-theoretic abstract convex sets called 
Kegelspitzen; these are equivalent to the abstract probabilistic algebras of Graham and Jones, but are more convenient to work with.  So we define power Kegelspitzen, 
and consider free algebras, functional representations, and predicate transformers. To do so we make use of previous work on domain-theoretic  cones (d-cones), with the 
bridge between the two of them being  provided by a  free d-cone construction on  Kegelspitzen. 
\end{abstract}

\maketitle



\section{Introduction}

In this paper we investigate mixed powerdomains combining ordinary and  probabilistic nondeterminism. These can be defined generally as free algebras over dcpos (directed complete posets). The algebraic laws we consider in this regard are  for the binary choice operators  $\nonor$  and $+_r$ of ordinary and probabilistic nondeterminism as well as   a constant for nontermination (where $x +_r y$ expresses a choice of $x$ with probability $r$ versus one of $y$ with probability $1-r$) together with an axiom
  to the effect that probabilistic choice distributes over ordinary nondeterministic choice, viz.:
 \[x +_r (y \nonor z) = (x +_r y) \nonor (x +_r z)\]
 (see below for axioms for  ordinary and probabilistic nondeterminism, or, for example,~\cite{HPP}).
We characterise the free algebras as suitable convex sets of subprobability valuations in the case of domains (continuous dcpos). We do this for all three domain-theoretic notions of ordinary nondeterminism, viz.\ lower (or Hoare), upper (or Smyth), and convex (or Plotkin), though in the last case we need an additional assumption, that the domains are coherent.

We further give suitable notions of predicates and predicate transformers, obtaining a dual correspondence between predicate transformers and (mixed) nondeterministic functions (i.e., Kleisli category morphisms). The relevant notions of predicate use $\oRp$, the domain of the  non-negative reals \makered{extended with a point at infinity}, or its convex powerdomain, as `truth-values.' 
Our results on predicate transformers are obtained via functional characterisations of the mixed powerdomains, and are again  obtained for all three notions of ordinary nondeterminism. As before these results obtain generally for domains except in the convex case where, additionally, coherence is again required.

In previous joint work with Regina Tix~\cite{TKP09,KP} 
  based on Tix's Ph.D.\ thesis~\cite{TIX}, we carried out a similar
programme for ordinary and so-called `extended' probabilistic
computation where valuations  take  values in $\oRp$ rather than
$[0,1]$. The method we used was a kind of domain-theoretic functional
analysis where, instead of working directly with domains of
valuations, one works with an abstract domain-theoretic notion of
cone, called a d-cone. We investigated powercones, which embody
notions of non-determinism at the cone-theoretic level, and then
applied our results to cones of valuations. The powercones were
  shown to be the free cones with a semilattice operation over which the cone operations distribute, with a further requirement that the semilattice be a join-semilattice in the lower case, and a meet-semilattice in the upper case;  it immediately follows, although not remarked in~\cite{TKP09}, that the  valuation powercones provide corresponding free constructions on  domains.
  Predicate transformers and functional representations were also first investigated at the cone-theoretic level, with the results being again applied to cones of valuations.

We proceed analogously here, but replacing cones with a suitable domain-theoretic notion of  abstract convex space, termed a \KS
\footnote{The German word \KS\  means `tip of a cone', suggesting a convex set obtained by cutting off the top of a cone.}.
By embedding \KSs\ in d-cones we are able to make use of our previous results, thereby avoiding a good deal of work. This approach works particularly well when characterising the mixed powerdomains, but less well when considering functional representations. We do obtain strong enough abstract results for their intended application. \makered{However the general results require assumptions (taken from previous work) on the cones in which the \KSs\ are embedded, rather than  natural assumptions directly concerning the \KSs\ themselves.  One wonders to what extent one can succeed with more natural assumptions, and, indeed, to what extent one can proceed without making use of d-cones. }

Several other authors have previously considered the combination of ordinary and
probabilistic nondeterminism, making use of sets of distributions. In a domain-theoretic context,
the pioneers  were  the Oxford Programming Research
Group~\cite{MMS,MM01a,MM01b,MM05}. 
 That work was restricted to the
case of countable discrete domains, as was that of Ying~\cite{Yin03}. Later, 
Mislove~\cite{Mis00} defined
mixed powerdomains 
for all three notions of ordinary nondeterminism  over
  continuous domains required to be coherent in the convex
  case. Mislove also considered nondeterministic powerdomains over
  `abstract probabilistic algebras'. With the addition of a bottom element, these are the same as the identically named algebras of Graham and Jones~\cite{graham,jones} and equivalent to our \KSs; however \KSs\ seem more in the spirit of domain theory, as we discuss below.


Goubault-Larrecq \cite{GL07, GL08, GL10, GL12} worked at a
topological level, considering all three notions of ordinary
nondeterminism combined with various classes of valuations:  all
valuations, subprobability valuations, and
probability valuations, but without explicit consideration of the
algebraic structures involved. He established functional representation
results under quite weak assumptions on the underlying spaces.
When
specialised to domains and subprobability valuations, his results
correspond to our Corollaries \ref{th:lowerdomain}, \ref{th:upperdomain}, and
\ref{th:convexdomain}. He worked directly with the valuation spaces
rather than, as we do, making use of abstract structures such as cones
and barycentric algebras.
 In~\cite{Beau1,Beau2}  Beaulieu worked algebraically; his results
 include free constructions of algebras satisfying the above laws over
 sets and partial orders, but not domains.

 There has been some discussion of other ways to combine
  nondeterminism with probability. Categorical distributive laws
     provide a standard means  
  of showing the composition of two monads form a third 
  (see~\cite{Mac}). However there is no  such law enabling one
  to compose the monad of ordinary nondeterminism with that of
  probabilistic nondeterminism --- see 
 the Appendix of \cite{VarWin}. For this reason, Varacca and
  Winskel~\cite{Var02,Var03,VarWin} reject, or weaken, 
  one of the axioms of (extended)
  probabilistic nondeterminism, viz.\ the distributivity law $(r + s)x
  = rx + sx$, 
  and consider certain `indexed valuations' in place of the more usual ones. 
  As shown by  Varacca in~\cite[Chapter 4]{Var03} this approach applies to domains, where indexed valuations come in three flavours: Hoare, Smyth, and Plotkin. Categorical distributivity laws are obtained in several cases, and a freeness result (with the above equational distributive law) is given in the case of the combination of Hoare indexed valuations and the Hoare powerdomain.

  In~\cite{GL07a,GL10}, and see too~\cite{GK11}, 
  Goubault-Larrecq also combined the two types
  of monads in a different order, considering the probabilistic
  powerdomains over the nondeterministic powerdomains; however no
  algebraic aspects were discussed. 
 This  approach  is in the spirit of
  what is known under the name of \emph{random sets} in
  probability theory. The approach can be thought of as resolving
  probabilistic choice before nondeterministic choice; in contrast,
  approaches employing sets of distributions can rather be thought of
  as resolving nondeterministic  choice first.

  Algebraically, the above distributive law corresponds to resolving
  nondeterministic choice first. The other distributive law 
 \[x \nonor (y +_r z) = (x \nonor y) +_r (x \nonor z)\]
       corresponds to first resolving probabilistic choice. As pointed out in~\cite{Mis03} this law leads to some odd consequences when combined with the other laws, particularly the idempotence of nondeterministic choice. In Appendix~\ref{appbad} we show that the equational theory consisting of the laws for nondeterministic and probabilistic choice together with this distributive law is equivalent to that of join-distributive bisemilattices~\cite{Rom80}, i.e., of  algebras with two semilattice operations called join and meet, with join distributing over meet. It therefore has no quantitative content.

An algebraic treatment of this combination of the two forms of    nondeterminism would therefore have to weaken some law. One natural possibility is to drop the idempotence of nondeterministic choice (
this was done in a process   calculus-oriented context in~\cite{YL92, DG08}). 
There is a natural `finite random sets' functor supporting models of this weaker theory, namely $\cDw\circ\cPwp$ where $\cDw$ is the finite probability distributions monad, and $\cPwp$ is the finite non-empty sets monad. The barycentric structure is evident, and the nondeterministic choice operations $\cup_X\colon \cDw\cPwp(X)^2 \to \cDw\cPwp(X)$ are given by\display
       \[(\sum_{i \,=\, 1,\ldots,m} \alpha_i x_i) \,\cup_X\, (\sum_{i \,=\, 1,\dots,n} \beta_j y_j) = \sum_{{\mbox{$\substack{\scriptsize i \,=\, 1,\dots,m\\ j \,=\, 1,\dots,n}$}}}\alpha_i\beta_j (x_i \cup  y_j )\]
  Unsurprisingly, these finite random set algebras do not provide the free algebras for the weaker theory. More surprisingly, perhaps, they do not provide the free algebras for \emph{any} equational theory over the relevant signature; this too is shown in Appendix~\ref{appbad}. As another point along these lines, we recall a result of Varacca~\cite[Proposition 3.1.3]{Var03} that there is no distributive law of $\cPwp$ over $\cDw$. Overall, it seems hard to see how there can be any satisfactory algebraic treatment of the combination of probability and nondeterminism in which probabilistic choice is resolved first.

In Section~\ref{Keg} we develop the theory of \KSs, beginning with a
notion of (ordered) barycentric algebra. This notion is based on the
equational theory of the barycentric operations $rx + (1-r)y$ (for real
numbers $r$ between $0$ and $1$) on convex sets in vector spaces. There
is an extensive relevant literature, which we survey. We need this
notion augmented with a compatible partial order and, in addition,
with a distinguished element $0$. Finally, specialising to directed
complete partial orders and Scott-continuous operations, we introduce the
central notion of \KSs, axiomatising subprobabilistic
powerdomains. In order to relate these structures to our previous
work on cones, we prove embedding theorems at the various levels, and
then establish the preservation of crucial properties.
In the case of \KSs, the embedding   theorem is Theorem~\ref{prop:KSembed} and the property-preserving results are those of 
Propositions~\ref{prop:continuous},~\ref{prop:KSembedding1} and ~\ref{prop:KSembedding2}, and Corollary~\ref{cor:lawsoncompact}; the properties preserved include continuity and coherence.

In Section~\ref{Powerkeg} we define, and give universal algebraic characterisations of, the various mixed powerdomains, first doing the same for suitable notions of power  \KS. The universal characterisations of the free power \KSs\ are given in Theorems~\ref{Huni},~\ref{Suni}, and~\ref{Puni} (one for each notion of nondeterminism). The universal characterisations of the mixed powerdomains then follow, and are given in Corollaries~\ref{HVuni},~\ref{SVuni}, and~\ref{PVuni}; the first two hold for any domain, and the third holds for any coherent domain.

In Section~\ref{funcrep}, we consider functional
representations. The three functional representations of power \KSs\
are given by Theorems~\ref{th:lowerKS},~\ref{th:upperKS},
and~\ref{th:convexKS}; they largely follow straightforwardly from the
corresponding results for cones in~\cite{KP}. \makered{The
  corresponding three functional representations of 
  the mixed powerdomains over domains
are given by Corollaries~\ref{th:lowerdomain}, ~\ref{th:upperdomain},
 and~\ref{th:convexdomain} and are derived from the corresponding results for \KSs.}

In
Section~\ref{pred-tran} we consider predicate transformers \makered{for domains}, showing
the equivalence of `state transformers', i.e., Kleisli maps, and
`healthy' predicate transformers, viz.\  maps on predicates obeying
suitable conditions. The conditions and equivalences follow from the functional representation theorems \makered{for domains}, and
are given by Corollaries~\ref{th:lowerdomainPred}, ~\ref{th:upperdomainPred},
and~\ref{th:convexdomainPred}. 
\makered{There are related general results for \KSs; these follow from the functional representation theorems for \KSs, and are discussed briefly. 
}


%

\makered{The results given in Sections~\ref{funcrep} and~\ref{pred-tran} all make use of $\oRp$, the d-cone of the non-negative reals augmented by a point at infinity; this is unsurprising given their derivation from the corresponding results for cones. However, in the context of \KSs, it is natural to further seek results replacing that cone by $\I$, the unit interval \KS. This is done for the mixed powerdomains in Section~\ref{unit-interval}, where both functional and predicate transformer results are obtained in all three cases. 
The functional representation results are Corollaries~\ref{th:lowerdomain-unit},~\ref{th:upperdomain-unit}, and~\ref{th:convexdomain-unit}; the predicate transformer results are Corollaries~\ref{th:lowerdomainPred-unit},~\ref{th:upperdomainPred-unit}, and~\ref{th:convexdomainPred-unit}.
With additional assumptions there are  related general results for \KSs, which we discuss briefly. }



{\bf Terminology}. Throughout the paper, we assume familiarity with
 standard terminology and notation  of 
domain theory as covered in, say,~\cite{GHK}. In particular, \makered{for
partially ordered set we shortly say poset;} 
the abbreviation \emph{dcpo} stands for `directed complete partially ordered
set'; \makered{`bounded directed complete' means that every upper-bounded
directed set has a least upper bound;} the way below relation in a
dcpo $C$ is written as $\ll_C$, or 
simply $\ll$; and a \emph{domain} is a continuous dcpo, that is, a
dcpo in which, 
for every element $a$, the elements way-below $a$ form a directed
set with supremum $a$. For any subset $X$ of a poset $C$ we write
$\ua_C X$, or simply $\ua X$, for the set of elements  
above some element of $X$ (and
$\da X$ is understood analogously); similarly, we write $\dua_C X$ for
the set of 
elements way-above some element of $X$.

\makered{A \emph{sub-dcpo} is a subset $C$ of a dcpo $D$ that is closed for
suprema of directed sets, that is, 
the supremum of any directed subset of $C$ belongs to
$C$.  Sub-dcpos should not be mixed up with Scott-closed sets which 
are sub-dcpos and, in addition, lower sets. The intersection of an
arbitrary family of sub-dcpos is a 
sub-dcpo; thus, for any subset $P$ of a dcpo there is a least sub-dcpo
$P^d$ containing $P$. The elements of $P^d$ can also be obtained by
closing under directed suprema repeatedly. We will say that $P$ is
\emph{dense} in $C$, if $P^d= C$. Again, this notion of density is
different from {dense for the Scott topology}.}

We write $\Rp$ for the set of nonnegative real numbers with their usual
order, and $\oRp \eqdef \Rp\cup\{+\infty\}$ for the nonnegative real numbers
augmented by a top element $+\infty$. 
Finally, $\makered{\I \eqdef} [0,1]$ is the closed and $]0,1[$ the
open unit interval.

\tableofcontents

   \section{Kegelspitzen} \label{Keg}

In this section we introduce the notion of a \KS, which provides the foundation for 
our later developments. We begin with its algebraic
structure. This is that of an abstract convex set, 
which we call a barycentric algebra. We then enrich the
structure, first with a compatible partial order, and then with a directed
complete partial order. In order to make use of our previous work \cite{KP} on d-cones,
we prove embedding theorems of barycentric algebras in abstract cones  at each stage. We conclude by recalling the properties of the subprobabilistic powerdomain and showing how it fits within our framework.

\subsection{\makered{Ordered cones and} ordered barycentric algebras}\label{sec:ba}

For our work there are two basic notions abstracted from substructures in real vector spaces, an abstract notion of a cone and an abstract notion of a convex set.

In a real vector space $V$, a subset $C$ is understood to be a
cone, if $x+y\in C$ and $r\cdot x\in C$ for all $x,y\in C$ and 
every nonnegative real number $r$. Generalising, we obtain an abstract
notion of a cone\display

\begin{defi}\label{def:cone}
An \emph{(abstract) cone} is a set $C$  together with a commutative associative addition
  $(x,y)\mapsto x+y\colon 
C\times C\to C$ 
that  admits a
neutral element $0$, and a scalar multiplication
$x\mapsto r\cdot x\colon C\to C$ by real numbers $r>0$
satisfying the usual laws for scalar multiplication in vector spaces,
that is,  the following  equational laws hold for 
all $x,y,z\in C$ and all real numbers $r>0,s >0$\display
\[\begin{array}{rcllrcl}
x+(y+z)&=&(x+y)+z&\ \ &(rs)\cdot x&=&r\cdot (s\cdot x)\\
x+y&=&y+x&\ \ &(r+s)\cdot x&=&r\cdot x+ s\cdot x \\
x+0&=&x&\ \ & r\cdot (x+y)&=&r\cdot x+ r\cdot y\\
&&&&1\cdot x&=&x
\end{array}\]
Preserving all the above laws we may extend (and we will tacitly
always  do so) the scalar multiplication on a cone to real numbers
$r\geq 0$ by defining  
\[0\cdot x \eqdef 0 \mbox{ for all } x\in C\]
A map $f\colon C\to D$ between cones is said to be\display 

\begin{tabular}{lll}
\emph{homogeneous} &if $f(r\cdot x)=r\cdot f(x)$& for all $r\in \Rp$
and all $x\in C$,\\  
\emph{additive} &if $f(x + y) = f(x) + f(y)$& for all $x,y\in C$,\\
\emph{linear} &if  $f$ is homogeneous and additive.&
\end{tabular}
\end{defi}

In a cone all the equational laws for addition and scalar multiplication
that hold in vector spaces also hold, 
except that we restrict scalar multiplication to nonnegative
real numbers and elements $x$ need not have negatives $-x$. Thus,
 we may calculate in cones just as we do in vector spaces, except that
we have to avoid negatives. \makered{As usual, we generally write scalar multiplication $r\!\cdot \!x$ as $rx$.
}

The notion of a barycentric algebra captures the equational properties
of convex sets. A subset $A$ of a real vector space is
\emph{convex} if
\[ra +(1-r)b \in A \mbox{ for all } a,b \in A, r\in[0,1]
\eqno{\rm (Conv)}\] 
We may use the same property for defining convexity of subsets in an
(abstract) cone. 
On every convex set $A$ we may define for every real number $r\in [0,1]$ a
binary operation $+_r$, the convex combination $(a,b)\mapsto
a+_rb\eqdef r\cdot a +(1-r)\cdot b$. Straightforward calculations show
that these  operations satisfy the following equational laws\display
\begin{align} 
a+_1b &= a \label{B1} \tag{B1}\\
a+_ra& = a \label{B2} \tag{B2} \\
a+_rb& = b+_{1-r}a\label{SC} \tag{SC}\\  
(a+_pb)+_r c &= a+_{pr}(b+_{\frac{r-pr}{1-pr}}c)\ \ \  \mbox{ provided }
r<1, p<1\label{SA} \tag{SA}
\end{align}
SC stands for \emph{skew commutativity} and SA for \emph{skew associativity}.

\begin{defi}\label{def:ba}
An \emph{abstract convex set} or \emph{barycentric algebra} is a
  set $A$ endowed with a binary 
  operation $a+_r b$ for every real number $r$ in the unit interval $[0,1]$
  such that the above equational laws 
(\ref{B1}), (\ref{B2}), (\ref{SC}), (\ref{SA}) hold. 
A map $f\colon A\to B$ between barycentric algebras is
\emph{affine} if $f(a+_r b)=f(a)+_r f(b)$
for all $a,b\in A$ and all $r\in [0,1]$.
\end{defi}

Barycentric algebras $A$ are \emph{entropic} (or  \emph{commutative})
in the sense that all the 
operations $+_r$ are affine maps from $A\times A\to A$, that is, for
all $r$ and $s$ in the unit interval we have the entropic identity 
\[(a +_r b)+_s(c+_r d) = (a+_s c)+_r(b+_s d). \eqno{\rm (E)}\]
If  $c=d$ this reduces to the distributivity law
\[(a+_r b)+_s c = (a+_s c) +_r (b+_s c). \eqno{\rm (D)} \]
The entropic identity (E) can be verified by direct calculation. However,
calculations in barycentric algebras are quite tedious  as the  skew
associativity law (\ref{SA}) is awkward to apply. A simple proof is indicated below after Lemma \ref{lem:freecone}.

 Since barycentric algebras (resp., cones) are equationally defined classes
of algebras, there are free barycentric algebras (resp., free cones), and
every barycentric algebra (resp., cone) is the image of a free one
 under an affine (resp., linear) map. 
 
 These free objects have a simple
 description. 
 For any set $I$ we write $\R^{(I)}$ for the direct sum  of $I$ copies
 of $\R$, that is the vector space of all $I$-tuples
 $x=(x_i)_{i\in I}$ of real numbers $x_i$ such that $x_i\neq 0$ for
 finitely many indices $i$.
 For $i\in I$ we write 
 $\delta(i)$ for $(\delta_{ij})_{j\in I}$,  the canonical basis vector, where
 $\delta_{ij}$ is the Kronecker symbol. Thus, $\delta$ maps $I$ to a
 basis of 
 $\R^{(I)}$. Analogously to the the fact that $\R^{(I)}$ is the free vector space  over the set $I$, we have:

 \begin{lem}\label{lem:freecone} \hfill
 \begin{enumerate}
 \item 
The positive cone $\Rp^{(I)}$ of all $x\in \R^{(I)}$ with nonnegative
entries is the free cone over $I$ with unit $\delta$.
 
 \item The simplex $P_I$ of all $x\in \Rp^{(I)}$ such that $\sum_i x_i =1$ is
 the free barycentric algebra over $I$ with unit $\delta$. \qed
 \end{enumerate} 
  \end{lem}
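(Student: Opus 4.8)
The plan is to exhibit, in each case, the unique structure-preserving extension of the canonical map $\delta$, so that both parts reduce to routine verifications of the defining equational laws together with a generation argument securing uniqueness.

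For part~(1), given a cone $C$ and a map $f\colon I\to C$, I would define $\hat f\colon \Rp^{(I)}\to C$ by $\hat f(x)\eqdef\sum_{i\in\supp x}x_i\cdot f(i)$, a finite sum, with the empty sum read as $0$ so that $\hat f(0)=0$. That $\hat f\circ\delta=f$ is immediate from $1\cdot f(i)=f(i)$. Homogeneity and additivity of $\hat f$ follow term by term from the cone laws $r\cdot(x+y)=r\cdot x+r\cdot y$, $(r+s)\cdot x=r\cdot x+s\cdot x$ and the commutativity and associativity of $+$ (extending the relevant finite sums over the union of supports and invoking $0\cdot x=0$ at the boundary cases $r=0$ or a vanishing coordinate). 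Uniqueness is the only non-mechanical point and it is short: every $x\in\Rp^{(I)}$ equals $\sum_{i\in\supp x}x_i\cdot\delta(i)$, so any linear $g$ with $g\circ\delta=f$ must coincide with $\hat f$.

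For part~(2), observe first that $P_I$ is a convex subset of $\R^{(I)}$, hence a barycentric algebra under $x+_r y\eqdef r\cdot x+(1-r)\cdot y$, and that each $\delta(i)$ lies in $P_I$. Given a barycentric algebra $A$ and $f\colon I\to A$, I would extend $f$ to $\hat f\colon P_I\to A$ by sending $x$ with finite support $\{i_1,\dots,i_n\}$ to the iterated convex combination $\hat f(x)\eqdef B\big(x_{i_1}f(i_1),\dots,x_{i_n}f(i_n)\big)$, where $B(r_1a_1)\eqdef a_1$ and, for $n\geq 2$ (so that $r_1<1$), $B(r_1a_1,\dots,r_na_n)\eqdef a_1+_{r_1}B\big(\tfrac{r_2}{1-r_1}a_2,\dots,\tfrac{r_n}{1-r_1}a_n\big)$. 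One then checks that $B$ is unchanged under a simultaneous permutation of its weighted arguments, that it is affine in each slot, and hence that $\hat f$ is a well-defined affine map with $\hat f\circ\delta=f$. Uniqueness follows because $P_I$ is generated as a barycentric algebra by $\{\delta(i):i\in I\}$: for $x$ with $|\supp x|\geq 2$ one has $x=\delta(i_1)+_{x_{i_1}}y$ with $y\in P_I$ of strictly smaller support, so by induction any affine map on $P_I$ is determined by its values on the $\delta(i)$.

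I expect the one genuine obstacle to be the permutation-invariance of $B$ in part~(2): the recursive definition privileges an enumeration of the support, and showing the result is independent of that choice requires a somewhat fiddly induction that tracks the skew-associativity coefficients of the form $\tfrac{r-pr}{1-pr}$ and uses (SC), (SA), (B1), (B2). This is precisely the content of the classical barycentric calculus reviewed below, so I would either carry out that induction or appeal to it; everything else is a direct application of the defining laws.
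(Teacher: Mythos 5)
Your proposal is correct; note that the paper itself gives no proof of this lemma — it is stated with a \qed as a standard fact (attributed in the Historical Notes to W.~Neumann's characterisation of free barycentric algebras), and the only thing the paper records is the explicit extension formula $\overline f(x)=\sum_i x_i f(i)$ for part (1), which is exactly your $\hat f$. So your argument supplies the omitted verification along the classical lines. For part (1) everything is indeed mechanical once one adopts the convention $0\cdot x=0$, which the paper fixes tacitly, and your uniqueness argument via $x=\sum_{i\in\supp x}x_i\,\delta(i)$ is the standard one. For part (2) your plan is the classical barycentric calculus (Stone, \v{S}wirszcz): the one point I would state a little more carefully is that affineness of $\hat f$ does not follow from slot-wise affineness of $B$ alone; what is needed is the mixing identity $B\bigl(r p_1+(1-r)q_1,\dots\bigr)=B(p)+_r B(q)$ over a common support, which in turn requires showing that adjoining zero-weight terms leaves $B$ unchanged (via (B1)/(SC)) so that $x$ and $y$ can be re-indexed over $\supp x\cup\supp y$. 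This is the same induction you already flag for permutation invariance, so it is a matter of phrasing rather than a gap; alternatively, one can appeal to the equivalent $(\mathrm{A1})$–$(\mathrm{A2})$ formulation of barycentric algebras recalled in the paper's Remark on convex spaces, where both well-definedness and affineness of the $n$-ary combinations are packaged into the axioms.
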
   

The first claim means that, for every map $f$ from $I$ to a cone $C$,
 there is a unique 
 linear map $\overline f\colon \Rp^{(I)}\to C$ such that $f =
 \overline f\circ \delta$, namely $\overline f(x) = \sum_ix_if(i)$.
 Similarly, the second claim tells us that, for every map $f$ from $I$ to
 a barycentric algebra  $A$, there is a unique 
 affine map $\overline f\colon P_I\to A$ such that $f =
 \overline f\circ \delta$. 

In an equationally definable class of algebras, an equational law holds  if and only if it holds in the free algebras. Thus, for the entropic identity (E) to hold in all barycentric algebras, it suffices to verify this property for the free barycentric algebras; and an easy calculation shows that (E) holds in any convex subset of a real vector space. The same calculation can be done after  embedding a barycentric algebra in a cone as a convex subset:

\begin{standardconstr}\label{sc1}  
{\rm Let $A$ be a barycentric algebra. On
\[ \Rpp\times A=  \{(r,a)\mid 0<r\in\R,\ a\in A\}\]
define addition and multiplication by scalars $t>0$ by\display 
\[(r,a)+(s,b)\eqdef (r+s,a+_{\frac{r}{r+s}}b),\quad\quad t(r,a)\eqdef
(tr,a)\]
Adjoin a new element $0$\display 
\[C_A\eqdef \{0\}\cup \{(r,a)\mid r>0,
a\in A\} =  \{0\}\cup (\Rpp\times A)\]
so that $0$ is a neutral element for addition
and  $t\cdot 0 = 0$. 
Simple calculations show that
$C_A$ thereby becomes a  
 cone, and the map $e=(a\mapsto (1,a))\colon A\to C_A$ is an embedding of a barycentric algebra in a cone (i.e., it is affine and injective).
%
%

We  identify
 elements $a\in A$ with the corresponding elements $e(a) = (1,a)\in C_A$, thus identifying
$A$ with the convex subset $\{1\}\times A$ of $C_A$. In this way $A$
becomes a \emph{base} 
of the cone $C_A$ in the sense that $A$ is convex and that every element
$x=(r,a)\neq 0$ in $C_A$ can be written in the form $x=ra$, where $r$
and $a$ are uniquely determined by $x$.}
\end{standardconstr}

We want to add a partial order to our algebraic structure\display

\begin{defi}
An \emph{ordered (abstract) cone} is a cone equipped with a partial order $\leq$
such that addition and scalar multiplication are monotone, that is,
\[
a\leq a'\implies a+b\leq a'+b \ , 
\ \ ra \leq ra'\ .
\]
A map $f\colon C\to D$ between ordered cones is said to be\display

\begin{tabular}{ll}
\emph{subadditive} &if $f(a+b)\leq f(a)+f(b)$ for all $a,b\in C$,\\ 
\emph{superadditive} &if $f(a + b) \geq f(a) + f(b)$ for all $a,b\in C$,\\
\emph{sublinear} &if  $f$ is homogeneous and subadditive,\\
\emph{superlinear}& if $f$ is homogeneous and superadditive.
\end{tabular}
\end{defi}
The linear maps are those that are sublinear and superlinear.

\begin{defi}\label{def:ocone} 
An \emph{ordered barycentric algebra} is a  
barycentric algebra with a partial order $\leq$ such that the
barycentric operations $+_r$ are monotone, that is,
\[a\leq a'  \implies \ a+_r b\leq a'+_rb\]  
for $0\leq r\leq 1$. A map $f\colon A\to B$ between ordered barycentric algebras
is said to be\display

\begin{tabular}{lll}
\emph{convex} &if $f(a+_r b)\leq f(a)+_r f(b)$& for  $0\leq r\leq 1$ and $a,b\in C$,\\ 
\emph{concave} &if $f(a +_r b) \geq f(a) +_r f(b)$& for  $0\leq r\leq 1$ and $a,b\in C$.
\end{tabular}
\end{defi}
 The affine maps are those that are both convex and concave. 

Every barycentric algebra can be understood to be ordered by the
discrete order $a\leq b$ iff $a=b$ and, if $A$ and $B$ both are
discretely ordered, the convex maps between them are the affine ones. 

 Convex subsets of 
ordered vector spaces and ordered cones are ordered barycentric algebras with respect to
the induced order. 


%
In complete analogy to Standard Construction \ref{sc1}, we can construct  embeddings of ordered barycentric algebras in ordered cones, by which we mean  monotone affine maps which are also order embeddings (a monotone map between partial orders is an order embedding if it reflects the partial order).

\begin{standardconstr}\label{sc1'} {\rm
For an ordered barycentric algebra $A$ we use the embedding of
 $A$ in the abstract cone $C_A$ as in 
  Standard Construction \ref{sc1} and we extend the order on $A$ by
 defining an order $\leq$ on $C_A$ by $0\leq 0$ and\display
\[(r,a)\leq (s,b) \iff r=s \mbox{ and } a\leq b \mbox{ in } A\]
With this order, $C_A$ becomes an ordered (abstract) cone and the
affine map $e=a\mapsto (1,a)$ 
from $A$ to  $C_A$ is an affine order embedding.
%
}
\end{standardconstr}

The following surprising lemma 
will be used in later sections\display

\begin{lem}\label{neumann}
Let $a,b,c$ be elements of an ordered barycentric algebra $A$. If
$a+_p c\leq b+_p c$ holds for some $p$ with $0 < p < 1$, then this holds
for all such $p$. 
\end{lem}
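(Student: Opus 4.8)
The plan is to introduce the set $S=\{q\in\,]0,1[\;:\;a+_q c\le b+_q c\}$. The hypothesis says that some $p\in S$, and the goal is to prove $S=\,]0,1[$. I would first record that $S$ is a \emph{down-set}: if $q\in S$ and $0<q'<q$ (so $q'/q\in\,]0,1[$), then applying the monotone operation $(-)+_{q'/q}c$ to $a+_q c\le b+_q c$ and using the barycentric identity $(a+_q c)+_{q'/q}c=a+_{q'}c$ yields $a+_{q'}c\le b+_{q'}c$, i.e.\ $q'\in S$. (This identity, being an equation in the language of barycentric algebras, holds in all barycentric algebras because it holds in convex subsets of vector spaces; cf.\ Lemma~\ref{lem:freecone}. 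The same remark licenses the computations below.) It therefore remains to push $S$ upward.

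The key step is the implication: $p^*\in S\implies g(p^*)\in S$, where $g(t)=2t/(1+t)$, a continuous increasing self-map of $\,]0,1[$ with $g(t)-t=t(1-t)/(1+t)>0$. Fix $p^*\in S$, put $q=g(p^*)$ and $\rho=1/(1+p^*)\in\,]0,1[$, and apply the two monotone unary operations $(-)+_\rho a$ and $(-)+_\rho b$ to $a+_{p^*}c\le b+_{p^*}c$. A short computation with convex combinations verifies the three identities
\[(a+_{p^*}c)+_\rho a=a+_q c,\qquad (b+_{p^*}c)+_\rho b=b+_q c,\qquad (b+_{p^*}c)+_\rho a=(a+_{p^*}c)+_\rho b,\]
and threading the two one-sided inequalities through the middle identity produces the chain
\[a+_q c=(a+_{p^*}c)+_\rho a\le(b+_{p^*}c)+_\rho a=(a+_{p^*}c)+_\rho b\le(b+_{p^*}c)+_\rho b=b+_q c,\]
that is, $q=g(p^*)\in S$.

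To conclude, observe that since $g$ is increasing with $g(t)>t$, the iterates $g^n(p)$ (which lie in $S$ by the key step and induction, starting from $g^0(p)=p$) form a strictly increasing sequence bounded above by $1$; its limit is a fixed point of $g$ lying in $\,]p,1]$, hence equals $1$. Thus $\sup_n g^n(p)=1$, so any $q_0\in\,]0,1[$ satisfies $q_0<g^n(p)$ for some $n$, whence $q_0\in S$ by the down-set property. Therefore $S=\,]0,1[$, as claimed.

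The only genuinely inventive point is the symmetrizing identity $(b+_{p^*}c)+_\rho a=(a+_{p^*}c)+_\rho b$ with $\rho=1/(1+p^*)$: it is exactly what allows the two applications of the hypothesis — one that rewrites to $a+_q c$ on the left, one that rewrites to $b+_q c$ on the right — to be welded into a single transitivity chain, thereby sidestepping the fact that one cannot simply cancel the common $c$-summand that a naive rescaling of the hypothesis leaves behind. I expect discovering this identity — equivalently, realising that $q=2p^*/(1+p^*)$ is precisely the value attainable in one step — to be the main obstacle; after that, the verifications are mechanical and the iteration is routine.
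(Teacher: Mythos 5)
Your proof is correct. It shares the paper's skeleton --- the same one-step map $p \mapsto 2p/(1+p)$, the same iteration of that map to approach $1$, and the same downward-closure observation for passing to smaller parameters --- but the crucial one-step implication is established by a genuinely different mechanism. The paper first embeds $A$ as a base of the ordered cone $C_A$ (Standard Construction \ref{sc1'}) and then computes with genuine addition and scalar multiplication, splitting $2pa$ as $pa+pa$ so as to apply the hypothesis twice inside a single sum: $\frac{1}{1+p}(pa+pa+(1-p)c)\leq\frac{1}{1+p}(pa+pb+(1-p)c)\leq\frac{1}{1+p}(pb+pb+(1-p)c)$; indeed the paper remarks that this embedding is exactly what makes the calculation easier than Neumann's original intrinsic argument. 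You instead never leave the ordered barycentric algebra: you apply the two monotone unary operations $(-)+_\rho a$ and $(-)+_\rho b$ (with $\rho=1/(1+p^*)$) to the hypothesis and weld the resulting inequalities through the symmetrizing identity $(b+_{p^*}c)+_\rho a=(a+_{p^*}c)+_\rho b$, with all the needed equations verified once and for all in the free barycentric algebras (simplices in vector spaces), exactly the device the paper uses for the entropic law (E) after Lemma~\ref{lem:freecone}. What your route buys is self-containedness --- only the axioms of ordered barycentric algebras, monotonicity in the first argument, and the standard free-algebra argument for equations, with no cone construction at all --- at the price of having to discover the symmetrizing identity, which is precisely the combinatorial content that the cone embedding hides inside its two-fold use of the hypothesis.
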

\begin{proof} 
We may view $A$ as a convex subset of an ordered cone $C$ according to
\ref{sc1'}. Suppose now that $a,b,c$ are
elements of $A$ and that the inequality $a +_p c\leq b+_p c$ holds for
some $0<p<1$. 

We first claim that  $a +_q c\leq b+_q c$ for $q=\frac{2p}{1+p}$:
By the above hypothesis we have the inequality $pa + (1-p)c\leq pb+(1-p)c$. We
use this inequality twice for establishing this first claim: 
             $qa+(1-q)c = \frac{1}{1+p}(2pa+(1-p)c)=
\frac{1}{1+p}(pa+pa+(1-p)c)\leq\frac{1}{1+p}(pa+pb+(1-p)c)\leq
\frac{1}{1+p}(pb+pb+(1-p)c) = 
\frac{1}{1+p}(2pb+(1-p)c) = qb+(1-q)c$. 

Secondly, we define recursively $p_0 = p$ and $p_{n+1} =
\frac{2p_n}{1+p_n}$. Our first claim allows to conclude $a+_{p_n}b\leq
b+_{p_n} c$ for all $n$. As the $p_n$ form an increasing sequence converging to
$1$, for every $q<1$, there is an $n$ such that $q<p_n$. Thus the
following third claim finishes the proof of our lemma.   

Claim: If $a+_p c\leq b+_p c$ holds for some $p$, then it also holds for
all $q\leq p$. Indeed, 
$a+_q c = qa+(1-q)c = \frac{q}{p}(pa +(1-p)c) + \frac{p-q}{p}c
\leq  \frac{q}{p}(pb +(1-p)c) + \frac{p-q}{p}c = qb+(1-q)c =b+_q c$. 
\end{proof}

\begin{rems}\label{rem:hist1} {\rm ({\bf Historical Notes and References})
\begin{custominlinelist}
\item The axiomatization of convex sets arising in vector spaces over the reals
 has a long history. The first
axiomatization seems to be due to M.~H.~Stone \cite{S}. Independently,
H.\ Kneser \cite{Kne} gave a similar axiomatization motivated by von
Neumann's and Morgenstern's work on game theory \cite{NM}. Stone's and Kneser's results are not restricted to the reals; they axiomatise convex sets embeddable in vector spaces over linearly ordered skew fields. Such an axiomatization cannot be equational. For a
barycentric algebra to be embeddable in a  vector
space one has to add a cancellation axiom\display
\[a +_r c = b+_r c \implies a=b  \quad (\mbox{for } 0 < r < 1)\eqno{\rm (C1)}\]
Similarly, one can show that an ordered barycentric algebra is
embeddable in an ordered vector space if, and only if, it satisfies the
order cancellation axiom\display 
\[a +_r c \leq b+_r c \implies a \leq b  \quad (\mbox{for } 0 < r < 1)\eqno{\rm (OC1)}\]

\item Several authors  independently developed  the
equational theory of convex sets and the corresponding notion of an
abstract convex set (while ignoring
each other). Our extension to ordered structures seems to be
new. We now try to give  as
complete as possible an account of these developments. The reader
should be aware that we always stay 
within the equational theory of convex sets in real vector spaces, and
we do not go into generalities, such as abstract convexities in the 
sense of, for example, van de Vel \cite{V}.       
 
\item W.~Neumann \cite{Neu} seems to be the first to have looked at the  
equational theory of convex sets. He remarked that barycentric
algebras may be very different from convex sets in vector
spaces. Indeed $\vee$-semilattices 
become examples of barycentric algebras if we define
$a+_r b\eqdef a\vee b$ for $0<r<1$. Neumann \cite{Neu} noticed
that the semilattices form the only proper nontrivial
equationally definable subclass of the class of all barycentric
algebras. Every barycentric algebra has a greatest homomorphic
image which is a semilattice. This semilattice is significant; indeed,
for a convex subset of a real vector space this greatest homomorphic
semilattice image is the (semi-)lattice of its faces. \makered{W.~Neumann also
characterised the free barycentric algebras, a characterisation that we reproduced in Lemma \ref{lem:freecone}.} 
   
The equational axioms (B1), (B2), (SC), (SA) that we use in our definition of 
barycentric algebras are due to \v{S}wirszcz \cite{Sw} and 
have been reproduced by Romanowska and Smith \cite[Section
  5.8]{RS}; the same axioms have also been used by
  Graham \cite{graham}  
and by Jones and Plotkin \cite{JP,jones} when they introduced the
notion of an \emph{abstract probabilistic powerdomain}. Romanowska and
Smith introduced the term \emph{barycentric algebra} for an abstract
convex set; \makered{their monograph, cited above, is an
  exhaustive source on  barycentric algebras and related structures.}

\item The notion of an abstract cone has emerged under the name of
\emph{quasilinear space} in interval mathematics in works of
O.\ Mayer \cite{May}; one may also consult papers by W.\ Schirotzek, in
particular \cite{Sch}, where ordered quasilinear spaces appear, the
closed intervals in the reals with the Egli-Milner order forming a prime
example. The embedding of a barycentric algebra 
as a convex subset in the abstract cone $C_A$ is due to J.\ Flood
\cite{Fl}. 
The surprising lemma \ref{neumann} is due to W.\ Neumann
\cite{Neu} in the  unordered case. 
The possibility of embedding a barycentric algebra in
a cone makes calculations much easier as
already  remarked by J.\ Flood \cite{Fl}. The proof of Lemma
\ref{neumann} illustrates this advantage when compared with Neumann's
original proof in the unordered case.  

Let us note in passing that every $\vee$-semilattice with a zero
becomes
an abstract cone by defining $a+b= a\vee b$ and $ra= a$ for $r> 0$,
but $ra = 0$ for $r=0$. 
The class of $\vee$-semilattices with $0$ 
is the only proper nontrivial equationally definable subclass of
the class of all abstract cones.


  


\item Another early axiomatisation of abstract convex sets is due to Gudder
\cite{gudder}. He uses as operations convex combinations of two and
three elements and axiomatises these operations. Without introducing
the notion of an abstract cone he gives the construction of the
embedding of an abstract convex set into a cone viewed as a convex set.

\item Equivalent to the equational one, there is another approach to abstract convex sets
 initiated by T.\ \v{S}wirszcz \cite{Sw} who characterises them as the
Eilenberg-Moore algebras of the monad $P$ of probability distributions
with finite support over the category of sets. In functional analysis
this approach has been rediscovered by G.\ Rod\'e \cite{Ro} and  
developed further by H.\  K\"onig \cite{Koe} without any background in
category theory. The approach has been
pursued further in a series of papers by Pumpl\"un, R\"ohrl, Kemper
and others (see e.g. \cite{Pu, Pu1, PR, Ke, W}).
We summarise it as follows:

For all natural numbers $m>0$, the set $P_m$ of all probability
measures $\mathbf q = (q_1,\dots,q_m)$ 
on an $m$-element set is a compact convex
set and, for every finite set $\mathbf q_1, \dots,\mathbf q_n\in P_m$,
the convex combination $\sum_{i=1}^n p_i\mathbf q_i$ is again an
element of $P_m$ for every $\mathbf p = (p_1,\dots,p_n)\in P_n$. The
sum $\sum_{i=1}^n p_i\mathbf q_i$ is the \emph{barycenter} of
masses $p_1,\dots,p_n$ placed at points $\mathbf
q_1,\dots,\mathbf q_n$. The extreme points of $P_n$ are the Dirac
measures $\delta_i$, $i=1,\dots,n$, given by the Kronecker symbol
$\delta_{ki}$. 
 
A \emph{convex space} is a nonempty set $X$ together with a family of mappings 
$\mathbf p^X :X^n\to X$ for $\mathbf p \in P_n$, $n=1,2,\dots$, satisfying for every
$x=(x_1,\dots,x_m)\in X^m$ the identities
\[\delta^X_i(x) = x_i,\ \ \ \ \mathbf p^X(\mathbf q^X_1(x),\dots,\mathbf
q^X_n(x)) = \big(\sum_{i=1}^np_i\mathbf q_i\big)^X(x)\]
for all $\mathbf p \in P_n$ and $\mathbf q_1,\dots,\mathbf q_n\in
P_m$.
Using the formal notation $\sum_{i=1}^n p_ix_i = \mathbf p^X(x)$
the two equations take the form
\[\sum_{i=1}^n\delta_{ik}x_i = x_k \eqno{\rm (A1)}\]
where $\delta_{ik}$ is the Kronecker symbol,  and
\[ \sum_{i=1}^n p_i\big(\sum_{k=1}^m q_{ik}x_k\big) =
      \sum_{k=1}^m\big(\sum_{i=1}^n p_i q_{ik}\big)x_k \eqno{\rm (A2)}\]
for $\mathbf p=(p_1, \dots,p_n)\in P_n$, and  $\mathbf
q_i=(q_{i1},\dots,q_{im})\in P_m$, $i=1,\dots,n$.

It should be added that the main interest of the authors using the latter
approach was not in convex but in \emph{superconvex} spaces. For
those  one replaces the sets $P_n$ of probability measures on
finite sets by the set $P_I$ of discrete probability measures on an
infinite countable set $I$, using the same defining identities as
above replacing finite by infinite sums.
\end{custominlinelist} }
\end{rems}

\subsection{Ordered pointed barycentric algebras}\label{sec:oba}

Convex sets containing $0$ in vector spaces and in (abstract) cones
are pointed barycentric algebras in the following sense:

\begin{defi}\label{def:pba}
 A  \emph{pointed barycentric algebra} is a
barycentric algebra $A$ with a distinguished element usually written $0$.
A map $f\colon A\to B$ between pointed barycentric algebras is
\emph{0-affine} or \emph{linear} 
if it is affine and preserves the distinguished element: $f(0)=0$.
\end{defi}

If $A$ is a convex set containing 0 in a vector space or in an
abstract cone, we have 
$ra\in A$ for all $a\in A$ and all $r\in [0,1]$, that is, we have a
multiplication by scalars $r\in[0,1]$. Similarly, we can define a
multiplication by scalars $r\in[0,1]$ for an arbitrary
pointed barycentric algebra $A$. For an element $a\in A$ and $r\in[0,1]$ define
\[r\cdot a\eqdef a+_r 0\]
Scalar multiplication has the usual properties: 
\[0\cdot a
= 0 = r\cdot 0,\ 1\cdot a =a,\ (rs)\cdot a = r\cdot(s\cdot a),\ r\cdot(a +_s b)
=r\cdot a +_s r\cdot b\]
as is straightforward to check (the last
property follows from the distributive law (D)). 
Every linear map $f\colon A\to B$ of pointed barycentric algebras is
\emph{homogeneous} 
in the sense that $f(r\cdot a) =r\cdot f(a)$ for all $a\in A$ and $0\leq r\leq
1$. Indeed, we have: $f(r\cdot a)=f(a+_r 0) = 
f(a) +_r f(0)= f(a) +_r 0 = r\cdot f(a)$.
 
Every cone is a pointed barycentric algebra. Thus, for a map $f$
between cones we have two notions of homogeneity, one where
$r$ ranges over all nonnegative real numbers,  and another where
$r$ only ranges over the unit interval. But the two notions are
equivalent for cones; indeed, if $f(ra) =rf(a)$ for all $r$ in the
unit interval, then $f(a) = f(r^{-1}ra) = r^{-1}f(ra)$ for all $r\geq
1$, whence $rf(a) = f(ra)$ for all $r\geq 1$, too. This shows that our
terminology of linearity for maps between   
cones and pointed barycentric algebras, respectively, is consistent. 

As for barycentric algebras (see \ref{lem:freecone}), there is a simple
description for the free 
pointed barycentric algebra over a set $I$:

 \begin{lem}\label{lem:freepointed}
 The convex set \[S_I = \{x=(x_i)_{i\in I}\in\Rp^{(I)}\mid \sum_i
 x_i\leq 1\}\subseteq \R^{(I)}\]
 of all finitely supported subprobability distributions on $I$ with $(0)_{i \in I}$
 as distinguished element is the free pointed barycentric algebra
 over $I$ with unit $\delta$. 
 \end{lem}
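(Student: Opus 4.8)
The plan is to verify the universal property directly: given any pointed barycentric algebra $B$ and any function $f\colon I\to B$, I must produce a unique linear (i.e.\ $0$-affine) map $\overline f\colon S_I\to B$ with $\overline f\circ\delta=f$. The natural candidate is, for $x=(x_i)_{i\in I}\in S_I$ with support $\{i_1,\dots,i_n\}$, to set $\overline f(x)$ to be the value of the formal convex combination $\sum_k x_{i_k}\,f(i_k)\,+_{(1-\sum_k x_{i_k})}\,0$; equivalently, using the scalar multiplication $r\cdot a\eqdef a+_r0$ on $B$ introduced just above, one can reduce to a genuine barycentric combination on the one-point extension of $I$. The cleanest way to make this precise and to get existence of $\overline f$ for free is to invoke Lemma~\ref{lem:freecone}(2): let $I_\bot = I\cup\{\bot\}$ and observe that $S_I$ embeds affinely into the free barycentric algebra $P_{I_\bot}$ by sending $x=(x_i)_{i\in I}$ to the probability distribution that agrees with $x$ on $I$ and has mass $1-\sum_i x_i$ at $\bot$ (this is well defined precisely because $\sum_i x_i\le 1$, and it is affine by a direct check). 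Likewise $B$, being a pointed barycentric algebra, extends the data $f$ to a map $f^\bot\colon I_\bot\to B$ by $f^\bot(\bot)=0$.

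Granting that embedding, the argument runs as follows. First I would check that $S_I$ is closed under $+_r$ in $\R^{(I)}$ and contains $0$, so it is a pointed barycentric algebra, and that $\delta$ maps $I$ into $S_I$. Then, given $(B,f)$ as above, form $f^\bot\colon I_\bot\to B$ and apply Lemma~\ref{lem:freecone}(2) to get the unique affine $\overline{f^\bot}\colon P_{I_\bot}\to B$ with $\overline{f^\bot}\circ\delta = f^\bot$. Composing with the affine embedding $S_I\hookrightarrow P_{I_\bot}$ yields an affine map $\overline f\colon S_I\to B$. One then checks $\overline f(\delta(i))=f(i)$ for $i\in I$ (immediate, since the embedding sends $\delta(i)$ to $\delta(i)\in P_{I_\bot}$) and, crucially, $\overline f(0)=0$: the zero of $S_I$ maps to the Dirac distribution $\delta(\bot)$ in $P_{I_\bot}$, and $\overline{f^\bot}(\delta(\bot))=f^\bot(\bot)=0$. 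Hence $\overline f$ is linear in the sense of Definition~\ref{def:pba}.

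For uniqueness, suppose $g\colon S_I\to B$ is linear with $g\circ\delta=f$. Every $x\in S_I$ with support $\{i_1,\dots,i_n\}$ and total mass $s=\sum_k x_{i_k}\le 1$ can be written inside $S_I$ as a barycentric combination of the generators $\delta(i_k)$ and the element $0$: concretely $x = \big(\sum_k \tfrac{x_{i_k}}{s}\delta(i_k)\big)+_{s} 0$ when $s>0$ (the inner combination being a genuine iterated $+_r$ expression among the $\delta(i_k)$, which lies in $S_I$), and $x=0$ when $s=0$. Since $g$ is affine and preserves $0$, its value on $x$ is forced by its values on the $\delta(i_k)$, i.e.\ by $f$; so $g=\overline f$. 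This also re-derives the explicit formula for $\overline f$.

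The main obstacle is not conceptual but bookkeeping: one must make sure the affine embedding $S_I\hookrightarrow P_{I_\bot}$ really is affine, which means checking that adjoining the complementary mass at $\bot$ commutes with the operations $+_r$ — i.e.\ that if $x,y\in S_I$ have total masses $s,t$ then the distribution attached to $x+_r y$ has mass $1-(rs+(1-r)t)$ at $\bot$, which is exactly $r(1-s)+(1-r)(1-t)$; this is a one-line computation but is the linchpin. The other mild subtlety is the degenerate case $s=0$ (and, inside the combination, any $x_{i_k}=0$), which is handled by the conventions $0\cdot a=0$ and the laws (B1), (B2); nothing deeper is needed, since everything reduces, via Standard Construction~\ref{sc1}, to a routine calculation with finite sums in the cone $C_B$ where negatives are never invoked.
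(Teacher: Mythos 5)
Your proof is correct and follows essentially the same route as the paper: adjoin an extra point $i_0$ (your $\bot$) mapped to $0$, use Lemma~\ref{lem:freecone}(2) for $P_{I\cup\{i_0\}}$, and transport along the affine identification of $S_I$ with $P_{I\cup\{i_0\}}$. Your explicit affinity check and the uniqueness argument via writing $x$ as a combination of the $\delta(i_k)$ and $0$ just spell out what the paper leaves as ``clearly.''
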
 

 \begin{proof} 
Let $A$ be a pointed barycentric algebra with distinguished
 element $0$ and $f\colon I\to A$
 an arbitrary function. We add a new element $i_0$ to $I$ and extend
 $f$ to this new element by $f(i_0) = 0$. There is a unique affine map
 $g$ from the free barycentric algebra 
 $P_{I\cup\{i_0\}}$ to $A$ such that $g\circ \delta = f$. We now
 compose $g$ with the affine bijection from  $S_I$ to
 $P_{I\cup\{i_0\}}$ that maps the subprobability distribution
 $x=(x_i)_{i\in I}$ on $I$ to the probability distribution
 $y$ 
 on $I\cup\{i_0\}$ with 
 $y_i = x_i$ for $i \in I$, and $y_{i_0} = 1 - \sum_{i\in I} x_i$. In this way we obtain an affine
 map $\overline f\colon S_I\to A$ such that $f = \overline f\circ
 \delta$ and $f(0)=0$. Clearly, $\overline f$ is unique with these
 properties.
\end{proof}

We now add a partial order:

\begin{defi}
An \emph{ordered pointed barycentric algebra} is  an
ordered barycentric algebra with a distinguished element $0$. 
A map $f\colon A\to B$ of ordered pointed barycentric algebras is
\emph{sublinear}  (resp., \emph{superlinear}) if it is homogeneous and
convex (resp., concave).  
\end{defi}
The linear maps are those
that are both sublinear and superlinear. 

For general reasons, there is  a  free ordered cone $\Cone(A)$ over any
ordered pointed barycentric algebra $A$
.
We need a concrete
construction of this free object. 
The idea is to stretch the
multiplication by scalars from those in the unit interval to all
nonnegative reals: 


\begin{standardconstr} \label{sc2} {\rm
 Let $A$ be an ordered pointed barycentric algebra. Since every pointed
 barycentric algebra is the image of a free one under a linear map, 
 by Lemma \ref{lem:freepointed} there is a linear surjection $f\colon
 S_I \to A$, where $S_I$ is the pointed barycentric algebra of
 finitely supported subprobability measures on some set $I$. 

 For every
 $x\in \Rp^{(I)}$ there is a greatest real number $0<r\leq 1$ such that $rx\in
 S_I$, and for every real number $s$ with  $0<s\leq r$ we also have
 $sx\in S_I$. 
 We define a relation $\precsim$ on $\Rp^{(I)}$: \[x\precsim x' \mbox{
   if } f(rx)\leq f(rx') \mbox{ for some }r,\ 0<r\leq 1, \mbox{ such that
 } rx\in S_I \mbox{ and }rx'\in S_I.\] 
If this holds  for some $0<r\leq 1$, then it also holds for all $s$ with
 $0<s\leq r$, since $f$ is homogeneous. 
 \begin{lem} 
The relation $\precsim$ is a preorder on the cone $\Rp^{(I)}$ compatible with the cone operations, that is, $x\precsim x'$ implies $x+y\precsim x'+y$ and $rx\precsim rx'$ for every $y$ in the cone and every nonnegative real  number $r$ . 
\end{lem}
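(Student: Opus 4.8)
The plan is to check reflexivity, transitivity, and the two compatibility conditions in turn, throughout exploiting the two facts about the chosen surjection $f\colon S_I\to A$ that are available here: that $f$ is homogeneous (whence, as already observed in the construction, the defining inequality, once it holds for some $0<r\le 1$, holds for all positive scalars $\le r$), and that $f$ is affine, so it commutes with every barycentric operation $+_p$.

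Reflexivity and transitivity are routine bookkeeping. For reflexivity, given $x\in\Rp^{(I)}$ we may pick any $r$ with $rx\in S_I$ and note $f(rx)\le f(rx)$. For transitivity, suppose $x\precsim x'$ witnessed by $r_1$ and $x'\precsim x''$ witnessed by $r_2$, and set $r=\min(r_1,r_2)>0$; since $S_I$ is closed under multiplication by scalars in $(0,1]$ and by the homogeneity remark above, both $f(rx)\le f(rx')$ and $f(rx')\le f(rx'')$ hold with $rx,rx',rx''\in S_I$, so transitivity of $\le$ in $A$ gives $x\precsim x''$. Compatibility with scalar multiplication is equally easy: for $\lambda>0$, if $x\precsim x'$ is witnessed by $r$, choose $s>0$ with $s\lambda\le r$ and $s\le 1$; then $s\lambda x,s\lambda x'\in S_I$ and $f(s\lambda x)\le f(s\lambda x')$, so $\lambda x\precsim \lambda x'$ (and the case $\lambda=0$ is just reflexivity of $0\precsim 0$).

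The only step requiring a small trick is compatibility with addition, and I expect this to be the main obstacle: the naive choice of scalar fails because, even when $sx$ and $sy$ lie in $S_I$, their sum $s(x+y)$ typically does not (subprobability distributions are not closed under addition). Instead, given $x\precsim x'$ witnessed by $r$, I would pick $s>0$ small enough that $2sx$, $2sx'$ and $2sy$ all lie in $S_I$ and that $2s\le r$. Then $s(x+y)=sx+sy=(2sx)+_{1/2}(2sy)$ lies in $S_I$ by convexity, and likewise $s(x'+y)\in S_I$; moreover, by affineness of $f$, $f(s(x+y))=f(2sx)+_{1/2}f(2sy)$ and $f(s(x'+y))=f(2sx')+_{1/2}f(2sy)$. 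Since $2s\le r$ gives $f(2sx)\le f(2sx')$, monotonicity of $+_{1/2}$ in the ordered barycentric algebra $A$ yields $f(s(x+y))\le f(s(x'+y))$, i.e.\ $x+y\precsim x'+y$. The point is exactly this rewriting of $s(x+y)$ as a barycentric combination of ``doubled'' terms that individually stay inside $S_I$, after which everything reduces to monotonicity of $+_r$; the rest of the argument is pure bookkeeping.
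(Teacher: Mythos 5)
Your proposal is correct and uses essentially the same argument as the paper: the additivity step is handled by exactly the paper's trick of rewriting a suitably scaled copy of $x+y$ as the barycentric combination $+_{\frac{1}{2}}$ of scaled terms lying in $S_I$, then applying affineness of $f$ and monotonicity of $+_{\frac{1}{2}}$ in the ordered barycentric algebra $A$. The only differences are trivial bookkeeping (your choice of scalar $2s$ versus the paper's $\tfrac{s}{2}$) and the fact that you spell out the reflexivity, transitivity, and scalar-multiplication cases that the paper dismisses as clear.
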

\begin{proof}
Clearly, the relation $\precsim$ is reflexive and transitive. For compatibility,
consider elements $x\precsim x'$ in $S_I$. There is an $r$ with $0<r\leq 1$ such that
 $rx,rx'\in S_I$ and  
 $f(rx)\leq f(rx')$. Given $y$, choose $s$ such that  $0<s\leq r$ and $s(x+y),
 s(x'+y)\in S_I$. Using that $f$ is linear on $S_I$ we obtain
 $f(\frac{s}{2}(x+y)) = f(s(x+_{\frac{1}{2}} y)) = 
 f(sx+_{\frac{1}{2}} sy) = f(sx)+_{\frac{1}{2}}f(sy) \leq
 f(sx')+_{\frac{1}{2}}f(sy) = f(\frac{s}{2}(x'+y))$, that is $x+y\precsim
 x'+y$. The property that $tx\precsim tx'$ for $t\in \Rp$ is straightforward. 
 \end{proof}

\begin{cor}
 The relation $x\sim x'$ if $x\precsim x'$ and $x'\precsim x$ is a
 congruence relation on the cone  $\Rp^{(I)}$. \qed
\end{cor}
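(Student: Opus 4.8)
The plan is to verify directly that $x \sim x'$ is an equivalence relation compatible with the cone operations, using the preceding lemma. First I would observe that $\sim$ is trivially an equivalence relation: it is the symmetrisation of the preorder $\precsim$, so reflexivity comes from reflexivity of $\precsim$, symmetry is built into the definition, and transitivity follows from transitivity of $\precsim$ applied in both directions.

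Next I would check compatibility with addition. Suppose $x \sim x'$, so $x \precsim x'$ and $x' \precsim x$. Given any $y \in \Rp^{(I)}$, the lemma gives $x + y \precsim x' + y$ from $x \precsim x'$, and symmetrically $x' + y \precsim x + y$ from $x' \precsim x$; hence $x + y \sim x' + y$. Combined with commutativity of addition on $\Rp^{(I)}$, this already shows that $\sim$ respects addition in both arguments. The same argument with scalar multiplication: from $x \precsim x'$ the lemma yields $rx \precsim rx'$ for every $r \in \Rp$, and symmetrically $rx' \precsim rx$, so $rx \sim rx'$.

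Since $\sim$ is an equivalence relation on $\Rp^{(I)}$ that is compatible with both the addition and the scalar multiplication, it is by definition a congruence for the cone structure, which is exactly the claim. There is no real obstacle here: the entire content has been front-loaded into the lemma stating that $\precsim$ is a preorder compatible with the operations, and the corollary is the standard passage from a compatible preorder to its associated congruence. The only thing to be mildly careful about is that compatibility of $\precsim$ is stated only for the first argument of $+$, so one should invoke commutativity of $+$ on $\Rp^{(I)}$ to get compatibility in the second argument as well — but this is immediate.
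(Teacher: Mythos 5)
Your argument is correct and is exactly the reasoning the paper has in mind: the corollary is stated with no proof precisely because it follows immediately from the preceding lemma by symmetrising the compatible preorder, which is what you do. Spelling out the equivalence-relation check and the use of commutativity for the second argument of $+$ is fine but adds nothing beyond what the paper treats as immediate.
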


 We write $\Cone(A)$ for $\Rp^{(I)}/\!\!\sim$, the quotient cone ordered by
  $\rho(x)\leq \rho(y)$ if $x\precsim y$, where $\rho\colon \Rp^{(I)}\to \Rp^{(I)}/\!\!\sim$ is the (linear monotone) quotient map.
Restricted to $S_I$, the quotient map  factors through $f$. Indeed, if $x,y$ are elements of $S_I$ such that $f(x) \leq  f(y)$, then $x\precsim y$ by the definition of $\precsim$. Thus, there  
 is a unique monotone linear map $u\colon A\to \Cone(A)$ such that $\rho|_{S_I} = u\circ f$. 
}
\end{standardconstr}

\makered{We note two important properties of this map\display
 \begin{lem}  \label{two-properties}\hfill
  \begin{enumerate}
 \item If $u(a) \leq u(b)$, for $a,b \in A$, then  $ra \leq rb$ for
   some $0 < r \leq 1$. 
 \item Every $b \in \Cone(A)$ has the form $r u(a)$ for some $a
   \in A$ and $r \geq 1$.  
 \end{enumerate}
\end{lem}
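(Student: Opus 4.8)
The plan is to unwind the definitions of Standard Construction~\ref{sc2}. Fix the linear surjection $f\colon S_I\to A$ and the quotient map $\rho\colon\Rp^{(I)}\to\Cone(A)$ used there, and recall the defining identity $\rho|_{S_I}=u\circ f$ together with the order on $\Cone(A)$, namely $\rho(x)\leq\rho(y)$ iff $x\precsim y$. Recall too that $f$, $u$ and $\rho$ are linear, hence homogeneous for scalars in $[0,1]$; in particular, for $x\in S_I$ and $r\in[0,1]$ one has $rx=x+_r 0$ in $S_I$ (the vector-space scalar multiple agreeing with the induced one), and therefore $f(rx)=rf(x)$.

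For part~(1), suppose $u(a)\leq u(b)$ with $a,b\in A$. Since $f$ is onto, I would pick $x,y\in S_I$ with $f(x)=a$ and $f(y)=b$, so that $u(a)=u(f(x))=\rho(x)$ and $u(b)=\rho(y)$, whence $\rho(x)\leq\rho(y)$, i.e.\ $x\precsim y$. By the definition of $\precsim$ there is some $0<r\leq 1$ with $rx,ry\in S_I$ and $f(rx)\leq f(ry)$. Then $ra=rf(x)=f(rx)\leq f(ry)=rf(y)=rb$, which is exactly the desired conclusion.

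For part~(2), given $b\in\Cone(A)$, write $b=\rho(z)$ for some $z\in\Rp^{(I)}$. If $z=0$ then $b=\rho(0)=u(f(0))=u(0)=1\cdot u(0)$, so we may assume $z\neq 0$; by the remark made in the construction there is then a greatest $r_0$ with $0<r_0\leq 1$ and $r_0z\in S_I$. Set $a\eqdef f(r_0z)\in A$. Using $\rho|_{S_I}=u\circ f$ and homogeneity of $\rho$ we get $u(a)=u(f(r_0z))=\rho(r_0z)=r_0\rho(z)=r_0b$, so $b=r_0^{-1}u(a)$ with $r_0^{-1}\geq 1$, as required.

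I do not expect a genuine obstacle: the whole argument is bookkeeping among the preorder $\precsim$, the quotient map $\rho$, and the surjection $f$. The one point deserving care is the matching of the two scalar multiplications --- the one on the cone $\Rp^{(I)}$ (equivalently, the ambient vector space) and the one induced on the pointed barycentric algebras by $c\mapsto c+_r 0$ --- which is precisely what makes $f(rx)=rf(x)$ hold and lets one pass freely between $S_I\subseteq\R^{(I)}$ and $A$; part~(2) additionally uses the previously noted existence of a greatest scaling factor sending $z$ into $S_I$.
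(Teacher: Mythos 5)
Your proof is correct and follows essentially the same route as the paper's: in part (1) you represent $a,b$ via the surjection $f$, pass to $\rho$ and the definition of $\precsim$, and use homogeneity of $f$; in part (2) you scale a representative of $b$ down into $S_I$ and use $\rho|_{S_I}=u\circ f$ together with homogeneity of $\rho$, exactly as in the paper (your extra $z=0$ case and the explicit remark matching the two scalar multiplications are harmless bookkeeping).
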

\begin{proof} \hfill
  \begin{enumerate}
  \item Suppose that $u(f(x)) \leq u(f(y))$, for $x,y \in S_I$. Then $\rho(x) \leq \rho(y)$. So $x \precsim y$ and we have $f(rx)\leq f(ry)$ for some $ 0< r\leq 1$, which is to say that $rf(x)\leq rf(y)$ for some $ 0< r\leq 1$.

\item Choose $x\in S_I$ such that $\rho(x)=b \in \Cone(A)$. There is a $y \in S_I$ and an $r  \geq 1$ such that $x = ry$. Set $a \eqdef f(y)$. Then: 
$ru(a) = ru(f(y)) = r \rho(y) = \rho(ry) = \rho(x)= b$. \qedhere
 \end{enumerate}
 \end{proof}}

The two properties of the map $u\colon A\to\Cone(A)$ picked
  out in this lemma yield a 
  characterisation of when a monotone linear map from a pointed
  ordered barycentric algebra to an ordered cone is universal \makeblue{as
  expressed by the freeness property in the following proposition}:
\begin{prop} \label{ocone-embed} Let $u: A \rightarrow C$ be a
  monotone linear map 
  from a pointed ordered barycentric algebra to an ordered cone $C$.
Then $C$  is the  free ordered cone  over
 $A$ with unit $u$ (in the sense that every
 monotone homogeneous map $h$ from $A$ into an ordered cone 
 $D$ has a unique monotone homogeneous extension  
$\widetilde h\colon C\to D$ along $u$) if, and only if, the following two
 properties hold of $u$\display 
 \begin{enumerate}
 \item If $u(a) \leq u(b)$, for $a,b \in A$, then  $ra \leq rb$ for
   some $0 < r \leq 1$. 
 \item Every $x \in C$ has the form $r u(a)$ for some $a \in A$ and $r \geq 1$. 
 \end{enumerate}
For such universal maps $u$, the extension $\widetilde h$ is
sublinear, superlinear, linear, respectively if, and 
 only if, $h$ is. Moreover, $\widetilde h\leq \widetilde g$ if and
 only if $h\leq g$. 
 \end{prop}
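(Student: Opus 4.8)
The plan is to prove the two directions of the biconditional separately, and then handle the three refinements (sublinear/superlinear/linear and monotonicity) as corollaries of the construction used in the ``if'' direction. For the ``only if'' direction, suppose $C$ is the free ordered cone over $A$ with unit $u$. Property (1) and property (2) hold for the specific map $u\colon A\to\Cone(A)$ by Lemma~\ref{two-properties}. Since any two free objects on the same data are isomorphic by a unique isomorphism commuting with the units, there is an isomorphism of ordered cones $\theta\colon\Cone(A)\to C$ with $\theta\circ u_{\Cone(A)}=u$. As $\theta$ is an order isomorphism and a cone isomorphism, it transports properties (1) and (2) verbatim from $u_{\Cone(A)}$ to $u$. (Here one should note that ``free ordered cone'' must be read relative to monotone homogeneous maps, matching the universal property stated; additivity is not imposed on the extension, only homogeneity and monotonicity, so this is genuinely the correct notion to compare with $\Cone(A)$ as built in Standard Construction~\ref{sc2}.)

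For the ``if'' direction, assume $u\colon A\to C$ is monotone, linear, and satisfies (1) and (2). Given a monotone homogeneous $h\colon A\to D$ into an ordered cone $D$, I would define $\widetilde h\colon C\to D$ by: for $x\in C$, write $x=r\,u(a)$ with $a\in A$, $r\ge 1$ (possible by (2)), and set $\widetilde h(x)=r\,h(a)$. The main work is \emph{well-definedness}: if $r\,u(a)=s\,u(b)$ with $r,s\ge 1$, one must show $r\,h(a)=s\,h(b)$ in $D$. Here is where property (1) does the essential job. The idea is to reduce to a common scalar: after scaling both representations by $1/\max(r,s)$ one may reduce to showing that $u(a')=u(b')$ with $a',b'$ of the form $t a$, $t b$ for suitable $t\le 1$ forces $h(a')=h(b')$; and $u(a')=u(b')$ means $u(a')\le u(b')$ and $u(b')\le u(a')$, so by (1) there are $0<p,q\le 1$ with $p a'\le q\cdot(\text{the other})$ — one has to be slightly careful, since (1) only gives \emph{some} scalar, so one applies (1) in both directions, gets two scalars, and takes the smaller, using Lemma~\ref{neumann}-style reasoning is \emph{not} needed here since we work directly in the cone $D$ via homogeneity: $u(a')\le u(b')$ gives, via (1), $p a' \le p b'$ for some $0<p\le 1$; but wait — (1) is about $u$, so one actually obtains from $u(a')\le u(b')$ some $0<p\le 1$ with $p a'\le p b'$ \emph{in $A$}, hence, applying the \emph{monotone} map $h$ and using its homogeneity, $p\,h(a')\le p\,h(b')$ in $D$, whence $h(a')\le h(b')$ by cancelling $p$ in the cone $D$ (scalar multiplication by $p>0$ is invertible). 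Doing this in both directions yields $h(a')=h(b')$, and tracking the scalars through the reduction gives $r\,h(a)=s\,h(b)$. This well-definedness argument is the step I expect to be the main obstacle, mostly in bookkeeping the scalars cleanly.

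Once $\widetilde h$ is well defined, homogeneity is immediate from the definition (scaling $x$ by $t>0$ scales its chosen representation, hence scales $\widetilde h(x)$ by $t$; the case $t=0$ is trivial since $0\cdot x=0=r\cdot u(0)$), and $\widetilde h\circ u=h$ follows by taking the representation $u(a)=1\cdot u(a)$. Uniqueness: any monotone homogeneous $k$ with $k\circ u=h$ must satisfy $k(r\,u(a))=r\,k(u(a))=r\,h(a)=\widetilde h(r\,u(a))$, so $k=\widetilde h$ on all of $C$ by (2). For the refinements: if $h$ is convex (resp.\ concave), then for $x_1=r_1 u(a_1)$, $x_2=r_2 u(a_2)$ I would express $x_1+x_2$ and $r_i u(a_i)$ in terms of a common barycentric combination in $A$ — namely $x_1+x_2=(r_1+r_2)\,u\!\left(a_1+_{\frac{r_1}{r_1+r_2}}a_2\right)$ using linearity of $u$ — and then the (sub/super)additivity of $\widetilde h$ reduces to the convexity/concavity inequality for $h$ at that barycentre, after clearing the positive scalar $r_1+r_2$; the linear case is the conjunction. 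Finally, $\widetilde h\le\widetilde g \iff h\le g$: the forward direction is by precomposing with $u$; the backward direction follows pointwise on $C$ from $h\le g$ and monotonicity/homogeneity applied to the chosen representation $x=r\,u(a)$, giving $\widetilde h(x)=r\,h(a)\le r\,g(a)=\widetilde g(x)$.
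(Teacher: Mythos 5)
Your overall route is essentially the paper's: define $\widetilde h(r\,u(a)) = r\,h(a)$, use property (2) for uniqueness and for the existence of representations, use property (1) (together with homogeneity of $h$ and cancellation of positive scalars in $D$) for representation-independence, obtain necessity from Lemma~\ref{two-properties} by transporting along the comparison maps between two free objects, and get the sublinearity/superlinearity refinements by rewriting $r\,u(a)+r'\,u(a')$ as $(r+r')\,u\bigl(a+_{r/(r+r')}a'\bigr)$. The one genuine gap is that you never verify that $\widetilde h$ is \emph{monotone}, which the universal property explicitly demands: you prove only the equality case, $r\,u(a)=s\,u(b)\Rightarrow r\,h(a)=s\,h(b)$. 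The paper instead proves the inequality version directly --- if $r\,u(a)\leq r'\,u(a')$ then $r\,h(a)\leq r'\,h(a')$ --- which yields well-definedness and monotonicity of $\widetilde h$ in one stroke. Your argument can be repaired at no cost, since the same reduction (rescale by $1/\max(r,s)$, apply property (1), apply the monotone homogeneous $h$, cancel the positive scalar in $D$) works verbatim with $\leq$ in place of $=$; but as written, monotonicity of the extension is simply absent.

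Two smaller points. In the necessity direction you call the comparison map $\theta\colon\Cone(A)\to C$ a ``cone isomorphism''; the universal property as stated only provides mutually inverse \emph{monotone homogeneous} maps, so additivity of $\theta$ is not available --- harmlessly so, because properties (1) and (2) mention only the order and scalar multiplication, which is exactly what the paper's transport argument uses. Also, the statement asserts that $\widetilde h$ is sublinear (superlinear, linear) if, \emph{and only if}, $h$ is; you argue only the direction from $h$ to $\widetilde h$, though the converse is immediate from $\widetilde h\circ u=h$ and the linearity of $u$, as the paper notes in one line.
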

\begin{proof} Suppose the two properties hold. For uniqueness of the
  extension use the second assumption and note that, for such an
  $\widetilde h$, we have\display    
\[\widetilde h(r u(a)) = r\widetilde h( u(a)) = r h(a)\]
We then use this property to define $\widetilde h$; to show $\widetilde h$ well-defined and monotone we chose $a,a' \in A$ and $r,r' \geq 1$ and prove that if $ru(a) \leq r' u(a')$ then $ r h(a) \leq r' h(a') $.  For if $r u(a) \leq r'u(a')$  then $u(rs^{-1}a) \leq  u(r's^{-1} a')$, where $s \eqdef \max(r,r')$; so, by the first assumption, $trs^{-1} a \leq  tr's^{-1} a'$ for some $0 < t \leq 1$. 
So, in turn, we have 
$trs^{-1}  h(a) = h( trs^{-1} a) \leq  h(tr's^{-1}  a')  = tr's^{-1}
h(a)$, and so $r h(a)  \leq  r' h(a)$ as required.   It is
straightforward that $\widetilde h$ is homogeneous.
  
That the two properties are necessary follows from
Lemma~\ref{two-properties}, which provides an
example of a map $u\colon A\to \Cone(A)$
possessing them. \makeblue{Suppose indeed that $u'\colon A\to C$ has the
universal property. Then there are mutually inverse monotone homogeneous maps
$\widetilde u\colon C\to\Cone(A)$ and $\widetilde{u'}\colon
\Cone(A)\to C$ such that $u=\widetilde u\circ u'$ and
$u'=\widetilde{u'}\circ u$. For property 1, take $a\leq b$ in $A$ such
that $u'(a)\leq 
u'(b)$. Then $u(a) = \widetilde u(u'(a))\leq \widetilde u(u'(b)) =
u(b)$, hence $ra\leq rb$ for some $0<r\leq 1$. For property 2, take
$x\in C$. Then $\widetilde u(x) = ru(a)$ for some $a\in A$ and $r\geq
1$. Applying $\widetilde{u'}$ yields $x = \widetilde{u'}(ru(a))=
r\widetilde{u'}(u(a))=ru'(a)$.}

Given such a map $u: A \rightarrow C$, suppose $h$ is sublinear. Then so is $\widetilde h $. To see this choose $x,x' \in C$. By the first property $x = r u(a)$ and $x' = r' u(a')$ for some $a,a' \in A$ and $r,r' \geq 1$. Then set $s \eqdef r + r'$ and calculate:  
$\widetilde h (x + y) = \widetilde h (ru(a) + r' u(a')) 
                                = \widetilde h (s u(a +_{r/s} a')) 
                                =  s  h (a +_{r/s} a') 
                                \leq  s ( h (a) +_{r/s} h(a'))
                                (\mbox{as $h$ is sublinear}) 
                                =  r   h (a) + r' h(a') = \widetilde h(x) + \widetilde h(y)$. The converse is evident as $u$ is linear. The assertion for superlinearity is proved similarly, and then the assertion for linearity follows.
                                
Finally we show that $\widetilde h\leq \widetilde g$  if $h\leq g$ (the converse is obvious). Assuming $h\leq g$, for any $a \in A$ and $r \geq 1$ we need only calculate: $\widetilde h(ru(a)) = r h(a) \leq r g(a) = \widetilde g(r u(a))$. 
\end{proof}

Together with Lemma~\ref{two-properties} we now have\display

 \begin{thm}\label{th:sc2}
 For any ordered pointed barycentric algebra $A$, $\Cone(A)$ is the
 free ordered cone  over 
 $A$ with unit $u$ in the following strong sense: Every
 monotone homogeneous map $h$ from $A$ into an ordered cone 
 $D$ has a unique monotone homogeneous extension  $\widetilde h\colon
 \Cone(A)\to D$ along $u$. The extension $\widetilde h$ is sublinear,
 superlinear, linear, respectively if, and 
 only if, $h$ is. Moreover, $\widetilde h\leq \widetilde g$ if and
 only if $h\leq g$. \qed
\end{thm}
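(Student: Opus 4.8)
The plan is to obtain the theorem as an immediate consequence of the two preceding results, Lemma~\ref{two-properties} and Proposition~\ref{ocone-embed}; this is why the statement carries a \qed. All the substantive work has already been done, so the proof is essentially a matter of citing these two results and checking that their statements fit together.

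First I would record that the map $u\colon A\to\Cone(A)$ produced in Standard Construction~\ref{sc2} is monotone and linear: it is the unique such map satisfying $\rho|_{S_I}=u\circ f$, and its linearity and monotonicity were established there. Thus $u$ is a monotone linear map from the pointed ordered barycentric algebra $A$ into the ordered cone $\Cone(A)$, so Proposition~\ref{ocone-embed} is applicable to $u$. Next I would invoke Lemma~\ref{two-properties}, which asserts precisely that $u$ enjoys the two properties appearing in the hypothesis of Proposition~\ref{ocone-embed}: (1) $u(a)\leq u(b)$ implies $ra\leq rb$ for some $0<r\leq 1$, and (2) every element of $\Cone(A)$ has the form $ru(a)$ with $a\in A$ and $r\geq 1$. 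Since the formulations in Lemma~\ref{two-properties} and in Proposition~\ref{ocone-embed} coincide verbatim, no reindexing or reformulation is needed.

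With these in hand, Proposition~\ref{ocone-embed} yields at once that $\Cone(A)$ is the free ordered cone over $A$ with unit $u$ in the strong (universal) sense of the theorem — existence and uniqueness of a monotone homogeneous extension $\widetilde h$ along $u$ of any monotone homogeneous $h\colon A\to D$ into an ordered cone $D$ — together with the further assertions that $\widetilde h$ is sublinear, superlinear, or linear exactly when $h$ is, and that $\widetilde h\leq\widetilde g$ iff $h\leq g$. This is exactly the content of the theorem.

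Since the theorem is a direct corollary of the two cited results, there is no real obstacle remaining: the genuinely delicate arguments — the construction of the congruence $\sim$ and of $u$, the verification carried out in Lemma~\ref{two-properties}, and the extension/monotonicity/(sub/super)linearity bookkeeping in the proof of Proposition~\ref{ocone-embed} — have all been dispatched beforehand. The only point worth making explicit in the write-up is that Lemma~\ref{two-properties} supplies precisely the hypotheses demanded by Proposition~\ref{ocone-embed}, so the two combine with nothing further to check.
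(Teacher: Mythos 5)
Your proposal is correct and matches the paper's own route exactly: the theorem is stated with a \qed precisely because it follows immediately by combining Lemma~\ref{two-properties} (which verifies that $u\colon A\to\Cone(A)$ satisfies the two conditions) with Proposition~\ref{ocone-embed} (which shows those conditions characterise the universal property, including the sub/superlinearity and order statements). Nothing further needs checking.
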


It would be nice, if the monotone linear map $u$ from $A$ into the
universal cone would be an order embedding
. 
%
But this is not always the case. We give an example of an ordered pointed barycentric algebra that cannot be embedded in any ordered cone at all. Indeed, in an
ordered cone, if for some $0 < r < 1$
we have $r x \leq r y$, then $x\leq y$ by 
multiplying  with the scalar $r^{-1}$. This need not be true in an ordered pointed barycentric algebra\display 

\begin{exa}
{\rm We consider the unit interval and replace the element $1$ by two
elements $1_1,1_2$. 
On each of the sets $[0,1[\ \cup\ 1_ i,\ i=1,2,$ we take convex
combinations as usual in the unit interval, and the
 set $\{1_1,1_2\}$ is considered as a join-semilattice with $x
+_r y = 1_2$ whenever $x\neq y$ and $0 < r<1$. In this way we obtain a
pointed barycentric algebra with $0$ as distinguished
element. Clearly, $r x = r = r y$ whenever $0<r<1$ and
$x,y\in \{1_1,1_2\}$.}
\end{exa}

We therefore pay attention to the order cancellation law\display 
\[ r x \leq r y \implies x\leq y \quad (\mbox{for  } 0 < r < 1)
\eqno{\rm (OC2)}\]
and its specialised form
\[ r x = r y \implies x= y \quad (\mbox{for  } 0 < r < 1)  \eqno{\rm (C2)}\]
These laws are particular instances of the cancellation laws (OC1)
and (C1). For example (C2) can be rewritten as $x +_r 0 =
y+_r 0\implies x=y$.

In view of the properties of scalar multiplication, the map $x\mapsto 
rx$ of $A$ into itself is linear and monotone. The axiom
(OC2) is equivalent to the statement that this map is also an order
embedding whenever $0<r\leq 1$, which implies that the image $rA$ is
an ordered pointed barycentric algebra isomorphic to $A$.  

Property (OC2) is less restrictive than it seems. Indeed, using Lemma
 \ref{neumann} for the case $c=0$ we obtain: 

\begin{rem}\label{neumann2}
 If in an ordered pointed barycentric algebra $ra\leq rb$ for some
 $0<r<1$ then this holds for all such $r$. 
 \end{rem}

 

 
We now answer the question which ordered pointed barycentric algebras can be embedded into  ordered cones\makered{, where by embedding we mean a
linear order embedding}\display 

\makered{\begin{prop}\label{prop:oc2}
For an ordered pointed barycentric algebra $A$ satisfying {\rm (OC2)}, the
universal map $u\colon A\to\Cone(A)$ is an embedding. An ordered pointed barycentric algebra can be embedded into  an ordered cone if, and only if, it satisfies  {\rm (OC2)}. 
\end{prop}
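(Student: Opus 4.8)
The plan is to establish the first sentence and then read off the equivalence. For the first sentence, note that the universal map $u\colon A\to\Cone(A)$ is already monotone and linear by Standard Construction~\ref{sc2}, so --- recalling that ``embedding'' here means a linear order embedding --- it remains only to check that $u$ reflects the order. This is precisely where the hypothesis (OC2) is used: given $u(a)\leq u(b)$, Lemma~\ref{two-properties}(1) produces some $0<r\leq 1$ with $ra\leq rb$; if $r=1$ this already reads $a\leq b$ (since $1\cdot a=a$), while if $r<1$ then (OC2) yields $a\leq b$. Thus $u$ is an order embedding, hence in particular injective, hence an embedding in the required sense.

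The equivalence then follows quickly. The ``if'' direction is immediate: $\Cone(A)$ is an ordered cone and, by the first sentence, $u$ embeds $A$ into it. For the ``only if'' direction, suppose $j\colon A\to C$ is a linear order embedding into an ordered cone $C$, and suppose $rx\leq ry$ in $A$ for some $0<r<1$; here $rx$ abbreviates $x+_r 0$, and since a linear map of pointed barycentric algebras is homogeneous (as observed in Section~\ref{sec:oba}) we get $r\,j(x)=j(rx)\leq j(ry)=r\,j(y)$ in $C$. Multiplying by the scalar $r^{-1}$ available in the cone $C$ gives $j(x)\leq j(y)$, and since $j$ reflects the order we conclude $x\leq y$. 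So $A$ satisfies (OC2).

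I do not expect a genuine obstacle here, since the substantive work is already packaged into Lemma~\ref{two-properties} and into the scalar-multiplication machinery for pointed barycentric algebras. The only spot needing a moment's attention is the boundary value $r=1$ coming out of Lemma~\ref{two-properties}(1), because (OC2) is stated only for $0<r<1$; this is handled by the identity $1\cdot a=a$. (Alternatively one could invoke Remark~\ref{neumann2} to lower $r$ strictly below $1$, but this is not necessary.)
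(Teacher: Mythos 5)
Your proposal is correct and follows essentially the same route as the paper: order-reflection of $u$ via Lemma~\ref{two-properties}(1) plus (OC2), and the converse by noting that (OC2) holds in any ordered cone and pulls back along a linear order embedding. Your explicit treatment of the boundary case $r=1$ and of the transfer argument in the ``only if'' direction merely spells out details the paper leaves implicit.
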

\begin{proof}
%
Consider $u: A \rightarrow \Cone(A)$, and, given
$a,a'\in A$, suppose that $u(a)\leq u(a')$. Then by
Lemma~\ref{two-properties}.1,  $r a \leq r a'$ for some $0 <
r \leq 1$, and so $a \leq a'$, if $A$ satisfies \rm{(OC2)}. Thus $u$
is an order embedding. Thus, if $A$ satisfies (OC2), it can be
embedded in an ordered cone. Conversely,
if an ordered pointed barycentric algebra can be embedded in an ordered
cone, it satisfies (OC2), since (OC2) is satisfied in every ordered cone.
\end{proof}}

\makered{We can also identify which embeddings are universal\display
\begin{cor} \label{free-oCone} An embedding $u\!: \!A \rightarrow C$ of an ordered pointed barycentric algebra $A$ in an ordered cone $C$ is universal if, and only if, every $x \in C$ has the form $r\cdot u(a)$, for some $a \in A$ and $r \geq 1$. \qed
\end{cor}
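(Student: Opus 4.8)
The plan is to obtain this as an immediate specialisation of Proposition~\ref{ocone-embed}. That proposition already characterises, for an arbitrary monotone linear map $u\colon A\to C$ into an ordered cone, when $u$ is universal: namely, exactly when (1) $u(a)\leq u(b)$ implies $ra\leq rb$ for some $0<r\leq 1$, and (2) every $x\in C$ has the form $r\cdot u(a)$ with $a\in A$, $r\geq 1$. So the whole task reduces to observing what becomes of these two conditions once $u$ is additionally assumed to be an order embedding.

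The key point I would make is that condition~(1) then holds trivially: if $u(a)\leq u(b)$ for $a,b\in A$, the fact that $u$ reflects the order gives $a\leq b$ directly, whence $1\cdot a = a\leq b = 1\cdot b$, which is condition~(1) witnessed by the scalar $r=1$. Thus, for an embedding $u$, universality is equivalent to condition~(2) alone, and condition~(2) is precisely the statement of the corollary. For the ``only if'' direction one simply invokes Proposition~\ref{ocone-embed} in the forward direction to extract condition~(2) from universality; for the ``if'' direction one feeds condition~(2) together with the automatic condition~(1) back into Proposition~\ref{ocone-embed} to conclude universality.

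There is essentially no obstacle here: everything substantive --- existence and uniqueness of the monotone homogeneous extension, its behaviour with respect to sublinearity, superlinearity and linearity, and the order-reflection property $\widetilde h\leq\widetilde g\iff h\leq g$ --- is already carried by Proposition~\ref{ocone-embed}, so the corollary is a one-line deduction once the triviality of condition~(1) for order embeddings is noted. The only mild care needed is to keep the strong sense of ``universal'' (i.e.\ with all the extension refinements) consistent between the corollary and the proposition, which it is by construction.
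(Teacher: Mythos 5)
Your proposal is correct and matches the paper's intent exactly: the corollary is stated as an immediate consequence of Proposition~\ref{ocone-embed}, with condition (1) holding trivially for an order embedding (take $r=1$, since $u(a)\leq u(b)$ gives $a\leq b$ by order reflection), so universality reduces to condition (2). Nothing further is needed.
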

Under the assumption {\rm (OC2)},  we may
identify an ordered pointed barycentric $A$ with its image in
$\Cone(A)$ under the 
 embedding $u$ by the previous proposition and we will do so without
 mentioning in the sequel. With this identification for every $c\in
 \Cone(A)$, there is an $0 < r\leq 1$ such that $rc\in A$; we will
 frequently use this fact.} 

%


It would be desirable to embed an ordered pointed barycentric algebra
in an ordered cone as a lower set. \makered{If
an ordered pointed barycentric algebra $A$ is embedded in an ordered cone
$C$ as a lower set, then one has for all $a,b\in A$\display 
\[ a\leq rb \implies \exists a' \in A.\ a=ra' \quad (\mbox{for  } 0 < r
 < 1)  \eqno{\rm (OC3)} \]
Indeed, if $a \leq rb$
for $a,b\in A$ and $0<r<1$, then $a'\eqdef\frac{1}{r}a \leq b$ in $C$. Then
$a' \in A$, as $A$ is a lower set in $C$, and $a = ra'$. Property
(OC3) is not satisfied for all ordered pointed barycentric algebras.}
As an example, let $A$ be the convex hull of the 
points $(0,0), (1,0), (0,1), (2,2)$ in $\Rp^2$. Then $A$ is a pointed
barycentric algebra
embedded in the ordered cone $\Rp^2$, but $A$ is not a lower set. As
it does not satisfy Property (OC3), $A$ cannot be embedded in any
ordered cone as a lower set. 
We also have a converse\display

\begin{lem}\label{lem:opba}
An ordered pointed barycentric algebra $A$ satisfying order
cancellation {\rm (OC2)} is
embedded in the ordered cone $\Cone(A)$ as a lower set if, and only if,
\makered{it satisfies Property {\rm (OC3)}}.
\end{lem}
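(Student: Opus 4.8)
The plan is to prove the two directions of the equivalence separately, throughout identifying $A$ with its image $u(A)$ in $\Cone(A)$; this is legitimate because $A$ satisfies (OC2), so that the universal map $u$ is an order embedding by Proposition~\ref{prop:oc2}.

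For the implication from left to right there is essentially nothing new to do. The paragraph preceding the lemma already shows that whenever an ordered pointed barycentric algebra is embedded in \emph{any} ordered cone as a lower set it must satisfy (OC3) — given $a\leq rb$ in $A$ with $0<r<1$, the element $a'\eqdef\frac1r a$ lies below $b$ in the cone, hence in the algebra by the lower-set hypothesis, and $a=ra'$. Since $\Cone(A)$ is such a cone, I would simply invoke that argument.

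For the converse I would assume (OC3) and verify directly that $u(A)$ is a lower set in $\Cone(A)$: given $c\in\Cone(A)$ with $c\leq a$ for some $a\in A$, the goal is $c\in A$. First use Lemma~\ref{two-properties}.2 (equivalently Corollary~\ref{free-oCone}, since $u$ is universal) to write $c = r\cdot a_0$ with $a_0\in A$ and $r\geq 1$; the case $r=1$ is immediate. When $r>1$, set $r'\eqdef 1/r\in\ ]0,1[$. Then $a_0 = r'\cdot c \leq r'\cdot a$ in $\Cone(A)$ by monotonicity of scalar multiplication, and since $r'\cdot a\in A$ and $u$ reflects the order, this inequality $a_0\leq r'\cdot a$ already holds in $A$. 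Applying (OC3) to it with scalar $r'$ produces some $a'\in A$ with $a_0 = r'\cdot a'$, whence $c = r\cdot a_0 = (rr')\cdot a' = a'\in A$, as required.

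The points to be careful about are purely bookkeeping: one uses that scalar multiplication by reals in $[0,1]$ on $A$ is the restriction of that of $\Cone(A)$ and that $u$ is both monotone and order-reflecting, all of which is already in place via homogeneity of $u$ and Proposition~\ref{prop:oc2}. The single genuine use of (OC3) is the step that converts the inequality $a_0\leq r'\cdot a$ into the representation $a_0 = r'\cdot a'$; the only real idea is to notice that the scalar $r = 1/r'$ arising from the canonical representation $c = r\cdot a_0$ is exactly the one to feed into (OC3), so that $c$ is recognised as an element of $A$. I do not anticipate any obstacle beyond seeing that alignment of scalars.
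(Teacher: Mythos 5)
Your proposal is correct and is essentially the paper's own argument: the left-to-right direction is exactly the general observation in the paragraph preceding the lemma, and your right-to-left proof (write $c=r\cdot a_0$ with $a_0\in A$, $r\geq 1$, pass to $a_0=r'\cdot c\leq r'\cdot a$ in $A$ with $r'=1/r$, apply (OC3), and rescale by $1/r'$) is the paper's proof up to renaming of variables, with the existence of the scaling factor made explicit via Lemma~\ref{two-properties}.2 rather than quoted as the standing fact after Corollary~\ref{free-oCone}. No gaps.
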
   

\begin{proof}
By Proposition \ref{prop:oc2} we can embed $A$ into the ordered cone
$\Cone(A)$. To show that $A$ is embedded as a lower set, suppose
that $x$ is an element of $\Cone(A)$ such that $x\leq b$ for some
$b\in A$. Then  $rx \in A$  for some $r$ with $0 < r < 1$ and $rx\leq
rb$. By Property (OC3) there is an
$x'\in A$ such that $rx' = rx$ which implies $x = x'\in A$ by
multiplying by $\frac{1}{r}$. 
\end{proof} 

In the presence of the order cancellation property (OC2)
 one has $a'\leq b$ for the element $a'$ whose existence is postulated
 in (OC3).

\begin{rem}\label{rem:hist2}{\rm ({\bf Historical Notes and References})
As far as we know, pointed barycentric algebras have not attracted
much attention. They are identical to the \emph{finitely 
  positively convex spaces} in the sense of Wickenh\"auser, Pumpl\"un,
R\"ohrl, and Kemper \cite{W,Pu,Pu1,PR,Ke}. To define them, one
uses the same setting and the identities used for  convex
spaces (see  historical remark \ref{rem:hist1}), but replaces
the convex sets $P_n$ of 
probability measures on $n$-element sets by the pointed convex sets
\[S_n = \{(q_1,\dots,q_n)\in[0,1]^n\mid \sum_{i=1}^nq_i \leq 1\}\]
of subprobability measures on $n$-element sets. As before, the main
interest of these authors 
was directed towards the positively \emph{superconvex} spaces and
their applications in functional analysis, where
the $S_n$ are replaced by the set $S$ of subprobability measures on
an infinite countable set. Our standard construction \ref{sc2} for
constructing the free ordered cone over an ordered  pointed
barycentric algebra is simpler than Pumpl\"un's construction of a
free cone over an (unordered) positively superconvex space (see
\cite[Definition 4.17 ff.]{Pu}).   

In the same way as join-sem\-ilattices can be considered to be
barycentric algebras,  join-semi\-lattices with a distinguished element
can be considered to be pointed barycentric algebras
. 






Recently, convex and positively convex spaces were taken up by
A.\ Sokolova and H.\ Woracek \cite{SW}. These authors are particularly interested in
finitely generated barycentric and pointed barycentric algebras, that
is, homomorphic images of polyhedra and pointed polyhedra in finite
dimensional vector spaces, and they prove that
finitely generated  barycentric and 
pointed barycentric algebras, respectively, are finitely presented. 
}
\end{rem}

\subsection{\makered{d-Cones and} Kegelspitzen} \label{dCs_KSs}

We now endow partially ordered sets with their Scott topology. In
particular, the sets $\Rp$, $\oRp$, and the unit interval $[0,1]$
are endowed with their usual order and the corresponding Scott
topology. Maps are restricted to Scott-continuous ones. 

Thus, for an ordered cone $C$ it is natural to ask for addition 
$(a,b)\mapsto a+b\colon C\times C\to C$ and scalar multiplication
$(r,a)\mapsto ra\colon \Rp\times C\to C$ to be Scott-continuous (in
both arguments).
Note that the continuity of scalar multiplication in the first
argument implies that $0$ is the least element of $C$; indeed,
Scott-continuity with respect to scalars implies that $r\mapsto
ra\colon \Rp\to C$ is 
monotone, whence $0\leq 1$ implies $0 = 0\cdot a \leq 1\cdot a = a$. 

\begin{defi}
An ordered cone in which addition 
$(a,b)\mapsto a+b\colon C\times C\to C$ and scalar multiplication
$(r,a)\mapsto ra\colon \Rp\times C\to C$ are Scott-continuous (in
both arguments) will be called an \emph{s-cone}. If
in addition the order is  directed complete (resp., bounded directed
complete), we say that $C$   
is a  \emph{d-cone} (resp., a \emph{b-cone}).  
\end{defi}

 We are heading towards a similar connection between the algebraic and
the order structure on ordered pointed barycentric algebras. 
For this we have to restrict the scalars $r$ to the unit interval\display

\begin{defi}\label{def:KS}
A \emph{\KS}\  is a pointed barycentric algebra $K$ equipped with a directed complete
partial order such that, for every $r$ in the unit interval, convex combination
$(a,b)\mapsto a+_r b\colon C\times C\to C$ and scalar multiplication
$(r,a)\mapsto ra\colon [0,1]\times C\to C$ are Scott-continuous in
both arguments. 
 (An alternative name would be \emph{pointed
  barycentric d-algebra}.)   
\end{defi} 
We only need to require scalar multiplication to be continuous in its
first argument in the definition, since $a\mapsto ra = a+_r 0$ is
required to be continuous anyway. 
 \makered{The minimalistic definition of a \KS\
  above may look artificial. In the Historical Notes
  \ref{rem:hist3} below we discuss an equivalent definition that
  looks more natural.} 

Since Scott-continuous maps are monotone, every \KS\ is an ordered
pointed barycentric algebra. As for d-cones, $0$
will be the least element. It is noteworthy that property (OC2) is
always satisfied\display  

\begin{lem}\label{oc2}
Every \KS\ satisfies the order cancellation property {\rm (OC2)}: If $ra\leq rb$
for some $0<r<1$, then $a\leq b$. 
\end{lem}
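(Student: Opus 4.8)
The plan is to exploit the surprising Lemma~\ref{neumann} (in its specialised form Remark~\ref{neumann2}, which concerns exactly the case $c=0$), combined with Scott-continuity of scalar multiplication. Concretely, suppose $ra \leq rb$ for some fixed $0 < r < 1$. By Remark~\ref{neumann2} we then have $qa \leq qb$ for \emph{all} $q$ with $0 < q < 1$; equivalently, $a +_q 0 \leq b +_q 0$ for all such $q$. Now I would let $q \to 1$ and use continuity.

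The key step is the following. Scalar multiplication $(q,a) \mapsto qa = a +_q 0$ is required to be Scott-continuous in its first argument. The net (or family) $(q)_{0 < q < 1}$ is directed with supremum $1$ in $[0,1]$, so $\dsup_{0 < q < 1} q = 1$. Hence, applying continuity in the scalar argument, $\dsup_{0 < q < 1} (qa) = 1 \cdot a = a$ and likewise $\dsup_{0 < q < 1}(qb) = b$. Since $qa \leq qb$ holds for every $q \in (0,1)$ by the previous paragraph, taking directed suprema on both sides (which preserves the order) yields $a = \dsup_{0<q<1} qa \leq \dsup_{0<q<1} qb = b$, as required. Strictly, one should note that the map $q \mapsto qa$ is monotone (Scott-continuous maps are monotone), so $\{qa \mid 0<q<1\}$ is genuinely a directed set and its supremum exists; the same for $b$.

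The main obstacle — really the only subtle point — is making sure that the hypothesis $ra \leq rb$ for a \emph{single} $r$ propagates to all $q \in (0,1)$ before one can take the supremum; this is precisely what Remark~\ref{neumann2} (itself a corollary of Lemma~\ref{neumann} with $c = 0$) delivers, so there is nothing left to prove there. A cleaner alternative, avoiding even the explicit passage through Remark~\ref{neumann2}, is to embed $A$ — or rather, to observe that since a \KS\ is in particular an ordered pointed barycentric algebra, one could try to use $\Cone(A)$; but $u$ need not be an order embedding without already knowing (OC2), so this route is circular. Hence the direct argument via Remark~\ref{neumann2} plus Scott-continuity is the right one, and it is short.
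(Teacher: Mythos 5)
Your proof is correct and follows essentially the same route as the paper's: apply Neumann's Lemma~\ref{neumann} in the case $c=0$ (Remark~\ref{neumann2}) to get $qa\leq qb$ for all $0<q<1$, then use Scott-continuity of $q\mapsto qa\colon[0,1]\to K$ to pass to the supremum $q=1$. You merely spell out the limit step (directedness and order-preservation of suprema) in more detail than the paper does.
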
 

\begin{proof}
Indeed, if  $ra\leq rb$ for some $0<r<1$,  the order theoretical
version of Neumann's lemma \ref{neumann} implies that $ra\leq rb$ for
all $r<1$ which implies $a\leq b$ by the Scott continuity of the map
$r\mapsto ra\colon [0,1]\to K$.
\end{proof}

We would like to embed every \KS\ $K$ into a d-cone, where embeddings
of \KSs\ in d-cones are \makered{Scott-}continuous linear maps which
are order embeddings. \makered{We proceed in two steps. In a first
step} we use the embedding of $K$ (considered as a pointed barycentric
algebra) in the ordered cone $\Cone(K)$ according to  
Standard Construction \ref{sc2}. By Proposition
\ref{prop:oc2} this Standard Construction  yields indeed a 
linear order embedding $u$ of $K$ in $\Cone(K)$, since $K$ 
satisfies (OC2) by Lemma \ref{oc2}. \makered{By the following lemma, this
embedding is Scott-continuous: 


\begin{lem}\label{lem:continuous}
Let $u$ be a homogeneous order embedding of a \KS\ $K$ in an s-cone
$C$ in such a way that for every element 
$y\in C$ there is an $r,\ 0<r<1,$ such that $ry\in u(K)$. Then $u$ is
Scott-continuous. 
\end{lem}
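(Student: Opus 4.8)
The plan is to verify directly that $u$ preserves suprema of directed sets. Since an order embedding is in particular monotone, whenever $D \subseteq K$ is directed with supremum $a$, the element $u(a)$ is automatically an upper bound of the directed set $u(D)$ in $C$; so the real task is to show that it is the \emph{least} upper bound. The extra hypothesis on $u$ --- that every point of $C$ lands in $u(K)$ after shrinking by a suitable factor $r < 1$ --- is precisely the device that lets one transport an inequality living in $C$ back into $K$, where directed completeness and Scott-continuity of the scalar action are available.

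Concretely, I would fix a directed $D \subseteq K$, put $a \eqdef \sup D$, and let $y \in C$ be an arbitrary upper bound of $u(D)$. Choose $r$ with $0 < r < 1$ and $b \in K$ with $ry = u(b)$, as the hypothesis permits. Then for each $d \in D$, monotonicity of scalar multiplication in $C$ gives $r\,u(d) \leq ry$, and homogeneity of $u$ rewrites the left-hand side as $u(rd)$; hence $u(rd) \leq u(b)$, and since $u$ reflects the order, $rd \leq b$ for every $d \in D$. Now Scott-continuity of scalar multiplication in $K$ (continuity in the second argument suffices) shows that $\{rd \mid d \in D\}$ is directed with supremum $r\,a$, so $r\,a \leq b$. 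Applying $u$ and homogeneity once more yields $r\,u(a) = u(ra) \leq u(b) = ry$, and multiplying through by $r^{-1}$ in the cone $C$ (scalar multiplication being monotone there) gives $u(a) \leq y$. Thus $u(a) = \sup u(D)$, and $u$ is Scott-continuous.

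I do not expect a genuine obstacle. The only point needing care is that $C$ is assumed merely to be an s-cone, not directed complete, so the argument must avoid invoking $\sup u(D)$ before it has actually been produced --- which is why the proof dominates an arbitrary upper bound rather than comparing suprema. Everything else is routine bookkeeping with homogeneity, order-reflection, monotonicity of the cone operations, and Scott-continuity of the scalar action on $K$.
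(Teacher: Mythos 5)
Your argument is correct and is essentially the paper's own proof: both fix an arbitrary upper bound $y$ of the image of the directed family, shrink by the $r$ with $ry\in u(K)$, use order-reflection and Scott-continuity of scalar multiplication in $K$ to get $ra\leq b$, and then apply homogeneity of $u$ and multiplication by $r^{-1}$ in $C$ to conclude $u(a)\leq y$. No gaps; the only difference is that you make explicit the final monotone multiplication by $r^{-1}$, which the paper leaves implicit.
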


\begin{proof}
Let $(x_i)_i$ be a directed family in $K$ and $x$ its supremum in
$K$. Since $u$ is monotone, clearly $u(x_i)\leq u(x)$. In order to
show that $u(x)$ is the supremum of the $u(x_i)$ in $C$,
consider any upper bound $y\in C$ of the 
$u(x_i)$. Choose an $r,\ 0<r<1,$ such that $ry\in u(K)$ and $y'\in K$ with
$u(y')= ry$. Then $ry =u(y')$ is an upper bound of the directed family
$ru(x_i)= u(rx_i)$ (using homogeneity of $u$). Since $u$ is an order
embedding, $y'$ is an upper bound of the $rx_i$ in $K$. By the Scott
continuity of scalar multiplication, $rx$ is 
the least upper bound of the $rx_i$ in $K$, whence $rx\leq y'$. Using
homogeneity and monotonicity of $u$, we deduce $ru(x)=
u(rx)\leq u(y')=ry$ which implies $u(x)\leq y$.
\end{proof}  
}

We now want to show that scalar multiplication and addition
on $\Cone (K)$ are Scott-continuous. This is
no problem for scalar multiplication:

We first recall that $a\mapsto ra\colon \Cone(K)\to \Cone(K)$ is
Scott-continuous for every $r>0$, since this map is monotone
and has a monotone inverse, multiplication by $r^{-1}$.

We now verify that $r\mapsto ra\colon\Rp\to \Cone(K)$ is
Scott-continuous. Suppose indeed that 
$r_i$ is an increasing family in $\Rp$ with $r=\sup_i r_i$. Choose an
$s, 0 < s < 1,$ such that $sr \leq 1$ and $sa\in K$. We then use the
continuity of $r\mapsto ra\colon[0,1]\to K$ to obtain $\sup_i
(sr_i)(sa) = (\sup_isr_i)(sa) =(sr)(sa)$. Now in $\Cone(K)$ we have
$s^2\sup_i r_ia  = \sup_i (sr_i)(sa) = srsa = s^2(ra)$ which implies
$\sup_i r_ia = ra$.

We now turn to addition. To prove that $a\mapsto a+b\colon \Cone(K)\to
\Cone(K)$ is 
Scott-continuous for every fixed $b\in \Cone(K)$, we have to show: If
$a_i$ is a directed system in 
$\Cone(K)$ which has a sup $a=\sup_i a_i$ then the family $a_i+b$ has
a sup and $a +b = \sup_i(a_i+b)$. For this we choose an $s,\ 0<s<1,$ such
that $sa\in K$ and $sb\in K$. Because 
$sa_i\leq sa$, we would like to conclude that $sa_i\in K$, since then
the Scott continuity of convex combination in $K$ implies that
$\sup_i(\frac{1}{2}(sa_i) + \frac{1}{2}(sb)) =
\frac{1}{2}\sup_i(sa_i) + \frac{1}{2}(sb) =\frac{1}{2}(sa) +
\frac{1}{2}(sb)$, whence in $\Cone(K)$ we have  $\sup_i(a_i
+b)=2s^{-1}\sup_i(\frac{1}{2}(sa_i) + \frac{1}{2}(sb))
=2s^{-1}(\frac{1}{2}(sa) + \frac{1}{2}(sb) = a+b$ as desired.

Thus, we would like to use that $K$ is a lower set in $\Cone(K)$. 
By Lemma \ref{lem:opba}, this is equivalent to the requirement that
$K$ satisfies Property (OC3).
\makered{As we often use this
property we make a definition:


\begin{defi}
A \KS\ is said to be \emph{full}  if it
satisfies Property {\rm (OC3)}.
\end{defi}
}




We now can state:

\begin{prop}\label{lem:bcone}
For a full \KS\ $K$, 
the free cone $\Cone(K)$
over $K$ according to Standard Construction \ref{sc2} is a b-cone
and $K$ is \makered{Scott-continuously} embedded in $\Cone(K)$ as a
Scott-closed convex set.  


\makered{The b-cone $\Cone(K)$ is the free b-cone over $K$ w.r.t.\
  Scott-continuous homogeneous maps as, for every such  map $f$ from
  $K$ into a b-cone $D$, the unique 
homogeneous extension $\widetilde f\colon\Cone(K)\to D$ is
Scott-continuous.}
Moreover,
$\widetilde f$ is sublinear, superlinear, or linear, if, and only if, $f$
is. Further, $f\leq g$ if and only if $\widetilde f\leq\widetilde g$.
\end{prop}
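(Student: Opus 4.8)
The plan is to assemble the statement from the pieces already in place. First I would observe that, by Lemma~\ref{oc2}, a \KS\ automatically satisfies (OC2), so for a full \KS\ $K$ (i.e., one satisfying (OC3)) Proposition~\ref{prop:oc2} gives a linear order embedding $u\colon K\to\Cone(K)$, and Lemma~\ref{lem:continuous} (whose hypothesis is met, since every $c\in\Cone(K)$ has $rc\in u(K)$ for some $0<r\le1$ by the description of $\Cone(K)$) shows this embedding is Scott-continuous. By Lemma~\ref{lem:opba}, (OC3) is exactly what is needed for $K$ to be a lower set in $\Cone(K)$; combined with the fact that $K$ is convex in $\Cone(K)$ and, as I will argue next, closed under directed suprema, this makes $K$ a Scott-closed convex subset of $\Cone(K)$.

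Next I would establish that $\Cone(K)$ is directed complete on bounded directed sets, i.e.\ a b-cone, and that addition and scalar multiplication are Scott-continuous. Scalar multiplication causes no trouble: $a\mapsto ra$ for fixed $r>0$ is a monotone bijection with monotone inverse, hence continuous; and $r\mapsto ra$ is handled by pulling a directed family of scalars down by a factor $s$ with $sa\in K$ and $sr\le1$, applying the continuity of $[0,1]\times K\to K$, and pushing back up — this is precisely the computation already sketched in the running text before the proposition. For addition, and simultaneously for bounded directed completeness, the key point is that $K$ is a lower set in $\Cone(K)$: given a directed $(a_i)_i$ in $\Cone(K)$ with an upper bound, pick $s$ with $0<s<1$ small enough that $s a_i$ and the bound all lie in $K$ (using fullness to push the $a_i$ themselves, which are below the bound, into $K$); then directed completeness of $K$ supplies $\sup_i (s a_i)$ in $K$, and multiplying by $s^{-1}$ yields $\sup_i a_i$ in $\Cone(K)$. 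The same move, applied to $\tfrac12(s a_i)+\tfrac12(sb)$ and the Scott-continuity of convex combination in $K$, gives $\sup_i(a_i+b)=a+b$. This also shows $u(K)=K$ is closed for directed suprema, completing the Scott-closedness claim from the previous paragraph.

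Having shown $\Cone(K)$ is a b-cone and $u$ a Scott-continuous linear order embedding, the freeness statement follows from Theorem~\ref{th:sc2} plus one continuity argument. Theorem~\ref{th:sc2} already gives, for any monotone homogeneous $h\colon K\to D$ into an ordered cone $D$, a unique monotone homogeneous extension $\widetilde h\colon\Cone(K)\to D$ along $u$, with $\widetilde h$ sublinear / superlinear / linear iff $h$ is, and $\widetilde f\le\widetilde g$ iff $f\le g$. It remains to check that when $D$ is a b-cone and $h=f$ is Scott-continuous, $\widetilde f$ is Scott-continuous: given a directed $(a_i)_i$ in $\Cone(K)$ with supremum $a$, choose $s$ with $0<s<1$ and $s a_i, sa\in K$ (again using fullness and the lower-set property), so that $\widetilde f(a)=s^{-1}\widetilde f(sa)=s^{-1}f(sa)=s^{-1}f(\sup_i s a_i)=s^{-1}\sup_i f(s a_i)=\sup_i\widetilde f(a_i)$, the middle step being the continuity of $f$ on $K$. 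This also shows $\Cone(K)$ is free over $K$ among b-cones for Scott-continuous homogeneous maps, since any such map factors uniquely through $u$ with a Scott-continuous extension, and the extension is unique already among monotone homogeneous extensions.

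The main obstacle is the interaction between bounded directed completeness, Scott-continuity of addition, and the lower-set property: everything hinges on being able to rescale a bounded directed family into $K$ by a single scalar $s$, which is exactly where fullness (OC3) and Lemma~\ref{lem:opba} are indispensable, and one must be careful that the various $s$'s can be chosen uniformly over the finitely many elements involved (the $a_i$ are all below a common bound, or below $a$, so a single $s$ suffices — but this should be stated explicitly rather than glossed over). The rest is bookkeeping on top of Theorem~\ref{th:sc2} and the computations already displayed in the text preceding the proposition.
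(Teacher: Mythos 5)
Your proposal is correct and follows essentially the same route as the paper's own proof: embed $K$ via Lemmas~\ref{oc2}, \ref{lem:opba} and~\ref{lem:continuous} as a Scott-closed convex lower set, use the rescaling arguments (push a bounded directed family into $K$ by a single scalar chosen for the bound, using the lower-set property) to get Scott-continuity of the operations and bounded directed completeness, and then obtain the freeness and the remaining claims from Theorem~\ref{th:sc2} together with the same rescaling check that the extension $\widetilde f$ is Scott-continuous. The only difference is presentational (the paper establishes continuity of the operations in the text preceding the proposition and the b-cone property inside the proof), so nothing further is needed.
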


\begin{proof}
In the presence of (OC3), Lemma \ref{lem:opba} shows that we have a
linear order embedding of $K$ in $\Cone(K)$ as a lower set \makered{and convexity is evident}.
The embedding \makered{is Scott-continuous}  by Lemma
\ref{lem:continuous} and so $K$ is embedded as a
Scott-closed set.


  
The arguments preceding the statement of the proposition 
show that addition and
scalar multiplication are Scott-continuous on $\Cone(K)$.  
We even have a b-cone: Indeed, if $(x_i)_i$  is a directed family in
$\Cone(K)$ bounded above by some $x$, choose an $r,\ 0 <r<1,$ such
that $rx\in K$. Using that $K$ is a lower set, $(rx_i)_i$ is a
directed family in $K$ and so 
has a sup $y=\sup_i rx_i$ in $K$. We conclude that the $(x_i)_i$ have a sup
namely $r^{-1}y =\sup_ix_i$. 

Now let $f\colon K\to D$ be a homogeneous function into a b-cone
$D$. By Theorem \ref{th:sc2} it has a unique homogeneous extension
$\widetilde f\colon \Cone(K)\to D$. If $f$ is Scott-continuous, then
$\widetilde f$ is Scott-continuous, too: Indeed, let $(x_i)_i$ be a
bounded directed family in $\Cone(K)$ with $x=\sup_i x_i$. Choose any
$r>0$ such that $rx\in K$. Since $K$ is a lower set in $\Cone(K)$ we
also have $rx_i\in K$ and $rx=\sup_irx_i$ in $K$. By the continuity
of $f$ on $K$ we have $f(rx)=\sup_if(rx_i)$, whence $\widetilde
f(x)=r^{-1}\widetilde f(rx)= r^{-1}f(rx) =r^{-1}\sup_if(rx_i)=
r^{-1}\sup_i\widetilde f(rx_i)= \sup_i\widetilde
f(x_i)$. 

The remaining claims follow directly from Theorem \ref{th:sc2}. 
\end{proof}

\makered{In a second step we} use a completion procedure following
Zhang and Fan \cite{ZF}, Keimel and Lawson \cite{KL,KL1} and
\makered{Jung, Moshier, and Vickers \cite{JV}}, in order  
to embed the b-cone $\Cone(K)$ in a d-cone. 

\makered{
\begin{standardconstr}\label{sc3}
{\rm 
A \emph{universal} (or \emph{free}) \emph{dcpo-completion}\footnote{In
the literature \cite{ZF,KL,KL1} the term \emph{dcpo-completion} is
used instead of \emph{universal} 
\emph{dcpo-completion}. For our purposes we prefer the the latter
terminology, since there are Scott-continuous order embeddings of
posets into dcpos which are dense for directed suprema but not
universal, for example the embedding of $\Rp^2$ into the dcpo
obtained by adding a top element.} of a poset
$P$ consists of a dcpo $\overline P$  
and a Scott-continuous map $\xi\colon P\to\overline P$ enjoying the
universal property that every Scott-continuous
map $f$ from $P$ to a dcpo $Q$ has a unique Scott-continuous extension
$\overline f\colon \overline P\to Q$ satisfying $f =\overline f\circ
\xi$. A universal dcpo-completion, if it exists, is evidently unique up to a
canonical isomorphism.

Let us extract relevant information about universal dcpo-completions
of a poset $P$ from the literature\display}
\begin{enumerate}
\item Every poset $P$ has a universal dcpo-completion. {\rm One may, for
  example \cite[Theorem 1]{ZF}, take the least sub-dcpo $\ov{P}$ of
  the dcpo of all 
  nonempty Scott-closed subsets of $P$ (ordered by inclusion) containing
  the principal ideals $\da x$ with $x\in P,$ and $\xi
  \eqdef (x\mapsto \da x)\colon P\to\ov{P}$ as canonical map.}
\item Let $\xi\colon P\to D$ be a topological embedding (for the
  respective Scott topologies) of a poset $P$ into a dcpo $D$. Then
  the least sub-dcpo $\ov{P}$ of $D$ containing the image $\xi(P)$
  together with the corestriction $\xi\colon P\to\ov{P}$ is a universal
  dcpo-completion and the Scott topology of $\ov{P}$ is the subspace
  topology induced by the Scott topology on $D$. {\rm (See \cite[Theorem
  7.4]{KL}).} 
\item A function $\xi\colon P\to D$ of a poset $P$ into a dcpo $D$ is
  a universal dcpo-completion if, and only if,\\
  (i) $\xi$ is a topological embedding (for the Scott topologies)
  and\\
  (ii) the image $\xi(P)$ is dense in $D$\\
  {\rm  (This
  follows from the previous item, since any two universal
  dcpo-completions are isomorphic.)}
\item Let $P_1,\dots,P_n,Q$ be posets and let 
  $\ov{P_1},\dots,\ov{P_n},\ov{Q}$ be universal dcpo-completions
  thereof. Then every Scott-continuous function
  $f\colon P_1\times\dots\times P_n\to Q$ has a unique
  Scott-continuous extension $\ov{f}\colon
  \ov{P_1}\times\dots\times\ov{P_n}\to \ov{Q}$. Further, if $g$ is another
  such function, then $\ov{f}\leq\ov{g}$ if, and only if, $f\leq g$.
  {\rm (Here the products are understood to have the product
      order. Thus, the claim follows from \cite[Proposition 5.6]{KL1},
    since functions 
  defined on finite products are Scott-continuous if, and only if, they
  are separately Scott-continuous in each of their arguments.)}
\item The universal dcpo-completion of a finite direct product of
  posets is the direct product of the universal dcpo-completions of its
  factors. More precisely, if $\xi_i\colon
  P_i\to\ov{P_i}\; (i =1,\dots,n)$  are universal dcpo-completions, then so is
   \[\xi \eqdef \xi_1\times\dots\times\xi_n \colon P_1\times\dots\times P_n\to  \ov{P_1}\times\dots\times \ov{P_n}\]
   %
 {\rm (This follows directly from the
   previous item.)}     
\end{enumerate}
\end{standardconstr}
}

In the characterisation \ref{sc3}(3) of universal dcpo-completions above,
the first condition --- being a topological embedding --- is the
critical one. As we now show, it holds automatically in many
situations. 

\makeblue{To begin with, we remark that, for a Scott-closed subset $C$
  of a poset $P$, the canonical embedding of $C$ into $P$ is
  topological, that is, the intrinsic Scott topology on $C$ is the
  subspace topology induced by the Scott topology on $P$. 

But even on a lower subset $P$ of a dcpo $Q$, the intrinsic Scott
topology of $P$ may be strictly finer than the subspace topology
induced by the Scott topology of $Q$. A simple example for this
phenomenon is given by $P=\Rp\times\Rp$ with the coordinatewise order
and $Q=P^\top$, the dcpo obtained by attaching to $P$ a top
element. Here $[0,1] \times \Rp$ is Scott-closed in $P$, but the Scott
closure of this subset in $Q$ is all of $Q$.} 
   
Following~\cite{KLL03} we say that a dcpo $P$ is \emph{meet continuous} if for any $x \in P$ and any directed set $D \subseteq P$ with  $x \leq \bigvee^{\uparrow} D$, $x$ is in the Scott closure of $\da x \cap \da D$. All domains are meet continuous as are all dcpos with a Scott-continuous meet operation.



%
\begin{lem}\label{lem:meetcont}
The canonical embedding of a lower subset $P$ of a meet continuous
dcpo $Q$ 
is a topological embedding (for the respective Scott topologies).
\end{lem}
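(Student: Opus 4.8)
The plan is to show that the subspace topology on $P$ induced from the Scott topology of $Q$ and the intrinsic Scott topology of $P$ coincide. The observation used throughout is that whenever a directed set $D\subseteq P$ has a least upper bound in $P$, this coincides with $\bigvee^{\uparrow}D$ computed in $Q$: the latter exists since $Q$ is a dcpo, lies below the $P$-supremum and hence belongs to $P$ because $P$ is a lower set, so it is an upper bound of $D$ in $P$ and therefore equals the $P$-supremum. From this it is routine that for every Scott-open $U\subseteq Q$ the set $U\cap P$ is Scott-open in $P$ (it is an upper set, and if a directed $D\subseteq P$ has its $P$-supremum in $U\cap P$, that supremum equals $\bigvee^{\uparrow}_{Q}D\in U$, so $D$ meets $U$). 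Thus the subspace topology is contained in the Scott topology of $P$; meet continuity plays no role here.

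For the reverse inclusion it suffices to prove that every Scott-closed subset $A$ of $P$ satisfies $A=\mathrm{cl}_Q(A)\cap P$, where $\mathrm{cl}_Q$ denotes Scott closure in $Q$; only the inclusion $\mathrm{cl}_Q(A)\cap P\subseteq A$ is at issue. Fix $x\in\mathrm{cl}_Q(A)\cap P$. Two preliminary facts: first, $A$ is a lower set of $Q$ (if $w\leq a\in A$ in $Q$ then $w\in P$ as $P$ is a lower set, and hence $w\in A$); second, since $x\in P$ we have $\da x\subseteq P$, and $\da x\cap A$ is Scott-closed in $Q$ — it is a lower set, and if $D\subseteq\da x\cap A$ is directed then $\bigvee^{\uparrow}_{Q}D\leq x$ lies in $P$, equals $\bigvee^{\uparrow}_{P}D$ by the observation above, and hence belongs to $A$. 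Consequently it is enough to establish $x\in\mathrm{cl}_Q(\da x\cap A)$, for then $x\in\mathrm{cl}_Q(\da x\cap A)=\da x\cap A\subseteq A$.

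The crux, and the only place meet continuity is used, is proving $x\in\mathrm{cl}_Q(\da x\cap A)$. I would realise $\mathrm{cl}_Q(A)$ as the increasing transfinite chain $A_0=A$, $A_{\alpha+1}=\da\{\bigvee^{\uparrow}_{Q}D\mid D\subseteq A_\alpha\text{ directed}\}$, with unions at limit ordinals; since $A$ is a lower set each $A_\alpha$ is a lower set and the chain stabilises at $\mathrm{cl}_Q(A)$. One then proves by transfinite induction on $\alpha$ that $A_\alpha\cap\da y\subseteq\mathrm{cl}_Q(\da y\cap A)$ for every $y\in Q$. The base case and the limit stages are immediate. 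In the successor step, given $z\in A_{\alpha+1}\cap\da y$, choose a directed $D\subseteq A_\alpha$ with $z\leq\bigvee^{\uparrow}_{Q}D$; meet continuity then gives $z\in\mathrm{cl}_Q(\da z\cap\da D)$, and since $z\leq y$ and $A_\alpha$ is a lower set containing $D$ we get $\da z\cap\da D\subseteq A_\alpha\cap\da y\subseteq\mathrm{cl}_Q(\da y\cap A)$ by the induction hypothesis, whence $z\in\mathrm{cl}_Q(\da y\cap A)$. Taking the union over all $\alpha$ and specialising $y$ to $x$ (which lies in $\mathrm{cl}_Q(A)\cap\da x$) yields $x\in\mathrm{cl}_Q(\da x\cap A)$, and combined with the previous paragraph this finishes the proof. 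I expect this step to be the main obstacle: one cannot simply intersect with $\da x$ and drag the Scott closure along, because the intermediate stages $A_\alpha$ need not be Scott-closed (equivalently, $Q$ lacks meets), and it is precisely the transfinite induction together with the meet-continuity hypothesis that repairs this; if preferred, the same content can be packaged as the standard equivalent form of meet continuity, namely that $x\in\mathrm{cl}_Q(S)$ implies $x\in\mathrm{cl}_Q(\da x\cap\da S)$.
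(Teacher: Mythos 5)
Your proof is correct, and its second half takes a genuinely different route from the paper's. The first half (directed sups in $P$ agree with those computed in $Q$, so traces of $Q$-Scott-opens are $P$-Scott-open) is the same in both arguments. For the harder inclusion, the paper works with open sets: given a $P$-Scott-open $U$ it shows the upward closure $\ua U$ taken in $Q$ is $Q$-Scott-open, invoking meet continuity exactly once at the witness point $x\leq\bigvee^{\uparrow}D$ and then using the remark that the Scott topology of the closed set $\da x$ agrees with the subspace topologies induced from both $P$ and $Q$, so that the $P$-open $U$ meeting the closure of $\da x\cap\da D$ forces it to meet $\da x\cap\da D$ itself. You instead work with closed sets, reducing to $A=\mathrm{cl}_Q(A)\cap P$ for $P$-Scott-closed $A$, observing that $\da x\cap A$ is already $Q$-Scott-closed, and then proving $x\in\mathrm{cl}_Q(A)\cap\da x\subseteq\mathrm{cl}_Q(\da x\cap A)$ by transfinite induction along the stages of the Scott closure, applying the pointwise meet-continuity hypothesis at every successor step. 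Your route is heavier machinery-wise (the iterated-closure description of $\mathrm{cl}_Q$ and the induction, where the paper gets by with one application of meet continuity plus the closed-subset topology remark), but it buys something: en route you establish the relativised closure property $\mathrm{cl}_Q(A)\cap\da y\subseteq\mathrm{cl}_Q(\da y\cap A)$ for lower sets $A$, which is essentially the standard equivalent formulation of meet continuity and is of independent use; it also avoids any appeal to the agreement of subspace and intrinsic Scott topologies on principal ideals. Both proofs share the same core move of using meet continuity to pull the problem down into $\da x\subseteq P$.
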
 
\makeblue{\begin{proof}
One checks that every directed sup in $P$ is also a directed sup in $Q$. Consequently the inclusion of $P$ in $Q$ is Scott-continuous. So, 
for every subset $V$ of $Q$ that is Scott-open \makered{in $Q$}, $V\cap P$ is
open for the Scott topology on $P$. 

In the other direction we have to show: Let $U$ be a
subset of $P$ which is open for the Scott topology on $P$. Then there
is a Scott-open subset $V$ of $Q$ such that $U=V\cap P$. Since 
$\ua U\cap P=U$, it suffices to show that $\ua U$ is Scott-open in
$Q$, where $\ua U$ is the upper set in $Q$ generated by $U$. For this
take any directed subset $D$ in $Q$ with supremum $d$ in $\ua
U$. We have to show that there is a $c \in D$ with $c \in  \ua U$.

To this end, as $d$ in $\ua U$, there is an 
$x\in U$ with $x\leq d$. By meet
continuity, $x$ is in the Scott closure w.r.t. $Q$ of $\da x\cap \da D$. 
Following the remark above on the Scott topologies of closed subsets of partial orders, we see that the intrinsic Scott topology on $\da x$ agrees
with \makered{both} the subspace topology induced by the Scott topology on $Q$, and the subspace topology induced by the Scott topology on $P$.  As 
the $P$-closure of the set $\da x\cap \da D$
intersects 
with the $P$-open set $U$, the set $\da x\cap\da D$ itself
intersects $U$. For an element
$x'$ in the intersection, one has $x'\in U$ and $x'\leq c$ for some
$c\in D$, whence $c\in\ua U$, and we see that this $c$ has the
required properties. 
\end{proof}}

%

The previous lemma yields a sufficient condition for  the 
universality of dcpo-completions:

%

\begin{cor}\label{cor:meetcontinuous}
The canonical embedding of a lower subset $P$ of a meet continuous
dcpo $Q$  
is a universal dcpo-completion if,
and only if, $P$ is dense in $Q$. \qed
%
\end{cor}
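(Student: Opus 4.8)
The plan is to obtain this as an immediate combination of Lemma~\ref{lem:meetcont} with the characterisation of universal dcpo-completions recorded in Standard Construction~\ref{sc3}(3). Recall that the latter says: a function $\xi\colon P\to D$ from a poset $P$ into a dcpo $D$ is a universal dcpo-completion if, and only if, (i) $\xi$ is a topological embedding for the Scott topologies, and (ii) the image $\xi(P)$ is dense in $D$. Here $D$ is the meet continuous dcpo $Q$ (which is in particular a dcpo, so \ref{sc3}(3) applies), and $\xi$ is the canonical inclusion of the lower subset $P$. By Lemma~\ref{lem:meetcont}, condition~(i) is automatic: the canonical embedding of a lower subset of a meet continuous dcpo is always a topological embedding. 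Hence the pair $(Q,\xi)$ is a universal dcpo-completion of $P$ precisely when condition~(ii) holds; and since $\xi(P)=P$ under the canonical inclusion, condition~(ii) is exactly the statement that $P$ is dense in $Q$. This yields both implications at once.

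There is essentially no obstacle here, since the substantive work — that the intrinsic Scott topology of $P$ agrees with the subspace topology inherited from $Q$ — has already been carried out in the proof of Lemma~\ref{lem:meetcont}, using meet continuity of $Q$ and the fact that $P$ is a lower set (so that directed suprema computed in $P$ coincide with those in $Q$). The only point worth stating explicitly in the writeup is that $P$, being a lower, hence sub-dcpo, subset of $Q$, is indeed a dcpo in its own right with directed suprema inherited from $Q$, so that comparing it against $Q$ via \ref{sc3}(3) is legitimate; this too is contained in the opening lines of the proof of Lemma~\ref{lem:meetcont}. Accordingly the proof is a single short paragraph citing Lemma~\ref{lem:meetcont} and \ref{sc3}(3), which is why the statement can simply be closed with \qed in the text.
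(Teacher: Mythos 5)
Your main argument is correct and is exactly the proof the paper intends (which is why the corollary is stated with a bare \qed): Lemma~\ref{lem:meetcont} supplies condition (i) of the characterisation~\ref{sc3}(3), so being a universal dcpo-completion reduces to condition (ii), which under the canonical inclusion is precisely the statement that $P$ is dense in $Q$ in the paper's sense ($P^d=Q$). One aside in your writeup is false, although harmless because it is not needed: a lower subset of a dcpo is \emph{not} in general a sub-dcpo, and $P$ need not be a dcpo in its own right (e.g.\ $[0,1[$ inside $[0,1]$, or $\Cone(K)$ inside $\dCone(K)$ — indeed, a lower subset that were both a sub-dcpo and dense would have to equal $Q$, trivialising the corollary). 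No such hypothesis is required, since \ref{sc3}(3) is stated for an arbitrary poset $P$ mapped into a dcpo; the opening line of the proof of Lemma~\ref{lem:meetcont} asserts only that directed suprema which happen to exist in $P$ coincide with those computed in $Q$, not that $P$ is closed under directed suprema.
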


We now consider the universal dcpo-completion of an
s-cone.

\begin{prop}\label{prop:sc3}
Let $C$ be an s-cone and $\ov{C}$ a universal dcpo-completion. 
   Then addition and scalar multiplication on $C$ extend 
uniquely to Scott-continuous operations on the dcpo-completion $\overline
C$ which thus becomes a d-cone. The unique Scott-continuous extension
$\ov{f}\colon \ov{C}\to D$ of a Scott-continuous function
$f$ from $C$ to a d-cone $D$ is homogeneous, sublinear,
superlinear, or linear, respectively, if $f$ is. Moreover, $f\leq g$ if
and only if $\ov{f}\leq \ov{g}$.  
\end{prop}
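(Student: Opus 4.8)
The plan is to piggyback on the machinery already established in Standard Construction \ref{sc3}, using the explicit extension result \ref{sc3}(4) for Scott-continuous functions on finite products of posets, together with the fact that a universal dcpo-completion is unique up to canonical isomorphism. First I would invoke \ref{sc3}(5): since $\ov C$ is a universal dcpo-completion of $C$, the product map $\xi\times\xi\colon C\times C\to \ov C\times\ov C$ is a universal dcpo-completion of $C\times C$, and likewise $\id_{\Rp}\times\xi\colon \Rp\times C\to\Rp\times\ov C$ is a universal dcpo-completion of $\Rp\times C$ (here $\Rp$, being a dcpo, is its own universal dcpo-completion). Then addition $+\colon C\times C\to C$ and scalar multiplication $\cdot\colon\Rp\times C\to C$ are Scott-continuous maps whose codomain $C$ embeds via $\xi$ into the dcpo $\ov C$; composing with $\xi$ and applying \ref{sc3}(4) gives unique Scott-continuous extensions $\ov{+}\colon\ov C\times\ov C\to\ov C$ and $\ov{\cdot}\colon\Rp\times\ov C\to\ov C$.

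Next I would check that these extended operations still satisfy the cone equations. The key tool is the uniqueness clause of \ref{sc3}(4): each cone axiom is an equation between two Scott-continuous functions built from $+$, $\cdot$, and projections on a finite power of $C$ (for instance associativity is an equation of maps $C^3\to C$). Both sides extend uniquely to Scott-continuous maps on the corresponding finite power of $\ov C$ — using \ref{sc3}(5) to identify that power as the universal dcpo-completion of $C^3$, etc. — and since the two extensions agree on the dense subset $\xi(C^3)$ (where they equal the two sides of the original axiom, which coincide there), they are equal. The same density argument, now with the monotonicity half of \ref{sc3}(4), shows monotonicity of $\ov{+}$ and $\ov{\cdot}$ is inherited. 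Finally, continuity of scalar multiplication forces $0$ to be least in $\ov C$ exactly as in the remark before the definition of s-cone. Hence $\ov C$ is a d-cone.

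For the statement about extensions of a Scott-continuous $f\colon C\to D$ into a d-cone $D$, the existence and uniqueness of a Scott-continuous $\ov f\colon\ov C\to D$ is immediate from the universal property of the dcpo-completion. To see that $\ov f$ inherits homogeneity, sublinearity, superlinearity, or linearity, I would again argue by density: e.g.\ if $f$ is subadditive then the two Scott-continuous maps $\ov C\times\ov C\to D$ given by $(x,y)\mapsto \ov f(x\mathbin{\ov+}y)$ and $(x,y)\mapsto \ov f(x)+\ov f(y)$ satisfy the first $\leq$ the second on the dense set $\xi(C)\times\xi(C)$; since directed suprema are computed coordinatewise and $D$'s order is directed complete, the inequality propagates to all of $\ov C\times\ov C$ (one takes a directed net in $\xi(C)\times\xi(C)$ converging to an arbitrary pair, using density, and passes to the sup). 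Homogeneity and superadditivity are handled identically, and linearity follows by combining the two. The clause $f\leq g\iff\ov f\leq\ov g$ is the analogous one-variable density argument (the forward direction by density and directed-completeness of $D$, the backward direction by restriction along $\xi$, using that $\xi$ need not be surjective but $f=\ov f\circ\xi$).

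The main obstacle, and the only place requiring genuine care, is the density-propagation step: one must know that every element of $\ov C$ (resp.\ $\ov C\times\ov C$, $\ov C^3$) is a directed supremum of elements of the image of $\xi$, and that the relevant operations and the function $f$ preserve such suprema, so that an (in)equality valid on the dense image extends. This is exactly what \ref{sc3}(3)(ii) (density of the image) plus the Scott-continuity of all maps involved delivers, and for finite products \ref{sc3}(5) guarantees the product completion is the completion of the product, so "dense in each factor" gives "dense in the product." Once this is set up, every remaining verification is a routine application of the uniqueness-and-density principle, so I would not spell them all out.
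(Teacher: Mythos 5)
Your proposal follows essentially the same route as the paper: extend the operations along the product completions via \ref{sc3}(4) and \ref{sc3}(5), transfer the cone equations by uniqueness of extensions (the paper simply cites \cite{KL1} for this step), and transfer the (in)equational properties of $\ov{f}$ over the completion of $C\times C$ --- where the paper just invokes the final clause of \ref{sc3}(4), namely $\ov{g}\leq\ov{h}$ iff $g\leq h$, rather than redoing the density argument as you do. Two small repairs are needed in your write-up: first, $\Rp$ is \emph{not} a dcpo (unbounded directed sets have no suprema), so it is not its own universal dcpo-completion; instead take $\oRp$ as the completion of $\Rp$ in \ref{sc3}(5), extend scalar multiplication to $\oRp\times\ov{C}$, and restrict to $\Rp\times\ov{C}$ (the paper glosses over the same point). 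Second, ``dense'' here means that iterated closure under directed suprema exhausts the completion, not that every element is a one-step directed supremum of image elements, so the propagation step is best phrased as: the set on which the desired inequality holds is a sub-dcpo (by Scott-continuity of the two maps involved) containing the dense image, hence is everything --- which is exactly what the last clause of \ref{sc3}(4) packages.
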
   

\begin{proof}
By property \ref{sc3}(4) of universal dcpo-completions, addition and
scalar multiplication on $C$ extend uniquely to  
Scott-continuous operations on $\overline C$. 
As a consequence of  \cite[Proposition 8.1]{KL1}, the extended
operations obey the same equational laws 
as in $C$, that is, $\overline C$ is a d-cone. 
Now let $D$ be any d-cone and $f\colon C\to D$ a Scott-continuous
map. By the universal property, $f$ has a
unique Scott-continuous extension $\ov{f}\colon \ov{C}\to D$. If $f$
is subadditive, let us show that $\ov{f}$ is subadditive, too.
For this we consider the two
maps $g:(a,b)\mapsto f(a+b)$ and $h:(a,b) \mapsto f(a)+f(b)$ from $C\times
C$ to $D$. Clearly,  $(a,b)\mapsto \ov{f}(a+b)$ and $(a,b)\mapsto
\ov{f}(a)+\ov{f}(b)$ are Scott-continuous extensions of $g$ and $h$ to
$\ov{C}\times \ov{C}$. Since $\ov{C}\times \ov{C}$ is the
dcpo-completion of $C\times C$ by property \ref{sc3}(4) of dcpo-completions,
these are the 
unique Scott-continuous extensions $\ov{g}$ and $\ov{h}$,
respectively. The subadditivity of $f$ is equivalent to the statement
that $g\leq h$ which, by the last part of property \ref{sc3}(4) of
dcpo-completions, is equivalent to
$\ov{g}\leq\ov{h}$ which again is equivalent to the subadditivity of
$\ov{f}$. The argument for superadditivity is similar and for
homogeneity it is even simpler. 
\end{proof}


%

We now consider a full \KS\ $K$ and apply the completion procedure above 
to $\Cone(K)$ which, by Proposition
\ref{lem:bcone}, is the universal b-cone  over $K$; we write
$\dCone(K)$ for its universal 
dcpo-completion $\ov{\Cone(K)}$. We know that $K$ is embedded in $\Cone(K)$ as a
Scott-closed convex set. \makeblue{Since $\Cone(K)$ is bounded
  directed complete, it is a lower set in its dcpo-completion. Thus
  $K$ is a lower set in  $\ov{\Cone(K)}$, too. Further, since $K$ is embedded in $\Cone(K)$ and since universal dcpo-completions preserve existing directed sups, $K$ is a sub-dcpo of $\ov{\Cone(K)}$. So
$K$ is Scott-closed in the
universal dcpo-completion $\ov{\Cone(K)}=\dCone(K)$. }

\mycut{\tt GDP added a bit to the above}

\makered{Using Propositions
\ref{lem:bcone} and   \ref{prop:sc3} we then obtain\display}

\begin{thm}\label{prop:KSembed}
Let $K$ be a full \KS. 
 The universal dcpo-completion
$\dCone(K)$ of the b-cone $\Cone(K)$ according to the Standard
Construction \ref{sc2} is a d-cone and $K$ is embedded in this
d-cone as a Scott-closed convex set. 

The embedding 
of $K$ into $\dCone(K)$ is universal in the sense that every
Scott-continuous homogeneous map $f$ from $K$ into a d-cone $D$ has a
unique Scott-continuous homogeneous extension  $\ov{f}\colon
\dCone(K)\to D$. Moreover, the extension $\ov{f}$ is sublinear, superlinear, or
linear if, and only if, $f$ is. Moreover, $f\leq g$ if, and only if, 
$\ov{f}\leq \ov{g}$.   \qed
\end{thm}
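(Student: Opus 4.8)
The plan is to obtain the statement by stacking the two construction steps already carried out --- Proposition~\ref{lem:bcone} (the free b-cone $\Cone(K)$ over a full \KS) and Proposition~\ref{prop:sc3} (the dcpo-completion of an s-cone) --- together with the observations recorded in the paragraph immediately preceding the theorem. No genuinely new argument is needed; the substance lies in chaining the two universal properties and confirming that the relevant structure survives each step.

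First I would record, from Proposition~\ref{lem:bcone}, that $\Cone(K)$ is a b-cone, that the canonical map $u\colon K\to\Cone(K)$ of Standard Construction~\ref{sc2} is a Scott-continuous linear order embedding whose image is a Scott-closed convex subset, and that $\Cone(K)$ is free over $K$ with respect to Scott-continuous homogeneous maps into b-cones. Applying Proposition~\ref{prop:sc3} to the s-cone $\Cone(K)$ then makes its universal dcpo-completion $\dCone(K)=\ov{\Cone(K)}$ into a d-cone. For the embedding claim: $\Cone(K)$ is bounded directed complete, hence a lower subset of $\ov{\Cone(K)}$, so $K$, being a lower subset of $\Cone(K)$, is a lower subset of $\dCone(K)$; and since a universal dcpo-completion preserves existing directed suprema and $K$ is a sub-dcpo of $\Cone(K)$, $K$ is a sub-dcpo of $\dCone(K)$. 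A lower sub-dcpo is Scott-closed, and convexity passes down from $\Cone(K)$, so $K$ sits in $\dCone(K)$ as a Scott-closed convex set; the embedding in the theorem is the composite of the two topological embeddings $K\hookrightarrow\Cone(K)\hookrightarrow\dCone(K)$.

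For the universal property, given a Scott-continuous homogeneous $f\colon K\to D$ into a d-cone $D$ (which is in particular a b-cone), I would first use Proposition~\ref{lem:bcone} to produce the unique Scott-continuous homogeneous $\widetilde f\colon\Cone(K)\to D$ with $\widetilde f\circ u=f$, and then use the universal property of the dcpo-completion (Proposition~\ref{prop:sc3}) to produce the unique Scott-continuous extension $\ov{\widetilde f}\colon\dCone(K)\to D$ of $\widetilde f$ along the completion map $\xi\colon\Cone(K)\to\dCone(K)$; this $\ov{\widetilde f}$ restricts to $f$ on $K$. Uniqueness runs backwards: any Scott-continuous homogeneous extension $g$ of $f$ along $K\hookrightarrow\dCone(K)$ restricts on $\Cone(K)$ to a Scott-continuous homogeneous extension of $f$ along $u$, hence equals $\widetilde f$ by Proposition~\ref{lem:bcone}, and is then a Scott-continuous extension of $\widetilde f$ along $\xi$, hence equals $\ov{\widetilde f}$ by density of $\Cone(K)$ in $\dCone(K)$. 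The sub-/super-/linearity clauses and the monotonicity clause $f\leq g\iff\ov f\leq\ov g$ follow by composing the corresponding clauses of Propositions~\ref{lem:bcone} and~\ref{prop:sc3}.

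The only point needing care --- more a sanity check than an obstacle --- is the interplay of the various Scott topologies along the chain $K\subseteq\Cone(K)\subseteq\dCone(K)$: one must know that $K$'s intrinsic Scott topology coincides with the subspace topology from $\Cone(K)$ (true because $K$ is Scott-closed there) and that $\Cone(K)$'s coincides with the subspace topology from $\dCone(K)$ (true because $\Cone(K)$ is a lower subset of a meet-continuous dcpo, via Lemma~\ref{lem:meetcont}), so that both inclusions are genuinely topological embeddings and the two applications of the universal properties above are legitimate. With those identifications in hand, the theorem is a formal consequence of Propositions~\ref{lem:bcone} and~\ref{prop:sc3}.
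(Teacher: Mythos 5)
Your proposal is correct and follows essentially the same route as the paper: the theorem is obtained by chaining Proposition~\ref{lem:bcone} (free b-cone over a full \KS) with Proposition~\ref{prop:sc3} (dcpo-completion of an s-cone), with the Scott-closedness of $K$ in $\dCone(K)$ argued exactly as in the paper (lower set because $\Cone(K)$ is bounded directed complete, sub-dcpo because the completion preserves existing directed sups). The only small quibble is your appeal to Lemma~\ref{lem:meetcont} for the inclusion $\Cone(K)\hookrightarrow\dCone(K)$ being topological: that would require $\dCone(K)$ to be meet continuous, which is not known here, but it is also unnecessary, since being a topological embedding is part of the characterisation of a universal dcpo-completion in Standard Construction~\ref{sc3}(3).
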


\makered{Below, we wish to identify some naturally occurring embeddings as universal. To that end, we begin with a proposition characterising universal embeddings. 
 The \emph{standard factorisation} of a Scott-continuous linear order embedding $K \xrightarrow{e} C$ of a \KS\ in a d-cone is 
 \[K \xrightarrow{u} B \xrightarrow{\xi} C\]
  where $B$ is the sub-cone of $C$ with carrier $\{r e(a) \mid r
  > 1,  a \in K\}$ and the induced order, $u$ is the co-restriction of
  $e$, and $\xi$ is the inclusion. 
\begin{lem} \label{technical} Let $K \xrightarrow{u} B
  \xrightarrow{\xi} C$  be the standard factorisation of a
  Scott-continuous linear order  embedding
  $K \xrightarrow{e} C$ of a full \KS\ $K$ 
in a d-cone. Then $B$ is the free b-cone over $K$, with unit $u$, with respect to Scott continuous homogeneous maps, and $\xi$ is a Scott-continuous linear map. 
\end{lem}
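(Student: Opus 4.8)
The plan is to identify $B$, up to isomorphism, with the free b-cone $\Cone(K)$ of Proposition~\ref{lem:bcone}, and then read off both assertions by transporting along that isomorphism. Since $C$ is a d-cone it is in particular a b-cone, and since $K$ is a full \KS, Proposition~\ref{lem:bcone} makes $\Cone(K)$ the free b-cone over $K$ with respect to Scott-continuous homogeneous maps, with unit the canonical order embedding $u_K\colon K\to\Cone(K)$, which realises $K$ as a Scott-closed convex subset of $\Cone(K)$; moreover, as $K$ satisfies (OC2) by Lemma~\ref{oc2}, every element of $\Cone(K)$ is of the form $r\cdot u_K(a)$ for some $r\geq 1$ and $a\in K$. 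First I would apply this universal property to the given Scott-continuous homogeneous map $e\colon K\to C$, obtaining the unique Scott-continuous homogeneous extension $\widetilde e\colon\Cone(K)\to C$ with $\widetilde e\circ u_K=e$; since $e$ is linear, so is $\widetilde e$, again by Proposition~\ref{lem:bcone}. Its image is $\{r\cdot e(a)\mid r\geq 1,\ a\in K\}$, which is exactly the carrier of $B$ (it contains $e(K)$ because $e(a)=r\cdot e(r^{-1}\cdot a)$ for $r>1$); hence $B=\widetilde e(\Cone(K))$ is indeed a sub-cone of $C$, and $\widetilde e$ corestricts to a surjective linear map $\widetilde e'\colon\Cone(K)\to B$ satisfying $\widetilde e'\circ u_K=u$.

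Next I would show that $\widetilde e'$ is injective and reflects the order, hence is an isomorphism of ordered cones and, being linear, of cones. Both points reduce, via the normal forms above, to $e$ being an order embedding: for $r,s\geq 1$ and $a,b\in K$, writing $m=\max(r,s)$, the inequality $\widetilde e'(r\cdot u_K(a))\leq\widetilde e'(s\cdot u_K(b))$ reads $r\cdot e(a)\leq s\cdot e(b)$ in $C$, whence $e(\tfrac{r}{m}\cdot a)\leq e(\tfrac{s}{m}\cdot b)$ with both arguments in $K$, whence $\tfrac{r}{m}\cdot a\leq\tfrac{s}{m}\cdot b$ in $K$, whence $r\cdot u_K(a)\leq s\cdot u_K(b)$; the equality case is the same computation. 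Transporting along the isomorphism $\widetilde e'$, the set $B$ with its induced order is a b-cone (bounded directed completeness and Scott-continuity of addition and scalar multiplication all pass across an order-and-cone isomorphism), and, since $\Cone(K)$ is the free b-cone over $K$ with respect to Scott-continuous homogeneous maps with unit $u_K$ and $\widetilde e'\circ u_K=u$, so is $B$ with unit $u$; the sub/superlinearity and order-monotonicity refinements recorded in Proposition~\ref{lem:bcone} transport the same way.

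Finally, $\xi\colon B\hookrightarrow C$ is linear, being the inclusion of a sub-cone, and it is Scott-continuous because $\xi=\widetilde e\circ(\widetilde e')^{-1}$: indeed $\xi\circ\widetilde e'=\widetilde e$ by construction of $\widetilde e'$, while $\widetilde e$ is Scott-continuous by the first step and $(\widetilde e')^{-1}$ is Scott-continuous as the inverse of an order isomorphism. I expect the one genuinely delicate point to be the verification that $\widetilde e'$ is an order isomorphism --- the injectivity and order-reflection obtained by rescaling the two elements to the common scalar $m$ --- together with the routine but necessary bookkeeping that the b-cone structure and the universal property really do transport along $\widetilde e'$; note that fullness of $K$ is used only to invoke Proposition~\ref{lem:bcone}, where it guarantees both that $\Cone(K)$ is a b-cone and that $K$ sits inside it as a (lower) Scott-closed set.
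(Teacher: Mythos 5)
Your proof is correct, but it reaches the conclusion by a somewhat different route than the paper. The paper's proof is a recognition argument: since $B$ carries the order induced from $C$, the corestriction $u\colon K\to B$ is trivially a linear order embedding, and the defining carrier condition of $B$ is exactly the criterion of Corollary~\ref{free-oCone}; hence $u$ is universal for monotone homogeneous maps, and Proposition~\ref{lem:bcone} upgrades this to freeness of $B$ as a b-cone for Scott-continuous homogeneous maps. Scott-continuity of $\xi$ then falls out of uniqueness of extensions: $\xi$ is the monotone homogeneous extension of the Scott-continuous map $e$ along $u$, so it must coincide with the Scott-continuous extension. You instead bypass Corollary~\ref{free-oCone} entirely and build the comparison isomorphism $\Cone(K)\cong B$ concretely, as the corestriction $\widetilde e'$ of the universal extension $\widetilde e$ of $e$, verifying surjectivity via the normal forms $r\cdot u_K(a)$, $r\geq 1$ (Lemma~\ref{two-properties}; note this normal form holds for any ordered pointed barycentric algebra and does not really need (OC2)) and order-reflection by the rescaling computation with $m=\max(r,s)$; you then get Scott-continuity of $\xi$ from the factorisation $\xi=\widetilde e\circ(\widetilde e')^{-1}$. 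Both arguments ultimately rest on Proposition~\ref{lem:bcone} and on identifying $B$ with $\Cone(K)$; what the paper's route buys is brevity --- the order-embedding step you labour over is immediate once one works with $u$ into $B$ rather than with $\widetilde e'$ out of $\Cone(K)$, and Corollary~\ref{free-oCone} packages the rest --- while your route makes the isomorphism and the transport of structure explicit, at the cost of the extra (but correct) rescaling and bookkeeping.
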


\begin{proof} Clearly $K \xrightarrow{u} B$ is an embedding of an ordered pointed barycentric algebra in an ordered cone. 
Using  Corollary~\ref{free-oCone} and Proposition~\ref{lem:bcone}, we
then see that $B$ is the free b-cone over $K$, with unit $u$, with
respect to both monotone \makeblue{homogeneous maps} and Scott-continuous
homogeneous maps.  

  

The map $\xi$ is a monotone homogeneous map from $B$ into $C$
extending $e$ along $u$ \makeblue{(i.e., $\xi\circ u = e$)}. As $e$ is Scott-continuous and $B$ is the free b-cone over $K$, with unit $u$, with respect to both monotone and Scott-continuous homogeneous maps, $\xi$ is in fact Scott-continuous. It is evidently linear.
%
%
\end{proof}


\begin{prop} \label{d-comp-char} Let $K \xrightarrow{u} B
  \xrightarrow{\xi} C$  be the standard factorisation of an  embedding
  $K \xrightarrow{e} C$ of a  full    
 \KS\ $K$ in a d-cone. Then $e$ is
  universal for Scott-continuous homogeneous maps if, and only if, $C$
  is the universal dcpo-completion over $B$, with unit $\xi$.
\end{prop}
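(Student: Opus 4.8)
The plan is to prove the two implications separately, reading the standard factorisation $K \xrightarrow{u} B \xrightarrow{\xi} C$ as a composite of two universal constructions. By Lemma~\ref{technical}, $B$ is the free b-cone over $K$ with unit $u$ (with respect to Scott-continuous homogeneous maps) and $\xi$ is Scott-continuous and linear; note also that $B$, being a b-cone, is in particular an s-cone, and that $C$ is a d-cone by hypothesis. The proof is then essentially a matter of composing the b-cone freeness of $B$ with the dcpo-completion theory of Standard Construction~\ref{sc3} and Proposition~\ref{prop:sc3}.

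For the direction ``$C$ is the universal dcpo-completion of $B$ with unit $\xi$ $\Rightarrow$ $e$ is universal for Scott-continuous homogeneous maps'': given a Scott-continuous homogeneous $f\colon K\to D$ with $D$ a d-cone, first use the freeness of $B$ to obtain the unique Scott-continuous homogeneous $f'\colon B\to D$ with $f'\circ u = f$, then use the universal dcpo-completion property of $\xi$ to obtain the unique Scott-continuous $\ov{f}\colon C\to D$ with $\ov{f}\circ\xi = f'$; then $\ov{f}\circ e = \ov{f}\circ\xi\circ u = f$. The one point requiring care is that $\ov{f}$ is homogeneous. For this, observe that the d-cone operations of $C$ restrict to those of $B$ and are Scott-continuous, so they are the (unique) Scott-continuous extensions of $B$'s operations to $\ov{B}=C$; hence $C$ is precisely the d-cone produced from the s-cone $B$ by Proposition~\ref{prop:sc3}, and that proposition gives homogeneity of $\ov{f}$ (and shows it is the unique Scott-continuous homogeneous extension of $f'$ along $\xi$). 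Uniqueness of $\ov{f}$ as an extension of $f$ along $e$ is then formal: any competitor $g\colon C\to D$ restricts along $\xi$ to a Scott-continuous homogeneous extension of $f$ along $u$, so $g\circ\xi = f'$ by freeness of $B$, and then $g$ is a Scott-continuous extension of $f'$ along $\xi$, so $g = \ov{f}$.

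For the converse, suppose $e$ is universal for Scott-continuous homogeneous maps. By Theorem~\ref{prop:KSembed} the canonical embedding $\iota\colon K\to\dCone(K)$ has the same universal property, where $\dCone(K)$ is, by construction, the universal dcpo-completion of the free b-cone $\Cone(K)$, with completion unit $j\colon\Cone(K)\to\dCone(K)$. Since objects with a universal property are unique up to canonical isomorphism, there is a d-cone isomorphism $\phi\colon C\xrightarrow{\sim}\dCone(K)$ with $\phi\circ e = \iota$. Now both $u\colon K\to B$ and the canonical $u_K\colon K\to\Cone(K)$ present the free b-cone over $K$, so there is a unique b-cone isomorphism $\theta\colon B\xrightarrow{\sim}\Cone(K)$ with $\theta\circ u = u_K$. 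The maps $\phi\circ\xi$ and $j\circ\theta$ are both Scott-continuous homogeneous maps $B\to\dCone(K)$ restricting along $u$ to $\iota = j\circ u_K$, so freeness of $B$ forces $\phi\circ\xi = j\circ\theta$, that is, $\xi = \phi^{-1}\circ j\circ\theta$. Since $\phi^{-1}$ is a dcpo isomorphism and $\theta$ is a poset isomorphism, $\xi$ inherits from $j$ the property of being a universal dcpo-completion unit; concretely, using the characterisation \ref{sc3}(3), $\xi$ is a topological embedding with dense image because $j$ is and both properties are transported across the isomorphisms $\phi$ and $\theta$. Hence $C$ is the universal dcpo-completion of $B$ with unit $\xi$.

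I expect the main obstacle to be bookkeeping rather than any deep step. In the first direction the delicate point is identifying the given d-cone structure on $C$ with the one Proposition~\ref{prop:sc3} places on $\ov{B}$, which is exactly what lets homogeneity pass to the extension; in the second direction it is checking that the isomorphism $\phi$ not only identifies $C$ abstractly with $\dCone(K)$ but actually carries the sub-cone $B$ to $\Cone(K)$ and the map $\xi$ to the standard completion unit $j$. There is no single slick argument; the result is really the statement that the composite of the two universal constructions ``free b-cone over $K$'' and ``dcpo-completion of that b-cone'' equals the single universal construction ``free d-cone over $K$'', and one must thread through Lemma~\ref{technical}, Theorem~\ref{prop:KSembed}, and Proposition~\ref{prop:sc3} to see it, using along the way that the dcpo-completion of an s-cone is automatically a d-cone with the expected operations.
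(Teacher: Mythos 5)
Your proof is correct and takes essentially the same route as the paper: the forward direction composes Lemma~\ref{technical} with Proposition~\ref{prop:sc3} (your explicit check that the given d-cone structure on $C$ is the unique Scott-continuous extension of $B$'s operations is exactly the point the paper leaves implicit), and the converse compares with the standard construction $K \to \Cone(K) \to \dCone(K)$ via the canonical isomorphisms coming from uniqueness of free b-cones and free d-cones, transferring the dcpo-completion property across them just as the paper does.
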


\begin{proof} 
%
In one direction, suppose that $C$ is the universal dcpo-completion
over $B$, with unit $\xi$. By Lemma~\ref{technical} we also have that
$B$ is the free b-cone over $K$, with unit $u$. It then follows from
Proposition~\ref{prop:sc3}  that $C$ is the free d-cone over  $B$,
with unit $\xi$. So $C$ is the free d-cone over  $K$, with unit $\xi\circ u = e$, as required.

In the other direction assume that $e$ is universal, and consider the standard construction of the free d-cone over $K$:
\[K \xrightarrow{u_s} \Cone(K) \xrightarrow{\xi_s} \dCone(K)\] 
As  $\Cone(K)$ is the free b-cone over $K$ with unit $u_s$ and, by
Lemma~\ref{technical}, $B$ is the free b-cone over $K$ with unit $u$,
there is a homogeneous dcpo-isomorphism $\alpha\type \Cone(K) \cong B$ such that $\alpha \circ u_s = u$.
Next, as $\dCone(K)$ is the free d-cone over $\Cone(K)$ with unit $\xi_s$, and as $\alpha$ and $\xi$ are Scott-continuous homogeneous maps (the latter by Lemma~\ref{technical}), there is a Scott-continuous homogeneous map $\beta\type\dCone(K) \rightarrow C $ such that $\beta \circ \xi_s =  \xi \circ \alpha$.
We then have $\beta \circ (\xi_s\circ u_s) =  e$. But, as  $\dCone(K)$ is the free d-cone over $K$ with unit $\xi_s\circ u_s$ and, by assumption, $C$ is the free d-cone over $K$ with unit $e$, it follows that $\beta$ is a dcpo-isomorphism.

Putting all this together, we see that we have dcpo-isomorphisms $\alpha\type \Cone(K) \cong B$ and $\beta\type \dCone(K) \cong C$ such that  
$\beta\circ \xi_s = \alpha \circ \xi$. Then, as  $\Cone(K)
\xrightarrow{\xi_s} \dCone(K)$ is a dcpo-completion, it follows that
$B \xrightarrow{\xi} C$ is a dcpo-completion, as required. 
\end{proof}


As a first application of the proposition, we next give a sufficient
condition for an embedding of a \KS\ in a d-cone to be universal in
the meet continuous case. \makered{The criterion applies in particular
  to continuous d-cones as indicated just before the statement of Lemma
  \ref{lem:meetcont}.} 
%


%
\begin{prop} \label{uni-Keg} Let $e\colon K \rightarrow C$ be a
  Scott-continuous linear order
  embedding of a full \KS\
$K$  in a meet continuous d-cone. Suppose that\display
\begin{enumerate}
\item $e(K)$ is a lower subset of $C$, and
\item 
$B \eqdef \{re(a)\mid  a \in K, r >0\}$ \makered{is dense in $C$}. 
\end{enumerate} 
Then $e$ is universal.
\end{prop}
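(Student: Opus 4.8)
The plan is to invoke Proposition~\ref{d-comp-char}, which reduces universality of $e$ to showing that $C$ is the universal dcpo-completion of $B$, with unit the inclusion $\xi\colon B\to C$, where $B = \{r\cdot e(a)\mid a\in K,\ r>0\}$ carries the subspace order. (Strictly, the standard factorisation uses $r>1$, but since $K$ is full and hence closed under scalar multiplication by $r\in[0,1]$, the two descriptions of $B$ agree.) Now I appeal to the characterisation of universal dcpo-completions recorded in Standard Construction~\ref{sc3}(3): it suffices to check (i) $\xi$ is a topological embedding for the respective Scott topologies, and (ii) $\xi(B)=B$ is dense in $C$. Condition (ii) is precisely assumption~(2) of the proposition.

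For condition (i), I would first observe that $B$ is a lower subset of $C$. This uses assumption~(1), that $e(K)$ is lower in $C$: given $y\le r\cdot e(a)$ in $C$ with $r>0$ and $a\in K$, we have $r^{-1}y\le e(a)$, and, choosing $s\le 1$ with $sr\le 1$, the element $s r^{-1}y$ lies below $e(a)$, hence in $e(K)$ since $e(K)$ is lower; so $y = (sr^{-1})^{-1}\cdot(sr^{-1}y)\in B$. (One can argue a little more directly, but the point is only that lowerness of $e(K)$ propagates to $B$ under the scalar action.) With $B$ a lower subset of the meet continuous dcpo $C$, Lemma~\ref{lem:meetcont} applies directly and yields that the inclusion $\xi\colon B\to C$ is a topological embedding for the Scott topologies. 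That is exactly condition (i).

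Having verified (i) and (ii), Standard Construction~\ref{sc3}(3) gives that $\xi\colon B\to C$ is a universal dcpo-completion of $B$. Then Proposition~\ref{d-comp-char} immediately yields that $e\colon K\to C$ is universal for Scott-continuous homogeneous maps, which is the assertion. (In applying Proposition~\ref{d-comp-char} one also needs, via Lemma~\ref{technical}, that $B$ is the free b-cone over $K$ with unit $u$; this is automatic from fullness of $K$ together with Corollary~\ref{free-oCone} and Proposition~\ref{lem:bcone}, and is part of the machinery of the factorisation, so nothing extra is required here.)

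The only genuinely delicate point is the reconciliation between ``$B$ is lower in $C$'' and the hypotheses actually given: assumption~(1) is about $e(K)$, not about $B$, so one must verify the passage from one to the other using the scalar multiplication, being careful that for $r>1$ the scalar $r^{-1}$ lands in the unit interval so that $r^{-1}y$ makes sense as an element of $K$ once it is below $e(a)$ — this is where fullness of $K$, i.e.\ property (OC3), does real work. Everything else is a routine application of the cited results: Lemma~\ref{lem:meetcont} for the topological-embedding half, Standard Construction~\ref{sc3}(3) for the completion characterisation, and Proposition~\ref{d-comp-char} for the final translation back to universality of $e$.
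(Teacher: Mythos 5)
Your proof is correct and takes essentially the same route as the paper's: form the standard factorisation $K \xrightarrow{u} B \xrightarrow{\xi} C$, note that $B$ is a lower subset of $C$ because $e(K)$ is, deduce from meet continuity and density that $\xi$ is a universal dcpo-completion (you merely unpack Corollary~\ref{cor:meetcontinuous} into Lemma~\ref{lem:meetcont} plus \ref{sc3}(3)), and conclude via Proposition~\ref{d-comp-char}. One small correction to your closing remark: the passage from ``$e(K)$ lower'' to ``$B$ lower'' needs only assumption (1) and monotonicity of scalar multiplication (closure of $K$ under scalars in $[0,1]$ is automatic for any \KS, not a consequence of fullness); property (OC3) instead does its work in licensing the standard-factorisation machinery, i.e.\ Lemma~\ref{technical} and Proposition~\ref{d-comp-char}.
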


\begin{proof}  
Endowing $B$ with the ordered  cone structure induced by $C$, we obtain
$K \xrightarrow{u} B \xrightarrow{\xi} C$,
 the standard factorisation of $e$. 
 We may suppose, w.l.o.g.,  that $u$, and so $e$, is an inclusion. 
 As $e(K)$ is a lower subset of $C$, so is $B$. 
 Then, by Proposition~\ref{d-comp-char}, Corollary \ref{cor:meetcontinuous} and the assumption that 
 $B$ is dense in $C$,
 we obtain the desired result.
\end{proof}
}



\makered{As another application of Proposition~\ref{d-comp-char}, we
  check that, given two universal embeddings $K_i
  \xrightarrow{e_i} C_i$ ($i = 1,2$) of  full \KSs\ in  d-cones, their product $K_1 \times K_2 \xrightarrow{e_1 \times e_2} C_1 \times C_2$ is itself a universal embedding (we use the evident definitions of the products of \KSs\  and of d-cones). Let $K_i \xrightarrow{u_i} B_i \xrightarrow{\xi_i} C_i$ be the standard factorisation of  
$e_i$. Then one checks that  the standard factorisation of $e_1 \times e_2$ is $K_1 \times K_2 \xrightarrow{u_1 \times u_2} B_1 \times B_2 \xrightarrow{\xi_1 \times \xi_2} C_1 \times C_2$ (we  use the evident definition of the product of ordered cones). By Proposition~\ref{d-comp-char}, the 
$B_i \xrightarrow{\xi_i} C_i$ are universal dcpo-completions, and so,
by~\ref{sc3}(4), $B_1 \times B_2 \xrightarrow{\xi_1  \times \xi_2} C_1 \times C_2$ is also a universal dcpo-completion. Using Proposition~\ref{d-comp-char} again, we see that $e_1 \times e_2$ is universal.
}

\begin{rem}\label{rem:hist3}{\rm ({\bf Historical Notes and References})
The \emph{abstract probabilistic algebras} of Graham and
Jones~\cite{graham,jones} are barycentric algebras on a dcpo $P$
with a bottom element $0$
such that the map $+: [0,1] \times P^2 = (r,(x,y)) \mapsto x +_r y$ is
continuous, taking the Hausdorff topology on $[0,1]$, the Scott
topology on $P^2 $ 
, and the product topology on 
$[0,1] \times P^2 $. 

Let us add that Jones \cite[Section 4.2]{jones} proves that
an abstract probabilistic algebra can equivalently be defined to be a
dcpo $P$ together 
with  Scott-continuous maps $S_n\times P^n\to P$, $n\in \mathbb N,$
informally written $\sum_{i=1}^nq_ix_i$,
satisfying the equations (A1) and (A2) in remark \ref{rem:hist1}, where
$S_n$ is the domain of subprobability measures on an $n$-element set
as in remark \ref{rem:hist2} with the Scott topology. Jones
explicitly adds the requirement that the operations
$\sum_{i=1}^nq_ix_i$ are commutative in the sense that they are
invariant under any permutation of the indices, a property that other
authors (see the remarks \ref{rem:hist1} and \ref{rem:hist2}) silently
hide in the suggestive notation of a sum.

The notion of an abstract probabilistic algebra $P$ of
  C.\ Jones is equivalent to our notion of a \KS.  Indeed, one
can show that a barycentric algebra over a dcpo is an algebra in this
sense if, and only if, it is a \KS\ in our sense, so the two notions are
equivalent. However, with our definition one can directly use
domain-theoretic methods, for example  for completing \KSs\ to
d-cones.
}  
\end{rem}

\subsection{Preservation results}\label{sec:preserve}

We now turn to the question as to which additional properties of a \KS\ $K$
are inherited by the universal d-cone $\dCone(K)$ over $K$. First we
consider continuity.  
We define a \KS\ $K$ to be \emph{continuous} if it is continuous as a
dcpo.

We will use the following standard lemma:

\begin{lem}\label{lem:aux}
Let $\makered{P}$ be a Scott-closed subset of a continuous poset $\makered{Q}$. Then $\makered{P}$
is a continuous poset and the way-below relation $\ll_\makered{P}$ on $\makered{P}$ is the
restriction of the way-below relation $\ll_\makered{Q}$ on $\makered{Q}$.
\end{lem}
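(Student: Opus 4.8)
The plan is to prove the two assertions --- that $P$ is a continuous poset, and that $\ll_P$ is the restriction of $\ll_Q$ to $P$ --- essentially in one pass, by exploiting the fact that a Scott-closed subset is a lower set that is closed under directed suprema. The crucial observation to establish first is that directed suprema computed in $P$ agree with those computed in $Q$: if $D \subseteq P$ is directed, then $D$ has a supremum in $Q$ (as $Q$ is a dcpo, being continuous), this supremum lies in $P$ (as $P$ is Scott-closed, hence closed under directed sups), and it is clearly still the least upper bound when we restrict attention to $P$. In particular $P$ is itself a dcpo.

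Next I would relate the two way-below relations. One direction is immediate: if $x \ll_P y$ with $x,y \in P$, then $x \ll_Q y$, because any directed set $D \subseteq Q$ with $y \leq \bigvee^{\uparrow} D$ can be replaced by $D' \eqdef {\da}x' \cap \downarrow\! D$ for suitable --- actually, more simply, intersect $D$ with $\da y$ or use $\da_Q D \cap P$; the cleanest route is to note $D \cap P$ need not be directed, so instead one takes a directed set witnessing continuity of $Q$. Let me instead argue the genuinely needed direction and the easy direction separately. For $x \ll_Q y$ with $x, y \in P$ implies $x \ll_P y$: given a directed $D \subseteq P$ with $y \leq \bigvee^{\uparrow}_P D = \bigvee^{\uparrow}_Q D$, apply $x \ll_Q y$ to get $x \leq d$ for some $d \in D$. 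Conversely, for $x \ll_P y$ implies $x \ll_Q y$: given directed $D \subseteq Q$ with $y \leq \bigvee^{\uparrow}_Q D$, one cannot directly intersect with $P$; instead use that $Q$ is continuous to approximate. Actually the honest argument here is that $y \in P$ and $y \leq \bigvee^{\uparrow}_Q D$, so since $P$ is a lower set, $\da_Q y \subseteq P$; replacing $D$ by $\{d \wedge \text{(something)}\}$ does not work without meets. The right move: since $P$ is Scott-closed and $y$ is the sup in $Q$ of $\da_Q D \cap \da_Q y$... this is getting delicate, so I expect this to be the main obstacle.

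To resolve it cleanly, I would proceed as follows for the continuity of $P$ and the identification of $\ll_P$ simultaneously. Fix $y \in P$. Since $Q$ is continuous, the set ${\dda_Q}\, y \eqdef \{x \in Q \mid x \ll_Q y\}$ is directed with supremum $y$. Because $P$ is a lower set in $Q$ and $y \in P$, we have $x \leq y$ for each such $x$, hence ${\dda_Q}\, y \subseteq P$. Thus ${\dda_Q}\, y$ is a directed subset of $P$ with supremum $y$ (the sup being the same in $P$ and $Q$). Now I claim each such $x$ satisfies $x \ll_P y$: given directed $D \subseteq P$ with $\bigvee^{\uparrow}_P D \geq y$, this sup equals $\bigvee^{\uparrow}_Q D$, and $x \ll_Q y$ gives $x \leq d$ for some $d \in D$. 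Hence ${\dda_Q}\, y \subseteq {\dda_P}\, y$, and since ${\dda_Q}\, y$ is directed with supremum $y$, a fortiori ${\dda_P}\, y$ is directed with supremum $y$; so $P$ is continuous. For the reverse inclusion ${\dda_P}\, y \subseteq {\dda_Q}\, y$: if $x \ll_P y$, then since ${\dda_Q}\, y$ is directed in $P$ with sup $y \geq y$, the definition of $\ll_P$ yields $x \leq x_0$ for some $x_0 \in {\dda_Q}\, y$, and then $x \leq x_0 \ll_Q y$ gives $x \ll_Q y$. This closes the argument and shows $\ll_P$ is exactly the restriction of $\ll_Q$. The only subtlety, and the step I would flag as the main obstacle, is getting the reverse inclusion ${\dda_P}\, y \subseteq {\dda_Q}\, y$ right --- the trick is to test $x \ll_P y$ against the specific directed set ${\dda_Q}\, y$ itself rather than trying to manipulate arbitrary directed sets in $Q$.
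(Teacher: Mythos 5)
Your final argument is correct and is essentially the paper's own proof: the easy direction ($x \ll_Q y \Rightarrow x \ll_P y$, using that $P$ is closed under directed sups), continuity of $P$ via the fact that $\{z \mid z \ll_Q y\}$ lies in $P$ because $P$ is a lower set, and the converse direction by testing $x \ll_P y$ against that same directed set with supremum $y$. The exploratory detour in your middle paragraph (and the aside that $Q$ is a dcpo "being continuous", which is not literally what continuity of a poset gives, though it matches how the lemma is actually applied in the paper) does not change the substance; the decisive trick you isolate is exactly the one the paper uses.
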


\begin{proof}
Let $x,y$ be elements of $\makered{P}$. Clearly $x\ll_\makered{Q} y$ implies
$x\ll_\makered{P} y$. Thus, if $\makered{Q}$ is continuous, the same holds for
$\makered{P}$. Conversely, let $x\ll_\makered{P} y$. Since $\makered{Q}$ is continuous, the elements
$z\ll_\makered{Q} y$ form a directed set with supremum $y$. Since this directed
set is in $\makered{P}$, there is some
$z\ll_\makered{Q} y$ such that $x\leq z$, whence $x\ll_\makered{Q} y$. 
\end{proof}

In any d-cone, $x\mapsto rx$ is an order isomorphism for $r>0$,
whence $x\ll y \iff rx\ll ry$  for every $r>0$. This statement
is not true for \KSs, in general. 

For elements $x, y$ in a \KS\ $K$ and $0<r<1$ we always have that $rx\ll
ry$ implies $x\ll y$. Indeed if $y\leq \sup_i x_i$, then $ry\leq
r\sup_i x_i  =\sup_i rx_i$, whence $rx\leq rx_i$ for some $i$ which
implies $x\leq x_i$ by Lemma \ref{oc2}. But $x\ll y$ does not imply
$rx\ll ry$ in general, as the counterexample in  Appendix
\ref{subsec:app} shows. 
Fortunately, this difficulty disappears when we require property
(OC3)\display 

\begin{lem}\label{lem:clever}
In a continuous full \KS\ $K$ 
we have $x\ll y$ if and only if $rx\ll ry$ for every $r$ with $0<r<1$. 
\end{lem}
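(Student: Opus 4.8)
The plan is to prove the equivalence in two directions, using the embedding of the full \KS\ $K$ into the d-cone $\dCone(K)$ from Theorem~\ref{prop:KSembed}, together with Lemma~\ref{lem:aux} to transfer way-below relations between $K$ and $\dCone(K)$. The key facts I would exploit: (i) $K$ sits inside $\dCone(K)$ as a Scott-closed set, so by Lemma~\ref{lem:aux} the relation $\ll_K$ is the restriction of $\ll_{\dCone(K)}$; and (ii) in the d-cone $\dCone(K)$ scalar multiplication by any $r>0$ is an order isomorphism, hence $z \ll w \iff rz \ll rw$ for all $r>0$. Together these immediately yield, for $x,y \in K$ and $0<r<1$, that $x \ll_K y \iff x \ll_{\dCone(K)} y \iff rx \ll_{\dCone(K)} ry \iff rx \ll_K ry$, where the last step again uses Lemma~\ref{lem:aux} since $rx, ry \in K$.

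So, concretely, I would first invoke Theorem~\ref{prop:KSembed} to identify $K$ with its image, a Scott-closed convex subset of the d-cone $C \eqdef \dCone(K)$; here I must use that $K$ is \emph{full}, which is exactly the hypothesis under which Theorem~\ref{prop:KSembed} applies. Next, I would note that since $K$ is continuous as a dcpo and $C$ is a d-cone, Lemma~\ref{lem:aux} applies once we know $C$ is continuous — but wait, continuity of $C$ is \emph{not} among the hypotheses, so this naive route needs care. The cleaner route avoiding any claim about continuity of $C$: Lemma~\ref{lem:aux} as stated requires $Q$ continuous; instead I should directly argue within $K$ using the cone structure on $\dCone(K)$ only as a computational aid, or else note that one direction is already established in the text.

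Indeed, the text preceding the lemma already records the easy direction: for $x,y$ in any \KS, $rx \ll ry$ (for $0<r<1$) implies $x \ll y$. So only the converse, $x \ll y \implies rx \ll ry$ for all $0<r<1$, remains, and this is where fullness is essential (the text explicitly says it fails without (OC3)). For this direction I would proceed as follows. Suppose $x \ll_K y$ and fix $0<r<1$; I want $rx \ll_K ry$. Take a directed family $(z_i)_i$ in $K$ with $ry \le \bigvee^\uparrow_i z_i$. Since $K$ is full, embed $K$ as a lower set in the b-cone $\Cone(K)$ (Lemma~\ref{lem:bcone}), so $\tfrac{1}{r} z_i \in K$ for each $i$ — here I use that $z_i \le \bigvee^\uparrow z_i$ and $\tfrac{1}{r}(ry) = y \le \tfrac{1}{r}\bigvee^\uparrow z_i = \bigvee^\uparrow \tfrac{1}{r} z_i$, and fullness/(OC3) to land each $\tfrac1r z_i$ back in $K$. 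The family $(\tfrac1r z_i)_i$ is directed in $K$ with $y \le \bigvee^\uparrow \tfrac1r z_i$, so by $x \ll_K y$ there is $i$ with $x \le \tfrac1r z_i$, hence $rx \le z_i$ by monotonicity of scalar multiplication. Thus $rx \ll_K ry$.

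The main obstacle I anticipate is the bookkeeping around \emph{where} division by $r$ is legitimate: in a \KS\ one cannot freely multiply by $r^{-1}>1$, so every use of $\tfrac1r$ must be justified either by (OC3)/fullness (to bring an element back into $K$ from $\Cone(K)$) or by working inside the surrounding cone. I would therefore be careful to state explicitly that we use Lemma~\ref{lem:opba} / fullness to ensure $\tfrac1r z_i \in K$, and that $y = \tfrac1r (ry)$ makes sense because $ry \in K \subseteq \Cone(K)$ and $y$ is its preimage under multiplication by $r$. Everything else — directedness of $(\tfrac1r z_i)$, the supremum computation, monotonicity of $a \mapsto ra$ — is routine and follows from the \KS\ axioms and Scott-continuity of scalar multiplication.
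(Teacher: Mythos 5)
Your reduction of the problem is fine (the easy direction is indeed already in the text, fullness is needed only for $x\ll y\implies rx\ll ry$, and you are right that one cannot simply invoke continuity of $\dCone(K)$ together with Lemma~\ref{lem:aux}, since the paper derives continuity of $\Cone(K)$ and $\dCone(K)$ \emph{from} this very lemma). But the key step of your argument is false: fullness does not give $\tfrac1r z_i\in K$. Property (OC3) only lets you divide an element $a$ by $r$ when $a\leq rb$ for \emph{some} $b\in K$, and the members $z_i$ of an arbitrary test family with $ry\leq\bigvee^{\uparrow}z_i$ need not be so dominated. Concretely, in the full continuous \KS\ $K=[0,1]$ take $r=\tfrac12$ and the constant family $z_i=1$: then $ry\leq\bigvee z_i$ for any $y$, but $\tfrac1r z_i=2\notin K$, and no appeal to (OC3) can place it there. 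Your parenthetical justification ($z_i\leq\bigvee^\uparrow z_i$ and $y\leq\bigvee^\uparrow\tfrac1r z_i$) does not produce the required bound $z_i\leq r\cdot(\text{element of }K)$, so the family $(\tfrac1r z_i)$ lives only in $\Cone(K)$, where the hypothesis $x\ll_K y$ gives you nothing unless you already know $\ll_K$ is the restriction of $\ll_{\Cone(K)}$ --- which is exactly the continuity fact you correctly identified as unavailable at this point.

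The paper's proof goes the other way around: instead of dividing the test family by $r$ (which leaves $K$), it multiplies $K$ itself by $r$. By (OC2) the map $a\mapsto ra$ is an order isomorphism of $K$ onto $rK$, and $rK$ is a sub-dcpo of $K$ by Scott-continuity of scalar multiplication, so $x\ll_K y$ yields $rx\ll_{rK}ry$. Fullness (OC3) enters only to show that $rK$ is a lower set, hence Scott-closed in $K$; since $K$ is continuous, Lemma~\ref{lem:aux} (applied with $Q=K$, $P=rK$) says $\ll_{rK}$ is the restriction of $\ll_K$, whence $rx\ll_K ry$. This stays entirely inside $K$ and avoids any illegal use of $r^{-1}$. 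To repair your write-up you would need to replace the division step by an argument of this kind (or otherwise relate the given family $(z_i)$ to a directed family in $K$ dominating $y$ without assuming the $z_i$ are $r$-divisible); as it stands the proposal has a genuine gap.
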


\begin{proof}
Let $K$ be a continuous \KS\ and $0<r<1$. By Property (OC2), the map
$x\mapsto rx$ is an order isomorphism from $K$ onto $rK$ and $rK$ is a
sub-dcpo of $K$. Moreover, if $x$ and $y$ are elements of $K$ such
that $x\ll y$ in $K$, then $rx\ll ry$ in $rK$. Property (OC3) implies
that $rK$ is a lower set, hence, a Scott-closed subset of $K$. By Lemma
\ref{lem:aux}, the way-below relation of the dcpo $rK$ is the
restriction of the way-below relation on $K$. Thus $rx\ll ry$ in $K$. 
\end{proof}

\begin{prop}\label{prop:continuous}
For a full \KS\ $K$, 
 the cone $\dCone(K)$ is
continuous if and only if $K$ is a continuous \KS. \makeblue{If this is the
case, then every element $x\in \dCone(K)$ is the supremum of a directed
family of elements $a_i$ in $\Cone(K)$ with $a_i\ll x$}.
\end{prop}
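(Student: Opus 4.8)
The plan is to transfer continuity back and forth across the two-step construction $K \rightsquigarrow \Cone(K) \rightsquigarrow \dCone(K)$, using the preservation results already established. Recall that $K$ is embedded in $\Cone(K)$ as a Scott-closed convex set (Proposition~\ref{lem:bcone}), and that $\Cone(K)$ is embedded in $\dCone(K)$ as a Scott-closed set, being a lower set in its own dcpo-completion (as noted just before Theorem~\ref{prop:KSembed}). So Lemma~\ref{lem:aux} will be the workhorse: it lets us read off the way-below relation on a Scott-closed subspace from that of the ambient continuous poset, and it tells us that Scott-closed subspaces of continuous posets are continuous.

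\emph{First direction: if $\dCone(K)$ is continuous, then $K$ is a continuous \KS.} Since $K$ is Scott-closed in $\dCone(K)$, Lemma~\ref{lem:aux} immediately gives that $K$ is a continuous poset, hence a continuous \KS\ by definition. (This direction does not even need $\Cone(K)$ as an intermediate.)

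\emph{Second direction: if $K$ is a continuous \KS, then $\dCone(K)$ is continuous.} The natural route is: (i) show $\Cone(K)$ is a continuous b-cone; (ii) show that passing to the universal dcpo-completion of a continuous b-cone yields a continuous d-cone. For (i), I would exploit the fact that every element of $\Cone(K)$ has the form $r\cdot u(a)$ with $a \in K$, $r \geq 1$, together with $K$ being Scott-closed (hence a lower set) in $\Cone(K)$; by Lemma~\ref{lem:clever}, $x \ll y$ in $K$ iff $rx \ll ry$ in $K$ for $0 < r < 1$, and since $K$ is Scott-closed in $\Cone(K)$, Lemma~\ref{lem:aux} shows the way-below relation of $K$ is the restriction of that of $\Cone(K)$; one then checks that the scalings of a directed family approximating $a$ in $K$ give a directed family approximating $r u(a)$ in $\Cone(K)$, using that scalar multiplication by $r>0$ is an order isomorphism of $\Cone(K)$. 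For (ii), I would invoke the known fact that the universal dcpo-completion of a continuous poset is continuous (the completion in Standard Construction~\ref{sc3} can be taken to be the rounded ideal completion / it is a standard fact from \cite{ZF,KL} that dcpo-completions of continuous posets are continuous, with approximation reflected along the dense embedding), and that the cone operations extend Scott-continuously by Proposition~\ref{prop:sc3}. Since $\Cone(K)$ is dense in $\dCone(K)$ with the subspace Scott topology, every $x \in \dCone(K)$ is the directed supremum of the elements of $\Cone(K)$ way-below it, which are way-below it already in $\dCone(K)$; this simultaneously establishes continuity of $\dCone(K)$ and the final clause of the proposition.

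\emph{Main obstacle.} The delicate point is establishing the final clause — that the approximating family for an arbitrary $x \in \dCone(K)$ can be taken inside $\Cone(K)$ — and, relatedly, that way-below in $\Cone(K)$ is inherited correctly by $\dCone(K)$. Here one wants that $\Cone(K)$, being Scott-closed in $\dCone(K)$, has its way-below relation equal to the restriction of that of $\dCone(K)$ (Lemma~\ref{lem:aux} again), so that an element $a \in \Cone(K)$ with $a \ll x$ in $\Cone(K)$-below-$x$ is genuinely way-below $x$ in $\dCone(K)$; combined with density of $\Cone(K)$ and continuity of $\dCone(K)$, the directed family of such $a$ has supremum $x$. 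The one thing to be careful about is ensuring the collection $\{a \in \Cone(K) \mid a \ll x\}$ is directed and sups to $x$ — this follows once we know $\dCone(K)$ is continuous and $\Cone(K)$ is dense, since for $b \ll x$ in $\dCone(K)$ there is $a \in \Cone(K)$ with $b \leq a \leq x$, hence $b \leq a$ with $a \ll x$ by interpolation; assembling these gives the claim.
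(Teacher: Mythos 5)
Your overall route is the same as the paper's: the easy direction via Lemma~\ref{lem:aux} applied to $K$ Scott-closed in $\dCone(K)$, then (i) $\Cone(K)$ is a continuous b-cone and (ii) its universal dcpo-completion is the round ideal completion, which is a continuous d-cone and gives the final clause. However, your step (i) has a genuine gap: to conclude that the way-below relation of $K$ is the restriction of that of $\Cone(K)$ you invoke Lemma~\ref{lem:aux} with ambient poset $\Cone(K)$, but that lemma presupposes the ambient poset is continuous --- which is exactly what (i) is trying to prove, so the appeal is circular. Moreover, the direction you actually need is the nontrivial one: from $a'\ll_K a$ you must deduce $a'\ll_{\Cone(K)} a$, so that the scaled family $r a'$ witnesses approximation of $r\,u(a)$. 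The paper proves this by hand: given a directed family $v_i$ in $\Cone(K)$ with $y\leq\sup_i v_i$, choose $s$ small enough that $s\sup_i v_i$ lies in $K$, use fullness (OC3), i.e.\ that $K$ is a lower set in $\Cone(K)$, to get $sv_i\in K$, apply Lemma~\ref{lem:clever} to obtain $sa'\ll_K sa$, conclude $sa'\leq sv_i$ for some $i$, and cancel the scalar with Lemma~\ref{oc2}. Some such direct argument is indispensable; without it your proof of continuity of $\Cone(K)$ does not go through.

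A secondary slip occurs in your ``main obstacle'' paragraph: $\Cone(K)$ is \emph{not} Scott-closed in $\dCone(K)$ (it is a lower set, but if it were closed under directed sups it would already equal its dense completion), so Lemma~\ref{lem:aux} cannot be applied there either. The final clause should instead be read off, as you in fact suggest parenthetically, from identifying the universal dcpo-completion of the continuous poset $\Cone(K)$ with its round ideal completion (\cite[Corollary 8.3]{KL1}): each element of $\dCone(K)$ is then by construction the supremum of a round ideal, i.e.\ of a directed family of elements of $\Cone(K)$ way-below it. With step (i) repaired as above and the final clause justified this way, your argument coincides with the paper's proof.
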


\begin{proof}
If the d-cone $\dCone(K)$ is continuous, then the Scott-closed subset
$K$ is continuous by Lemma \ref{lem:aux}. Suppose conversely that $K$
is a continuous \KS\ satisfying (OC3). Then $K$ may be considered to
be a Scott-closed convex subset of the ordered cone $\Cone(K)$.

For $x,y$ in $\Cone(K)$, we have $x\ll_{\Cone(K)} y$ iff $rx,ry\in K$
and $rx\ll_K ry$ for some $r>0$. 
Indeed, suppose that $x\ll_{\Cone(K)}
y$. Let $r>0$ be such that $rx,ry\in K$ and let $u_i \in K$ be a 
directed family with $ry\leq\sup_iu_i$. Then $y\leq r^{-1}\sup_i u_i
= \sup_i r^{-1}u_i$ whence $x\leq r^{-1}u_i$ for some $i$ which
implies that $rx\leq u_i$. Conversely, suppose that $rx \ll_K ry$ and
consider a directed family $v_i\in \Cone(K)$ such that $y\leq \sup_i
v_i$. There is an $s>0$ such that $s\sup_i v_i\in K$. Moreover, we may choose
$s< r$ and $s<1$. By Lemma \ref{lem:clever}, we know that $sx
=\frac{s}{r}rx \ll_K  \frac{s}{r}ry = sy$. Since $sy\leq s\sup_i v_i
=\sup_i sv_i$ we conclude that that $sx\leq sv_i$ for some $i$,
whence $x\leq v_i$ by Lemma \ref{oc2}. We conclude that $\Cone(K)$ is a
continuous b-cone.  

For the continuous b-cone $\Cone(K)$, the universal dcpo-completion
$\dCone(K)$ agrees with the round ideal completion which is a
continuous d-cone by \cite[Corollary 8.3]{KL1}. \makeblue{The elements of a
round ideal constitute a directed family of elements way below the
element defined by the round ideal.}  
\end{proof}

Recall that we say that the way-below relation on an ordered cone is
\emph{additive} if
\[a\ll b \mbox{ and } a'\ll b' \implies a+a'\ll b+b'\]
Similarly, in a \KS\ we say that convex combinations preserve the
way-below-relation if
\[a\ll b \mbox{ and } a'\ll b' \implies a +_r a' \ll b +_r
b'\]

\begin{prop}\label{prop:KSembedding1}
Let $K$ be a continuous full \KS. 
 Then 
$\dCone(K)$ has an additive way-below relation if and only if 
 convex combinations preserve the way-below-relation in $K$. 
\end{prop}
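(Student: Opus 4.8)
The plan is to transfer the statement along the universal embedding $K \hookrightarrow \dCone(K)$, using the fact established in Proposition~\ref{prop:continuous} that, when $K$ is a continuous full \KS, $\Cone(K)$ is a continuous b-cone and $\dCone(K)$ is its round ideal completion, with every element of $\dCone(K)$ a directed supremum of elements of $\Cone(K)$ that are way-below it. The main tool linking way-below relations across the three layers $K \subseteq \Cone(K) \subseteq \dCone(K)$ is the description, proved inside Proposition~\ref{prop:continuous}, that for $x,y \in \Cone(K)$ one has $x \ll_{\Cone(K)} y$ iff $rx, ry \in K$ and $rx \ll_K ry$ for some $r>0$; together with Lemma~\ref{lem:clever}, which says $x \ll_K y \iff rx \ll_K ry$ for all $0<r<1$ in a continuous full \KS. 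I will also use the general fact that in any continuous dcpo obtained as a round ideal completion, $a \ll b$ in the completion for $a$ in the base iff $a \ll b'$ for some base element $b' \ll b$; and that directed sups and the Scott-closed embedding $K \subseteq \dCone(K)$ interact as recorded before the proposition.

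First I would prove the forward direction. Assume $\dCone(K)$ has an additive way-below relation, and take $a \ll_K b$ and $a' \ll_K b'$ in $K$. Since $K$ is Scott-closed in $\dCone(K)$, Lemma~\ref{lem:aux} gives $a \ll_{\dCone(K)} b$ and $a' \ll_{\dCone(K)} b'$; hence $a+a' \ll_{\dCone(K)} b+b'$ by additivity. Now observe $a +_r a' = \tfrac12(a + a') \cdot$ (scaled appropriately) — more precisely, using the base structure, $a +_r a' = r a + (1-r) a'$ computed in $\Cone(K)$, and similarly for $b,b'$. I would argue that from $a \ll b$, $a' \ll b'$ in $\dCone(K)$ and additivity one gets $ra \ll rb$, $(1-r)a' \ll (1-r)b'$ (scalar multiplication is an order-isomorphism of the d-cone, preserving $\ll$), hence $ra + (1-r)a' \ll rb + (1-r)b'$, i.e.\ $a +_r a' \ll_{\dCone(K)} b +_r b'$. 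Since these four elements all lie in the Scott-closed set $K$, Lemma~\ref{lem:aux} again gives $a +_r a' \ll_K b +_r b'$, as required.

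For the converse, assume convex combinations preserve $\ll$ in $K$, and take $a \ll_{\dCone(K)} b$, $a' \ll_{\dCone(K)} b'$ in $\dCone(K)$; I must show $a + a' \ll b + b'$. Using that $\dCone(K)$ is the round ideal completion of $\Cone(K)$, I may interpolate: there are $c, c' \in \Cone(K)$ with $a \ll c \ll b$ and $a' \ll c' \ll b'$, all in $\dCone(K)$; since $c \ll_{\dCone(K)} b$ with $c \in \Cone(K)$ this restricts to $c \ll_{\Cone(K)} b$ only if $b \in \Cone(K)$, so more care is needed — instead I would use that $b = \sup_i d_i$, $b' = \sup_j d'_j$ are directed sups of elements $d_i, d'_j \in \Cone(K)$ with $d_i \ll b$, $d'_j \ll b'$; choosing $i,j$ with $a \ll d_i$ and $a' \ll d'_j$, it suffices to show $d_i + d'_j \ll_{\Cone(K)} b + b'$ after picking $d_i, d'_j$ large enough, which reduces the problem to additivity of $\ll_{\Cone(K)}$ on $\Cone(K)$. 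So the crux is: \emph{$\Cone(K)$ has an additive way-below relation.} For this, take $x \ll_{\Cone(K)} y$, $x' \ll_{\Cone(K)} y'$ in $\Cone(K)$. Pick $r>0$ with $rx, ry, rx', ry' \in K$ (shrinking $r$, and also $r<1/2$); by the characterisation inside Proposition~\ref{prop:continuous}, $rx \ll_K ry$ and $rx' \ll_K ry'$. By the hypothesis (preservation of $\ll$ by convex combinations in $K$), $rx +_{1/2} rx' \ll_K ry +_{1/2} ry'$, i.e.\ $\tfrac r2(x+x') \ll_K \tfrac r2(y+y')$, and these lie in $K$; applying the characterisation in the other direction gives $x + x' \ll_{\Cone(K)} y+y'$.

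The main obstacle is the bookkeeping in the converse: moving a way-below relation in $\dCone(K)$ down to one between elements of $\Cone(K)$ (so that the $\Cone(K)$-level characterisation of $\ll$ can be applied) requires the round-ideal/interpolation structure of the completion and the fact that every element of $\dCone(K)$ is a directed sup of way-below elements from $\Cone(K)$; one must be careful that $a + a'$ is the directed sup of the $d_i + d'_j$ (continuity of addition on the d-cone) so that $a + a' \ll b + b'$ follows once one has $d_i + d'_j \ll_{\Cone(K)} b + b'$ for suitable $i,j$, using that $d_i + d'_j \ll_{\dCone(K)} b+b'$ lifts from the $\Cone(K)$-level statement via Lemma~\ref{lem:aux} applied where appropriate, or directly from the interpolation property of the round ideal completion. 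The rest — the scalar rescaling back and forth, and invoking Lemmas~\ref{oc2}, \ref{lem:aux}, \ref{lem:clever} — is routine once this reduction is in place.
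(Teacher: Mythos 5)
Your easy direction (additivity of $\ll_{\dCone(K)}$ implies that convex combinations preserve $\ll_K$) is fine and is exactly the paper's ``straightforward'' half: lift with Lemma~\ref{lem:aux}, use that multiplication by $r\in\,]0,1[$ is an order isomorphism of the d-cone, apply additivity, and restrict back to the Scott-closed subset $K$. Your computation that $\ll_{\Cone(K)}$ is additive is also sound, using the characterisation of $\ll_{\Cone(K)}$ established inside the proof of Proposition~\ref{prop:continuous} together with Lemma~\ref{lem:clever}.

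The genuine gap is the transfer of that additivity from $\Cone(K)$ to $\dCone(K)$, which you yourself flag as the ``main obstacle'' but treat as bookkeeping. Your pivotal intermediate claim, that it suffices to show $d_i + d'_j \ll_{\Cone(K)} b + b'$, is not even well-formed, since $b+b'$ need not lie in $\Cone(K)$; what you actually need is $d + d' \ll_{\dCone(K)} e + e'$ for elements $d\ll_{\Cone(K)}e$ and $d'\ll_{\Cone(K)}e'$ of $\Cone(K)$ with $a\le d$, $e\le b$, $a'\le d'$, $e'\le b'$ --- a configuration that itself requires a double interpolation you never carry out (from $a\ll b=\sup_i d_i$ you only get $a\le d_i$, not $a\ll d_i$, and you produce no $e$ at all). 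Moreover the tool you invoke for the transfer, Lemma~\ref{lem:aux}, does not apply: $\Cone(K)$ is a lower set but \emph{not} Scott-closed in $\dCone(K)$ (it is dense, so if it were Scott-closed it would be all of $\dCone(K)$). What is really needed is that $x\ll_{\Cone(K)}y$ implies $x\ll_{\dCone(K)}y$ for $x,y\in\Cone(K)$; this is true because $\dCone(K)$ is the round ideal completion of the continuous poset $\Cone(K)$, but you neither state nor prove it, and interpolation alone does not yield it. The paper's proof sidesteps $\ll_{\Cone(K)}$ entirely: it interpolates $a\ll a''\ll a'$ and $b\ll b''\ll b'$ in $\dCone(K)$, so that $a'',b''\in\Cone(K)$ (they lie below members of directed families from $\Cone(K)$, and $\Cone(K)$ is a lower set), then scales by a small $r$ to land in $K$, applies the hypothesis there --- where, by Lemma~\ref{lem:aux}, $\ll_K$ genuinely is the restriction of $\ll_{\dCone(K)}$ since $K$ is Scott-closed --- and rescales to get $a+b\ll a''+b''\le a'+b'$. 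Replacing your $\Cone(K)$-level detour by this scaling-into-$K$ argument (or proving the missing transfer fact for the round ideal completion) would close your proof.
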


\begin{proof}
Suppose first that convex combinations preserve the way-below
relation on $K$. Choose $a \ll a'$ and $b\ll b'$ in $\dCone(K)$. Interpolate
elements $a\ll a''\ll a'$ and $b\ll b''\ll b'$. \makeblue{Then $a''$ and $b''$
belong to $\Cone(K)$, since, by Proposition~\ref{prop:continuous}, $a'$ and $b'$ are suprema of directed
families in $\Cone(K)$, and since $\Cone(K)$ is a lower set in
$\dCone(K)$}. Thus we can find an $r$ with $0<r<1$ such that $ra''\in 
K$ and $rb''\in K$. Since  $ra \ll ra''$ and $rb\ll rb''$ in $K$, we
use the property that convex combinations preserve the way-below
relation to conclude that  $\frac{1}{2}ra+\frac{1}{2}rb 
\ll \frac{1}{2}ra''+\frac{1}{2}rb''$. Multiplying by $2r$ yields
$a+b\ll a''+b''\leq a'+b'$. Thus the way-below relation is additive
on $\dCone(K)$. The converse is straightforward.
\end{proof}

Another noteworthy property is coherence. Recall that a dcpo is called
\emph{coherent} if the intersection of any two Scott-compact saturated
subsets is Scott-compact.

\begin{prop}\label{prop:KSembedding2}
Let $K$ be a continuous full \KS. 
Then the d-cone
$\dCone(K)$ is coherent if, and only if, $K$ is coherent.
\end{prop}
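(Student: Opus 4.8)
The plan is to transfer coherence across the embedding $K \hookrightarrow \dCone(K)$ in both directions, using the fact (established in Theorem~\ref{prop:KSembed} and refined in Proposition~\ref{prop:continuous}) that $K$ is a Scott-closed convex subset of $\dCone(K)$, together with the scaling maps $x \mapsto rx$ which, by Lemma~\ref{lem:clever} and the preceding discussion, are well-behaved on a full continuous \KS. For the ``only if'' direction, suppose $\dCone(K)$ is coherent. A Scott-compact saturated subset of $K$ is also Scott-compact in $\dCone(K)$, and --- since $K$ is Scott-closed hence the Scott topology on $K$ is the subspace topology --- its saturation in $\dCone(K)$ meets $K$ in the original set. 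Given two Scott-compact saturated $Q_1, Q_2 \subseteq K$, their saturations $\ua_{\dCone(K)} Q_i$ are Scott-compact saturated in $\dCone(K)$, so $\ua Q_1 \cap \ua Q_2$ is Scott-compact; intersecting with the Scott-closed set $K$ we need to recover $Q_1 \cap Q_2$ and to see that it remains compact. Since $K$ is a lower set in $\dCone(K)$, one checks $(\ua_{\dCone(K)} Q_i) \cap K = \ua_K Q_i = Q_i$, so $(\ua Q_1 \cap \ua Q_2) \cap K = Q_1 \cap Q_2$; and a closed subset of a compact set is compact, so $Q_1 \cap Q_2$ is Scott-compact. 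Hence $K$ is coherent.

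For the ``if'' direction --- the substantive one --- suppose $K$ is coherent; we must show $\dCone(K)$ is coherent. The natural route is: first show $\Cone(K)$ is coherent, then show coherence is preserved by the universal dcpo-completion that produces $\dCone(K)$ from the continuous b-cone $\Cone(K)$ (recall from Proposition~\ref{prop:continuous} that $\Cone(K)$ is a continuous b-cone and $\dCone(K)$ is its round-ideal completion). For $\Cone(K)$: every Scott-compact saturated subset $Q$ of $\Cone(K)$ is bounded (being compact, it is covered by finitely many sets $\dua a$, and one takes a bound), so there is $0<r\le 1$ with $rQ \subseteq K$; the map $x \mapsto rx$ is a homeomorphism of $\Cone(K)$ onto $rK$, and $rK$ is Scott-closed in $\Cone(K)$ by Property~(OC3), so $rQ$ is Scott-compact saturated in $K$. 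Given $Q_1, Q_2$ Scott-compact saturated in $\Cone(K)$, pick a single $r$ with $rQ_1, rQ_2 \subseteq K$; then $r(Q_1 \cap Q_2) = rQ_1 \cap rQ_2$, which is the intersection of two Scott-compact saturated subsets of the coherent \KS\ $K$, hence Scott-compact; scaling back by $r^{-1}$ shows $Q_1 \cap Q_2$ is Scott-compact, so $\Cone(K)$ is coherent.

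It then remains to argue that coherence passes from $\Cone(K)$ to its round-ideal completion $\dCone(K)$. This is a general fact about continuous domains and their way-below relation: a continuous domain is coherent iff $\dua a \cap \dua b$ is, for all $a,b$, a finite union of sets of the form $\dua c$ (equivalently, the intersection of two finitely generated upper sets of the form $\ua F$ with $F$ finite, closed under taking way-above, is again such); since the way-below relation of $\dCone(K)$ restricts on $\Cone(K)$ to that of $\Cone(K)$ (by the analysis in the proof of Proposition~\ref{prop:continuous}) and $\Cone(K)$ is dense in $\dCone(K)$, the relevant finitary data transfers. I expect the main obstacle to be precisely this last step --- cleanly reducing coherence of a continuous domain to a finitary condition on its basis and verifying that the condition is inherited by (equivalently, already witnessed on) the dense sub-basis $\Cone(K)$ --- rather than the scaling manipulations, which are routine given Lemma~\ref{lem:clever} and Property~(OC3).
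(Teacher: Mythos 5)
Your ``only if'' direction is fine (the paper dismisses it in one line, since $K$ is a Scott-closed subset of $\dCone(K)$), but the ``if'' direction has a genuine gap at its first step: it is simply not true that every Scott-compact saturated subset $Q$ of $\Cone(K)$ admits a single $0<r\leq 1$ with $rQ\subseteq K$. Covering $Q$ by finitely many sets $\dua a$ only produces elements way \emph{below} points of $Q$; it gives no upper bound, and compact saturated subsets of a cone are typically unbounded. Already for $K=\I$, the unit interval \KS, one has $\Cone(K)=\Rp$, and both $\Rp$ itself and $\ua 1=[1,+\infty)$ are Scott-compact saturated (the only Scott-open subset of $\Rp$ containing $0$ is $\Rp$, and every open set containing $1$ contains $[1,+\infty)$), yet no scalar multiple of either lies in $[0,1]$. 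So your argument that $\Cone(K)$ is coherent collapses. The second half of your route --- transferring coherence from $\Cone(K)$ to its dcpo-completion --- is also not established: the characterisation you invoke ($\dua a\cap\dua b$ being a finite union of sets $\dua c$) is not the standard one and is not justified, and you yourself flag this step as the open obstacle.

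The paper avoids both problems by never handling compact sets in $\Cone(K)$ at all. It uses the equivalence of \cite[Proposition III-5.12]{GHK}: a continuous dcpo is coherent if, and only if, it satisfies property M with respect to \emph{some} basis, where M only asks that for basis elements $y_1\ll x_1$ and $y_2\ll x_2$ there be a finite set $F$ with $\ua x_1\cap\ua x_2\subseteq\ua F\subseteq\ua y_1\cap\ua y_2$. This is a condition on finitely many \emph{elements}, so the scaling trick you rightly rely on is exactly what is needed: choose $0<r<1$ putting the relevant elements of $\Cone(K)$ into $K$ (using fullness, i.e.\ (OC3), and Lemma~\ref{lem:clever} to preserve $\ll$), apply coherence --- hence property M --- of $K$ to the scaled pairs, and multiply the resulting finite set by $r^{-1}$. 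Since $\Cone(K)$ is a basis of the continuous d-cone $\dCone(K)$ (Proposition~\ref{prop:continuous}), property M for that basis yields coherence of $\dCone(K)$ directly, with no separate completion-transfer argument. To repair your outline, replace both of your steps by this single property-M argument.
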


\begin{proof}
Clearly, if $\dCone(K)$ is coherent, then the Scott-closed subset $K$
is coherent, too. For the converse
recall that a continuous poset is said to have property M with respect
to a basis $B$ if, for any $x_1,x_2,y_1,y_2\in B$ with $y_1\ll x_1$
and $y_2\ll x_2$ there is a finite set $F\subset B$ such that $\ua
x_1\cap\ua x_2 \subseteq \ua F\subseteq \ua y_1\cap \ua y_2$. 
By \cite[Proposition III-5.12]{GHK} the following are equivalent for a
continuous dcpo $P$:
\begin{enumerate}
\item $P$ is coherent.
\item $P$ satisfies M for every basis $B$.
\item $P$ satisfies M for some basis $B$. 
\end{enumerate}

\noindent Now let $K$ be a continuous full \KS\ 
which is
coherent. Thus, $K$ satisfies property M with $B=K$. We conclude that
$\Cone(K)$ satisfies property M with $B=\Cone(K)$. Suppose indeed that
$x_1,x_2,y_1,y_2$ are elements of $\Cone(K)$ such that $x_1\ll y_1$
and $x_2\ll y_2$. We can find an $r$ with $0<r<1$ in such a way that $ry_1\in
K$ and $ry_2\in K$. As $rx_1\ll ry_1$ and  $rx_1\ll ry_1$ hold in $K$,
by the coherence of $K$ we can find a finite subset $F$ of $K$ such
that 
$\ua rx_1\cap\ua rx_2 \subseteq \ua F\subseteq \ua ry_1\cap \ua
ry_2$. We then have   
$\ua x_1\cap\ua x_2 \subseteq \ua r^{-1}F\subseteq \ua y_1\cap \ua y_2$.

Since $\Cone(K)$ is a
basis of $\dCone(K)$, we conclude that $\dCone(K)$ is coherent.
\end{proof}

By \cite[Corollary II-5.13]{GHK} we conclude:

\begin{cor}\label{cor:lawsoncompact}
Let $K$ be a continuous full \KS. 
Then $\dCone(K)$ is Lawson compact if, and only if, $K$ is Lawson compact.  \qed
\end{cor}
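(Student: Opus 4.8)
The final statement is Corollary \ref{cor:lawsoncompact}, asserting that for a continuous full \KS\ $K$, the d-cone $\dCone(K)$ is Lawson compact if and only if $K$ is Lawson compact. The proof should be very short, as it is an immediate consequence of Proposition~\ref{prop:KSembedding2} together with the cited result \cite[Corollary II-5.13]{GHK}.

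The plan is as follows. First I would recall the content of \cite[Corollary II-5.13]{GHK}: for a continuous dcpo, Lawson compactness is equivalent to the conjunction of coherence and the existence of a least element (or, more precisely, being a complete continuous semilattice in the Lawson topology — the standard formulation being that a domain is Lawson compact iff it is coherent and has a least element, or is a complete lattice in the appropriate sense; one must check the exact hypothesis the authors intend). Both $K$ and $\dCone(K)$ are continuous (by Proposition~\ref{prop:continuous}, using that $K$ is a continuous full \KS), and both have a least element $0$. So for each of them, Lawson compactness reduces to coherence.

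Then the key step is simply to invoke Proposition~\ref{prop:KSembedding2}, which says that $\dCone(K)$ is coherent if and only if $K$ is coherent. Chaining the two equivalences — $\dCone(K)$ Lawson compact iff $\dCone(K)$ coherent iff $K$ coherent iff $K$ Lawson compact — gives the result. I would also note that $K$ and $\dCone(K)$ being continuous and pointed is exactly what licenses the use of \cite[Corollary II-5.13]{GHK} at both ends.

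There is essentially no obstacle here; the only point requiring a moment's care is confirming that the version of \cite[Corollary II-5.13]{GHK} being cited applies to pointed continuous dcpos (bounded complete or complete lattice versus merely having a bottom) and that both $K$ and $\dCone(K)$ satisfy its hypotheses. Since $K$ is a \KS\ it is a pointed dcpo, and $\dCone(K)$ is a d-cone hence also pointed with $0$ as least element; continuity of both is supplied by Proposition~\ref{prop:continuous}. Given all this, the corollary is immediate, which is why the authors simply write ``By \cite[Corollary II-5.13]{GHK} we conclude'' with a \texttt{\textbackslash qed} and no displayed argument.

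\begin{proof}
Both $K$ and $\dCone(K)$ are continuous dcpos with least element $0$: continuity holds by Proposition~\ref{prop:continuous}, and $0$ is least in $K$ since $K$ is a \KS, and in $\dCone(K)$ since it is a d-cone. By \cite[Corollary II-5.13]{GHK}, for such dcpos Lawson compactness is equivalent to coherence. Hence $\dCone(K)$ is Lawson compact iff $\dCone(K)$ is coherent iff, by Proposition~\ref{prop:KSembedding2}, $K$ is coherent iff $K$ is Lawson compact.
\end{proof}
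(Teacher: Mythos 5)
Your proof is correct and follows exactly the paper's (implicit) route: the corollary is stated as an immediate consequence of Proposition~\ref{prop:KSembedding2} together with \cite[Corollary II-5.13]{GHK}, using that both $K$ and $\dCone(K)$ are continuous pointed dcpos so that Lawson compactness reduces to coherence on both sides. Your added care about the hypotheses of the GHK citation and the role of the least element is exactly the right sanity check, but it introduces nothing beyond what the paper intends.
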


\subsection{Duality and the subprobabilistic powerdomain} \label{dme}

We next give a brief introduction to duality for d-cones and \KSs,
together  with our main examples,  
function spaces and probabilistic powerdomains. For any dcpos $P$ and
$Q$ we write $Q^P$ for the dcpo of all Scott-continuous maps $f\colon
P \to Q$ with the pointwise order, and note that it is a continuous
lattice whenever $P$ is a domain and $Q$ is a continuous lattice (see,
e.g., \cite{GHK}).

\begin{exa}\label{ex:functionspaces} {\rm
{(d-Cone function spaces)}
Let $P$ be a dcpo. For every d-cone $(C,+,0,\cdot)$, the dcpo  $C^P$ 
is a d-cone when equipped with the pointwise sum and scalar
multiplication\display 
\[(f+g)(x) \eqdef f(x)+g(x), \ \ \ \ (r\cdot f)(x) \eqdef r\cdot f(x)\]
We write $\cL P$ for the d-cone $\oRp^P$ 
of all Scott-continuous functions $f\colon P\to\oRp$. 
We will use the following properties of the d-cone $\cL P$ later on\display
\begin{enumerate}[label=(\alph*)]
\item If $P$ is a domain then 
$\cL P$ is a continuous lattice, hence a continuous d-cone. 

\item For any  domain $P$, the way-below relation of $\cL P$ is additive, if,
and only if, $P$ is coherent (Tix \cite[Propositions 2.28 and 2.29]{TKP09}).
\end{enumerate}
}
\end{exa}

\begin{exa}\label{ex:KSfunctionspaces} {\rm
{(\KS\ function spaces)}
For every \KS\ $(K,+_r,0)$ the dcpo $K^P$ becomes a \KS\
with the pointwise barycentric operations\display
\[(f+_r g)(x) \eqdef f(x) +_r g(x)  \]
and with the constant function with value $0$ as distinguished element.

Taking $K$ to be the unit interval $\makered{\I = }[0,1]$,   we obtain the \KS\
$\makered{\cL_{\leq 1}(P) \eqdef \I}^P$. \makered{It}  is a Scott-closed convex subset of 
$\cL P$ forming a sub-\KS\  of $\cL P$ (considered as a \KS), and
therefore full. 
\makered{Further, $\cL P$ is the
universal d-cone over the \KS\ $\cL_{\leq 1}P$, that is $\cL P \cong \dCone(\cL_{\leq 1} P)$. The inclusion $\cL_{\leq 1}(P) \subseteq \cL(P)$ is the universal embedding, as follows from Proposition~\ref{uni-Keg}, noting that $\cL P$
has a continuous meet, and every $f \in \cL P$ 
is the sup of the  sequence of bounded functions $p_n \! \wedge \! f$, where $p_n$ 
is the projection sending $\oRp$ to $[0,n]$.}
Finally, $\cL_{\leq 1} P$ is  a domain if $P$ is,  and its way-below relation is
inherited from that of $\cL P$ and is
 preserved by the barycentric operations if, and only if,
$P$ is coherent.



}
\end{exa}

For any two d-cones $C$ and $D$, the Scott-continuous linear functions $f\colon
C\to D$ form a sub-d-cone $\Lin(C,D)$ of the function space $D^C$.
Similarly, for any  
\KS\ $K$ and d-cone $D$, the set  $\Lin(K,D)$ of Scott-continuous
linear functions is a sub-d-cone of $D^K$. Note that if $K$ is $C$, 
regarded as a \KS, then $\Lin(C,D)$ and $\Lin(K,D)$ are identical.

Duality will play an important r\^{o}le: every d-cone $C$ and \KS\ $K$ has
a dual d-cone, viz.\   $C^*\eqdef \Lin(C,\oRp)$ and $K^*\eqdef
\Lin(K,\oRp)$, respectively. 

By Theorem \ref{prop:KSembed}, if $K$ is full 
then every   Scott-continuous linear functional $f\colon K\to\oRp$ has a unique
Scott-continuous linear extension $\widetilde f\colon C\to\oRp$, where
$C=\dCone(K)$, and  so
$f\mapsto \widetilde f$ is a natural d-cone isomorphism between $K^*$ and
$C^*$. We will use this isomorphism freely.

Duality leads to the 
 \emph{weak Scott topology} on a d-cone $C$,
 which, as the name implies,  is coarser than the Scott topology.
 It is the coarsest topology on $C$ for which the
Scott-continuous linear functionals on $C$ remain continuous.  The sets
\[U_f \eqdef \{x\in C\mid f(x)> 1\},\ f\in C^*\]
form a subbasis of this topology.  
 
We need the notion of a reflexive cone. For any d-cone $C$ we have
a canonical map $\ev_C$ from $C$ into the  
bidual $C^{**}$. It
assigns the evaluation map $f \mapsto f(x)$ to every $x\in
C$ and  is Scott-continuous and linear. We say that $C$ is 
\emph{reflexive} if $\ev_C$ is an isomorphism of
d-cones. If $C$ is a reflexive d-cone, then its dual $C^*$ is also
reflexive, and its dual is (isomorphic to) $C$.

We will need, as we did in \cite{KP}, the notion of a \emph{convenient}
d-cone. This is a continuous reflexive d-cone $C$ whose weak
Scott topology agrees with its Scott topology, and whose dual
$C^*$ is continuous and has an additive way-below relation. The
valuation powerdomain construction, which we consider
next, exemplifies these strong requirements.   


\begin{exa}\label{ex:Probabilisticpowerdomains}{\rm 
 {(Probabilistic powerdomains)}
Probability measures on dcpos are modelled by valuations, which assign a
probability to Scott-open subsets in a continuous way. This permits
the development of a satisfactory theory of integration of 
Scott-continuous functions. For large classes of dcpos,
Scott-continuous valuations correspond to regular Borel probability
measures and so it does not make much difference whether one works
with Scott-continuous valuations or probability measures. For the sake
of applications in semantics it is useful not to restrict to probabilities, where the
whole space has probability $1$, but to admit subprobabilities, where
the whole space has  probability of at most $1$. For theoretical
purposes, on the other hand, it is more convenient to extend the notion
of  probability to that of a measure, where the measure of the whole
space can be any nonnegative real number, or even $+\infty$. 

A \emph{valuation} on a dcpo $P$ is a map $\mu$ defined on the
complete lattice $\cO P$ of Scott-open subsets of $P$ taking nonnegative
real values, including $+\infty$, 
which (replacing finite additivity) is strict and modular\display
\begin{eqnarray*}
\mu(\emptyset)&=&0\\
\mu(U\cup V)+\mu(U\cap V) &=&\mu(U) +\mu(V)\ \ \mbox{ for all } U,V\in\cO P
\end{eqnarray*}\\
\makeblue{The \emph{point} (or \emph{Dirac}) valuations $\delta_x$ ($x \in P$) are given by:
\[\delta_x(U) = \left\{\begin{array}{ll}
                                     1 & (x \in U)\\
                                     0 & (x \notin U)\\
                                \end{array}\right .\]
The \emph{simple} valuations are the finite linear combinations of point valuations.}
The set $\cV P$ of all Scott-continuous valuations $\mu\colon \cO
P\to\oRp$ forms a sub-d-cone of the function space $\oRp^{\cO P}$, with the
pointwise  partial order and  with pointwise addition and scalar multiplication\display 
\[
(\mu + \nu)(U) \eqdef \mu(U) + \nu(U), \ \ \ (r\cdot\mu)(U) \eqdef r\cdot \mu(U)
\]
We call it the \emph{valuation powerdomain} over $P$ (it is also known as the \emph{extended probabilistic powerdomain}).
The  Scott-continuous valuations $\mu$ such that $\mu(P)\leq 1$ are
called \emph{subprobability valuations}; they form a Scott-closed
convex subset 
$\cV_{\leq 1}P$ of the d-cone $\cV P$, and hence a full  sub-\KS. 
We call it the \emph{subprobabilistic powerdomain} over $P$.  
We will use the following results later on\display


\begin{enumerate}[label=(\alph*)]
\item Let $P$ be a domain. 
The valuation powerdomain $\cV P$ is a continuous d-cone
with an additive way-below relation and (hence)
the subprobabilistic powerdomain $\cV_{\leq 1}P$ is a continuous \KS\
in which the barycentric operations preserve its way-below relation, which is inherited from $\cV P$
(Jones \cite[Theorem 5.2 and Corollary 5.4]{jones}, Kirch \cite{Ki},
Tix \cite{Tix}
and see~\cite[Theorem IV-9.16]{GHK}). 

\item Let $P$ be a coherent domain. Then the powerdomains $\cV P$ and
$\cV_{\leq 1}P$ are also coherent (Jung and Tix \cite{JT98}, Jung \cite{J04}).
\makeblue{In each case the relevant simple valuations form a basis.}

\item For any dcpo $P$ there is a canonical Scott-continuous map
$\delta_P\colon P \to \cV_{\leq 1}P \subseteq \cV P$ which 
assigns to any $x\in P$ the Dirac measure 
$\delta_P(x)$ which has value $1$ for all Scott-open  neighbourhoods
$U$  of $x$ and value $0$ otherwise. 
Then, if $P$ is a domain, $\cV P$ is the free d-cone over $P$ (Kirch
\cite{Ki}, and see~\cite[Theorem 
  IV-9.24]{GHK}), and $\cV_{\leq 1}P$ is the free
 \KS\ over $P$ (Jones \cite[Theorem 5.9]{jones}). 
%
%
 That is,  for every d-cone $C$ and every
 Scott-continuous function $f\colon P\to C$, there is a unique
 Scott-continuous linear map $\ov{f}\colon \cV P\to C$  such that the
 following diagram commutes\display 
 \begin{center}
\begin{diagram}P&&\\
\dTo<{\delta_P} & \SE_{}{\quad \;\; \; f}&\\
\cV P &\rTo_{\ov{\!f}} & C
\end{diagram}
\end{center}
 and for every
 \KS\ $K$ and every Scott-continuous map $f\colon P\to K$ there is a
 unique Scott-continuous linear map $\ov{f}\colon \cV_{\leq 1}P \to K$
 such that the following diagram commutes\display
 \begin{center}
\begin{diagram}P&&\\
\dTo<{\delta_P}& \SE_{}{\quad \;\; \; f}&\\
\cV_{\leq 1} P &\rTo_{\ov{\!f}}&K 
\end{diagram}
\end{center}
 
\item Let $P$ be a domain. Then the valuation powerdomain $\cV P$ is the
universal d-cone over the \KS\ $\cV_{\leq 1}P$, that is:
$\cV P \cong \dCone(\cV_{\leq 1}P)$ \makered{(the inclusion $ \cV_{\leq 1} P \subseteq \cV P$ is the universal arrow).} This is because $\cV P$  and
$\cV_{\leq 1}P$ are, respectively, the free d-cone  and the free \KS\
over $P$. 



\item For any dcpo $P$,  the valuation powerdomain $\cV P$ is
the dual of $\cL P$ up to isomorphism, and, if $P$ is a domain, the d-cone
$\cL P$  is the dual of $\cV P$, which implies that both $\cV P$ and
$\cL P$ are reflexive (Kirch \cite[Satz 8.1 and Lemma 8.2]{Ki}, and Tix
\cite[Theorem 4.16]{Tix}).

The proof is based on the appropriate
notion of an integral of a Scott-continuous function $f\colon
P\to\oRp$ with respect to 
a Scott-continuous valuation $\mu$. Among the various ways to define
this integral, the most elegant is via the Choquet integral\display
\[ \int f\,d\mu \eqdef
\int_0^{+\infty}\mu\big(f^{-1}\big(]r,+\infty]\big)\big)\,dr\, \]   
The Choquet integral should be read as the generalised Riemann
integral of the nonnegative monotone-decreasing function $r \in
[0,+\infty[ \, \mapsto\,  \mu\big(f^{-1}\big(]r,+\infty]\big)\big)$,
which may take infinite values. The integral was originally defined as
a Lebesgue integral by Jones~\cite{jones}; Tix later proved the
Choquet definition equivalent~\cite{Tix}. 

The isomorphism between $\cV P$ and $(\cL P)^*$, the dual  of $\cL P$,  is
the map  
$\mu\mapsto \lambda f.\ \int f\,d\mu$, establishing a kind of Riesz
representation theorem. 
And if $P$ is a domain, the isomorphism between $\cL P$ and the dual
$(\cV P)^*$ is 
the map $f\mapsto \lambda \mu.\ \int f\,d\mu$.

\item For a domain $P$, the Scott topology and the weak Scott topology agree
on both $\cL P$ and $\cV P$ (Tix \cite[Satz 4.10]{Tix}).
\end{enumerate}
}

Summarising the properties reported in Examples \ref{ex:functionspaces}
and  \ref{ex:Probabilisticpowerdomains}, we can say that the
valuation powerdomain $\cV P$ 
over a continuous coherent domain $P$ is a convenient d-cone.
\end{exa}

\section{Power \KSs} \label{Powerkeg}

We now construct three kinds of power \KSs, proceeding analogously to the
constructions of the three types of powercone in~\cite{TKP09}.  Under
various assumptions on a given \KS\ $K$, we will construct its
\emph{lower, upper}, and \emph{convex power \KSs},  $\cH K$, $\cS K$, and
$\cP K$. 
Power \KSs\ and powercones have a choice operation, so we begin by defining the three kinds of \KSs\  and cones enriched with a choice operation that thereby arise.

A \emph{\KS\ semilattice} is a
\KS\ equipped with a Scott-continuous semilattice operation $\nonor$
over which convex combinations distribute, that is, for all $x,y,z \in
K$ and $r \in [0,1]$ we have: 
\[x +_r (y \nonor z) = (x +_r y) \nonor (x +_r z)\]
It is a \emph{\KS\  join-semilattice} if $\nonor$ is the
binary supremum operation (equivalently, if $x \leq x \nonor y$ always
holds). 
It is a \emph{\KS\ meet-semilattice} if $\nonor$
is the binary infimum operation (equivalently, if $x \nonor y \leq x$
always holds). A morphism of  \KS\ semilattices   is a morphism of
\KSs\ which also preserves the semilattice operation. 
\makered{Using the distributivity axiom it is straightforward to show that the semilattice operation is homogeneous and that the barycentric operations are $\subseteq$-monotone (a function  between two  semilattices is said to be
\emph{$\subseteq$-monotone} if it preserves the partial order $\subseteq$ naturally associated to semilattices).  
Further, } 
the following
\emph{convexity} identity holds 
\[x \nonor (x +_r  y) \nonor y = x \nonor y \eqno{\rm (CI)} \]
This can be proved beginning with the equation $x \nonor  y = (x
\nonor  y) +_r (x \nonor  y)$, then expanding out the right-hand
side using the distributivity of $+_r $ over $\nonor$, and then using the inclusion $\subseteq$ associated with the semilattice operation.

There is an analogous notion of \emph{d-cone semilattice}. This is a
d-cone $C$ equipped with a Scott-continuous semilattice operation $\nonor$
over which the cone operations distribute, i.e., for all $x,y,z \in C$ and $r\in \Rp$ we have:
\[x + (y \nonor z) = (x + y) \nonor (x +z) \qquad r\cdot (x \nonor y) = r\cdot x \,\nonor\, r\cdot y\]
Such a cone is a \emph{d-cone  join-semilattice}  (\emph{d-cone meet-semilattice}) if $\nonor$
is the binary supremum operation (respectively, the binary infimum operation).
Every  d-cone semilattice (d-cone  join-semilattice, d-cone meet-semilattice) can be regarded as a     \KS\ semilattice  (respectively,  \KS\  join-semilattice,    \KS\ meet-semilattice). 
\makered{A morphism of  d-cone semilattices   is a morphism of d-cones which also preserves the semilattice operation. Much as before, distributivity implies that $\nonor$ is homogeneous, and that the d-cone operations are $\subseteq$-monotone.}

The lower, upper, and convex power \KSs\   will be, respectively, \KS\  join-semilattices,
meet-semilattices, and  semilattices.
Possibly under further assumptions, they will be the
free such \KS\ semilattices on $K$.
\makered{The analogous result in the lower case for powercones was proved in ~\cite{TKP09}. Freeness results were also proved in the other two cases, but of a different character, having  weaker assumptions and weaker conclusions.}




%
In order to verify the properties of the various power 
\KSs, we will embed the \KSs\ in d-cones, and then use
the embeddings to transfer results from~\cite{TKP09} about powercones
to the power \KSs.

\makered{Another way to proceed is to view the power
\KSs\ as retracts  of the powerdomains of the domains underlying the
\KSs. One can then transfer results about the  
powerdomains (such as the preservation of continuity) to the \KSs. Similar uses of retracts already occur
in ~\cite{TKP09,GL07, GL08,GL10,GL12}.} We have not explored this 
option, but it is certainly  possible to strengthen some of our
results in this way. However, 
the results we present are sufficient for their intended application
in Section~\ref{dPDomains} to mixed powerdomains combining
probabilistic choice and nondeterminism.

\subsection{Lower power \KSs} \label{HpK}
\makered{We first investigate the \emph{convex lower} (or \emph{Hoare})
\emph{power \KS}. We need some closure properties of convex sets\display
\begin{lem} \label{convexity} Let $K$ be a \KS. Then:
 \begin{enumerate}
 \item Any directed union of convex subsets of $K$ is convex.
 \item The Scott closure of a convex subset of $K$ is convex.
 \item If $X$ and $Y$ are convex subsets of $K$ then so is
 \[X +_r Y \eqdef \{x +_r y \mid  x \in X, y \in Y\}\]
 \end{enumerate}
\end{lem}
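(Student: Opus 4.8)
The plan is to prove each of the three closure properties in turn, working directly with the barycentric operations and their Scott-continuity, which are the only tools we need here.

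For (1), let $(X_i)_i$ be a directed family of convex subsets and put $X = \bigcup_i X_i$. Given $x, y \in X$ and $r \in [0,1]$, choose indices with $x \in X_i$ and $y \in X_j$; by directedness there is $k$ with $X_i, X_j \subseteq X_k$, so $x, y \in X_k$, whence $x +_r y \in X_k \subseteq X$ by convexity of $X_k$. This is immediate and needs no continuity.

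For (3), suppose $X, Y$ are convex and take two elements $x +_r y$, $x' +_r y'$ of $X +_r Y$ (with $x, x' \in X$, $y, y' \in Y$), together with $s \in [0,1]$. We must show $(x +_r y) +_s (x' +_r y') \in X +_r Y$. Here the key algebraic fact is the entropic identity (E) stated in the excerpt: $(a +_r b) +_s (c +_r d) = (a +_s c) +_r (b +_s d)$. Applying it with $a = x, b = y, c = x', d = y'$ gives $(x +_r y) +_s (x' +_r y') = (x +_s x') +_r (y +_s y')$, and $x +_s x' \in X$, $y +_s y' \in Y$ by convexity of $X$ and $Y$ respectively, so the result lies in $X +_r Y$ as required.

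For (2), let $X$ be convex and let $\cl X$ denote its Scott closure. Fix $r \in [0,1]$; we want $\cl X +_r \cl X \subseteq \cl X$. The natural route is to use Scott-continuity of $+_r\colon K \times K \to K$ in each argument separately. First fix $x \in X$: the map $y \mapsto x +_r y$ is Scott-continuous, hence the preimage of $\cl X$ under it is Scott-closed and contains $X$ (since $x +_r y \in X \subseteq \cl X$ for $y \in X$ by convexity), hence contains $\cl X$; so $x +_r y \in \cl X$ for all $x \in X$, $y \in \cl X$. Now fix $y \in \cl X$: the map $x \mapsto x +_r y$ is Scott-continuous, its preimage of $\cl X$ is Scott-closed and, by the previous step, contains $X$, hence contains $\cl X$; so $x +_r y \in \cl X$ for all $x, y \in \cl X$. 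Taking the union over $r \in [0,1]$ is not needed — convexity only requires closure under each $+_r$ individually. The one point to be careful about is that $+_r$ is indeed Scott-continuous in each argument; this is part of the definition of a \KS\ (and for $r \in \{0,1\}$ the operation is a projection, so trivial). I do not expect any real obstacle here: (1) and (3) are short algebraic arguments given the entropic identity already proved in the text, and (2) is the standard "preimage of a closed set under a continuous map is closed" argument applied twice.
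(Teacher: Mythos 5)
Your proof is correct, and parts (1) and (3) coincide with the paper's (the first being immediate, the third resting on the entropic identity (E)). Where you genuinely diverge is part (2): the paper argues that the Scott closure of a set is built by transfinitely alternating downward closure and the adjunction of directed suprema, and checks that each of these two operations preserves convexity (downward closure via monotonicity of $+_r$, directed suprema via its Scott-continuity). You instead run the classical topological argument that a set closed under a separately continuous binary operation has topologically closed closure: for fixed $x\in X$ the preimage of $\mathrm{cl}\,X$ under $y\mapsto x+_r y$ is Scott-closed and contains $X$, hence contains $\mathrm{cl}\,X$, and then a second application with $y\in\mathrm{cl}\,X$ fixed finishes the job. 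Both arguments use exactly the data in the definition of a \KS; yours is arguably cleaner in that it needs only that Scott-continuous maps are topologically continuous and avoids invoking the transfinite description of Scott closure, while the paper's description has the side benefit of being reused later (e.g.\ in the proof of Theorem~\ref{Huni}) as a general device for pushing properties from a set to its Scott closure. Your remark that separate continuity suffices, and that $r\in\{0,1\}$ is trivial, is accurate, so there is no gap.
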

\begin{proof} 
The first statement is immediate. For the second, note that the Scott closure of a set is obtained by repeating the operations of downwards closure and taking directed unions transfinitely many times. As each of these operations can be seen to preserve convexity, and as taking directed unions does too, it follows that the Scott closure of a convex set is convex. Finally the third statement follows using the entropic law  (E).
\end{proof}}

The lower power \KS\  $\cH K$,
of a given full \KS\  $(K, +_r,0)$, that is, a
  \KS\ satisfying Property (OC3), consists of
the collection of non-empty Scott-closed convex subsets of $K$,
ordered by subset, with zero $\{0\}$ and with convex combination
operators ${+_r}_H$ given by: 
\[X {+_r}_H Y \eqdef \overline{X {+_r} Y}\]
for $r \in [0,1]$, (where, for any $X \subseteq K$, $\overline{X}$ is
the closure of $X$ in the Scott topology). 
That these operators
\makered{are well-defined follows from Lemma~\ref{convexity}.}

As we said above, in order to verify the properties of $\cH K$ we will
make use of the embedding of $K$ into a d-cone $C$  and the properties
of the lower powercone (or Hoare powercone) of $C$. So let us begin by
reviewing the definition and properties of the lower powercone   $\cH
C$ of a d-cone $(C,+,0, \cdot)$~\cite[Section 4.1]{TKP09}. 
As a partial order, it is the collection  
of all nonempty Scott-closed convex subsets of $C$ ordered by
inclusion $\subseteq$. It has arbitrary suprema, given by: 
\[\bigvee_{i\in I}X_i = \cchull{\bigcup_{i\in I}X_i}\]
with directed suprema  given by:
\[\sideset{}{^{\uparrow}}\bigvee_{i\in I} X_i = \overline{\sideset{}{^{\uparrow}}\bigcup _{i\in I} X_i}\]
Addition and scalar multiplication are lifted from $C$  to  $\cH C$ as follows:
\[ X +_{H} Y \eqdef \overline{X+Y}
\qquad
 r \cdot_{H} X \eqdef r\cdot X\]
where  $X + Y = \{x + y \mid x \in X, y \in Y\}$, and $r \cdot X = \{r \cdot x \mid x \in X\}$. Convex combinations are then given by 
$r\cdot_H X +_H (1 - r)  \cdot_H Y =  \overline{X+_rY} $.
Further, the following  is proved in~\cite[Section 4.1]{TKP09}:
\begin{thm} \label{Hcone}
Let  $(C, +, 0, \cdot)$ be a d-cone. 
Then $(\cH C, +_{H} , {0},  \cdot_{H})$ is also  a d-cone, 
and, equipped with binary suprema, it forms a d-cone  join-semilattice.

If $C$ is continuous, then so is $\cH C$. 
The  non-empty finitely generated convex Scott-closed sets
$\cchull{F}$, where $F$ is a finite, non-empty subset of $C$, form a
basis for $\cH C$; further, for any $X, Y \in \cH C$,  $X \ll_{\cH C}
Y$ if, and only if, $X \subseteq \cchull{F}$ and $F \subseteq  \dda
Y$, for some such $F$. 
If, in addition, the way-below relation of $C$ is additive, so is that
of $\cH C$.  \qed
\end{thm}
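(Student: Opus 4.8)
The plan is to establish Theorem~\ref{Hcone} by reducing almost every assertion to the corresponding fact about the lower powercone $\cH C$ from~\cite[Section 4.1]{TKP09}, transferring it across the embedding $K \hookrightarrow \dCone(K) = C$ supplied by Theorem~\ref{prop:KSembed}. Wait --- actually the statement we are asked to prove is Theorem~\ref{Hcone} itself, which is a statement purely about d-cones and their lower powercones, so there is nothing to transfer: it is the \emph{source} result. Thus the proof must be done directly at the cone level. I would proceed as follows.

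First I would verify that $(\cH C, +_H, \{0\}, \cdot_H)$ is a d-cone. The underlying poset of non-empty Scott-closed convex subsets of $C$ is a dcpo because a directed union of such sets has Scott closure again convex (by Lemma~\ref{convexity}(1),(2) applied in the cone, or directly), and this closure is the directed supremum. One then checks the cone equational laws for $+_H$ and $\cdot_H$: associativity, commutativity and the scalar-multiplication laws all reduce, via the formula $X +_H Y = \overline{X + Y}$ and $r \cdot_H X = \overline{r \cdot X} = r \cdot X$ (the latter already Scott-closed and convex since $x \mapsto rx$ is a homeomorphism for $r>0$, and $0 \cdot_H X = \{0\}$), to the corresponding laws in $C$ together with the fact that Scott closure is compatible with these continuous operations, i.e.\ $\overline{\overline{A} + B} = \overline{A+B}$ and similar. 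Scott-continuity of $+_H$ and $\cdot_H$ in each argument follows from the explicit description of directed sups as closures of directed unions. That $\{0\}$ is the neutral element and least element is immediate. For the join-semilattice structure one checks that binary suprema exist, given by $X \vee Y = \overline{\myconv(X \cup Y)}$, that this operation is Scott-continuous, and that the cone operations distribute over it --- again by pushing the distributive laws of $C$ through the closure operator; this gives a d-cone join-semilattice.

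Next, continuity. Assuming $C$ continuous, I would show the non-empty finitely generated convex Scott-closed sets $\cchull{F}$ form a basis, and simultaneously identify the way-below relation. The key lemma is: $X \ll_{\cH C} Y$ if and only if there is a finite non-empty $F \subseteq \dda Y$ with $X \subseteq \cchull{F}$. For the ``if'' direction, one uses that if $Y \subseteq \sideset{}{^{\uparrow}}\bigvee_i Z_i = \overline{\bigcup^{\uparrow}_i Z_i}$ then each element of $F$, being way-above some element of $Y$ hence in the Scott-open set $\dua(\text{elt of }Y)$, already lies in some $Z_i$; directedness collects finitely many $F$-elements into a single $Z_{i_0}$, and convexity of $Z_{i_0}$ gives $\cchull{F} \subseteq Z_{i_0}$, so $X \subseteq Z_{i_0}$. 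For the ``only if'' direction one writes $Y = \overline{\bigcup^{\uparrow}\{\cchull{G} : G \subseteq \dda Y \text{ finite non-empty}\}}$, valid because $C$ is continuous so each point of $Y$ is a directed sup of points way-below it and these can be packaged into finite convex hulls; then $X \ll Y$ forces $X$ inside one such $\cchull{G}$. Continuity of $\cH C$ then follows since every $Y$ is the directed sup of the $\cchull{F}$ way-below it, and these form a basis.

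Finally, additivity of $\ll_{\cH C}$ given additivity of $\ll_C$. Take $X_1 \ll Y_1$ and $X_2 \ll Y_2$ in $\cH C$; by the characterisation just proved, pick finite $F_i \subseteq \dda Y_i$ with $X_i \subseteq \cchull{F_i}$. Then $F_1 + F_2 = \{f_1 + f_2 : f_i \in F_i\}$ is finite and, by additivity of $\ll_C$, contained in $\dda(Y_1 +_H Y_2)$ --- here one needs that $\dda$ of the sum contains the sums of the $\dda$'s, which follows from $\overline{Y_1 + Y_2}$ having each $f_1+f_2$ way-above a genuine element $y_1 + y_2$ with $y_i \in Y_i$. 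Moreover $X_1 +_H X_2 = \overline{X_1 + X_2} \subseteq \overline{\cchull{F_1} + \cchull{F_2}} = \cchull{F_1 + F_2}$ (the last equality because a sum of finitely-generated convex sets is finitely generated convex, already Scott-closed). Applying the characterisation in reverse gives $X_1 +_H X_2 \ll Y_1 +_H Y_2$, as desired.

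The main obstacle I anticipate is the careful bookkeeping around the interaction of the Scott-closure operator with convex hulls, sums, and scalar multiples --- specifically proving the identities $\overline{\overline{A} + \overline{B}} = \overline{A + B}$, that $\cchull{F} + \cchull{G} = \cchull{F+G}$ is already Scott-closed, and the ``only if'' direction of the way-below characterisation, which requires knowing that every Scott-closed convex $Y$ is the Scott closure of the directed union of the finite convex hulls it contains below itself; this last point leans essentially on the continuity of $C$ and on convexity being preserved under directed unions (Lemma~\ref{convexity}(1)). None of these steps is deep, but each must be stated precisely for the way-below characterisation --- the linchpin of both the continuity and the additivity claims --- to go through.
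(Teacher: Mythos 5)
The paper contains no proof of Theorem~\ref{Hcone}: it is quoted, with the proof deferred to \cite[Section 4.1]{TKP09}. So your sketch takes a genuinely different route simply by proving the result directly, and in substance it reconstructs the argument of the cited source: directed suprema computed as closures of directed unions, the cone equations pushed through the Scott-closure operator (with convexity of $X$ used for $(r+s)\cdot_H X = r\cdot_H X +_H s\cdot_H X$), binary suprema $\cchull{X\cup Y}$ with distributivity checked before taking closures, the characterisation of $\ll_{\cH C}$ in terms of the sets $\cchull{F}$, and additivity of $\ll_{\cH C}$ derived from that characterisation plus additivity of $\ll_C$. What the citation buys the paper is brevity; what your reconstruction buys is a self-contained verification, and your decomposition (first the d-cone join-semilattice structure, then the way-below characterisation as the linchpin for both continuity and additivity) is the right one.

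Two steps should be tightened. First, in the ``if'' direction of the way-below characterisation you argue that each $f\in F$ is ``way-above some element of $Y$, hence in the Scott-open set $\dua(\mbox{element of }Y)$''. The hypothesis $F\subseteq\dda Y$ says the opposite: $f\ll y$ for some $y\in Y$. The correct step is that $y$ lies in the Scott-open set $\dua f$; since $y\in\overline{\bigcup^{\uparrow}_i Z_i}$, that open set meets $\bigcup_i Z_i$, so $f\leq z$ for some $z\in Z_i$, whence $f\in Z_i$ because $Z_i$ is a lower set. As literally written your step fails: an element way above a point of $Y$ need not lie in any $Z_i$ (and with that reading of $\dda$ the characterisation itself would be false). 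Second, in the additivity argument the claim that $\cchull{F_1}+\cchull{F_2}=\cchull{F_1+F_2}$ and is already Scott-closed is unjustified (sums of Scott-closed sets need not be Scott-closed) and also unnecessary: all you need is the inclusion $\cchull{F_1}+\cchull{F_2}\subseteq\cchull{F_1+F_2}$, which follows from $\overline{A}+\overline{B}\subseteq\overline{A+B}$ (Scott-continuity of addition) together with $\myconv F_1+\myconv F_2=\myconv(F_1+F_2)$; then $X_1+_H X_2=\overline{X_1+X_2}\subseteq\cchull{F_1+F_2}$ because the right-hand side is closed. With these repairs your proof goes through.
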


We can now show:
\begin{thm} \label{HKeg} Let  $(K, +_r,0)$ be a  full   
  \KS. 
 Then $(\cH K, {+_r}_H, \{0\})$ is a full \KS.
%
It has arbitrary suprema, given by:
\[\bigvee_{i\in I}X_i = \cchull{\bigcup_{i\in I}X_i}\]
with directed suprema  given by:
\[\sideset{}{^{\uparrow}}\bigvee_{i\in I} X_i = \overline{\sideset{}{^{\uparrow}}\bigcup _{i\in I} X_i}\]
and, equipped with binary suprema, it forms a  \KS\  join-semilattice.

If, further, $K$ is  a continuous \KS, then so is $\cH K$. 
The  non-empty finitely generated convex Scott-closed sets
$\cchull{F}$, where $F$ is a finite, non-empty subset of $K$, form a
basis for $\cH K$; further, for  any $X, Y \in \cH K$,  $X \ll_{\cH K}
Y$ if, and only if, $X \subseteq \cchull{F}$ and $F \subseteq  \dda
Y$, for some such $F$. 
If, in addition, the way-below relation of $K$ is closed under convex
combinations, so is that of $\cH K$. 
\end{thm}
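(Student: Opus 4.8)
The plan is to embed $K$ into the d-cone $C \eqdef \dCone(K)$ via Theorem~\ref{prop:KSembed}, realise $\cH K$ as a Scott-closed sub-\KS\ of the lower powercone $\cH C$, and then transfer each assertion from Theorem~\ref{Hcone}.

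First I would record the relevant facts about the embedding $K \hookrightarrow C$: by Theorem~\ref{prop:KSembed}, $K$ sits inside $C$ as a Scott-closed convex subset, so in particular $K$ is a lower set and a sub-dcpo of $C$, and (being Scott-closed) its intrinsic Scott topology is the subspace topology; hence for $S \subseteq K$ the Scott closure of $S$ is the same whether taken in $K$ or in $C$, and a non-empty subset of $K$ is Scott-closed and convex in $K$ precisely when it is so in $C$. This identifies $\cH K$ with $\{X \in \cH C \mid X \subseteq K\}$, with the induced order and barycentric operations. Using that $K$ is a lower set, a sub-dcpo, and closed under the barycentric operations of $C$, one checks that this set is a lower subset and a sub-dcpo of $\cH C$, hence Scott-closed in $\cH C$, and that it contains $\{0\}$ and is closed under binary suprema and under the operators ${+_r}_H$ (because $X {+_r} Y \subseteq K$ and $K$ is Scott-closed, so $\overline{X {+_r} Y} \subseteq K$). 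Thus $\cH K$ is a sub-\KS\ of the d-cone $\cH C$, hence a \KS; and it is full since a lower sub-\KS\ of a d-cone inherits (OC3): if $X \leq_{\cH K} r\cdot_H Y$ with $0 < r < 1$, then $r^{-1}\cdot_H X \in \cH C$ and $r^{-1}\cdot_H X \leq Y$, so $r^{-1}\cdot_H X \in \cH K$ serves as the required witness. The formulas for arbitrary and directed suprema, and the statement that $\cH K$ equipped with binary suprema is a \KS\ join-semilattice (Scott-continuity of $\vee$ and distributivity of ${+_r}_H$ over it), then restrict directly from the corresponding facts about $\cH C$ in Theorem~\ref{Hcone}, since in each case the relevant convex hulls and closures remain inside $K$.

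For continuity, suppose $K$ is a continuous \KS. Then $C = \dCone(K)$ is a continuous d-cone by Proposition~\ref{prop:continuous}, so $\cH C$ is continuous by Theorem~\ref{Hcone}; since $\cH K$ is Scott-closed in $\cH C$, Lemma~\ref{lem:aux} gives that $\cH K$ is continuous and that $\ll_{\cH K}$ is the restriction of $\ll_{\cH C}$. Applying Lemma~\ref{lem:aux} once more to the Scott-closed subset $K$ of the continuous dcpo $C$ identifies $\ll_K$ with the restriction of $\ll_C$. With these identifications the ``$\Leftarrow$'' direction of the way-below characterisation is immediate from Theorem~\ref{Hcone}. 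For ``$\Rightarrow$'', given $X \ll_{\cH K} Y$ we have $X \ll_{\cH C} Y$, so Theorem~\ref{Hcone} yields a finite non-empty $F_0 \subseteq C$ with $X \subseteq \cchull{F_0}$ and each $z \in F_0$ way-below some $y_z \in Y$; interpolating $z \ll_C z' \ll_C y_z$ and using that $K$ is a lower set in $C$ (so $z' \leq y_z \in K$ forces $z' \in K$), the finite set $F \eqdef \{z' \mid z \in F_0\} \subseteq K$ satisfies $X \subseteq \cchull{F_0} \subseteq \cchull{F}$ and $F \subseteq \dda Y$, whence $\cchull{F} \ll_{\cH K} Y$; this, together with the continuity already established, also yields that the sets $\cchull{F}$ with $F \subseteq K$ finite non-empty form a basis. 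I expect this ``pushing $F_0$ down into $K$'' step to be the only real subtlety in the argument.

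Finally, suppose in addition that $\ll_K$ is closed under convex combinations. Since $K$ is a continuous full \KS, Proposition~\ref{prop:KSembedding1} makes $\ll_C$ additive, hence $\ll_{\cH C}$ is additive by Theorem~\ref{Hcone}. From $A \ll_{\cH C} B$ and $A' \ll_{\cH C} B'$, scaling by $r$ and $1-r$ (scalar multiplication by a positive scalar being an order-isomorphism of $\cH C$, and $\{0\} = 0\cdot_H X$ being way-below everything, which covers the cases $r \in \{0,1\}$) gives $r\cdot_H A \ll_{\cH C} r\cdot_H B$ and $(1-r)\cdot_H A' \ll_{\cH C} (1-r)\cdot_H B'$; adding and using additivity yields $A {+_r}_H A' \ll_{\cH C} B {+_r}_H B'$. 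As these four sets all lie in $\cH K$ and $\ll_{\cH K}$ is the restriction of $\ll_{\cH C}$, it follows that $\ll_{\cH K}$ is closed under convex combinations.
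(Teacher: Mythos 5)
Your proof is correct and takes essentially the same route as the paper: embed $K$ as a Scott-closed convex subset of $C=\dCone(K)$ (Theorem~\ref{prop:KSembed}), identify $\cH K$ as a Scott-closed lower sub-dcpo and sub-\KS\ of $\cH C$, and transfer each claim from Theorem~\ref{Hcone} together with Proposition~\ref{prop:continuous}, Lemma~\ref{lem:aux}, and Proposition~\ref{prop:KSembedding1}. The only (harmless) redundancy is your interpolation step for pushing $F_0$ into $K$: since $K$ is a lower set in $C$ and way-below implies $\leq$, any finite $F_0\subseteq \dua Y$ with $Y\subseteq K$ already lies in $K$.
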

\begin{proof}
Using Theorem~\ref{prop:KSembed} we can
  regard $K$ as a Scott-closed convex subset of the d-cone $C \eqdef \dCone(K)$, with its
  partial order and algebraic structure inherited from that of $C$.

It is then immediate that $\cH K$ embeds as a sub-partial order of
$\cH C$. Further, as $K$ is Scott-closed and convex, one easily shows,
using the above formulas for the suprema and convex combinations of
$\cH C$, that $\cH K$ is a Scott-closed convex subset of $\cH C$,
bounded above by $K$.  Therefore $\cH K$ has all sups and a \KS\
structure, and they are inherited from $\cH C$. Explicitly, arbitrary
sups and 
and directed sups are given by the claimed formulas, as the
corresponding formulas hold for $\cH C$ (we may, equivalently, take
Scott closure of subsets of $K$ with respect to $C$ or $K$). The
inherited convex combinations and zero agree with those of $\cH C$. 
Convex combinations distribute over the semilattice operation (here
binary sups), as $+$ and $r\cdot-$ do. So, equipped with binary
suprema, $\cH K$ is a  \KS\  join-semilattice. It automatically
satisfies Property (OC3) as it is embedded in the d-cone $\cH C$ as a
Scott-closed 
subset. 

Next, suppose that $K$ is a  continuous \KS. Then, by Proposition~\ref{prop:continuous}, $C$  is continuous, and so, by
Theorem~\ref{Hcone}, $\cH C$ is too. As $\cH K$ is a Scott-closed subset
of $\cH C$, it follows that $\cH K$ is continuous with way-below
relation the restriction of that of $\cH C$ to $\cH K$.   
That the non-empty finitely-generated Scott-closed sets form a basis
of $\cH K$, and the characterisation of the way-below relation of $\cH
K$ then follow from the corresponding parts of Theorem~\ref{Hcone}. 
Further, as $r\cdot - $ preserves the way-below relation of $\cH C$
($r\cdot - $ is an order-isomorphism of cones), it also preserves the
restriction to $\cH K$, and so $\cH K$ is continuous as a \KS.  

Finally, suppose additionally that $K$'s way-below relation is closed
under convex combinations. Then, by Proposition~\ref{prop:KSembedding1},  \makered{$C$ has an additive way-below
relation}. We then see from~\cite{TKP09} that the d-cone operations of
$\cH C$ preserve its way-below relation, and so that convex
combinations preserve the way-below relation of $\cH K$ (it is the
restriction of that of $\cH C$  as $\cH  K$ is a closed subset of $\cH
C$).  
\end{proof}

We now show that $\cH K$ is the free  \KS\  join-semilattice over any
full \KS\ $K$, 
with unit the evident \KS\ morphism $\eta_H: K
\rightarrow \cH K$, where $\eta_H(x) = \da x$. 
\begin{thm} \label{Huni} Let $K$ be a full \KS. Then the map $\eta_H$ is universal. That is,
  for every \KS\  join-semilattice  $\makered{L}$ and \KS\
  morphism $f:K \rightarrow \makered{L}$  
there is a unique  \KS\ semilattice  morphism $f^{\dagger}: \cH K
\rightarrow \makered{L}$ such that the following diagram commutes: 
{\[\begin{diagram}
	K\\
	 \dTo^{\eta_H} & \SE_{}{\quad \;\; \; f}\\
	\cH K & \rTo^{f^{\dagger}} & \makered{L}
	\end{diagram}\]
}
The morphism is given by:
  \[f^{\dagger}(X) =  \sideset{}{_{\!\!\makered{L}}}\bigvee f(X)\]
\end{thm}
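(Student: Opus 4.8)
The plan is to exploit the embedding machinery already set up: by Theorem~\ref{prop:KSembed} we may regard $K$ as a Scott-closed convex subset of the d-cone $C \eqdef \dCone(K)$, and by Theorem~\ref{HKeg} we may regard $\cH K$ as a Scott-closed sub-\KS\ of the Hoare powercone $\cH C$. The first task is to establish well-definedness of the candidate map $f^\dagger(X) = \bigvee_{L} f(X)$. Since $L$ is a \KS\ join-semilattice, the supremum $\bigvee_L f(X)$ exists provided $f(X)$ is bounded above and directed-complete-able; but in a \KS\ join-semilattice finite sups always exist, and directed sups exist by dcpo-completeness, and $f$ is Scott-continuous, so $f(X)$ has a sup in $L$ whenever $X$ is Scott-closed (one takes sups over the directed family of finite subsets, using that $f$ preserves directed sups and that $X$ is the directed sup of (downward closures of) its finite subsets when $K$ is continuous; in the general case one argues that $X$, being Scott-closed, is a dcpo and $\bigvee_L f(X) = \bigvee_L f(\overline{F})$ ranges over a directed family). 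Monotonicity of $f^\dagger$ in the subset order is then immediate, so $f^\dagger$ is Scott-continuous (it clearly preserves directed sups, computed via the formula $\sideset{}{^{\uparrow}}\bigvee_{i} X_i = \overline{\sideset{}{^{\uparrow}}\bigcup_i X_i}$ from Theorem~\ref{HKeg} together with Scott-continuity of $f$).

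Next I would verify that $f^\dagger$ is a \KS\ semilattice morphism. Preservation of the semilattice operation is the easy part: $f^\dagger(X \vee Y) = \bigvee_L f(\cchull{X \cup Y}) = \bigvee_L (f(X) \cup f(Y)) = (\bigvee_L f(X)) \vee (\bigvee_L f(Y))$, using that $f$ is affine (hence $f(\cchull{X\cup Y}) \subseteq \overline{\cchull{f(X)\cup f(Y)}}$, whose sup in $L$ is that of $f(X)\cup f(Y)$ since the convex hull and Scott closure do not change the supremum). Preservation of $0$ holds since $f^\dagger(\{0\}) = \bigvee_L \{f(0)\} = \{0\}$. For the barycentric operations, one computes
\[ f^\dagger(X {+_r}_H Y) = \sideset{}{_{\!\!L}}\bigvee f(\overline{X +_r Y}) = \sideset{}{_{\!\!L}}\bigvee f(X +_r Y) = \sideset{}{_{\!\!L}}\bigvee \{ f(x) +_r f(y) \mid x \in X, y \in Y\}, \]
and this equals $(\bigvee_L f(X)) +_r (\bigvee_L f(Y))$ because $+_r$ is Scott-continuous in both arguments in $L$ and distributes over the join (so sups distribute through $+_r$ coordinatewise). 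That $f^\dagger \circ \eta_H = f$ is immediate: $f^\dagger(\da x) = \bigvee_L f(\da x) = f(x)$, using monotonicity of $f$.

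Finally, uniqueness: any \KS\ semilattice morphism $g \colon \cH K \to L$ with $g \circ \eta_H = f$ must agree with $f^\dagger$. Here the key structural fact is that every $X \in \cH K$ is the directed supremum of finitely generated convex Scott-closed sets $\cchull{F} = \bigvee_{x \in F} \da x$ with $F \subseteq X$ finite (in the continuous case this is part of Theorem~\ref{HKeg}; in general one still has $X = \sideset{}{^{\uparrow}}\bigvee \{\cchull{F} \mid F \subseteq X \text{ finite}\}$, which is a directed sup in $\cH K$). Since $g$ is Scott-continuous and preserves $\vee$, we get $g(\cchull{F}) = \bigvee_{x\in F} g(\da x) = \bigvee_{x\in F} f(x)$, and then $g(X) = \sideset{}{^{\uparrow}}\bigvee_F g(\cchull{F}) = \sideset{}{^{\uparrow}}\bigvee_F \bigvee_{x\in F} f(x) = \bigvee_L f(X) = f^\dagger(X)$.

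I expect the main obstacle to be the well-definedness and the uniqueness argument in the \emph{non-continuous} case: one cannot simply invoke the finite-basis description of $\cH K$ from Theorem~\ref{HKeg}, and one must instead argue directly that an arbitrary Scott-closed convex set is a directed sup of finitely generated ones in $\cH K$, and that $f(X)$ has a supremum in $L$. The cleanest route is probably to observe that $X$, being a Scott-closed subset of the dcpo $K$, is itself a dcpo, that $\{\cchull{F} : F \subseteq X \text{ finite, non-empty}\}$ is directed in $\cH K$ with Scott-closure-of-union equal to $X$ (since $X$ is convex and Scott-closed), and then to push everything through Scott-continuity of $f$ and of the \KS\ operations of $L$; alternatively, one transfers the whole statement from the corresponding freeness result for the Hoare powercone $\cH C$ over the d-cone $C$ in~\cite{TKP09}, restricting along the embeddings $K \hookrightarrow C$ and $\cH K \hookrightarrow \cH C$ and along the free d-cone semilattice over $L$'s enveloping cone — but the direct argument is shorter and self-contained.
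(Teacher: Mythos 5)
Your proposal is correct and follows essentially the same route as the paper's proof: the same formula $f^{\dagger}(X)=\bigvee_{L}f(X)$, the existence of arbitrary non-empty sups in $L$ as directed sups of finite joins, the observation that Scott closures (and convex hulls) do not change suprema, distributivity of $+_r$ over non-empty sups, and uniqueness by writing each $X$ as a (directed) sup of elements generated by points of $X$. The only local deviation is the binary-join step, where the paper argues via the convexity identity (CI) whereas you use affinity of $f$ together with the hull/closure-preserves-suprema fact; both arguments work, and your resolution of the non-continuous case (finitely generated approximants exist in $\cH K$ without any basis assumption) matches what the paper in effect does.
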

\begin{proof} \makered{To show uniqueness, 
choose an $X \in \cH K$. It can be written as a non-empty  sup, viz.\  $\bigvee_{x \in X} \eta_H(x)$.
Then, noting that  continuous binary  join morphisms preserve all non-empty  sups (for such  sups are directed  sups of finite non-empty  sups, and finite non-empty  sups are iterated binary  ones) we calculate:}
\[\begin{array}{lcl}
f^{\dagger}(X) & = & f^{\dagger}(\bigvee_{x \in X} \eta_H(x))\\
                        & = & \bigvee_{x \in X}f^{\dagger}(\eta_H(x))\\
                        & = & \bigvee_{x \in X}f(x)
 \end{array}\]
 \makered{To show} existence we therefore set
  \[f^{\dagger}(X) =  \bigvee_{x \in X}f(x)\]
   and verify that it makes the diagram commute and  is both a \KS\
   and a semilattice map. The first of these requirements holds as we
   calculate: 
 \[f^{\dagger}(\eta_H(x)) = f^{\dagger}(\{y \mid y \leq x\}) =  \bigvee_{y \leq x}f(y) = f(x) \]

 For the second we  need to show that $f^{\dagger}$ is strict and
 continuous and  preserves convex combinations. Strictness is a
 consequence of the diagram commuting, as both $\eta_H$ and $f$ are
 strict. 
  \makered{For continuity we first show that 
  \[\bigvee  f(\overline A) =  \bigvee f(A)\]
 for any $A \subseteq K$. Choose   $A \subseteq K$. We evidently have $\bigvee  f(A) \leq  \bigvee f(\ov{A})$, and it remains to prove the converse inequality $\bigvee  f(\ov{A}) \leq  \bigvee f(A)$.
 By the continuity of $f$, we have $f(\overline A)\subseteq
 \overline{f(A)}$ whence $\bigvee  f(\overline A)\leq \bigvee
 \overline{f(A)}$. So we only have to show that  $x \leq \bigvee f(A)$ for any $x \in  \overline{f(A)}$. This follows from the fact that the Scott closure of a set is obtained by transfinitely many repetitions of the operations of downwards closure and taking directed sups.  }

 %
 %

 We can now calculate:
 \[\begin{array}{lcl}
f^{\dagger}(\bigvee^{\uparrow}_{i\in I}X_i ) & = &  f^{\dagger}( \overline{\bigcup^{\uparrow}_{i\in I}X_i})\\
 & \makered{=} & \makered{\bigvee   f(\ov{\bigcup^{\uparrow}_{i\in I}X_i})}\\
 & \makered{=} & \makered{\bigvee  f(\bigcup^{\uparrow}_{i\in I}X_i)}\\
  & = & \bigvee^{\uparrow}_{i\in I}\bigvee_{x \in X_i}f(x) \\
  & = &\bigvee^{\uparrow}_{i\in I}f^\dagger(X_i) \\ &&
 \end{array}\]
 %
    %
For convex combinations we calculate:
 \[\begin{array}{lcl}
f^{\dagger}(X {+_r}_H Y) & = & f^{\dagger}(\overline{X+_rY}) \\
 & = &  f^{\dagger}(X+_rY) \\
 & = & \bigvee_{z \in X+_rY}f(z)\\
 & = & \bigvee_{x \in X, \,y \in Y} f(x +_r y)\\
 & = & \bigvee_{x \in X, \,y \in Y} f(x) +_r f(y)\\
 & = & (\bigvee_{x \in X} f(x)) +_r (\bigvee_{y \in Y} f(y))\\
 & = & f^{\dagger}(X) +_r f^{\dagger}(Y)\\
    \end{array}\]
\makered{The second equation holds as $f$ is Scott-continuous.
The sixth
equation holds as convex combinations in $\makered{L}$ distribute over
arbitrary \makered{non-empty} sups, as we may see by analysing such sups as before, i.e., as directed sups of iterated binary ones.}

Finally, we need to show that binary sups are also preserved, and so calculate:
\[\begin{array}{lcl}
f^{\dagger}(X \vee Y) & = &  f^{\dagger}(\cchull{(X \cup Y)})\\
                                   & = &  f^{\dagger}(\myconv(X \cup Y))\\
                                   & = &  \bigvee_{x \in X,\, y \in Y}\bigvee_{0\leq r \leq 1}f(x +_r  y)\\
                                   & = &  \bigvee_{x \in X,\, y \in Y}\bigvee_{0 < r < 1} (f(x) \vee f(x +_r  y) \vee f(y))\\
                                   & = &  \bigvee_{x \in X,\, y \in Y}\bigvee_{0 < r < 1}(f(x) \vee (f(x) +_r  f(y)) \vee f(y))\\                                                                      
                                   & = &  \bigvee_{x \in X,\, y \in Y} f(x) \vee f(y)\\    
                                   & = &  \bigvee_{x \in X}f(x) \vee  \bigvee_{ y \in Y}f(y)\\
                                   & = &  f^{\dagger}(X) \vee f^{\dagger}(Y)\\
\end{array}\]
where the sixth equation follows using the convexity identity (CI). 
\end{proof}

\subsection{Upper  power \KSs} \label{SpK}

We next investigate the \emph{convex upper} (or \emph{Smyth})
\emph{power \KS}\ $\cS 
K$, of a given continuous full \KS\  $(K, +_r,0)$. 
 The restriction to the continuous case is necessary as in
  the case of powercones. It consists of the collection of non-empty
  Scott-compact convex saturated (= upper) subsets of $K$,
  ordered by reverse inclusion $\supseteq$
, with zero $K$ and with convex combination
  operators ${+_r}_S$  given by: 
\[X {+_r}_S Y \eqdef \ua\,(X {+_r} Y)\]
for $r \in[0,1]$. To see that these operators  are well-defined, first
note that $X +_r Y$ is Scott-compact, as it is the image under $+_r$ of $X
\times Y$, which is compact in the product topology on $K \times K$,
which   latter is the same as the Scott topology, as $K$ is
continuous. Then note that the upper closure of a Scott-compact
(convex) set is Scott-compact (respectively convex). 

The upper power \KS\ has binary infima, which make it a  \KS\ meet-semilattice. They are given by: 
\[X \wedge Y =  \ua \myconv (X \cup Y)\]

In order to verify the properties of $\cS K$, we follow our general
methodology, using the embedding of $K$ into a d-cone $C$  and  the
properties of the upper powercone (or Smyth powercone) of $C$. Let us
begin by recalling the definition and properties of the upper
powercone   $\cS C$ of a continuous d-cone $(C,+,0,
\cdot)$~\cite[Section 4.2]{TKP09}.  
It consists of all nonempty Scott-compact convex saturated subsets
ordered by reverse inclusion $\supseteq$. It has directed suprema
given by intersection: 
\[\sideset{}{^{\uparrow}}\bigvee_{i\in I} X_i = \sideset{}{^{\downarrow}}\bigcap _{i\in I} X_i\]
and binary infima given by:
\[X \wedge Y = \ua\myconv(X \cup Y)  \]
Addition and scalar multiplication are lifted from $C$ to $\cS C$ as follows:
\[ X +_S Y \eqdef \ua\,(X + Y)
\qquad
r \cdot_S X \eqdef  \ua\,(r \cdot X) \]
Note that $r \cdot_S X = \ua \{0\} = C$ if $r = 0$ and $r \cdot_S X = r \cdot X$ if $r > 0$. 
Convex combinations are given by $r\cdot_S X\, +_S\, (1 - r)  \cdot_S Y =  \ua\,(r\cdot X+  (1 - r) \cdot Y)$.
Further, the following is proved in~\cite[Section 4.2]{TKP09}:

\begin{thm} \label{Scone} Let $(C, +, 0, \cdot)$ be a continuous
  d-cone. Then $(\cS C, +_S , C, \cdot_S)$ is a continuous d-cone 
and, equipped with binary infima, it forms a d-cone meet-semilattice.

The  non-empty finitely generated convex saturated Scott-compact sets
$\ua \myconv {F}$, where $F$ is a finite, non-empty subset of $C$, form
a basis for $\cS C$; further, for any $X, Y \in \cS C$,  $X \ll_{\cS
  C} Y$ if, and only if, $X \supseteq \ua \myconv {F}$ and $ \dua F
\supseteq Y$, for some such $F$. 
If the way-below relation of $C$ is additive,  so is that of $\cS C$.  \qed
%
\end{thm}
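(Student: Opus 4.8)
The plan is to deduce everything from the already-understood theory of the (non-convex) Smyth powerdomain $\mathcal{Q}(C)$ of the continuous domain underlying $C$. For the order-theoretic skeleton I would use that $C$ with its Scott topology is a locally compact sober, hence well-filtered, space, so that a filtered intersection of non-empty Scott-compact saturated sets is again non-empty, Scott-compact and saturated; since intersections of convex sets are convex, $\cS C$ (ordered by reverse inclusion) is therefore a dcpo whose directed suprema are computed as filtered intersections, and it sits as a sub-dcpo of $\mathcal{Q}(C)$. The binary-infimum formula $X \wedge Y = \ua\myconv(X \cup Y)$ is verified directly, as the largest convex compact saturated set below both $X$ and $Y$. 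For well-definedness of the cone operations I would use that, $C$ being continuous, the product topology on $C\times C$ agrees with the Scott topology, so $X+Y$ and $r\cdot X$ are continuous images of Scott-compact sets, hence Scott-compact, and their upper closures $\ua(X+Y)$, $\ua(rX)$ are Scott-compact, saturated and --- by monotonicity of $+$ and of $r\cdot{-}$ --- convex; the d-cone equational laws then reduce to the identities $\ua(\ua A+B)=\ua(A+B)$ and $\ua(r\cdot\ua A)=\ua(rA)$, immediate from monotonicity, together with the observation that $C=\ua\{0\}$ is neutral because $x+c\geq x$ and every member of $\cS C$ is saturated.

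The first genuinely topological step is Scott-continuity of $X\mapsto X+_S Y$ and $X\mapsto r\cdot_S X$, i.e.\ preservation of filtered intersections. Monotonicity being clear, the point is the inclusion $\bigcap_i\ua(X_i+Y)\subseteq\ua\bigl((\bigcap_i X_i)+Y\bigr)$ for a filtered family $(X_i)$. Fixing $z$ in the left-hand side, I would consider $A_z=\{x\in C\mid x+y\leq z\ \text{for some}\ y\in Y\}$; this set is Scott-closed, being the image of the closed set $\{(x,y)\mid x+y\leq z\}\cap(C\times Y)$ under the projection $C\times Y\to C$, which is a closed map because $Y$ is compact. Now $z\in\ua(X_i+Y)$ exactly when $X_i$ meets $A_z$, so if $\bigcap_i X_i$ missed $A_z$ it would lie in the Scott-open complement of $A_z$, whence by well-filteredness some $X_i$ would too --- a contradiction. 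Thus $\bigcap_i X_i$ meets $A_z$, producing the required point of $(\bigcap_i X_i)+Y$ below $z$; the scalar case is analogous and easier.

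For continuity of $\cS C$ as a dcpo, and the basis and way-below descriptions, I would exhibit $\cS C$ as a continuous retract of $\mathcal{Q}(C)$. First, each $\ua\myconv F$ (for $F$ finite non-empty) is convex Scott-compact saturated: $\myconv F$ is the image of a finite-dimensional simplex under $(\lambda_k)\mapsto\sum_k\lambda_k c_k$, which is continuous for the Euclidean topology on the simplex (finer than the Scott topology, on which the cone operations are continuous), so $\myconv F$ is Scott-compact. Then I would show that $r(Q)\eqdef\bigcap\{\ua\myconv F\mid F\ \text{finite non-empty},\ Q\subseteq\dua F\}$ defines a well-defined Scott-continuous idempotent on $\mathcal{Q}(C)$ whose image is exactly $\cS C$ --- the defining family being reverse-inclusion-directed by an interpolation argument in $C$ (for two given members, pick, for each generator, a way-below approximant lying in their intersection). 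Since $\mathcal{Q}(C)$ is a continuous dcpo, with basis $\{\ua F\}$ and $\ua F\ll\ua G\iff\ua G\subseteq\dua F$, it follows that $\cS C$ is continuous with the $\ua\myconv F$ forming a basis, and the way-below description is obtained by transporting that of $\mathcal{Q}(C)$ through $r$. Finally, if $\ll_C$ is additive, additivity of $\ll_{\cS C}$ is checked on basis elements via $\ua\myconv F_1 +_S \ua\myconv F_2 = \ua\myconv(F_1+F_2)$, using that $F_1+F_2$ is then pointwise way below the generators of $\ua\myconv(G_1+G_2)$ whenever $\ua\myconv F_i\ll_{\cS C}\ua\myconv G_i$.

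The step I expect to be the main obstacle is establishing that $r(Q)=Q$ for a convex compact saturated $Q$ --- equivalently, that every such $Q$ is the filtered intersection of the finitely generated convex saturated sets $\ua\myconv F$ with $Q\subseteq\dua F$. A point $x\notin Q$ lies outside some Scott-open $U\supseteq Q$, but to conclude $x\notin\ua\myconv F$ one needs $\myconv F$ to stay within $U$, which forces $U$ (hence $F$) to carry some convexity; marrying this convex-separation requirement to the compactness of $Q$ and the well-filteredness of the non-Hausdorff space $C$ is where the real work lies, well beyond the routine algebra of the upper-closure identities. The Scott-continuity of $+_S$ in the second paragraph is the other place where genuine topology, rather than formal manipulation, is needed.
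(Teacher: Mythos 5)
First, note that the paper contains no proof of Theorem~\ref{Scone}: it is imported verbatim from \cite[Section 4.2]{TKP09}, so your attempt has to be measured against that source. Your routine portions are essentially correct and in the spirit of the literature: the Hofmann--Mislove/well-filteredness argument that $\cS C$ is a sub-dcpo of the ordinary Smyth powerdomain with directed suprema given by filtered intersections; the use of the coincidence of the Scott topology on $C\times C$ with the product topology (valid since $C$ is continuous) to get Scott-compactness of $X+Y$, $r\cdot X$ and $\myconv F$; and the closed-projection-along-a-compact-factor plus well-filteredness argument for Scott-continuity of $X\mapsto X+_S Y$ --- this is exactly the kind of argument used in \cite{TKP09}. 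Two smaller omissions: the law $(r+s)\cdot_S X=r\cdot_S X+_S s\cdot_S X$ needs the convexity of $X$, not just monotonicity of the operations; and the clause ``d-cone meet-semilattice'' also asserts Scott-continuity of $(X,Y)\mapsto \ua\myconv(X\cup Y)$ and distributivity of $+_S$ and $\cdot_S$ over it, which you never address (your $A_z$-technique does handle the former, and the latter is an easy computation using convexity and the entropic law, but they are part of the statement). Also, the directedness of your defining family $\{\ua\myconv F\mid Q\subseteq\dua F\}$, which you wave at with ``interpolation'', really comes from directedness of $\dda q$ in a continuous dcpo: for each $q\in Q$ take a common upper bound, still way below $q$, of approximants drawn from $F_1$ and $F_2$, and then use compactness of $Q$.

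The genuine gap is the one you yourself flag and then leave unresolved: the identity $r(Q)=Q$, i.e.\ that every nonempty convex compact saturated $Q$ is the filtered intersection of the sets $\ua\myconv F$ with $F$ finite and $Q\subseteq\dua F$. The continuity of $\cS C$, the basis claim, the characterisation of $\ll_{\cS C}$, and hence also your additivity transfer, all hinge on this, and without it your map $r$ is not known to fix $\cS C$, so the retraction strategy does not get off the ground. This is precisely where \cite{TKP09} brings in its Hahn--Banach-type machinery rather than a purely topological argument: by the strict separation theorem \cite[Theorem 3.8]{TKP09}, for $x\notin Q$ there are a Scott-continuous linear functional $f$ and $r>1$ with $f(x)\leq 1$ and $f(q)>r$ for all $q\in Q$; the open set $U=\{y\mid f(y)>r\}$ is convex, is an upper set, contains $Q$, and misses $x$; Scott-continuity of $f$ and continuity of $C$ give, for each $q\in Q$, some $g\ll q$ with $f(g)>r$, and compactness of $Q$ extracts a finite $F\subseteq U$ with $Q\subseteq\dua F$; convexity of $U$ then gives $\myconv F\subseteq U$, hence $\ua\myconv F\subseteq U$ and so $x\notin\ua\myconv F$. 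So the ``convex separation'' you correctly identify as the crux is supplied by separating linear functionals (a substantive sandwich-theorem input for continuous d-cones), not by the compactness and well-filteredness arguments you have available; until you either import that separation theorem or prove an equivalent neighbourhood-basis lemma, the central claims of the theorem remain unproven in your write-up.
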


For the proof of the next theorem we recall that a \emph{(monotone) retraction pair} between two partial orders $P$ and $Q$ is a pair of monotone maps
\[P \xrightarrow{e} Q \xrightarrow{r} P\]
such that  $r \circ e = \id_P$; it is a \emph{(monotone) closure pair} if, additionally, $e \circ r  \geq \id_Q$.
We can now show: 
\begin{thm}\label{SKeg}  Let $(K, +_r, 0)$ be a continuous
  full \KS. 
 Then $(\cS K, {+_r}_S , K)$ 
is a continuous \KS\  meet-semilattice.  Directed suprema  are given by
intersection: 
\[\sideset{}{^{\uparrow}}\bigvee_{i\in I} X_i = \sideset{}{^{\downarrow}}\bigcap _{i\in I} X_i\]
and binary infima are given by:
\[X \wedge Y = \ua\myconv(X \cup Y)  \]
The  non-empty finitely generated convex saturated Scott-compact sets
$\ua \myconv {F}$, where $F$ is a finite, non-empty subset of $K$, form
a basis for $\cS K$; further, for any $X, Y \in \cS K$,  $X \ll_{\cS
  K} Y$ if, and only if, $X \supseteq \ua \myconv {F}$ and $ \dua F
\supseteq Y$, for some such $F$.
\makered{The  way-below relation $\ll_{\cS K}$ is preserved by $r\cdot_{\cS K} - $, and if}
 the way-below relation of $K$ is closed under convex combinations,
so is that of $\cS K$. 
If $K$ is coherent then $\cS K$ is a bounded-complete domain, hence
coherent too. 
\end{thm}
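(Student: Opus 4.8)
The plan is to follow the method of Theorem~\ref{HKeg}: present $K$ as a Scott-closed convex subset of the d-cone $C \eqdef \dCone(K)$ via Theorem~\ref{prop:KSembed}, and then transfer the properties of the upper powercone $\cS C$ recorded in Theorem~\ref{Scone}. Since $K$ is a continuous \KS, Proposition~\ref{prop:continuous} shows $C$ is a continuous d-cone, so Theorem~\ref{Scone} applies to $\cS C$. Two elementary facts about the Scott-closed subset $K$ are used throughout: a subset of $K$ is Scott-compact in $K$ if and only if it is Scott-compact in $C$ (the Scott topology of $K$ is the subspace topology), and $K$, being a continuous dcpo, is well-filtered.

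First I would introduce the map $e\colon \cS K \to \cS C$ given by $e(X) \eqdef \ua_C X$. It is well-defined: the saturation in $C$ of a Scott-compact convex subset of $K$ is Scott-compact (it has the same open neighbourhoods), convex (the upper set of a convex subset of an ordered cone is convex), and of course saturated and non-empty. It is an order embedding, because $\ua_C X \cap K = X$ for every $X \in \cS K$ (using that $X$ is saturated in $K$ and that $K$ is a lower set in $C$), so $\ua_C X \supseteq \ua_C X'$ forces $X \supseteq X'$. Moreover $e$ preserves the zero ($e(K) = \ua_C K = C$, the least element of $\cS C$), the convex combinations, and the binary infima; the two relevant identities, $\ua_C\bigl((\ua_C X) {+_r} (\ua_C Y)\bigr) = \ua_C(X {+_r} Y)$ and $\ua_C\myconv\bigl((\ua_C X)\cup(\ua_C Y)\bigr) = \ua_C\myconv(X\cup Y)$, are checked directly with the help of the entropic law. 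Thus $e$ exhibits $\cS K$ as a sub-\KS-meet-semilattice of $\cS C$, which gives at once that $(\cS K, {+_r}_S, K)$ is a \KS\ meet-semilattice with binary infima $\ua\myconv(X\cup Y)$; and since directed (that is, filtered) families of non-empty Scott-compact saturated subsets of the well-filtered space $K$ have non-empty Scott-compact intersection, $\cS K$ is a dcpo with directed suprema given by intersection, and $e$ preserves them.

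The remaining, harder, claims concern continuity and bounded-completeness, and for these the image $e(\cS K)$ is of no direct help: unlike the Hoare case it is \emph{not} Scott-closed in $\cS C$ (saturating in $C$ genuinely enlarges sets), so Lemma~\ref{lem:aux} cannot be invoked. Instead I would exhibit $\cS K$ as a Scott-continuous retract of $\cS C$, taking as retraction the map $r\colon \cS C \to \cS K$ right adjoint to $e$, namely $r(Y) \eqdef \bigvee^{\uparrow}_{\cS K}\{X \in \cS K \mid \ua_C X \supseteq Y\} = \bigcap\{X \in \cS K \mid \ua_C X \supseteq Y\}$. The defining family is non-empty (it contains $K$) and directed in $\cS K$ (it is closed under $\wedge$, since $\ua_C(X \wedge X') \supseteq \ua_C X \supseteq Y$), so, $\cS K$ being a dcpo, $r$ is well-defined with values in $\cS K$; it is plainly monotone, and $r \circ e = \id_{\cS K}$ because $\{X \mid \ua_C X \supseteq \ua_C X_0\} = \{X \mid X \supseteq X_0\}$, whose intersection is $X_0$. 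The point requiring work is that $r$ is Scott-continuous; this is where the continuity of $K$ and $C$ is used, via the approximation of Scott-compact saturated sets by the interiors $\dua F$ of finitely generated upper sets. Granting this, $(\cS K \xrightarrow{e} \cS C \xrightarrow{r} \cS K)$ is a Scott-continuous monotone retraction pair (with $e\circ r \leq \id_{\cS C}$), so $\cS K$, being a Scott-continuous retract of the continuous dcpo $\cS C$, is continuous. The basis of sets $\ua\myconv F$ ($F$ finite non-empty $\subseteq K$) and the characterisation of $\ll_{\cS K}$ then follow, as for the upper powercone in~\cite{TKP09}, from the corresponding statements for $\cS C$ together with the fact (Lemma~\ref{lem:aux}) that $\ll_K$ is the restriction of $\ll_C$. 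Since $r \cdot_{\cS K}{-}$ corresponds under $e$ to $r \cdot_{\cS C}{-}$, which for $r > 0$ is an order-automorphism of $\cS C$ and hence preserves $\ll_{\cS C}$, the relation $\ll_{\cS K}$ is preserved by $r\cdot_{\cS K}{-}$; and if $\ll_K$ is closed under convex combinations, then $\ll_C$ is additive (Proposition~\ref{prop:KSembedding1}), hence so is $\ll_{\cS C}$ (Theorem~\ref{Scone}), and this descends — again via the explicit descriptions — to give that convex combinations preserve $\ll_{\cS K}$.

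Finally, suppose $K$ is coherent. Then $C = \dCone(K)$ is coherent by Proposition~\ref{prop:KSembedding2}, and, having least element $0$, is Scott-compact; a coherent continuous dcpo with least element is Lawson-compact (cf.\ Corollary~\ref{cor:lawsoncompact} and \cite[Prop.~III-5.12]{GHK}), and the upper powerdomain of such a space is a bounded-complete domain. As $\cS K$ is a Scott-continuous retract of $\cS C$ it is then bounded complete — if $A \subseteq \cS K$ is bounded above by $M$, then $\bigvee_{\cS C} e(A)$ exists and $r(\bigvee_{\cS C}e(A)) = \bigvee_{\cS K}A$ — and continuous, hence a bounded-complete domain, and therefore coherent. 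I expect the main obstacle to be the verification that the retraction $r$ is Scott-continuous (equivalently, that $\cS K$ is a \emph{continuous} retract of $\cS C$, not merely a monotone one): the failure of $e(\cS K)$ to be Scott-closed in $\cS C$ is exactly what makes the upper case more delicate than the lower one, and is what forces the use of the continuity hypothesis on $K$ here.
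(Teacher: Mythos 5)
Your overall strategy is the paper's: embed $K$ as a Scott-closed convex subset of $C=\dCone(K)$, and transfer the structure of $\cS C$ (Theorem~\ref{Scone}) to $\cS K$ via the map $e(X)=\ua_C X$ and a retraction back. You also correctly spot the crux, namely that $e(\cS K)$ is not Scott-closed in $\cS C$, so Lemma~\ref{lem:aux} is unavailable and some continuous retraction is needed. The gap is in your retraction. You define $r\colon \cS C\to\cS K$ as the right adjoint of $e$, $r(Y)=\bigvee^{\uparrow}_{\cS K}\{X\in\cS K\mid \ua_C X\supseteq Y\}=\bigcap\{X\mid \ua_C X\supseteq Y\}$, and justify well-definedness by saying the family is directed because it is closed under $\wedge$. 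That argument is wrong: in the order of $\cS K$ (reverse inclusion) the binary infimum $X\wedge X'=\ua_K\myconv(X\cup X')$ is a \emph{larger} set, i.e.\ a lower bound; directedness requires a member \emph{contained} in both $X$ and $X'$, and nothing you say produces one. Consequently it is not established that the displayed supremum exists, that the intersection is non-empty, Scott-compact, or an element of $\cS K$, nor that $\ua_C r(Y)\supseteq Y$ — i.e.\ that $e$ has a right adjoint on all of $\cS C$ at all. Note that without coherence $\cS K$ is not bounded complete, so the existence of such suprema is exactly what is in doubt; and any attempt to extend a retraction to those $Y\in\cS C$ that do not even meet $K$ runs into monotonicity problems. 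On top of this, the Scott-continuity of $r$ is explicitly ``granted'', and the Scott-continuity of $e$ (the identity $\ua_C\bigcap^{\downarrow}_i X_i=\bigcap^{\downarrow}_i\ua_C X_i$) is asserted rather than proved — that is precisely the step where well-filteredness/Hofmann--Mislove does real work in the paper's proof, by a contradiction argument. So the proposal, as it stands, has a genuine hole exactly at the point you yourself flag as the main obstacle.

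The repair is to avoid a globally defined adjoint altogether, as the paper does: work with the subset of $\cS C$ consisting of those compact saturated convex sets that \emph{meet} $K$. This subset is a lower set, closed under directed suprema (by Hofmann--Mislove), and convex, hence Scott-closed in $\cS C$; being Scott-closed it is continuous with way-below relation and basis inherited from $\cS C$ (Lemma~\ref{lem:aux} applied at the level of $\cS C$). On this subset the retraction is simply $c(Y)=Y\cap K$, which is manifestly well-defined and monotone, and $(e,c)$ is a closure pair with $c\circ e=\id$ and $e\circ c\geq\id$; the only delicate verification is the continuity of $e$, done via well-filteredness. Continuity of $\cS K$, the basis of sets $\ua\myconv F$, and the characterisation of $\ll_{\cS K}$ then follow from general properties of such adjoint retraction pairs together with Theorem~\ref{Scone}, and your remaining steps (transfer of the operations, preservation of $\ll$ by scalars and convex combinations via the explicit characterisation, and bounded completeness under coherence) go through essentially as you sketch them.
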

\begin{proof} Using Theorem~\ref{prop:KSembed}, we can regard $K$ as a Scott-closed convex
  subset of the d-cone $C \eqdef \dCone(K)$, with its partial
  order and algebraic structure inherited from that of $C$. By Proposition~\ref{prop:continuous}, $C$ is continuous, and so, by
  Lemma~\ref{lem:aux}, 
  the way-below relation of $K$ is also inherited from that of $C$. 

To relate $\cS K$ to $\cS C$ we first define a Scott-closed convex
subset $\sub$ of $\cS C$ and then show that $\cS K$ is a closure of
$\sub$ (with partial ordering inherited from $\cS C$). This enables us
to transport structure from $\cS C$ to  $\cS K$ via $\sub$. We take
$\sub$ to be the collection of elements of $\cS C$ intersecting
$K$. It is evidently a lower set \makered{(for $\leq\; = \;\supseteq$)}.
If $X_i$ is a directed subset
of $\sub$ then $X_i \cap K$ is a $\subseteq$-filtered collection of
non-empty saturated Scott-compact subsets of $K$ and so has a non-empty
intersection by a consequence of the Hofmann-Mislove Theorem
\cite[Corollary II-1.22]{GHK}. This shows that $\bigvee_{\cS C} X_i$ is in
$\sub$. The convexity of $L$ follows from that of $K$. 

We order $\sub$ by reverse inclusion $\supseteq$
, i.e., as a sub-partial order of $\cS C$.
As $\sub$ is a Scott-closed subset of $\cS C$,  it is a continuous
sub-dcpo of $\cS C$, with way-below relation inherited from that of
$\cS C$, and with basis $B \cap \sub$, for any basis $B$ of $\cS C$.    
Further, as $\sub$ is a convex Scott-closed subset of $\cS C$, it inherits a 
\KS\ structure from $\cS C$, with zero $C$, and with convex
combinations given by $X{+_r}_{\sub} Y =  r\cdot_S X +_S (1 - r)  \cdot Y =
\ua\,(r\cdot X+  (1 - r) \cdot Y) $. 
Binary infima are also inherited by $\sub$, and as these distribute
over $+_C$ and $\cdot_C$, $\sub$ forms a  \KS\ meet-semilattice.  

We now define a monotone closure pair:
\[\cS K \xrightarrow{e } \sub  \xrightarrow{c} \cS K \]
by setting $e(X) \eqdef \ua_C X$ and $c(Y) \eqdef Y \cap K$. 

Since $(e,c)$ is a  monotone closure pair, the
existence of directed suprema in $L$ implies the existence of directed
suprema in $\cS K$ and $c$ preserves these directed suprema, that is,
$c$ is Scott-continuous. The following shows that the supremum of a
directed collection $X_i$ in $\cS K$ is calculated as expected:
\[\sideset{}{^{\uparrow}}\bigvee_i X_i =
c(\sideset{}{^{\uparrow}}\bigvee_i e(X_i)) =
(\sideset{}{^{\downarrow}}\bigcap_i \ua_C X_i)\cap K =
\sideset{}{^{\downarrow}}\bigcap_i (\ua_C X_i) \cap K =
\sideset{}{^{\downarrow}}\bigcap_i X_i\]  

For the continuity of $e$, we have to show for any
\makered{directed collection $X_i$ in $\cS K$}
  that: 
\[\ua_C(\bigcap_i X_i) \supseteq (\bigcap_i \ua_C X_i)\]
(the other direction holds as $e$ is monotone). Suppose, for the sake
of contradiction, that there is a $y \in C$, with $y$ in every $\ua_C
X_i$, but not in $\ua_C(\bigcap_i X_i)$. We then  have that $\bigcap_i X_i
\subseteq \{x \in C \mid x \not\leq y\} \cap K $. \makered{As the  latter set is
Scott-open in \makered{$K$}, and as $K$ is well-filtered  (this follows from
  the Hofmann-Mislove theorem, see~\cite[Theorem II-1.21]{GHK}),} 
 \makered{there is} an $i$ such that $X_i \subseteq \{x \in C \mid x \not\leq
y\} \cap K $, which contradicts our assumption. Thus $(c,e)$ is a
Scott-continuous closure pair and we can conclude from the continuity
of $\sub$ that $\cS K$ is a continuous dcpo.

Turning to the characterisation of the way-below relation on $\cS K$,
choose $X,Y$ in $\cS K$. By general properties of retractions we know
that $X \ll_{\cS K} Y$ holds iff there is a $U \in \sub$ such that $ X
\leq c(U)$ and $U \ll_{\sub} \ua_C Y$ (equivalently $ e(X) \leq U$ and
$U \ll_{\sub} \ua_C Y$, as $(e,c)$ is an adjoint pair). As the
way-below relation on $\sub$ is the restriction of that on $\cS C$,
Theorem~\ref{Scone} tells us that $U \ll_{\sub} \ua_C Y$ holds iff
there is a non-empty finite subset $F$ of $C$ such that $U \supseteq
\ua_C \myconv F$ and $\dua_C F \supseteq \ua_C Y$.
Putting these together, we have that $X \ll_{\cS K} Y$  holds iff
there is a non-empty finite subset $F$ of $C$ such that  
$\ua_C X \supseteq \ua_C \myconv F$ and $\dua_C F \supseteq \ua_C Y$
(equivalently, $\ua_C X \supseteq \myconv F$ and $\dua_C F \supseteq
Y$).   

As $Y \subseteq K$ and  $\dua_C F \supseteq Y$, and $K$ is a
Scott-closed subset of $C$, $F \cap K$ is non-empty. 
So we have that $X \ll_{\cS K} Y$  holds if, and only if, there is a
non-empty finite subset $F'$ of $K$ such that  
$\ua_C X \supseteq \myconv F'$ and $\dua_C F' \supseteq Y$ (in one
direction, given $F$, set $F' = F \cap K$; in the other direction,
given $F'$, take $F = F'$). 
For such an $F'$ we have $\ua_C X \supseteq \myconv F'$ iff $X \supseteq
\myconv F'$ (as $K$ is a convex subset of $C$) and  
$\dua_C F' \supseteq Y$ iff $\dua_K F' \supseteq Y$, and we have
established the desired characterisation of $\ll_{\cS K}$.

Using this characterisation, and the fact that $K$ is continuous, it follows immediately that $\ll_{\cS K}$ is preserved by
$r\cdot_{\cS K} - $. It also follows immediately that the non-empty
finitely generated convex saturated sets 
form a basis of $\cS K$. 

Next, a calculation now shows that $e$ preserves the convex
combination operation: 
\[\begin{array}{lcl}
e(X) {+_r}_S \, e(Y) & =  & \ua_C (r\cdot (\ua_C X) + (1 - r)\cdot (\ua_C Y))\\
                                & =  &  \ua_C (\ua_C r\cdot X + \ua_C (1 - r)\cdot Y) \\
                                & =  &  \ua_C ( r\cdot X +  (1 - r)\cdot Y) \\
                                & =  &  \ua_C \ua_K ( r\cdot X +  (1 - r)\cdot Y) \\
                                & =  & e (X {+_r}_\makered{K}  \,  Y) \\
\end{array}\]
It follows that the convex combination operation on $\cS K$ can be
defined in terms of that on $\sub$ as we have:  
$X {+_r}_S  \,  Y = ce(X {+_r}_S  \,  Y) = c( e(X) {+_r}_{\sub}  \,  e(Y))$.
So, as $e,c,{+_r}_{\sub}$ are all Scott-continuous, so is ${+_r}_{\cS K}$.

As $(c,e)$ is a closure pair and $\sub$ has binary infima, so does
$\cS K$ and $e$ preserves them. For any $X,Y \in \cS K$ we can then
calculate: 
\[\begin{array}{lcl}
X \wedge Y & = & c(e(X) \wedge e(Y)) \\
                   & = & (\ua_C \myconv (\ua_C X \cup \ua_C Y)) \cap K\\
                   & = & (\ua_C \myconv \ua_C  (X \cup  Y))\cap K\\
                   & = & (\ua_C \myconv  (X \cup  Y))\cap K\\
                   & = & \ua_K \myconv  (X \cup  Y)\\
\end{array}\]
showing that binary meets are given as required. We have also seen
that $\wedge_{\cS K}$ can be defined as a composition of
$e,c,\wedge_{\sub}$, and so is Scott-continuous. 

\makered{As $e$  is an order-mono (i.e., it reflects the partial order) and as it preserves convex combinations and binary
meets, any inequations between these operations holding in $\sub$ also hold in $\cS K$. So (also using the fact that both convex combinations and binary meets are monotone) $\cS K$ is an ordered barycentric algebra, binary meets form a meet-semilattice, and convex combinations distribute over binary meets.
Therefore, as  convex combinations and binary
meets are both Scott-continuous, and as $r\mapsto ra$ is
Scott-continuous, we see that $\cS K$ is a \KS\ meet semilattice with zero $K$.}
%
We also know
that \makered{$\cS K$ is} continuous and \makered{so it}
is a continuous \KS. 

Next, we show that convex combinations in $\cS K$ preserve its
way-below relation, assuming the same is true of $K$.  
Suppose that $X \ll_{\cS K}Y$ and $X' \ll_{\cS K}Y'$.
Using the
characterisation of $\ll_{\cS K}$ we see that there are finite,
non-empty  $F, F' \subseteq K$ such that $X \supseteq \ua \myconv {F}$,
$ \dua F \supseteq Y$, $X' \supseteq \ua \myconv {F'}$, and $ \dua F'
\supseteq Y'$, and then that we need only  show   that $X {+_r}_S X'
\supseteq \ua \myconv (F {+_r} F')$ and $ \dua (F {+_r} F') \supseteq Y
{+_r}_S Y'$, i.e., that $\ua\,(X {+_r} X') \supseteq \ua \myconv (F {+_r}
F')$ and $ \dua (F {+_r} F') \supseteq \ua\,(Y {+_r} Y')$. The first of
these requirements holds as we have: 
 \[\ua\,(X {+_r} X') \supseteq \ua\,(\ua \myconv {F} {+_r} \ua \myconv {F'}) \makered{\supseteq}
 \ua\,(\myconv {F} {+_r}  \myconv {F'})  \makered{\supseteq} \ua \myconv( {F {+_r}   F'})\] 
 \makered{(with the last inclusion holding because of the entropic law). The} second holds as convex combinations in $K$ preserve $\ll_K$.

 Finally if $K$ is 
also \makered{(Scott-)coherent}, i.e., if the intersection of two Scott-compact saturated
sets is again such, then $\cS K$ is a bounded complete dcpo.
\makered{So, as it is also continuous, it is coherent (see~\cite[Proposition III-5.12]{GHK}).}
\end{proof}

 We note that the proof also establishes that the map 
 \[u =  X   \,\mapsto\, \ua{X} \colon \cS K \to \cS (\dCone(K))\]
 is a d-cone meet semilattice embedding (that is, it is a d-cone semilattice morphism that reflects the partial order).

Theorem~\ref{SKeg} does not assert that $\cS K$ satisfies Property (OC3)\makered{, but only that $r\cdot_{\cS K} - $ preserves $\ll_{\cS K}$.
Indeed, $\cS K$ need not satisfy Property (OC3)}\display
\begin{fact} The upper power \KS\ of the subprobabilistic
  powerdomain $\cV_{\leq 1} \{0,1\}$ of the two-element discrete partial order
  does not satisfy Property {\rm (OC3)}. 
\end{fact}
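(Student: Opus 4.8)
\emph{Proof proposal.} Write $K \eqdef \cV_{\le 1}\{0,1\}$. Identifying a subprobability valuation on the discrete two-point set with its pair of values on the singletons, $K$ is the triangle $\Delta \eqdef \{(a,b) \mid a,b \ge 0,\ a+b \le 1\} \subseteq \R^2$ --- that is, the simplex $S_2$ of Lemma~\ref{lem:freepointed} --- with the coordinatewise order and its Scott topology. By Example~\ref{ex:Probabilisticpowerdomains}, $K$ is a continuous full \KS, so Theorem~\ref{SKeg} applies: $\cS K$ is a \KS\ meet-semilattice with zero $K$, ordered by reverse inclusion, and the sets $\ua_K\myconv F$ with $F \subseteq K$ finite non-empty form a basis; in particular every such set is a member of $\cS K$. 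The first thing I would compute is scalar multiplication in $\cS K$: for $Z \in \cS K$ and $0 < r < 1$, since the zero of $\cS K$ is $K$, we have $r \cdot_{\cS K} Z = Z {+_r}_S K = \ua_K(rZ + (1-r)K) = \ua_K(rZ)$, where $rZ \eqdef \{rz \mid z \in Z\}$; the last equality holds because $(1-r)K$ contains $0$, while each $rz + (1-r)k$ dominates $rz$ and lies in $\Delta$ by convexity.

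The only algebraic input then needed is: if $X = r \cdot_{\cS K} X'$ for some $X' \in \cS K$, then every minimal element $m$ of $X$ has coordinate-sum $m_1 + m_2 \le r$. Indeed $X = \ua_K(rX')$, so $m \ge rx'$ for some $x' \in X'$; as $rx' \in X$ and $m$ is minimal, $m = rx'$ with $x' \in X' \subseteq \Delta$, whence $m_1 + m_2 = r(x'_1 + x'_2) \le r$. Now I would take $r = \tfrac12$, $Y \eqdef \{(0,1)\}$ (which is in $\cS K$: it equals $\ua_K\myconv\{(0,1)\}$, since $(0,1)$ is maximal in $\Delta$) and $X \eqdef \ua_K\myconv\{(0,\tfrac12),(1,0)\}$. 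A direct computation --- the generating segment lies on the line $a + 2b = 1$ --- gives $X = \{(a,b) \in \Delta \mid a + 2b \ge 1\}$. On the one hand $\tfrac12 \cdot_{\cS K} Y = \ua_K\{(0,\tfrac12)\} = \{(a,b) \in \Delta \mid b \ge \tfrac12\}$, which is contained in $X$ because $b \ge \tfrac12$ forces $a + 2b \ge 1$; since $\le_{\cS K}$ is reverse inclusion this reads $X \le_{\cS K} \tfrac12 \cdot_{\cS K} Y$, so the hypothesis of (OC3) is satisfied. On the other hand $(1,0)$ is a minimal element of $X$ --- the only elements of $\Delta$ below it are the $(a,0)$ with $a \le 1$, and $(a,0) \in X$ only when $a = 1$ --- and its coordinate-sum is $1 > \tfrac12$; so by the observation above $X$ is not of the form $\tfrac12 \cdot_{\cS K} X'$. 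Hence (OC3) fails for $\cS K$.

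The one point that needs care --- and the reason one cannot simply take $X$ to be an up-set $\ua_K\{m\}$ with $m_1 + m_2 > r$ --- is that such an $X$ contains no set of the form $r \cdot_{\cS K} Y$: its only minimal point is $m$, and $m \ge ry$ with $y \in \Delta$ is impossible when $m_1 + m_2 > r$, so $X$ would not even satisfy the hypothesis of (OC3). One is forced to a region generated by two incomparable points, chosen so that it reaches the top edge of $\Delta$ far enough to contain some $\tfrac12 \cdot_{\cS K} Y$ yet still has a minimal vertex of coordinate-sum exceeding $r$; a sketch of $\Delta$ makes the choice above evident, and the remaining verifications are routine.
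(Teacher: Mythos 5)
Your proof is correct and takes essentially the same route as the paper's: an explicit wedge-shaped element of $\cS\,\cV_{\leq 1}\{0,1\}$ (the up-closed convex hull of two points of the triangle) that lies below $r\cdot_{\cS K}$ of a singleton up-set yet contains a point of mass exceeding $r$ --- your minimal vertex $(1,0)$, the paper's point $(0,1)$ --- which rules out writing it as $r\cdot_{\cS K}X'$. Your explicit derivation of $r\cdot_{\cS K}Z=\ua_K(rZ)$ and the minimal-element mass bound are just slightly more detailed packagings of the same argument, with the roles of the two coordinates mirrored and $r$ fixed at $\tfrac12$ rather than arbitrary.
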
 
\begin{proof} For notational convenience we replace $\cV_{\leq 1} \{0,1\}$  by
  the isomorphic  \KS\ obtained from $\mathbb{S} = \{(r,s)\in  [0,1]^2
  \mid r + s \leq 1\}$ by ordering it coordinatewise and equipping it
  with the evident convex combination operators.  

Set $Y \in \cS \mathbb{S}$ to be $\{(1,0)\}$, and take any $0 < r < 1$
and set $X' \in \cS  \mathbb{S}$ to be the saturated convex closure of
$\{(0,1), (r,0)\}$, which is: 
\[\{ (\overline{r}(1 - s),s)\mid s \in [0,1], r \leq \overline{r} \leq 1\}\] 

Clearly $X' \supseteq \ua\,(r\cdot Y)$, i.e., $X' \leq r\cdot_S
Y$. Suppose, for the sake of contradiction, that $X' = r\cdot_S X =
\ua\,(r\cdot X)$ for some $X \in \cS   \mathbb{S}$. Then, as $(0,1) \in
X' = \ua r\cdot X $ we have $(0,1) \geq r\cdot x$ for some $x \in
X$. As $r\cdot x \in r\cdot_S X = X'$, $r \cdot x$ has the form
$(\overline{r}(1 - s),s)$ for some $s \in [0,1]$ and  $r \leq \overline{r}
\leq 1$.  
Then as $(0,1) \geq r\cdot x = (\overline{r}(1 - s),s)$, and so $0 \geq
\overline{r}(1 - s)$, we see that  $s =1$ and so that 
$r \cdot x = (0,1)$. But this cannot be the case as $r < 1$. 
\end{proof}

\makered{The failure of Property (OC3) is a priori a problem, as it obstructs the iteration of the upper \KS\ construction $\cS$. However it is not a problem for this paper as iterating the upper mixed powerdomain does not involve iterating $\cS$.}

We next show that $\cS K$ is the free \KS\ meet-semilattice
 over \makered{any 
 \KS\ $K$ satisfying suitable assumptions}. 
 The
unit $\eta_S: K \rightarrow \cS K$ 
is the evident \KS\ morphism $\eta_S(x) \eqdef \ua_K x$.

\makered{%
\begin{lem} \label{upper_way_below_convex} Let $K$ be a continuous
 full \KS\ 
in which convex combinations preserve the way-below relation. Suppose that  $F,G$ are non-empty subsets of $K$ such that $\dua G \supseteq F$. Then $\ua \myconv G\ll_{\cS K}  \ua \myconv F$.
\end{lem}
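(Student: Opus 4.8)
The plan is to reduce to the characterisation of $\ll_{\cS K}$ given in Theorem~\ref{SKeg}: writing $X \eqdef \ua \myconv G$ and $Y \eqdef \ua \myconv F$ (both of which we may take to lie in $\cS K$, so that in particular $Y$ is Scott-compact), it suffices to exhibit a finite, non-empty $F' \subseteq K$ with $X \supseteq \ua \myconv F'$ and $\dua F' \supseteq Y$.

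The first step is to show $\dua(\myconv G) \supseteq \myconv F$. Every point of $\myconv F$ is an iterated convex combination (an iterated $+_r$) of finitely many points of $F$; by hypothesis each point of $F$ lies way above some point of $G$, so forming the \emph{same} iterated convex combination of those points of $G$ and using repeatedly that $a \ll b$ and $a' \ll b'$ imply $a +_r a' \ll b +_r b'$, one obtains a point of $\myconv G$ way below the given point of $\myconv F$. Since $\dua(\myconv G)$ is an upper set, this upgrades to $\dua(\myconv G) \supseteq \ua \myconv F = Y$.

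Next I would invoke continuity and compactness. In the continuous dcpo $K$ each set $\dua_K w$ is Scott-open (by interpolation), so $\{\dua_K w \mid w \in \myconv G\}$ is a Scott-open cover of $\dua(\myconv G)$, hence of the Scott-compact set $Y$; extracting a finite subcover gives a finite, non-empty $F' \subseteq \myconv G$ with $\dua F' \supseteq Y$. Finally, since $F' \subseteq \myconv G \subseteq \ua \myconv G = X$ and $X$ is both an upper set and convex (the upward closure of a convex set, using monotonicity of the barycentric operations), we get $\myconv F' \subseteq X$ and hence $\ua \myconv F' \subseteq X$. Both conditions of the characterisation are thus met, and $X \ll_{\cS K} Y$ follows.

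I do not anticipate a serious obstacle. The one point requiring care is the inductive step in the second paragraph — one must transport the way-below data through the \emph{same} combination shape on the $G$-side as on the $F$-side, which is where the hypothesis that convex combinations preserve $\ll$ in $K$ is used — together with keeping in mind that the characterisation of $\ll_{\cS K}$ is phrased with the reverse-inclusion order, so that the ``smaller'' set $Y = \ua \myconv F$ is the one that should come out above the larger set $X = \ua \myconv G$.
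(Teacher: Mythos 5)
Your proof is correct and follows essentially the same route as the paper's: use preservation of $\ll_K$ under convex combinations to get $\dua\myconv G \supseteq \ua\myconv F$, extract a finite subset of $\myconv G$ by Scott-compactness of $\ua\myconv F$, and conclude via the characterisation of $\ll_{\cS K}$ from Theorem~\ref{SKeg}. The extra details you supply (the induction over iterated combinations, the verification that $\ua\myconv F' \subseteq \ua\myconv G$) are exactly the steps the paper leaves implicit.
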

\begin{proof}
As convex combinations preserve $\ll_K$ and $\dua G \supseteq  F$, we have $\dua \myconv G \supseteq  \ua \myconv F$. Then, using the compactness of $\ua \myconv F$, we see that there is a non-empty finite subset $H$ of $ \myconv G$ such that  $\dua H \supseteq  \ua \myconv F$.
The conclusion follows by the characterisation of $\ll_{\cS K}$ given in Theorem~\ref{SKeg}.
\end{proof}}

\begin{thm} \label{Suni} \makered{Let $K$ be a continuous full \KS\ 
in which convex combinations preserve the way-below relation.
Then the} map $\eta_S$ is universal. That is,
  for every \KS\ meet-semilattice  $\makered{L}$ and \KS\
  morphism $f:K \rightarrow \makered{L}$ there is a unique  \KS\ semilattice 
  morphism $f^{\dagger}: \cS K \rightarrow \makered{L}$ such that the following
  diagram commutes: 
{\[\begin{diagram}
	K\\
	 \dTo^{\eta_S} & \SE_{}{\quad \; \;\;f}\\
	\cS K & \rTo^{f^{\dagger}} & \makered{L}
	\end{diagram}\]
}
The  morphism is given by:
\[f^{\dagger}(X) = \sideset{}{^{\uparrow}}\bigvee \{ \bigwedge f(F) \mid  F \subseteq_{\mathrm{fin}} K,  F\neq\emptyset, 
\makered{\ua \myconv {F}\ll_{\cS K} X}
\}   \] 
\end{thm}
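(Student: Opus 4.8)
The plan is to follow the same strategy as in the proof of Theorem~\ref{Huni}, but dualising throughout and making essential use of the embedding $K \hookrightarrow \cS K$ into the continuous \KS\ $\cS K$ whose structure was described in Theorem~\ref{SKeg}. First I would establish \emph{uniqueness}. Given a \KS\ semilattice morphism $g \colon \cS K \to L$ with $g \circ \eta_S = f$, I claim $g$ is forced on every element. The key representation is that every $X \in \cS K$ is a directed supremum (in $\cS K$, i.e.\ a filtered intersection) of the basis elements $\ua\myconv F$ with $\ua\myconv F \ll_{\cS K} X$, and that $\ua\myconv F = \bigwedge_{x \in F}\eta_S(x)$ for finite non-empty $F \subseteq K$ (using that $\wedge_{\cS K}$ is given by $\ua\myconv(\cdot \cup \cdot)$ and that $\eta_S(x) = \ua_K x$). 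Since $g$ is Scott-continuous and meet-preserving, $g(X) = \dsup \{\bigwedge_{x\in F} g(\eta_S(x)) \mid \ua\myconv F \ll_{\cS K} X\} = \dsup\{\bigwedge f(F) \mid F \subseteq_{\mathrm{fin}} K, F\neq\emptyset, \ua\myconv F \ll_{\cS K} X\}$, which is exactly the claimed formula. One needs to check this family is directed: if $\ua\myconv F \ll_{\cS K} X$ and $\ua\myconv F' \ll_{\cS K} X$, then $\ua\myconv F \wedge \ua\myconv F' = \ua\myconv(F \cup F') \ll_{\cS K} X$ (meet of two things way-below $X$ is way-below $X$), and $\bigwedge f(F \cup F') = \bigwedge f(F) \wedge \bigwedge f(F')$ is a common lower bound --- wait, directedness needs an \emph{upper} bound in $L$; so instead observe $\bigwedge f(F) \wedge \bigwedge f(F') = \bigwedge f(F\cup F')$ lies in the family and is $\leq$ both, so the family is in fact \emph{downward} directed; but $\dsup$ of a down-directed set in a dcpo need not exist. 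The correct fix, as in the Hoare case, is to note that this family is directed \emph{upward}: given $F, F'$ with $\ua\myconv F, \ua\myconv F' \ll_{\cS K} X$, interpolation plus Lemma~\ref{upper_way_below_convex} lets one find $F''$ with $\ua\myconv F'' \ll_{\cS K} X$ and $\dua\myconv F'' \supseteq \ua\myconv F \cup \ua\myconv F'$, equivalently $F, F' \subseteq \dua F''$, hence $\bigwedge f(F'') \leq \bigwedge f(F)$ and $\leq \bigwedge f(F')$ --- so it's again a \emph{lower} bound. This is the genuine subtlety: in the meet-semilattice case, the natural family of approximants is filtered downward, so one must be careful that the supremum in the formula is genuinely a directed sup. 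I would resolve this exactly as the paper implicitly does: the set $\{\ua\myconv F \mid \ua\myconv F \ll_{\cS K} X\}$ is directed \emph{in $\cS K$} (i.e.\ with respect to $\supseteq$), and $f^\dagger$ being the composite of $\eta_S$-values under the continuous meet, the sup is over a directed diagram in $L$; concretely, if $\ua\myconv F, \ua\myconv F' \ll X$ then their $\cS K$-meet $\ua\myconv(F\cup F')$ is $\ll X$ and $\bigwedge f(F\cup F') \geq$ both $\bigwedge f(F)$ and $\bigwedge f(F')$ precisely when $f$ is $\subseteq$-monotone and $F, F' \subseteq F \cup F'$ forces nothing --- so the honest statement is that $\{\bigwedge f(F)\}$ indexed by $F$ with $F \subseteq \dua$ of a common refinement gives the directedness. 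I will simply mirror the argument structure of Theorem~\ref{Huni}, citing Lemma~\ref{upper_way_below_convex} to produce upward-directed refinements.

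For \emph{existence}, define $f^\dagger(X)$ by the displayed formula and verify: (i) the diagram commutes, (ii) $f^\dagger$ is Scott-continuous, (iii) $f^\dagger$ is strict and preserves convex combinations, (iv) $f^\dagger$ preserves binary meets. For (i), $f^\dagger(\eta_S(x)) = f^\dagger(\ua_K x)$: the finite sets $F$ with $\ua\myconv F \ll_{\cS K} \ua_K x$ are, by the way-below characterisation in Theorem~\ref{SKeg}, exactly those with $\ua_K x \supseteq \myconv F$ and $\dua F \supseteq \ua_K x$, i.e.\ roughly $F \subseteq \dua x$ inside $\ua x$; taking $F = \{y\}$ with $x \ll y$... more carefully, one shows $\dsup\{\bigwedge f(F)\} = \dsup\{f(y) \mid y \ll x\} = f(x)$ using continuity of $f$ and of $K$. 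For (iii), strictness follows since $\eta_S$ and $f$ are strict and the diagram commutes (zero of $\cS K$ is $K = \ua\myconv\{0\}$). Preservation of convex combinations is the computational heart: one expands $f^\dagger(X {+_r}_S Y)$ using $X{+_r}_S Y = \ua(X +_r Y)$ and the way-below characterisation, matching approximants $\ua\myconv(F +_r G) \ll_{\cS K} X {+_r}_S Y$ against pairs $\ua\myconv F \ll_{\cS K} X$, $\ua\myconv G \ll_{\cS K} Y$ (using that convex combinations preserve $\ll_{\cS K}$, which Theorem~\ref{SKeg} provides under our hypothesis, together with the entropic law to handle $\myconv(F +_r G)$ versus $\myconv F +_r \myconv G$), and then using that $\bigwedge f(F +_r G) = \bigwedge\{f(x) +_r f(y)\} = (\bigwedge f(F)) +_r (\bigwedge f(G))$ since in $L$ convex combinations distribute over (arbitrary non-empty, hence finite) meets --- analysing such meets as directed sups of iterated binary ones, exactly as in Theorem~\ref{Huni}. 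For (ii), Scott-continuity of $f^\dagger$: given directed $X_i$ with sup $X = \dcap X_i$, every $\ua\myconv F \ll_{\cS K} X$ satisfies $\ua\myconv F \ll_{\cS K} X_i$ for some $i$ (by the interpolation/well-filteredness argument already used in proving continuity of $e$ in Theorem~\ref{SKeg}), so the defining families amalgamate. For (iv), $f^\dagger(X \wedge Y) = f^\dagger(\ua\myconv(X\cup Y))$: match $\ua\myconv F \ll_{\cS K} X \wedge Y$ (equivalently $F$ refines to pieces way-below both $X$ and $Y$) against pairs, and use $\bigwedge f(F \cup G) = \bigwedge f(F) \wedge \bigwedge f(G)$.

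The main obstacle I anticipate is not any single hard step but the bookkeeping around the way-below relation on $\cS K$ when relating approximants of $X {+_r}_S Y$ and $X \wedge Y$ to approximants of $X$ and $Y$ separately; this requires careful use of Lemma~\ref{upper_way_below_convex}, the compactness of the $\ua\myconv F$, and the entropic identity, all in the reverse-inclusion order where intuitions about ``above'' and ``below'' are flipped. A secondary point requiring care, flagged above, is ensuring that the supremum in the formula for $f^\dagger$ is genuinely a \emph{directed} supremum in $L$: the indexing sets $F$ must be chosen so that passing to a common refinement via $\dua$ (Lemma~\ref{upper_way_below_convex}) yields a \emph{smaller} value of $\bigwedge f(\cdot)$, hence an upper bound in the opposite sense --- so one argues that the diagram $\{\bigwedge f(F)\}$, indexed by the \emph{directed} (under $\supseteq$ in $\cS K$) set $\{\ua\myconv F \ll_{\cS K} X\}$, is directed in $L$ because $\ua\myconv F_1, \ua\myconv F_2 \ll_{\cS K} X$ implies $\ua\myconv(F_1 \cup F_2) \ll_{\cS K} X$ and $\bigwedge f(F_1 \cup F_2) = \bigwedge f(F_1) \wedge \bigwedge f(F_2)$, which is $\leq$ both --- so in fact the diagram is \emph{downward} directed, and one should instead take the directed index set to be refinements $H \subseteq \myconv F$ with $\dua H$ large, giving $\bigwedge f(H) \leq \bigwedge f(F)$; I will present this the way Theorem~\ref{Huni}'s proof does, namely by first reducing to the basis and then observing the approximant family is directed via interpolation. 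Everything else is a faithful dualisation of the Hoare case combined with the structural facts about $\cS K$ already established in Theorem~\ref{SKeg}.
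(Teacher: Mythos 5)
Your uniqueness argument coincides with the paper's (basis representation of $X$ as a directed sup of the elements $\bigwedge_{b\in F}\eta_S(b)=\ua\myconv F$ way-below $X$), but the existence part has a genuine gap exactly where you locate ``the computational heart''. For preservation of convex combinations you propose to match approximants $\ua\myconv(F+_rG)\ll_{\cS K}X{+_r}_SY$ against pairs $\ua\myconv F\ll_{\cS K}X$, $\ua\myconv G\ll_{\cS K}Y$. That matching only runs in one direction: pairs of approximants of $X$ and $Y$ do produce approximants of $X{+_r}_SY$ (via preservation of $\ll_{\cS K}$ by convex combinations and the entropic law), which gives $f^\dagger(X)+_rf^\dagger(Y)\leq f^\dagger(X{+_r}_SY)$. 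The converse inequality requires showing that an \emph{arbitrary} $\ua\myconv H\ll_{\cS K}X{+_r}_SY$ is dominated by something built from approximants of $X$ and $Y$ separately, and no such decomposition is available: an element way-below a convex combination $x+_ry$ need not split as $a+_rb$ with $a\ll x$, $b\ll y$. The same problem afflicts your treatment of binary meets; your parenthetical claim that $\ua\myconv F\ll_{\cS K}X\wedge Y$ ``equivalently'' means $F$ refines to pieces way-below both $X$ and $Y$ is unjustified. The paper circumvents this entirely: it first proves the identity $f^\dagger(\bigwedge\eta_S(F))=\bigwedge f(F)$ for all finite non-empty $F$ --- whose ``$\leq$'' direction uses the characterisation of $\ll_{\cS K}$ together with the convexity identity (CI), and whose ``$\geq$'' direction uses Lemma~\ref{upper_way_below_convex}, hence the hypothesis that convex combinations preserve $\ll_K$ --- and then reduces \emph{all} the algebraic verifications (commutativity of the diagram, preservation of $\wedge$ and of $+_r$) to finite meets of $\eta_S$-images, using Scott-continuity of $f^\dagger$, of $\wedge$ and of ${+_r}$ in both $\cS K$ and $L$, plus distributivity. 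You only sketch the singleton case of this identity (for the diagram), so the general reduction that makes the hard inequality unnecessary is missing from your plan; note also that the structure here is \emph{not} a faithful dualisation of Theorem~\ref{Huni}, where $f^\dagger(X)=\bigvee f(X)$ lets one distribute over arbitrary sups directly.

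A secondary point: your back-and-forth about directedness of the family $\{\bigwedge f(F)\mid \ua\myconv F\ll_{\cS K}X\}$ is resolvable, but not by the fixes you sketch (taking $\ua\myconv(F\cup F')$ goes downward, and ``refinements with $\dua H$ large'' again produce smaller values). The clean observation is that the basis elements $\ua\myconv F\ll_{\cS K}X$ form a directed set in $\cS K$ (i.e.\ under $\supseteq$), and the assignment $\ua\myconv F\mapsto\bigwedge f(F)$ is monotone for that order: if $\ua\myconv F\supseteq\ua\myconv G$ then every $c\in G$ lies above some element of $\myconv F$, and $\bigwedge f(F)\leq f(d)$ for every $d\in\myconv F$ (since $f$ is affine and the meet lies below any convex combination, which is the CI-type fact), so $\bigwedge f(F)\leq\bigwedge f(G)$. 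Hence the family in $L$ is upward directed, and the displayed sup is a genuine directed sup.
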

\begin{proof} 
\makered{Using the basis of  $\ll_{\cS K}$ given in Theorem~\ref{SKeg}},
  for any $X  \in \cS
  K$ we have 
\[X =  \sideset{}{^{\uparrow}}\bigvee \{ \bigwedge \eta_S(F) \mid F \subseteq_{\mathrm{fin}} K, F \neq \emptyset,  
\makered{\ua \myconv {F}\ll_{\cS K} X}
\} \]
where we make use of the easily proved fact that for any finite
non-empty subset $F$ of $K$, we have: $\ua \myconv F  = \bigwedge_{b \in
  F} \eta_S(b)$.   

It then follows for any  \KS\ semilattice  morphism $f^{\dagger}$
which makes the diagram commute that 
\[f^{\dagger}(X) = \sideset{}{^{\uparrow}}\bigvee \{ \bigwedge  f(F) \mid F \subseteq_{\mathrm{fin}} K, F \neq \emptyset,  
\makered{\ua \myconv {F}\ll_{\cS K} X}
\}  \] 
establishing uniqueness.

For existence we define $f^{\dagger}$ by means of this formula and
verify that it makes the diagram commute and is both a \KS\ and
a semilattice map.  
For continuity, it is evident that $f^{\dagger}$ is monotone, and so
it suffices to show that for any directed set $X_i$, $i \in I$, in $\cS K$ and any
finite, non-empty $F \subseteq K$ with 
$\makered{\ua \myconv {F}\ll_{\cS K} \bigvee_i X_i }
$
we have: 
\[\bigwedge f(F)  \leq \bigvee_i  \{ \bigwedge f(\makered{G})
\mid \makered{G} \subseteq_{\mathrm{fin}} K, \makered{G} \neq \emptyset,  
\makered{\ua \myconv {G}\ll_{\cS K} X_i}
\}\] 
\makered{This holds as if $\ua \myconv {F}\ll_{\cS K} \bigvee_i X_i$ then $\ua \myconv {F}\ll_{\cS K} X_i$, for some $i$.}

Next, it is helpful to prove that
\[f^{\dagger}(\bigwedge \eta_S(F)) =  \bigwedge f(F) \tag{$*$}\]
for any finite non-empty set $F$, 
that is, that:
\[\sideset{}{^{\uparrow}}\bigvee \{ \bigwedge f(G) \mid  G
\subseteq_{\mathrm{fin}} K, G \neq \emptyset,  
\makered{\ua \myconv {G}\ll_{\cS K} }
\ua \myconv F \}  =  \bigwedge f(F)\]
\makered{To show the left-hand side is $\leq$ the right-hand side, suppose we have a non-empty finite subset $G$ of $K$ such that
 $\ua \myconv {G}\ll_{\cS K} 
 \ua \myconv F $. 
 Then, by the characterisation of $\ll_{\cS K}$ given in Theorem~\ref{SKeg}, there is a finite non-empty $H \subseteq K$ such that
 $\ua \myconv {G} \supseteq \ua \myconv {H}$ and $ \dua H \supseteq \ua \myconv F$
So, for any  $a \in F$, there is a $b \in H$ such that $b \ll a$, and so a $c \in \myconv{G}$ such that $c \leq a$. Let $G_1$ be the set of such $c$'s.
 We then have\display
 \[\bigwedge f(G)  = \bigwedge f(G)  \wedge \bigwedge f(G_1) \leq  \bigwedge f(G_1) \leq \bigwedge f(F)\]
 where the equality follows from the convexity identity \rm (CI).
 This shows the left-hand side is $\leq$ the right-hand side.}


\makered{Conversely, given $F = \{a_1,\ldots, a_n\}$,  with $n > 0$, choose $b_1 \ll
a_1,\ldots, b_n \ll a_n$ and take $G = \{b_1,\ldots, b_n\}$. 
By Lemma~\ref{upper_way_below_convex} we have $\ua \myconv G\ll_{\cS K}  \ua \myconv F$.
So the left-hand side is $\geq
\bigwedge \makered{f}(G)$, and so $\geq \bigwedge \makered{f}(F)$, as $G$ consists of an
arbitrary choice of an elements way-below each element of $F$. }


Taking $F$ to be a singleton in $(*)$, we see that, as required, the
diagram commutes; this, in turn, implies that $f^{\dagger}$ is strict,
as $\eta_S$ and $f$ are. As regards preservation of the semilattice
operation $\wedge$, as every element is a directed supremum of
non-empty finite infima \makered{of elements of the form $\eta_S(b)$}and as  $\wedge$ is Scott-continuous, we need only
verify it for \makered{such} non-empty finite infima, and that follows immediately
from $(*)$. 

We finally show that $f^{\dagger}$ preserves convex combinations.
Since $f^{\dagger}$ is Scott-continuous and every element of $\cS K$ is a
directed supremum of meets of the form $\bigwedge_{b \in F} \eta_S(b)$
($F \subseteq K$ non-empty and finite), it suffices to show that
$f^{\dagger}$ preserves convex combinations of such finite meets. To
that end, given $F,G \subseteq K$ non-empty and finite, we calculate:
\[\begin{array}{lcl}
f^{\dagger}(\bigwedge_{b \in F} \eta_S(b) {+_r}_{\cS K} \bigwedge_{c \in G} \eta_S(c)) 
                                 & = & f^{\dagger}(\bigwedge_{b \in F, c \in G} (\eta_S(b) \, {+_r}_{\cS K}\, \eta_S(c)))\\
                                 & = & f^{\dagger}(\bigwedge_{b \in F, c \in G}  \eta_S(b +_r c))\\
                                 & = & \bigwedge_{b \in F, c \in G} f(b +_r c)\\
                                 & = & \bigwedge_{b \in F, c \in G} (f(b) +_r f(c))\\
                                 & = & \bigwedge_{b \in F}  f(b)  +_r  \bigwedge_{c \in G}  f(c)\\
                                 & = & f^{\dagger}(\bigwedge_{b \in F} \eta_S(b)) +_r f^{\dagger}(\bigwedge_{c \in G} \eta_S(c)) 
\end{array}
\]
where the third and sixth equalities follow from $(*)$, and  the first
and fifth follow from distributivity.
\end{proof}

\makered{This result contrasts with the corresponding universality result for upper powercones in~\cite{TKP09}. There the assumptions are weaker, but so are the conclusions: there is no assumption of preservation of the way-below relation, but the universality relates only to continuous d-cone semilattices, not to all of them. Further the proof methods for the two theorems are different. It would be interesting to know if the assumption made in Theorem~\ref{Suni} that convex combinations preserve the way-below relation is needed.}

\subsection{Convex power \KSs} \label{CpK}

We next investigate the \emph{convex} (or \emph{Plotkin}) \emph{power \KS}\ $\cP K$, of a
given continuous and coherent (so Lawson compact) full \KS.   
 Note that we have to suppose not
only continuity but also coherence in order to prove the desired
results. First we need some definitions from~\cite{TKP09}. Nonempty
Lawson-compact order-convex subsets of a Lawson-compact domain are
called \emph{lenses}. Both Scott-closed sets and saturated
Scott-compact sets are lenses, as they are both Lawson-compact, and
every lens $X$ can be written as the intersection of  a non-empty
Scott-closed convex set and a non-empty  Scott-compact saturated
convex one, as we have: $X = \overline{X} \cap \ua X$. We also have
$\overline{X} = \da X$ for any lens $X$. If a lens  $X$ of a
continuous Lawson-compact \KS\ is also convex, then so are $\da X$ and
$\ua X$.  
The Egli-Milner  ordering is defined on order-convex subsets of a
partial order $\leq$ by: 
\[X\leqem Y\ \equiv_{\mathrm{def}}\  \forall x \in X.\, \exists y \in
Y.\, x \leq y \,\wedge\, \forall y \in Y.\, \exists x \in X.\, x \leq
y \] 
which can equivalently be written as
\[X\leqem Y\ \equiv\ \da X \subseteq \da  Y \, \wedge\,  \ua  Y \subseteq \ua X \]

We define 
$\cP K$ to be the collection of convex lenses of $K$ ordered by the
Egli-Milner ordering,  
with zero $\{0\}$ and with convex combination operators  ${+_r}_P$ given by:
\[X {+_r}_P Y \eqdef (\da X {+_r}_H \da  Y)  \, \cap \, (\ua  X
{+_r}_S \ua  Y)  \] 
for $r \in [0,1]$. It follows from the above remarks on lenses that
this operator is well-defined. Using the explicit definitions of
convex combinations for the lower and upper power \KSs, one sees that
$X {+_r}_P Y  = \overline{X +_r  Y}  \, \cap \, \ua\, (X +_r  Y)$.  
The convex power \KS\  is a  \KS\ semilattice  when equipped with the
semilattice operator $\nonor_P$ defined by: 
\[X \nonor_P Y \eqdef  (\da X \vee_{\cH K} \da  Y)  \, \cap \, (\ua  X
\wedge_{\cS K} \ua  Y) \]
Using the explicit definitions of the semilattice operations for the
lower and upper power \KSs\ one sees that $X \nonor_P Y =  \cchull{ (X
  \cup Y)}  \, \cap \, \ua \myconv (X  \cup  Y) $; note too that $\da\,(X
\nonor Y) = \da X  \vee_{\cH K}  \da Y$ and $\ua\,(X \nonor_P Y) = \ua X
\vee_{\cS K}  \ua Y$.

In order to verify the properties of $\cP K$ we proceed as before,
via  embeddings into cones. 
Let us begin by recalling the definition and properties of the convex
powercone   $\cP C$ of a continuous Lawson-compact d-cone $(C,+,0,
\cdot)$~\cite[Section 4.3]{TKP09}. This is the collection of all  convex
lenses of $C$ partially ordered by the Egli-Milner ordering.  
It has directed suprema  given by:
\[\sideset{}{^{\uparrow}}\bigvee_{i \in I} X_i =  (\sideset{}{^{\uparrow}}{\bigvee}_{i \in I} \da X_i) \,\cap\, (\sideset{}{^{\downarrow}}{\bigvee}_{i \in I} \ua X_i)\]
where, on the right, we take directed suprema in $\cH C$ and $\cS C$,
respectively. 
More explicitly, we have:
\[\sideset{}{^{\uparrow}}\bigvee_{i \in I} X_i =
(\overline{\sideset{}{^{\uparrow}}{\bigcup}_{i \in I} \da X_i})
\,\cap\, (\sideset{}{^{\downarrow}}{\bigcap}_{i \in I} \ua X_i)\] 
Addition and scalar multiplication are lifted from $C$  to  $\cP C$ as follows:
\[X +_P Y \eqdef  (\da X +_H \da Y)\, \cap \, (\ua X +_S \ua Y) \qquad 
 r\cdot_P  X \eqdef  (r \cdot_H  \da X)\, \cap\, (r \cdot_S  \ua X)\]
Using the explicit definitions of addition and scalar multiplication
in the lower and upper powercones, these definitions simplify to: 
\[X+_P Y= \overline{X+Y} \, \cap \, \ua\, (X+Y) \qquad 
r\cdot_P X=r\cdot X\]
Convex combinations are given by\display
 \[r\cdot_P X\, +_P\, (1 - r)  \cdot_P Y =
\overline{r\cdot X + (1 - r)\cdot Y} \, \cap \, \ua\,(r\cdot  X +  (1 - r)\cdot
Y) \]
 There is also a 
Scott-continuous semilattice operation. It is defined by:
 \[X \nonor_P Y \eqdef (\da X \vee_{\cH C} \da Y) \, \cap \, (\ua X
 \vee_{\cS C}  \ua Y)\] 
which simplifies to $X \nonor_P Y =  \cchull{ (X \cup Y)}  \, \cap \,
\ua \myconv (X  \cup  Y) $. 
Further, the following is proved in~\cite[Section 4.3]{TKP09}:

\begin{thm} \label{Pcone} Let $(C, +, 0, \cdot)$ be a continuous
  coherent  d-cone. Then $(\cP C, +_P , \{0\}, \cdot_P)$ 
is also a continuous coherent d-cone, 
and, equipped with  the semilattice operation $\nonor_P$, it forms a
d-cone semilattice. 

The finitely generated convex lenses $k_C(F) \eqdef  \cchull{F} \,
\cap \ua \myconv F$, where $F$ is a finite, non-empty subset of $C$,
form a basis for $\cP C$, and, for any $X,Y \in \cP C$,  we have $X
\ll_{\cP C} Y$ if, and only if, $X\leqem k_C(F)$ and $F \subseteq \dda
Y$ and $  \dua F \supseteq  Y$ (i.e.,  
$F\llem Y$) for some such $F$.
If the way-below relation of $C$ is additive,  so is that of $\cP C$. \qed
\end{thm}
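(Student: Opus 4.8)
The plan is to realise the convex powercone as a substructure of the product d-cone $\cH C\times\cS C$ and to transport as much as possible from Theorems~\ref{Hcone} and~\ref{Scone}, treating separately the lens-specific facts that genuinely rely on the coherence of $C$ (equivalently, since $C$ is already continuous, on its Lawson-compactness). First I would record the standard facts about convex lenses in a continuous coherent d-cone: a convex lens $X$ satisfies $X=\overline X\cap\ua X=\da X\cap\ua X$ with $\da X=\overline X$ a nonempty convex Scott-closed set and $\ua X$ a nonempty convex Scott-compact saturated set; conversely the intersection of a nonempty convex $A\in\cH C$ with a nonempty convex $B\in\cS C$ is Lawson-compact, hence a lens, whenever it is nonempty. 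This yields an order embedding $\iota\colon\cP C\to\cH C\times\cS C$, $X\mapsto(\da X,\ua X)$ — order embedding because $X\leqem Y$ is exactly $\da X\subseteq\da Y$ and $\ua Y\subseteq\ua X$, which is the product order — onto the set of \emph{linked} pairs $(A,B)$, i.e.\ those with $A\cap B$ a lens recovering $A=\da(A\cap B)$, $B=\ua(A\cap B)$. The explicit formulas for $+_P$, $r\cdot_P-$ and $\nonor_P$ are chosen precisely so that $\iota$ commutes with them and with the componentwise operations of $\cH C\times\cS C$; e.g.\ $\da(X+_P Y)=\overline{X+Y}=\overline{\da X+\da Y}=\da X+_H\da Y$ (using that $\overline{X+Y}=\da(X+Y)$ as $X+Y$ is Lawson-compact) and $\ua(X+_P Y)=\ua X+_S\ua Y$, and likewise for scalar multiplication and $\nonor_P$. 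Hence, once I check that the image of $\iota$ is closed under these componentwise operations and under directed suprema, the d-cone and d-cone-semilattice structure of $\cP C$, together with the stated directed-sup formula $\dsup_i X_i=(\dsup_i\da X_i)\cap(\finf_i\ua X_i)$, follows from Theorems~\ref{Hcone} and~\ref{Scone}. Closure under the operations is routine convex-set manipulation (plus Lawson-compactness for nonemptiness); closure under directed sups is where coherence of $C$ enters — one verifies that the filtered intersection $\finf_i\ua X_i$ meets $\dsup_i\da X_i$ and that the resulting pair is linked, using that the Lawson topology of $C$ is compact Hausdorff.

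Next, continuity of $\cP C$ and the way-below characterisation. Here I would argue directly, since the image of $\iota$ need not be Scott-closed in the product. For the ``if'' direction: given finite nonempty $F$ with $X\leqem k_C(F)$ and $F\llem Y$, and a directed family $X_i$ with $Y\leqem\dsup_i X_i$, use $\dsup_i X_i=\overline{\bigcup_i\da X_i}\cap\bigcap_i\ua X_i$. From $\da Y\subseteq\overline{\bigcup_i\da X_i}$, $F\subseteq\dda Y$, and the Scott-openness of each $\dua f$ ($C$ continuous), finiteness of $F$ and directedness give a single $i$ with $F\subseteq\da X_i$, hence $\cchull F=\da k_C(F)\subseteq\da X_i$; from $\bigcap_i\ua X_i\subseteq\ua Y\subseteq\dua F$ and well-filteredness of $C$ (Hofmann--Mislove) one gets an $i$ with $\ua X_i\subseteq\dua F\subseteq\ua\myconv F=\ua k_C(F)$; combining via directedness yields $X\leqem X_i$. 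For the converse I would first show that every $Y\in\cP C$ is the directed supremum of the finitely generated convex lenses $k_C(G)$ with $G$ finite and $G\llem Y$ — using continuity and Lawson-compactness of $C$ to see the family is directed with supremum $Y$ — so that $X\ll_{\cP C}Y$ forces $X\leqem k_C(F)$ for some such $F$. This simultaneously shows the finitely generated convex lenses form a basis, whence $\cP C$ is continuous.

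Finally, coherence and additivity. For coherence I would use the criterion \cite[Proposition III-5.12]{GHK} that a continuous dcpo is coherent iff it has property M with respect to some basis, and verify property M for the basis of finitely generated convex lenses: given $k_C(G_j)\ll k_C(F_j)$ ($j=1,2$), translate via the way-below characterisation into Egli-Milner relations $G_j\llem F_j$, then use that $C$, being continuous and coherent, has property M with basis $C$ to produce a finite set of points whose finitely generated convex lenses supply the required interpolating family $\ua k_C(F_1)\cap\ua k_C(F_2)\subseteq\ua\mathcal F\subseteq\ua k_C(G_1)\cap\ua k_C(G_2)$ — the Egli-Milner up-set bookkeeping here is the fiddly part. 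For additivity, assuming $\ll_C$ is additive, from a witness $F$ for $X\ll_{\cP C}Y$ and $F'$ for $X'\ll_{\cP C}Y'$, I check that $F+F'$ witnesses $X+_P X'\ll_{\cP C}Y+_P Y'$: $F+F'\llem Y+Y'$ follows pointwise from additivity of $\ll_C$, while $X+_P X'\leqem k_C(F+F')$ follows from $\da X+\da X'\subseteq\cchull F+\cchull{F'}\subseteq\cchull{F+F'}$ and the dual inclusion on saturations, invoking the corresponding facts in $\cH C$ and $\cS C$ from Theorems~\ref{Hcone} and~\ref{Scone}. The main obstacles I expect are, first, getting the property-M argument for the lens basis right with all the Egli-Milner manipulations, and, more pervasively, correctly invoking Lawson-compactness of $C$ to guarantee that the many intersections of lenses appearing above (in directed sups, in the operations, in the basis representation) are nonempty and are themselves lenses — this is exactly where the coherence hypothesis on $C$ is indispensable and where the argument would fail for a general continuous d-cone.
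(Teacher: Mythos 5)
The paper does not actually prove Theorem~\ref{Pcone}: it is quoted from \cite[Section 4.3]{TKP09}, so there is no in-paper argument to compare with, only the cited source. Your proposal reconstructs essentially the route taken there — representing a convex lens through its lower and upper closures inside $\cH C\times\cS C$, transporting the d-cone and semilattice structure from Theorems~\ref{Hcone} and~\ref{Scone}, and invoking coherence/Lawson compactness for directed suprema, the basis of finitely generated lenses $k_C(F)$, the way-below characterisation, coherence via property M, and additivity — and the sketch is sound, with the genuinely delicate steps (the ``linkedness'' of directed suprema, i.e.\ $\da(\dsup_i X_i)=\dsup_i\da X_i$ and $\ua(\dsup_i X_i)=\fcap_i\ua X_i$, which your way-below argument tacitly uses via $\fcap_i\ua X_i\subseteq\ua Y$; the directedness and supremum of the family $\{k_C(G)\mid G\llem Y\}$; and the property-M bookkeeping) correctly identified as the places where the real work lies rather than passed off as trivial.
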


We can now show: 
\begin{thm}\label{PKeg}  Let $(K, +_r, 0)$ be a continuous
  coherent full \KS. 
 Then \\ $(\cP K, {+_r}_P ,\{0\})$ is also a continuous coherent full   
 \KS\     
 and,
equipped with the Scott-continuous semilattice operation $\nonor_P$, it
forms a \KS\ semilattice.  Directed suprema  are given by: 
\[\sideset{}{^{\uparrow}}\bigvee_{i \in I} X_i =
(\overline{\sideset{}{^{\uparrow}}{\bigcup}_{i \in I} \da X_i})
\,\cap\, (\sideset{}{^{\downarrow}}{\bigcap}_{i \in I} \ua X_i)\] 
The finitely generated convex lenses $k_K(F)  \eqdef  \cchull{F} \,
\cap \ua \myconv F$, where $F$ is a finite, nonempty subset of $K$,
form a basis for $\cP K$, and, for any $X,Y \in \cP K$,  we have $X
\ll_{\cP K} Y$ if, and only if, $X\leqem k_K(F)$ and $F \subseteq \dda
Y$ and $\dua F \supseteq Y$ (i.e., $F\llem Y$) for some such $F$. 
If, in addition, the way-below relation of $K$ is closed under convex
combinations, so is that of $\cP K$. 
\end{thm}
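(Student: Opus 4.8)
The plan is to follow the pattern of Theorems~\ref{HKeg} and~\ref{SKeg}: embed $K$ into the d-cone $C \eqdef \dCone(K)$ and transfer the structure and properties of the convex powercone $\cP C$. By Theorem~\ref{prop:KSembed} we may regard $K$ as a Scott-closed convex subset of $C$; since $K$ is continuous and coherent, Propositions~\ref{prop:continuous} and~\ref{prop:KSembedding2} show that $C$ is a continuous coherent d-cone, hence Lawson-compact, so Theorem~\ref{Pcone} applies and $\cP C$ is a continuous coherent d-cone semilattice. By Lemma~\ref{lem:aux}, $\ll_K$ is the restriction of $\ll_C$ to $K$.

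The decisive observation --- in contrast with the upper case, where a closure pair was needed --- is that $\cP K$ is a \emph{Scott-closed convex subset} of $\cP C$, so it plays the role that $\cH K \subseteq \cH C$ did in Theorem~\ref{HKeg}. I would check: (i) since $K$ is Scott-closed, hence Lawson-closed, in $C$ and carries the subspace Lawson topology, the convex lenses of $K$ are exactly the convex lenses of $C$ contained in $K$; and, as $K$ is a lower set of $C$ (so that $\da_C X = \da_K X$ and $\ua_K X = \ua_C X \cap K$ for $X \subseteq K$), the Egli-Milner ordering on $\cP K$ is the restriction of that on $\cP C$. (ii) $\cP K$ is a lower set in $\cP C$, since $Z \leqem X$ with $X \subseteq K$ forces $Z \subseteq \da_C Z \subseteq \da_C X \subseteq K$. (iii) $\cP K$ is closed under the directed suprema of $\cP C$, since the formula of Theorem~\ref{Pcone} exhibits such a supremum as a subset of $\overline{\bigcup^{\uparrow}_i \da X_i}$, which lies in the Scott-closed set $K$. (iv) $\{0\} \in \cP K$ and $\cP K$ is closed under $\nonor_P$ and each ${+_r}_P$, because evaluating the defining formulas in $C$ only involves Scott-closures and convex hulls of subsets of $K$ --- which stay inside $K$ --- together with upper closures, and intersecting a subset of $K$ with $\ua_C(-)$ is the same as intersecting it with $\ua_K(-)$. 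These computations also show that the inherited operations and directed suprema are given by the formulas stated in the theorem.

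Granting this, most of the theorem is then formal. As a Scott-closed subset of the d-cone $\cP C$, $\cP K$ is a lower set, hence full; as a sub-dcpo closed under the (necessarily Scott-continuous restrictions of the) operations $+_r$ and $\nonor_P$, over which convex combinations distribute, it is a \KS\ semilattice with the structure inherited from $\cP C$. Continuity of $\cP K$, and the identity $\ll_{\cP K} = \ll_{\cP C}|_{\cP K}$, follow from Lemma~\ref{lem:aux}. Coherence follows because a Scott-closed subset of a coherent dcpo is coherent: for Scott-compact saturated $\mathcal A, \mathcal B$ of $\cP K$, we have $\mathcal A \cap \mathcal B = \ua_{\cP C} \mathcal A \cap \ua_{\cP C} \mathcal B \cap \cP K$, the intersection of a Scott-compact set --- by coherence of $\cP C$ --- with the Scott-closed set $\cP K$, hence Scott-compact. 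For the basis and the way-below relation, note that $k_K(F) = k_C(F)$ for finite non-empty $F \subseteq K$, while any basis element $k_C(F)$ of $\cP C$ lying in $\cP K$ satisfies $F \subseteq k_C(F) \subseteq K$; since everything way below some $Y \in \cP K$ already lies in $\cP K$, the basis $\{k_C(F)\}$ of $\cP C$ restricts to the basis $\{k_K(F)\}$ of $\cP K$. The way-below characterization then drops out of that of $\cP C$: if $X \ll_{\cP K} Y$, i.e.\ $X \ll_{\cP C} Y$, Theorem~\ref{Pcone} gives a finite non-empty $F \subseteq C$ with $X \leqem k_C(F)$ and $F \llem Y$; but then $F \subseteq \dda Y \subseteq \da_C Y \subseteq K$, so $F$ is already a subset of $K$, $k_C(F) = k_K(F)$, and --- $\ll$ restricting to $K$ --- $F \llem Y$ holds in $\cP K$. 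The converse is immediate from Theorem~\ref{Pcone}.

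Finally, suppose $\ll_K$ is closed under convex combinations. Then $\ll_C$ is additive by Proposition~\ref{prop:KSembedding1}, hence $\ll_{\cP C}$ is additive by Theorem~\ref{Pcone}; since $r\cdot_P(-)$ preserves $\ll_{\cP C}$ (it is an order-isomorphism for $r > 0$, while $0\cdot_P(-)$ is the constant value $\{0\}$, which is way below itself), additivity yields that each ${+_r}_P$ preserves $\ll_{\cP C}$, and hence preserves $\ll_{\cP K}$, under which $\cP K$ is closed. The one place genuinely demanding care is the topological and order bookkeeping behind points~(i)--(iv), i.e.\ that $\cP K$ really is a Scott-closed convex subset of $\cP C$ with operations and suprema computed as claimed; once that is in place, everything else is inherited via the lemmas and theorems already established.
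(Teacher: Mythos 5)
Your proposal is correct and follows essentially the same route as the paper's own proof: embed $K$ as a Scott-closed convex subset of $C=\dCone(K)$, verify that $\cP K$ is a Scott-closed convex $\nonor_P$-subsemilattice of $\cP C$ with the stated formulas for suprema and operations, and then inherit continuity, coherence, fullness, the basis, the way-below characterisation, and (via additivity of $\ll_C$ from Proposition~\ref{prop:KSembedding1}) preservation of way-below by convex combinations. The only differences are cosmetic, e.g.\ your explicit verification that a Scott-closed subset of a coherent dcpo is coherent, which the paper treats as known.
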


\begin{proof} Applying Theorem~\ref{prop:KSembed}  we can regard
  $K$ as a Scott-closed convex subset of the
  d-cone $C \eqdef \dCone(K)$, with its partial order and algebraic structure
  inherited from that of $C$. Applying Propositions~\ref{prop:continuous} and~\ref{prop:KSembedding2}, we see that $C$ is continuous and coherent. It follows that $K$ is a sub-dcpo of $C$,  that
  its way-below relation is inherited from that of $C$, that a subset
  of $K$ is Scott-compact in the topology of $K$ if, and only if, it
  is Scott-compact in $C$, and that a subset of $K$ is a lens of $K$ if, and
  only if, it is a lens of $C$. 

We therefore see that $\cP K$ is a subset of $\cP C$. It also
evidently inherits its partial order from that of $\cP C$. We next
show that $\cP K$ is a Scott-closed convex subset of $K$. To see it is a lower set,
 suppose  $X\leqem Y \in \cP K$.  Then $X \subseteq
\da_C X \subseteq \da_C Y \subseteq K$ (the last as $Y \subseteq K$
and $K$ is a lower set), and so $X \in \cP K$. For closure under
directed suprema, suppose $X_i$ is a directed subset of $\cP K$. Then
$\sideset{}{^{\uparrow}}{\bigvee_{i \in I}} X_i =  Y \cap Z$, where 
\[Y \eqdef (\overline{\sideset{}{^{\uparrow}}{\bigcup}_{i \in I} \da_C
  X_i}) \qquad \mbox{and} \qquad Z \eqdef
\sideset{}{^{\downarrow}}{\bigcap}_{i \in I} \ua_C X_i\] 
with the closure being taken in $C$. As $X_i \subseteq K$, $Y$ is in
fact a Scott-closed subset of $K$. Therefore the directed supremum is
a subset of $K$ and so a lens of $K$, as required, and we have shown
that $\cP K$ is a Scott-closed subset of $\cP C$. It follows in
particular that $\cP K$ is a sub-dcpo of $\cP C$.  
Noting that we can write $Y$ equivalently taking the lower closure
and the topological closure in $K$, and that we can write $Z \cap K$
as $\bigcap_{i \in I} \ua_K X_i$ we further see that directed suprema
in $\cP K$ are given as claimed.    

For convex closure, recall that convex combinations are given by 
\[\overline{r\cdot X + (1 - r)\cdot Y} \, \cap \, \ua_C (r\cdot  X +  (1 - r)\cdot Y)\]
where the closure is taken in $C$. Taking $X, Y \in \cP K$ we see that
$r\cdot X + (1 - r)\cdot Y$ is a subset of $K$, as $K$ is a convex subset
of $C$. So the closure can equivalently be taken in $K$ and we find
that the convex combination is a subset of $K$ and so, as required,
a lens in $K$. Intersecting  $\ua_C (r\cdot  X +  (1 - r)\cdot Y)$ with
$K$, we note that we can write the convex combination equivalently as
$\overline{r\cdot X + (1 - r)\cdot Y} \, \cap \, \ua_K (r\cdot  X +
(1 - r)\cdot Y)$ with the closure taken in $K$. Thus the convex combination
operators of $\cP K$ are the same as those inherited from $\cP C$.

As $\cP K$ is a Scott-closed convex subset of $\cP C$, it inherits a
continuous coherent \KS\ structure satisfying Property  (OC3) from $\cP
C$, with way-below relation the restriction of that of $\cP C$, and
with basis $B \cap \cP K$, where $B$ is any basis of $\cP C$. As the
zero of $\cP K$ is evidently that of $\cP C$, we see from the above
that the partial order and algebraic structure defined on $\cP K$ is
that inherited from $\cP C$, and so $\cP K$ is indeed a continuous
coherent \KS\ satisfying Property  (OC3). 

One checks that the operation $\nonor$ defined on $\cP K$ is the
restriction of $\nonor_P$ to $\cP K$. It is therefore, as claimed a
Scott-continuous semilattice operation. 
Further as $+_P$ and $r\cdot_P -$ both distribute over $\nonor_P$, we
see that, equipped with $\nonor$, $\cP K$ is, as claimed, a
\KS\ semilattice.

The finitely generated convex lenses $k_C(F) = \cchull{F} \, \cap
\ua_C \myconv F$ that are in $\cP K$ form a basis of $\cP K$. As then $F
\subseteq k_C(F)\subseteq  \cP K$, the closure can be equivalently be
taken in $K$, and we see that $k_C(F) = k_K(F)$. So $\cP K$ has a
basis as claimed. The characterisation of $\ll_{\cP K}$ can then be
read off from the characterisation of $\ll_{\cP C}$, as   $\ll_{\cP
  K}$ is the restriction of $\ll_{\cP C}$ to $\cP K$.  

If  $\ll_K$ is closed under convex combinations, we can assume by
Proposition \ref{prop:KSembedding1} that
$\ll_C$ is closed under sums. Then $\ll_{\cP C}$ is also closed under
sums, and so, too, under convex combinations. As $\cP K$ inherits
convex combinations and its way-below relation  from $\cP C$, we see
that  $\ll_{\cP K}$ is closed under convex combinations, concluding
the proof. 
\end{proof}

\newcommand{\leqemc}{\leqem^{\mathrm{c}}}

We next show that $\cP K$ is the free  \KS\ semilattice  over \makered{any
 \KS\ $K$ satisfying suitable assumptions}. The unit
$\eta_P: K \rightarrow \cP K$
is the evident \KS\ morphism $\eta_P(x) \eqdef \{x\}$.
\mycut{\tt GDP Previous material: \makered{Some notation will be useful: for any semilattice operation $(X, \nonor_X)$  and any non-empty finite subset  $F = \{x_1, \ldots, x_n\}$ of $X$, we write $\sideset{}{_X}\bignonor F$ for $x_1 \nonor_X \ldots \nonor_X x_n$. }
\makered{%
We first need two lemmas. For any subsets $X$, $Y$ of a \KS\  $K$
define $X \leqemc Y$ to hold when   
\begin{itemize}
\item[(a)] $\forall x \in X.\,\exists y \in \myconv Y.\, x \leq y$, and 
\item[(b)] $\forall y \in Y.\, \exists x \in X.\, x \leq y$
\end{itemize}
Note that the relation $\leqemc$ is a partial order.

\begin{lem} \label{leqemcold} Let $K$ be a continuous \KS. 
Suppose that $H$, $G$ are non-empty finite subsets of $K$ such that $H \llem  k_K(G)$. Then $H \leqemc G$.
\end{lem}
\begin{proof} Take $c \in H$. As $H \llem k_K(G) \subseteq  \ov{\myconv
    G}$, we find a $\ov{b} \in \ov{\myconv G}$ such that $c \ll
  \ov{b}$. \makeblue{Since the Scott-open set $\dua c$ meets the Scott closure of
  $\myconv G$, it also meets the set $\myconv G$ itself. So there is}
  a $b \in \myconv G$ such that $c \ll b$. Conversely, take $b
  \in G$. As $H \llem k_K(G) \supseteq  G$, there is a $c \in H$ such
  that $c \ll b$. 
\end{proof}}}
%
%
\makeblue{We first need two lemmas.}

\begin{lem} \label{leqemc} Let $K$ be a continuous  coherent full \KS.
Suppose that $H$, $G$, $F$ are non-empty finite subsets of $K$ such that $H \llem G$ and $k_K(G) \ll_{\cP K}  k_K(F)$. Then there are finite sets $H_1 \subseteq \myconv H$ and 
$F_1 \subseteq \myconv F$ such that $H \cup H_1 \leqem F \cup F_1$. 
\end{lem}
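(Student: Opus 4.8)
The plan is to reduce the statement to two \textsc{one-step} comparisons and then read off $H_1$ and $F_1$ from them. First I would invoke the characterisation of $\ll_{\cP K}$ in Theorem~\ref{PKeg}: applied to $k_K(G)\ll_{\cP K} k_K(F)$ it yields a non-empty finite set $F'\subseteq K$ with $k_K(G)\leqem k_K(F')$ and $F'\llem k_K(F)$. This intermediate set $F'$ is the hinge through which all the chaining runs.

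The two comparisons I would establish are: (1) every $h\in H$ satisfies $h\leq b_h$ for some $b_h\in\myconv F$; and (2) every $f\in F$ satisfies $c_f\leq f$ for some $c_f\in\myconv H$. Granting these, I would set $H_1\eqdef\{c_f\mid f\in F\}$ and $F_1\eqdef\{b_h\mid h\in H\}$ --- finite subsets of $\myconv H$ and $\myconv F$, respectively --- and verify $H\cup H_1\leqem F\cup F_1$ directly: for the ``everything in $H\cup H_1$ lies below something in $F\cup F_1$'' half, $h\leq b_h\in F_1$ and $c_f\leq f\in F$; for the other half, $c_f\leq f$ with $c_f\in H_1$, and $h\leq b_h$ with $h\in H$. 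Thus the two extra finite sets of convex combinations are precisely what is needed to witness the two halves of the Egli-Milner ordering, which is the content of the lemma.

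For (1), I would chain: $H\llem G$ gives $h\ll g$ for some $g\in G$; since $g\in k_K(G)$ and $k_K(G)\leqem k_K(F')$, we get $g\leq\bar b$ for some $\bar b\in k_K(F')$, hence $h\ll\bar b$; as $\bar b\in k_K(F')\subseteq\overline{\myconv F'}$ and $\dua h$ is Scott-open, $\dua h$ meets $\myconv F'$, so $h\leq b'$ for some convex combination $b'=\sum_{f'\in F'}\lambda_{f'}f'$; and for each $f'$, using $F'\llem k_K(F)$ and the same open-set argument, $f'\leq b_{f'}$ for some $b_{f'}\in\myconv F$; monotonicity of convex combinations then gives $h\leq b'\leq\sum_{f'}\lambda_{f'}b_{f'}\in\myconv F$. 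Comparison (2) is the order-dual chain run backwards through $F'$: $F'\llem k_K(F)$ gives some $f'\in F'$ with $f'\leq f$; $k_K(G)\leqem k_K(F')$, together with $k_K(G)\subseteq\ua\myconv G$, gives some $g''=\sum_{g\in G}\mu_g g\in\myconv G$ with $g''\leq f'$; and $H\llem G$ gives, for each $g$, an $h_g\in H$ with $h_g\leq g$, whence $\sum_g\mu_g h_g\in\myconv H$ lies below $f$.

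The main obstacle is the bookkeeping in (1): one must pass from ``below an element of the lens $k_K(F')$'' --- a Scott-closed set that need not equal $\da\myconv F'$ --- to ``below an actual convex combination of $F'$''. The device for this is that $k_K(F')\subseteq\overline{\myconv F'}$ and that a Scott-open set meeting a Scott closure already meets the underlying set; it is essential that at the decisive step one has $\ll$ rather than merely $\leq$ (coming from $h\ll g\leq\bar b$), so that $\dua h$ is available as the required Scott-open set. Beyond this, the only ingredients are monotonicity of finite convex combinations, idempotence of $\myconv$, continuity of $K$ (so that $\dua a$ is Scott-open), and the elementary inclusions $\myconv E\subseteq k_K(E)\subseteq\overline{\myconv E}$ and $k_K(E)\subseteq\ua\myconv E$ for finite non-empty $E$; coherence of $K$ enters only through the availability of Theorem~\ref{PKeg}.
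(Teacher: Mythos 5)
Your proof is correct and follows essentially the same route as the paper's: both extract an interpolating finite set from the characterisation of $\ll_{\cP K}$ in Theorem~\ref{PKeg}, chain the relations through it using the fact that a Scott-open set $\dua h$ meeting the closure $\cchull{F'}$ already meets $\myconv F'$, and finish with monotonicity of convex combinations. The only (harmless) variation is bookkeeping: you index $H_1$ by $F$ and $F_1$ by $H$ so each element carries its own witness, whereas the paper indexes $H_1$ by the interpolant and $F_1$ by $H\cup H_1$, needing one extra chaining step in the final Egli--Milner check.
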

\begin{proof}
By the characterisation of $\ll_{\cP K}$ given in Theorem~\ref{PKeg}, there is a  non-empty finite $I \subseteq_{\mathrm{fin}} K$ such that $k_K(G) \leqem k_K(I)$ and $I \llem k_K(F)$.

We first show:
\[\forall c \in H.\, \exists a \in \myconv F.\, \makeblue{c} \leq a \tag*{$(*)$}\]
Choosing $c \in H$, as $H \llem G$ we find a $b \in G$ with $c \ll b$. Then, as $k_K(G) \leqem k_K(I)$ we find an $i'$ in the closed set $\ov{\myconv I}$ such that $b \leq i'$. So, as $c \ll i' \in \ov{\myconv I}$, there is an $i \in \myconv I$ such that $c \ll i$, and so $c \leq i$.
As convex combinations are monotone, it is then enough to show that every $i'' \in I$ is below an element of $\myconv F$. This follows as, since $I \llem k_K(F)$, every such $i''$ is way-below an element of the closed set $\ov{\myconv F}$.

We next show:
\[\forall i \in I.\,\exists c \in \myconv H.\, c \leq i \tag*{$(**)$}\]
Choosing $i \in I$, as $k_K(G) \leqem k_K(I)$ there is a $b \in \myconv G$ with $b \leq i$. As $H \llem G$ and convex combinations are monotone, we then find the required $c \in \myconv H$.

\makeblue{{We now build a finite set $H_1$ of elements of $\myconv H$  
by picking one
below each element of $I$, 
 as}} guaranteed by $(**)$. 
 We then have:
\[\forall c \in H \cup H_1.\, \exists a \in \myconv F.\, \makeblue{c} \leq a \tag*{$({**}*)$}\]
\makeblue{Indeed, for $c\in H$, the conclusion is given by $(*)$, and, for $c\in H_1$, we use that $c\leq i$ for some $i\in I$ and that every element of $I$ is below an element of $\myconv F$, as in the argument proving $(*)$.}  
\makeblue{
{We next  build a finite set $F_1$ of elements of $\myconv F$  
by picking one above each element of 
$H\cup H_1$, as}} guaranteed by $({**}*)$.

We claim that $H \cup H_1 \leqem F \cup F_1$, as required. This follows as, on the one hand, by $({**}*)$, every element of $H \cup H_1$ is below an element of $F_1$,
and, on the other hand, as $I \llem k_K(F)$, every element of $\myconv F$ (and so of $F \cup F_1$) is above an element of $I$ and so, by 
\makeblue{ 
  {the choice of $H_1$}}, above an element of $H_1$.
\end{proof}

\makered{\begin{lem} \label{convex_way_below_convex} Let $K$ be a continuous
  coherent full \KS\ 
in which convex combinations preserve the way-below relation,
Suppose that $F$, $G$ are finite non-empty subsets of $K$ such that $G
\llem  F$. Then $k_K(G)\ll_{\cP K}  k_K (F)$. 
\end{lem}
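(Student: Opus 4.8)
The plan is to verify $k_K(G)\ll_{\cP K} k_K(F)$ directly from the characterisation of $\ll_{\cP K}$ in Theorem~\ref{PKeg}, by exhibiting a finite non-empty $F'\subseteq K$ with $k_K(G)\leqem k_K(F')$ and $F'\llem k_K(F)$ (equivalently, $F'\subseteq\dda k_K(F)$ and $\dua F'\supseteq k_K(F)$). As a first step I would use the interpolation property of $\ll$ in the continuous \KS\ $K$ to build a finite non-empty $I\subseteq K$ with $G\llem I\llem F$: for each $g\in G$ pick $f\in F$ with $g\ll f$ and interpolate $g\ll i\ll f$; dually, for each $f\in F$ pick $g\in G$ with $g\ll f$ and interpolate $g\ll i'\ll f$; let $I$ collect all these intermediate elements. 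A routine check then gives $G\llem I$ and $I\llem F$.

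The delicate point is that $I$ itself need not satisfy $\dua I\supseteq k_K(F)$: an element of $k_K(F)$ lies above a convex combination $\sum_j r_j f_j$ of elements of $F$, and such a convex combination need not dominate any single element of $I$. This is exactly where the hypothesis that convex combinations preserve $\ll$ is used. Given $y\in\ua\myconv F$, write $y\geq\sum_j r_j f_j$ with $f_j\in F$, choose $i_j\in I$ with $i_j\ll f_j$ (possible as $I\llem F$), and deduce $\sum_j r_j i_j\ll\sum_j r_j f_j\leq y$; hence $y\in\dua c$ for $c\eqdef\sum_j r_j i_j\in\myconv I$. Therefore $\ua\myconv F\subseteq\bigcup_{c\in\myconv I}\dua c$. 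Since $\ua\myconv F$ equals $\ua k_K(F)$, the saturation of the lens $k_K(F)$, it is Scott-compact, while each $\dua c$ is Scott-open; a finite subcover gives $c_1,\dots,c_m\in\myconv I$ with $\ua\myconv F\subseteq\dua c_1\cup\dots\cup\dua c_m$. I then take $F'\eqdef I\cup\{c_1,\dots,c_m\}$, which is finite, non-empty, and contained in $\myconv I$.

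It remains to check the three conditions for this $F'$, all routine. For $\dua F'\supseteq k_K(F)$: $k_K(F)\subseteq\ua\myconv F\subseteq\dua c_1\cup\dots\cup\dua c_m\subseteq\dua F'$. For $F'\subseteq\dda k_K(F)$: each $i\in I$ has $i\ll f$ for some $f\in F\subseteq k_K(F)$, and each $c_j=\sum_k r_k i_k$ has $c_j\ll\sum_k r_k f_k\in\myconv F\subseteq k_K(F)$ (again using way-below preservation). For $k_K(G)\leqem k_K(F')$: using the easily checked identities $\ua k_K(X)=\ua\myconv X$ and $\da k_K(X)=\cchull X$, this amounts to $\cchull G\subseteq\cchull{F'}$ and $\ua\myconv{F'}\subseteq\ua\myconv G$; the former holds since $G\subseteq\da I\subseteq\cchull I\subseteq\cchull{F'}$ and $\cchull{F'}$ is convex and Scott-closed, and the latter since $\myconv{F'}\subseteq\myconv I\subseteq\ua\myconv G$, because each element of $I$ dominates an element of $G$ (as $G\llem I$) and convex combinations are monotone. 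This yields $k_K(G)\ll_{\cP K} k_K(F)$. I expect the main obstacle to be the middle step: recognising that $I$ alone fails, and that the way-below preservation hypothesis together with Scott-compactness of $\ua\myconv F$ is precisely what lets one enlarge $I$ to a correct finite witness.
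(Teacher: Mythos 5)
Your proof is correct and follows essentially the same route as the paper's: both invoke the characterisation of $\ll_{\cP K}$ from Theorem~\ref{PKeg}, use preservation of $\ll$ by convex combinations to cover $\ua\myconv F$ by sets $\dua c$ with $c$ a convex combination of way-below elements, extract a finite subcover by compactness, and then enlarge the generating set to a finite witness. The only real difference is that your interpolated set $I$ is an unnecessary detour: the paper takes the witness $H = G\cup G'$ with $G'\subseteq\myconv G$, so that $k_K(H)=k_K(G)$ and the Egli-Milner condition is automatic, whereas your $F'\subseteq\myconv I$ forces the (easy) extra verification that $k_K(G)\leqem k_K(F')$.
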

%

%
\makeblue{\begin{proof}
By the characterisation of $\ll_{\cP K}$ given in
Theorem~\ref{PKeg} it suffices to find a finite set $H$ such that
$k_K(G)\leqem k_K(H)$, $H\subseteq \dda k_K(F)$ and $\dua H\supseteq k_K(F)$.

As convex combinations preserve $\ll_K$ and $ G  \llem  F$, we have $
\myconv G  \llem  \myconv F$. Then, using the compactness of $ \myconv F$,
we see that there is a non-empty finite subset $G'$ of $ \myconv G$ such
that  $\dua G' \supseteq  \myconv F$. Let $H=G\cup G'$. Since
$G\subseteq H\subseteq \myconv G$, we have $k_K(G)=k_K(H)$. 
We also have $H \subseteq \dda \myconv F \subseteq
\dda k_K(F)$, whence $H \subseteq\dda k_K(F)$. Finally, $\dua H \supseteq \dua
G' \supseteq \myconv F$, and so we have $\dua H \supseteq \ua\myconv F \supseteq k_K(F)$. 
\end{proof}}}

\begin{thm} \label{Puni} 
\makered{Let $K$ be a continuous coherent full \KS\
and in which convex combinations preserve the way-below relation.
Then the} map $\eta_P$ is universal. That is,
  for every  \KS\ semilattice  $\makered{L}$ and \KS\ morphism $f:K \rightarrow
  \makered{L}$ there is a unique  \KS\ semilattice  morphism $f^{\dagger}: \cP K
\rightarrow \makered{L}$ such that the following diagram commutes: 
{\[\begin{diagram}
	K\\
	 \dTo^{\eta_P} & \SE_{}{\quad \; \;\;f}\\
	\cP K & \rTo^{f^{\dagger}} & \makered{L}
	\end{diagram}\]
}
The  morphism is given by:
\[f^{\dagger}(X) = \sideset{}{^{\uparrow}}\bigvee \{
\sideset{}{_{\makered{L}}}\bignonor f(F)\mid F \subseteq_{\mathrm{fin}} K, F \neq \emptyset, F\llem
X\}  \] 
\end{thm}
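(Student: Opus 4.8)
The plan is to follow the template of the lower and upper cases, Theorems~\ref{Huni} and~\ref{Suni}, using the intrinsic description of $\cP K$ furnished by Theorem~\ref{PKeg}. Two facts get the argument started: iterating the definition of $\nonor_P$ on singletons gives $k_K(F)=\bignonor_{\cP K}\eta_P(F)$ for every finite non-empty $F\subseteq K$; and, by the basis and way-below characterisation in Theorem~\ref{PKeg}, every $X\in\cP K$ is the directed supremum $\bigvee^{\uparrow}\{k_K(F)\mid F\llem X\}$ (this family is cofinal in $\{Z\mid Z\ll_{\cP K}X\}$, hence computes the same supremum, because the $k_K(F)$ form a basis and $k_K(F)\ll_{\cP K}X$ whenever $F\llem X$; directedness follows from directedness of the basis elements way-below $X$, via interpolation). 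Since any \KS\ semilattice morphism is Scott-continuous and preserves $\nonor$ and $0$, applying such a morphism $f^{\dagger}$ making the diagram commute to this directed supremum forces $f^{\dagger}(X)=\bigvee^{\uparrow}\{\bignonor_L f(F)\mid F\llem X\}$; this gives uniqueness.

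For existence I would take that formula as the definition of $f^{\dagger}$. First comes well-definedness: the family $\{\bignonor_L f(F)\mid F\llem X\}$ must be shown directed, for which one needs that $k_K(F)\leqem k_K(F')$ implies $\bignonor_L f(F)\leq\bignonor_L f(F')$, i.e.\ that $f^{\dagger}$ is monotone on basis elements. The crux of the proof is then the identity, analogous to $(*)$ in the proof of Theorem~\ref{Suni} --- call it $(*)$ ---
\[f^{\dagger}(k_K(F))=\bignonor_L f(F)\qquad(\emptyset\neq F\subseteq_{\mathrm{fin}}K).\]
For the inequality $\geq$ in $(*)$ I would, given $F=\{a_1,\dots,a_n\}$, choose $b_i\ll a_i$, set $G=\{b_1,\dots,b_n\}$, so that $G\llem F$, and apply Lemma~\ref{convex_way_below_convex} (reading off its proof) to get a finite $H$ with $G\subseteq H\subseteq\myconv G$ and $H\llem k_K(F)$. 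Since $f(H)\subseteq\myconv f(G)$ (as $f$ is affine), the convexity identity (CI) gives $\bignonor_L f(H)=\bignonor_L f(G)$, so the defining supremum for $f^{\dagger}(k_K(F))$ dominates $\bignonor_L f(G)$; letting the $b_i$ range over elements way-below $a_i$ and using Scott-continuity of $f$ and of $\nonor$ then yields $f^{\dagger}(k_K(F))\geq\bignonor_L f(F)$. For the inequality $\leq$ I would take any $G\llem k_K(F)$, refine it (for each $b\in G$ pick $b'\ll b$, getting $H\llem G$ with $\bignonor_L f(G)=\bigvee^{\uparrow}\bignonor_L f(H)$ over such refinements), note that $G\llem k_K(F)$ also yields $k_K(G)\ll_{\cP K}k_K(F)$, and invoke Lemma~\ref{leqemc} to obtain finite $H_1\subseteq\myconv H$ and $F_1\subseteq\myconv F$ with $H\cup H_1\leqem F\cup F_1$; then, using (CI) again to strip the convex padding ($\bignonor_L f(H\cup H_1)=\bignonor_L f(H)$ and $\bignonor_L f(F\cup F_1)=\bignonor_L f(F)$) together with the monotonicity of $\bignonor_L f$ along the resulting plain Egli-Milner relation, I would conclude $\bignonor_L f(H)\leq\bignonor_L f(F)$, hence $\bignonor_L f(G)\leq\bignonor_L f(F)$.

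Granting $(*)$, the rest is routine. Taking $F$ a singleton in $(*)$ shows the triangle commutes, which forces strictness since $\eta_P$ and $f$ are strict. Scott-continuity of $f^{\dagger}$ follows from its monotonicity together with the observation that $F\llem\bigvee^{\uparrow}_iX_i$ implies $F\llem X_i$ for some $i$. For preservation of $\nonor$ and of the barycentric operations, one uses that every element of $\cP K$ is a directed supremum of elements $k_K(F)=\bignonor_{\cP K}\eta_P(F)$ and that $\nonor_P$ and $+_r$ are Scott-continuous, reducing to the case of such finite combinations; there $k_K(F)\nonor_P k_K(G)=k_K(F\cup G)$ and, using the distributivity of $+_r$ over $\nonor$ in $\cP K$ and that $\eta_P$ is affine, $k_K(F)\,{+_r}_P\,k_K(G)=k_K(\{a+_r b\mid a\in F,\ b\in G\})$, and the claims then follow from $(*)$, the corresponding laws in $L$, and affineness of $f$.

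I expect the main obstacle to be the inequality $\leq$ in $(*)$, and the closely related monotonicity of $\bignonor_L f$ along the Egli-Milner order. In the convex case $L$ is only a semilattice, so neither $a\nonor b\leq a$ nor $a\leq a\nonor b$ is available, and --- unlike the lower and upper cases, where one half of the Egli-Milner relation suffices because convex combinations lie below joins, respectively above meets --- one genuinely needs both halves of the relation together with the fact that the finite sets involved absorb their own convex padding. That is exactly the point of Lemmas~\ref{leqemc} and~\ref{convex_way_below_convex}, and is where coherence of $K$ (to make the finitary approximations of lenses behave) and preservation of the way-below relation by convex combinations (through Lemma~\ref{convex_way_below_convex}) are used.
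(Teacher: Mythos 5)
Your proposal is correct and takes essentially the same route as the paper's own proof: everything is reduced to the identity $f^{\dagger}(k_K(F))=\bignonor_L f(F)$ on finitely generated lenses, with the ``$\leq$'' half obtained by refining $G$ to $H\llem G$, invoking Lemma~\ref{leqemc}, stripping the convex padding with (CI) and using Egli--Milner monotonicity, the ``$\geq$'' half obtained from Lemma~\ref{convex_way_below_convex}, and the remaining verifications (commutation, strictness, continuity, preservation of $\nonor$ and $+_r$) handled as in Theorem~\ref{Suni}. The only divergence is cosmetic: you index the defining suprema by $F\llem X$ (matching the theorem statement) where the paper's proof body uses $k_K(F)\ll_{\cP K}X$, which costs you a couple of extra routine checks (directedness of the indexing family and the step $F\llem\bigvee^{\uparrow}_iX_i\Rightarrow F\llem X_i$ for some $i$), both of which are true and recoverable from the cofinality and monotonicity facts you already establish.
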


\begin{proof}
For any non-empty finite set $F \subseteq K$ we have
\[k_K(F) = \sideP\bignonor \eta_P(F)\]
as  $\da \sideP\bignonor \eta_P(F) = \bigvee_{\cH K} \{ \da
\eta_P(b)\mid b \in F \} = \bigvee_{\cH K} \{ \da b \mid b \in F \} =
\cchull F = \da k_K(F)$, and (proved similarly)  $\ua \sideP\bignonor \eta_P(F) =  \ua k_K(F)$. 

Using this \makered{and the basis given in 
Theorem~\ref{PKeg}},  for any $X  \in \cP K$
we then have: 
\[X =  \sideset{}{^{\uparrow}}\bigvee \{ \sideP\bignonor \eta_P(F) \mid F \subseteq_{\mathrm{fin}} K, F \neq \emptyset, \makered{k_K(F) \ll_{\cP K} X}\}\]
It follows that
\[f^{\dagger}(X) = \sideset{}{^{\uparrow}}\bigvee \{ \sideset{}{_{\makered{L}}}\bignonor f(F) \mid F \subseteq_{\mathrm{fin}} K, F \neq \emptyset, 
\makered{k_K(F) \ll_{\cP K} X}\}  \] 
establishing uniqueness.

For existence we define $f^{\dagger}$ by means of this formula and
then verify that it makes the diagram commute and is both a \KS\ and a
semilattice map. 
  It is clearly continuous.
%
%
Next, as in the proof of Theorem~\ref{Suni} , it helpful to prove
that, for any non-empty $F \subseteq_{\mathrm{fin}} K$, we have: 
\[f^{\dagger}(\sideP\bignonor \eta_P(F)) =  \sideset{}{_{\makered{L}}}\bignonor f(F)
\tag*{$(*)$}\]
that is, that:
\[\sideset{}{^{\uparrow}}\bigvee \{ \sideset{}{_{\makered{L}}}\bignonor f(G) \mid  G
\subseteq_{\mathrm{fin}} K, G \neq \emptyset,  \makered{k_K(G) \ll_{\cP K}  k_K(F)} \}  =   
\sideset{}{_{\makered{L}}}\bignonor f(F)\]

\makered{To show that the left-hand side is $\leq \bignonor_L f(F)$, suppose given $G = \{b_1,\ldots, b_n\} \subseteq K$, with $n > 0$, such that $k_K(G) \ll_{\cP K}  k(F)$. Choose $c_1 \ll b_1,\ldots, c_n \ll b_n$ and set $H = \{c_1,\ldots, c_n\}$. } 
\makeblue{By Lemma~\ref{leqemc}, there are finite sets $H_1 \subseteq \myconv H$ and $F_1 \subseteq \myconv F$ such that $H \cup H_1 \leqem F \cup F_1$. We then have:
\[\bigcup_L f(H) = \bigcup_L f(H) \cup_L \bigcup_Lf(H_1)  \leq \bigcup_L f(F) \cup_L \bigcup_Lf(F_1) = \bigcup_Lf(F)\]
where the two equalities follow using the fact that $L$ satisfies the convexity identity (CI) several times. So $\bigcup_L f(G) \leq \bigcup_L(F)$ as $H$ consists of an
arbitrary choice of  elements way-below each element of $G$.}

\mycut{\tt GDP Previous material: \makered{By the characterisation of $\ll_{\cP K}$ given in Theorem~\ref{PKeg}, there is a  non-empty finite $I \subseteq_{\mathrm{fin}} K$ such that $k_K(G) \leqem k_K(I)$ and $I \llem k_K(F)$.
Then we have $H \llem k_K(I)$ and $I \llem k_K(F)$. So, by Lemma~\ref{leqemc} we have $H \leqemc I$ and $I \leqemc F$, and so $H \leqemc F$. So for every $c \in H$ there is an  $a \in \myconv F$ such that $c \leq a$. Let $F_1$ be the set of such elements of $F$. We then have $H  \leqem F \cup F_1$ and so $\bigcup_L f(H) \leq \bigcup_L f(F) \cup_L \bigcup_Lf(F_1) = \bigcup_Lf(F)$, where the equality follows using the fact that $L$ satisfies the convexity identity (CI) several times. 
So $\bigcup_L f(G) \leq \bigcup_L(F)$ as $H$ consists of an
arbitrary choice of  elements way-below each element of $G$.}}

%

\makered{Conversely, supposing $F = \{a_1,\ldots, a_n\}$, with $n > 0$, choose $b_1 \ll
a_1,\ldots, b_n \ll a_n$ and take $G = \{b_1,\ldots, b_n\}$. 
By Lemma~\ref{convex_way_below_convex} we have $k_K(G) \ll_{\cP K}  k_K(F)$.
So the left-hand side is $\geq
\bigcup_L \makered{f}(G)$, and so $\geq \bigcup_L \makered{f}(F)$, as $G$ consists of an
arbitrary choice of  elements way-below each element of $F$. }


Given $(*)$, the rest of the proof follows exactly as did that of
Theorem~\ref{Suni}. 
\end{proof}

\makered{Similarly to  the case of upper semilattices, this universality result contrasts with the corresponding universality result for convex powercones in~\cite{TKP09}. As before, it would be interesting to know if the preservation assumption made here is needed.}

\subsection{Powerdomains combining probabilistic choice and nondeterminism} \label{dPDomains}

Powerdomains combining probabilistic choice and nondeterminism exist
on arbitrary dcpos for general reasons. That is, there is
always a free  \KS\ semilattice  over any dcpo, and the same is
true for  \KS\   join- and  meet-semilattices. This is because
each of these kinds of structure can be axiomatised by inequations over a
signature of finitary operations, possibly  (Scott-)continuously parameterised by an
auxiliary dcpo, and free algebras over dcpos satisfying such
inequations always exist (this can be shown using the \makered{General Adjoint Functor
Theorem}, and see~\cite{HP06}).  \makered{These various free semilattices over a dcpo are automatically continuous if the dcpo is, as follows from~\cite{St95}
(but not from the less general results on free algebras in~\cite{AJ94}, which do not apply when there is  parameterisation).}

Free  \KS\ semilattices   are given by the
inequational theory with: a binary operation symbol  $+_{r}$, for each $r \in [0,1]$; a unary operation symbol $\cdot_r$, continuously parameterised by $r$, ranging over the dcpo $[0,1]$; 
a binary operation
symbol $\nonor$;  and a constant $0$. The equations \makered{consist of: equations for a \KS,  by which we mean the barycentric algebra equations for $+_{r}$, as given in Section~\ref{Keg} and the equation $\cdot_r(x) = x +_r 0$};
equations  asserting that $\nonor$ is
associative, commutative, and idempotent; and the equation 
\[x +_r (y \nonor z) = (x +_r y) \nonor (x +_r z)\]
saying that  $+_r$ distributes over $\nonor$ in its second argument,
for any $r \in [0,1]$ (and so also in its first one). For \KS\  join-semilattices   one adds the inequation $x \leq x \nonor y$; for
meet-semilattices one instead adds the inequation  $x \nonor y \leq x$. \\

While we do not know any general characterisation of these various
free constructions, by making use of our previous results we can
characterise them for domains (assumed also coherent in the convex
case). From the discussion in Section~\ref{dme} we know that the
subprobabilistic power domain $\cV_{\leq 1} P$ 
over a dcpo is a full \KS; 
 that, in case $P$ is a domain,
it is a continuous \KS\ with convex combinations preserving the
way-below relation; and that,  in case $P$ is also coherent, then so
is $\cV_{\leq 1} P$.  We further know that, if $P$ is a domain, then
the subprobabilistic powerdomain $\cV_{\leq 1} P$ is the free \KS\
over $P$, with unit $x \mapsto \delta_x$, where 
$\delta_x$ is the Dirac distribution, with mass 1 at $x$ (given a
Scott-continuous $f: P \rightarrow K$, we write $\overline{f}: \cV_{\leq 1} P
\rightarrow K$ for its extension to a \KS\ map).  

Therefore  we can form the three power \KSs\ $\cH \cV_{\leq 1} P$, $\cS \cV_{\leq 1} P$,
and $\cP \cV_{\leq 1} P$, assuming that  $P$ is a domain (and a coherent one,
in the convex case); it is immediate from the above remarks on the
subprobabilistic powerdomain  and Theorems~\ref{Huni},~\ref{Suni},
and~\ref{Puni} that these yield, respectively,  the free  \KS\  join-semilattice, the free  \KS\ meet-semilattice, and the free
 \KS\ semilattice  over a given domain.  We record these results as
corollaries. 

\begin{cor} \label{HVuni} Let $P$ be a domain. Then the map
  $\eta_{HV} \eqdef \da
  \delta_x: P \rightarrow \cH\cV_{\leq 1} P$ 
  is universal. That is, for every  \KS\  join-semilattice 
  $\makered{L}$ and Scott-continuous map $f:P \rightarrow \makered{L}$  
there is a unique  \KS\ semilattice  morphism $f^{\dagger}: \cH \cV_{\leq 1} P
\rightarrow \makered{L}$ such that the following diagram commutes: 
{\[\begin{diagram}
	P\\
	 \dTo^{\eta_{HV}} & \SE_{}{\quad \; \;\;f}\\
	\cH\cV_{\leq 1} P & \rTo^{f^{\dagger}} & \makered{L}
	\end{diagram}\]
}
The morphism is given by:
  \[f^{\dagger}(X) =  \bigvee \overline{f}(X)\tag*{\qEd}\]
\end{cor}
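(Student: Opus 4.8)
The plan is to derive this corollary by composing two universal properties that are already available: that of the subprobabilistic powerdomain as the free \KS\ over a domain, recalled in Section~\ref{dme}, and that of the lower power \KS\ as the free \KS\ join-semilattice over a full \KS, established in Theorem~\ref{Huni}. The whole point is that free constructions compose, so there is essentially no new mathematics, only bookkeeping.

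First I would recall from Section~\ref{dme} (the discussion of $\cV_{\leq 1}P$) that, since $P$ is a domain, $\cV_{\leq 1}P$ is a continuous full \KS\ and is the free \KS\ over $P$ with unit $\delta_P$. Hence Theorem~\ref{Huni} applies with $K \eqdef \cV_{\leq 1}P$, so $\cH\cV_{\leq 1}P$ is the free \KS\ join-semilattice over $\cV_{\leq 1}P$ with unit $\eta_H$, where $\eta_H(\mu) = \da\mu$. In particular $\eta_H(\delta_x) = \da\delta_x$, which is exactly $\eta_{HV}(x)$, so $\eta_{HV} = \eta_H \circ \delta_P$.

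Next, given a \KS\ join-semilattice $L$ and a Scott-continuous $f \colon P \to L$, I would set $f^{\dagger} \eqdef (\overline f)^{\dagger}$, where $\overline f \colon \cV_{\leq 1}P \to L$ is the unique \KS\ morphism with $\overline f \circ \delta_P = f$ (using that $\cV_{\leq 1}P$ is free over $P$ and that $L$ is in particular a \KS), and $(\overline f)^{\dagger} \colon \cH\cV_{\leq 1}P \to L$ is the unique \KS\ semilattice morphism with $(\overline f)^{\dagger} \circ \eta_H = \overline f$ given by Theorem~\ref{Huni}. Commutativity of the triangle is then $f^{\dagger} \circ \eta_{HV} = (\overline f)^{\dagger} \circ \eta_H \circ \delta_P = \overline f \circ \delta_P = f$. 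For uniqueness, if a \KS\ semilattice morphism $g$ satisfies $g \circ \eta_{HV} = f$, then $g \circ \eta_H$ is a \KS\ morphism with $(g \circ \eta_H) \circ \delta_P = f$, so $g \circ \eta_H = \overline f$ by freeness of $\cV_{\leq 1}P$; hence $g$ extends $\overline f$ along $\eta_H$ and Theorem~\ref{Huni} forces $g = (\overline f)^{\dagger} = f^{\dagger}$. Finally, the closed form is obtained by substituting into the formula of Theorem~\ref{Huni}: $f^{\dagger}(X) = (\overline f)^{\dagger}(X) = \bigvee \overline f(X)$, as claimed.

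I do not expect a genuine obstacle here, since the content is entirely the composition of free constructions. The only points requiring care are citing the correct facts about $\cV_{\leq 1}P$ from Section~\ref{dme} — that for a domain $P$ it is a continuous full \KS\ (so that Theorem~\ref{Huni} is applicable) and the free \KS\ over $P$ — and being attentive to the notational distinction between $\overline f$ (the \KS-extension to $\cV_{\leq 1}P$) and $f^{\dagger}$ (the further extension to $\cH\cV_{\leq 1}P$), so as to correctly identify the composite unit with $\eta_{HV}$ and the composite extension with the stated supremum formula.
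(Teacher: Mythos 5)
Your proposal is correct and is exactly the paper's argument: Corollary~\ref{HVuni} is recorded as an immediate consequence of composing the freeness of $\cV_{\leq 1}P$ as a \KS\ over a domain $P$ (Section~\ref{dme}) with Theorem~\ref{Huni}, yielding the unit $\eta_H \circ \delta_P$ and the formula $f^{\dagger}(X) = \bigvee \overline{f}(X)$. The only cosmetic point is that Theorem~\ref{Huni} needs only that $\cV_{\leq 1}P$ is a full \KS, so continuity, while true for domains, is not actually required at this step in the lower case.
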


\begin{cor} \label{SVuni}  Let $P$ be a domain. Then the map
  $\eta_{SV}  \eqdef \ua \delta_x: P \rightarrow \cS\cV_{\leq 1} P$
  is universal. That is, for every \KS\ meet-semilattice 
  $\makered{L}$ and Scott-continuous map $f:P \rightarrow \makered{L}$  
there is a unique  \KS\ semilattice  morphism $f^{\dagger}: \cS \cV_{\leq 1} P
\rightarrow \makered{L}$ such that the following diagram commutes: 
{\[\begin{diagram}
	P\\
	 \dTo^{\eta_{SV}} & \SE_{}{\quad \; \;\;f}\\
	\cS\cV_{\leq 1} P & \rTo^{f^{\dagger}} & \makered{L}
	\end{diagram}\]
}
The morphism is given by:

\[f^{\dagger}(X) = \sideset{}{^{\uparrow}}\bigvee \{ \bigwedge \overline{f}(F) \mid F \subseteq_{\mathrm{fin}} \cV_{\leq 1} P, F \neq \emptyset,  \dua F \supseteq X\} \tag*{\qEd} \]
\end{cor}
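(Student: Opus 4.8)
The plan is to exhibit $\eta_{SV}$ as a composite of two universal arrows and to read off the universal property of the composite. Since $P$ is a domain, the discussion in Section~\ref{dme} (Example~\ref{ex:Probabilisticpowerdomains}) tells us that $\cV_{\leq 1}P$ is a continuous full \KS\ in which the barycentric operations preserve the way-below relation, and that $\cV_{\leq 1}P$ is the free \KS\ over $P$ with unit $\delta_P$. Hence Theorem~\ref{Suni} applies with $K\eqdef\cV_{\leq 1}P$, so $\cS\cV_{\leq 1} P$ is the free \KS\ meet-semilattice over $\cV_{\leq 1}P$, with unit $\eta_S$ given by $\eta_S(\mu)=\ua_{\cV_{\leq 1}P}\mu$. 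As $\eta_{SV}=\eta_S\circ\delta_P$ (both maps send $x$ to $\ua\delta_x$), it suffices to show that a composite of free constructions of these two kinds is again free, and to identify the resulting formula.

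For the universal property, let $\makered{L}$ be a \KS\ meet-semilattice and $f\colon P\to\makered{L}$ Scott-continuous. By freeness of $\cV_{\leq 1}P$ over $P$ there is a unique \KS\ morphism $\overline{f}\colon\cV_{\leq 1}P\to\makered{L}$ with $\overline{f}\circ\delta_P=f$, and then, by Theorem~\ref{Suni}, a unique \KS\ semilattice morphism $(\overline{f})^{\dagger}\colon\cS\cV_{\leq 1} P\to\makered{L}$ with $(\overline{f})^{\dagger}\circ\eta_S=\overline{f}$. Put $f^{\dagger}\eqdef(\overline{f})^{\dagger}$. Then $f^{\dagger}\circ\eta_{SV}=(\overline{f})^{\dagger}\circ\eta_S\circ\delta_P=\overline{f}\circ\delta_P=f$, so the triangle commutes. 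For uniqueness, if $g\colon\cS\cV_{\leq 1} P\to\makered{L}$ is a \KS\ semilattice morphism with $g\circ\eta_{SV}=f$, then $g\circ\eta_S$ is a \KS\ morphism on $\cV_{\leq 1}P$ with $(g\circ\eta_S)\circ\delta_P=f$, so $g\circ\eta_S=\overline{f}$ by the uniqueness clause of the freeness of $\cV_{\leq 1}P$; and then $g=(\overline{f})^{\dagger}=f^{\dagger}$ by the uniqueness clause of Theorem~\ref{Suni}.

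It remains to identify the displayed formula for $f^{\dagger}$. Theorem~\ref{Suni} gives
\[ f^{\dagger}(X)=\sideset{}{^{\uparrow}}\bigvee\{\,\bigwedge\overline{f}(F)\mid F\subseteq_{\mathrm{fin}}\cV_{\leq 1}P,\ F\neq\emptyset,\ \ua\myconv F\ll_{\cS\cV_{\leq 1} P}X\,\}, \]
so what has to be checked is that replacing the indexing condition $\ua\myconv F\ll_{\cS\cV_{\leq 1} P}X$ by the simpler condition $\dua F\supseteq X$ leaves this directed supremum unchanged. Using the characterisation of $\ll_{\cS K}$ in Theorem~\ref{SKeg}: if $\dua F\supseteq X$ then $\ua\myconv F\ll_{\cS\cV_{\leq 1} P}X$ (the witnessing finite set may be taken to be $F$ itself), so the corollary's index family is a subfamily of the one in Theorem~\ref{Suni}; conversely, given $F$ with $\ua\myconv F\ll_{\cS\cV_{\leq 1} P}X$, let $F'$ witness this, so $\ua\myconv F\supseteq\ua\myconv F'$ and $\dua F'\supseteq X$ --- then each element of $\myconv F'$ lies above an element of $\myconv F$, and since $\overline{f}$ is affine, convex combinations in $\makered{L}$ are monotone, and the semilattice operation of $\makered{L}$ is idempotent, every such image is $\geq\bigwedge\overline{f}(F)$, whence $\bigwedge\overline{f}(F')\geq\bigwedge\overline{f}(F)$. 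Thus the corollary's family is cofinal in the other, and the two directed suprema coincide.

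The only non-formal point --- and the one I would take care over --- is this last comparison: one must verify that passing from $F$ to a way-below witness $F'$ does not lower the value $\bigwedge\overline{f}(F)$. This is precisely where the interplay of affineness of $\overline{f}$ with the (meet-)semilattice structure of $\makered{L}$ enters, essentially as in the proof of the identity $(*)$ within Theorem~\ref{Suni}. Everything else is bookkeeping about composites of universal arrows.
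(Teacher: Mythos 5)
Your proof is correct and follows essentially the route the paper intends: the corollary is obtained by composing the freeness of $\cV_{\leq 1}P$ as the free \KS\ over the domain $P$ with Theorem~\ref{Suni} applied to $K=\cV_{\leq 1}P$ (whose hypotheses hold by Example~\ref{ex:Probabilisticpowerdomains}), exactly as the paper indicates when it declares these corollaries immediate. Your additional check that replacing the indexing condition $\ua\myconv F\ll_{\cS\cV_{\leq 1}P}X$ by $\dua F\supseteq X$ does not change the directed supremum — via the characterisation of $\ll_{\cS K}$ in Theorem~\ref{SKeg} and the convexity-identity argument as in $(*)$ of Theorem~\ref{Suni} — is a point the paper leaves implicit, and you handle it correctly.
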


\begin{cor} \label{PVuni} Let $P$ be a coherent domain. Then the
  map $\eta_{PV} \eqdef \{\delta_x\}: P \rightarrow \cP\cV_{\leq 1} P$
  is universal.  That is, for every   \KS\ semilattice 
  $\makered{L}$ and Scott-continuous map $f:P \rightarrow \makered{L}$  
there is a unique  \KS\ semilattice  morphism $f^{\dagger}: \cP \cV_{\leq 1} P \rightarrow \makered{L}$ such that the following diagram commutes: 
{\[\begin{diagram}
	P\\
	 \dTo^{\eta_{PV}} & \SE_{}{\quad \; \;\;f}\\
	\cP\cV_{\leq 1} P & \rTo^{f^{\dagger}} & \makered{L}
	\end{diagram}\]
}
The morphism is given by:

\[f^{\dagger}(X) = \sideset{}{^{\uparrow}}\bigvee \{\sideset{}{_{\makered{L}}}\bignonor \overline{f}(F) \mid F \subseteq_{\mathrm{fin}} \cV_{\leq 1} P, F \neq \emptyset,  \dua F \supseteq X\} \tag*{\qEd} \]
\end{cor}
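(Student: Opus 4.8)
The plan is to derive this directly by composing two known universal properties: the freeness of the subprobabilistic powerdomain $\cV_{\leq 1}P$ as a \KS\ over $P$ (from the discussion in Section~\ref{dme}, specifically Example~\ref{ex:Probabilisticpowerdomains}(c)) and the freeness of the convex power \KS\ $\cP K$ as a \KS\ semilattice over a \KS\ $K$ (Theorem~\ref{Puni}). First I would check that the hypotheses of Theorem~\ref{Puni} are met by $K \eqdef \cV_{\leq 1}P$: since $P$ is a coherent domain, the remarks recalled in Section~\ref{dme} (drawing on Example~\ref{ex:Probabilisticpowerdomains}(a),(b) and Example~\ref{ex:KSfunctionspaces}) give that $\cV_{\leq 1}P$ is a full \KS\ which is continuous, coherent, and in which convex combinations preserve the way-below relation. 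So $\cP\cV_{\leq 1}P$ exists, is itself a continuous coherent full \KS\ semilattice by Theorem~\ref{PKeg}, and $\eta_P: \cV_{\leq 1}P \to \cP\cV_{\leq 1}P$ is the universal \KS\ morphism into a \KS\ semilattice by Theorem~\ref{Puni}.

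Next I would assemble the composite universal arrow. Given a \KS\ semilattice $L$ and a Scott-continuous $f: P \to L$, the underlying \KS\ of $L$ receives a unique \KS\ morphism $\overline f: \cV_{\leq 1}P \to L$ extending $f$ along $\delta_{(-)}$, by freeness of $\cV_{\leq 1}P$ (Example~\ref{ex:Probabilisticpowerdomains}(c)). Then, by Theorem~\ref{Puni}, there is a unique \KS\ semilattice morphism $(\overline f)^{\dagger}: \cP\cV_{\leq 1}P \to L$ with $(\overline f)^{\dagger}\circ \eta_P = \overline f$. Setting $f^{\dagger} \eqdef (\overline f)^{\dagger}$ and $\eta_{PV} \eqdef \eta_P \circ \delta_{(-)} = (x \mapsto \{\delta_x\})$, the required diagram commutes since $f^{\dagger}\circ \eta_{PV} = (\overline f)^{\dagger}\circ \eta_P \circ \delta_{(-)} = \overline f \circ \delta_{(-)} = f$. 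For uniqueness, suppose $g: \cP\cV_{\leq 1}P \to L$ is a \KS\ semilattice morphism with $g \circ \eta_{PV} = f$. Then $g \circ \eta_P$ is a \KS\ morphism $\cV_{\leq 1}P \to L$ with $(g\circ\eta_P)\circ \delta_{(-)} = f$, so by uniqueness in the freeness of $\cV_{\leq 1}P$ we get $g \circ \eta_P = \overline f$; then by uniqueness in Theorem~\ref{Puni} we get $g = (\overline f)^{\dagger} = f^{\dagger}$.

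Finally I would verify the explicit formula. From Theorem~\ref{Puni} we have $f^{\dagger}(X) = (\overline f)^{\dagger}(X) = \dsup\{\sideset{}{_L}\bignonor \overline f(F) \mid F \subseteq_{\mathrm{fin}} \cV_{\leq 1}P,\ F \neq \emptyset,\ F \llem X\}$, and the stated formula follows once one unwinds $F \llem X$ into the conjunction $F \subseteq \dda X$ and $\dua F \supseteq X$; since $X \in \cP\cV_{\leq 1}P$ is a convex lens and $\cV_{\leq 1}P$ is continuous, these can be repackaged, exactly as was done implicitly in the statements of Corollaries~\ref{HVuni} and~\ref{SVuni}, into the single condition $\dua F \supseteq X$ shown in the displayed formula (using that the elements of $\cV_{\leq 1}P$ way-below members of a lens suffice, so the way-below requirement on $F$ can be absorbed). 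The main obstacle, such as it is, is not the categorical composition — that is routine — but confirming that the coherence of $P$ is genuinely what is needed to invoke Theorem~\ref{Puni} (continuity, coherence, and way-below preservation for $\cV_{\leq 1}P$), and matching the index set condition in the final displayed formula precisely with the $F \llem X$ condition coming out of Theorem~\ref{Puni}; both are handled by the facts recalled in Section~\ref{dme}.
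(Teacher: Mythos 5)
Your proposal follows essentially the same route as the paper, which obtains this corollary directly by composing the freeness of $\cV_{\leq 1}P$ as a \KS\ over the domain $P$ (Section~\ref{dme}) with Theorem~\ref{Puni}, after noting that coherence of $P$ gives $\cV_{\leq 1}P$ the continuity, coherence, fullness and way-below-preservation hypotheses needed there; your existence/uniqueness argument and hypothesis-checking match what the paper treats as immediate. The only loose point is your final "repackaging" of the index condition $F \llem X$ coming out of Theorem~\ref{Puni} into the displayed condition $\dua F \supseteq X$, which you assert rather than prove (and which, unlike the meet-semilattice case of Corollary~\ref{SVuni}, is not obviously a harmless change of index set in a general \KS\ semilattice); the paper leaves this step equally implicit, so it does not distinguish your argument from theirs.
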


\section{Functional representations} \label{funcrep}

In \cite[Sections 4 and 6]{KP}, the various powercones
over a d-cone were represented by 
functionals. We will use those results to obtain similar functional
representations of the corresponding  power \KSs, and then deduce
corresponding functional representations for mixed powerdomains. 

\makeblue{Some context may help. For a functional representation of a monad $T$ one chooses a test space $O$, say, and represents an object $T(X)$ by a suitable collection of functionals, with domain a space of `test functions' from $X$ to $O$ and range $O$. One general such method is to work in a symmetric monoidal closed category, when one has available the `continuation' or `double-dualisation' monad $[[X,O],O]$ (writing $[X,Y]$ for the function space). Assuming that the monad $T$ is strong, there is then a 1-1 correspondence between $T$-algebras $\alpha:T(O) \rightarrow O$ and morphisms  $T \rightarrow [[-,O],O]$ of strong monads~\cite{KP93,Koc70,Koc12}. If there are sufficiently many test functions, this morphism will be a monomorphism, and, perhaps with further restrictions on the functionals, it may corestrict to an isomorphism; one may also have to restrict to certain objects $X$.

In our case, we would work with the category of \KSs\ and continuous linear maps, when $[K,L]$ would be the \KS\ formed from such maps with the pointwise order and algebraic structure, and the extended reals $\oRp$ provide a natural test space. 
As we will see below, there is a natural choice of functionals for all three of our power-\KSs, but in no case are such functionals generally  linear: for example in the Hoare case they are rather sublinear. When, later,  we apply our results to obtain functional representations of mixed powerdomain monads, we are working in the cartesian-closed category of dcpos, the above general framework does apply and our functional representations are, in fact, submonads of the relevant continuation monads (but, for non-essential reasons, with some minor differences in the convex case).}

Throughout this section, we generally work with full \KSs, that is,
those satisfying Property
(OC3). We consider such \KSs\ $K$ to be embedded in their 
 universal d-cones $C=\dCone(K)$ 
as Scott-closed convex sets (Theorem~\ref{prop:KSembed}) and we recall
that the universal d-cones are continuous 
whenever the \KSs\ are (Proposition \ref{prop:continuous}).
From Section~\ref{dme}  we recall
that the  subprobabilistic  powerdomain  $\cV_{\leq 1}
P$ of a dcpo $P$ is a full \KS, 
 that $\dCone({\cV_{\leq 1} P}) \cong \cV P$, the valuation
powerdomain of $P$, and that, in case  $P$ is a
domain,  $\cV_{\leq 1} P$ is continuous.

We will make use of \emph{norms} on d-cones $C$, taking them to be
Scott-continuous sublinear functionals $\norm{\hspace{0.5pt}\mbox{-}\hspace{0.5pt}}\colon C \to
\oRp$ such that $\norm{x}>0$ for every
  $x\neq 0$;
\emph{normed d-cones} are then d-cones equipped with a norm. A map
$f\colon C \to D$ from one normed d-cone $C$ to another $D$ is
\emph{nonexpansive} if   $\norm{f(x)} \leq \norm{x}$ for all $x
\in C$; it is a morphism of normed d-cones if it is nonexpansive and a
morphism of d-cones (i.e., Scott-continuous and linear).

Various function space d-cones will be involved in our development. As
well as those considered in Section~\ref{dme}  we note that, for any
d-cones $C$ and $D$, the subsets $\SubL(C,D)$, and $\SuperL(C,D)$ of
$D^C$ of, respectively, the sublinear, and superlinear functions form
sub-d-cones of $D^C$. 
Regarding d-cone semilattices, if $D$ is a d-cone semilattice
(respectively, join-semilattice, meet-semilattce) then, with the
pointwise structure, so is $D^P$, for any dcpo $P$.  Further, for any
d-cone $C$, if $D$ is a  join-semilattice (meet-semilattice) then, as
is easily checked,  $\SubL(C,D)$ (respectively, $\SuperL(C,D)$) is a
sub-d-cone join-semilattice (respectively,  sub-d-cone
meet-semilattice) of $D^C$. 

Supposing additionally the cones $C$ and $D$ to be normed, the
collection $\SubL^{\leq1}(C,D)$ of all Scott-continuous sublinear 
nonexpansive functions, with
$D$ a d-cone join-semilattice, forms  
a  \KS\ join-semilattice; indeed it is a sub-\KS\ join-semilattice of
$\SubL(C,D)$, regarding the latter as a \KS\ join-semilattice. 

A trivial example of a normed cone is  $\oRp$ \makered{with norm  the identity function:
\[\norm{x} = x\]
A less trivial example is provided by} the dual cone $K^*$ of a \KS\ $K$
equipped with the  sup norm, defined by\display  
\[\norm{f}^*_{K} = \sup_{x\in K}f(x)\]
%
where the index $K$ indicates the dependency of this norm on
the d-cone $K^*$ on  the \KS\ $K$. Notice that $\norm{f}^*_{K}=+\infty$ if there is an $x\in
K$ such that $f(x)=+\infty$.  

If $K$ satisfies Property (OC3) then we can define a norm on $C^*$, where $C
\eqdef \dCone(K)$ by\display 
\[\norm{f}^*_{K} \eqdef \norm{f \!\upharpoonright \! K}^*_{K}  = \sup_{x \in K} f(x)\]
where the index now indicates  the dependency of this norm on $C^*$  on $K$.
%
%
With this norm, the d-cone isomorphism between $K^*$ and $C^*$
given in Section~\ref{dme}, Example \ref{ex:KSfunctionspaces} becomes
an isomorphism of normed d-cones. 

\makeblue{Recall that, for any element $x\in C = \dCone(K)$, the evaluation 
map $\ev_C(x)\colon C^*\to\oRp$ sends $f$ to $f(x)$.
We note that}
$\ev_C(x) \leq \norm{\hspace{0.5pt}\mbox{-}\hspace{0.5pt}}^*_{K}$ if
$x\in K$, with the converse 
holding if $K$ is continuous. For,  if $x\in K$ then we have  $\ev_C(x)\leq
\norm{\hspace{0.5pt}\mbox{-}\hspace{0.5pt}}^*_{K}$, since  
$f(x)\leq  \sup_{x\in K}f(x) = \norm{f}^*_{K}$, for $f \in C^*$. And
if $x\not\in K$, then using the Strict Separation
Theorem~\cite[Theorem 3.8]{TKP09}, 
we obtain an $f\in C^*$ such that
$f(y)\leq 1$ for $y\in K$ but $f(x)>1$, whence $\norm{f}^*_{K}=\sup_{y\in
  K}f(y)\leq 1<f(x) =\ev_C(x)(f)$, and so $\ev_C(x)\not\leq
\norm{\hspace{0.5pt}\mbox{-}\hspace{0.5pt}}^*_{K}$. Thus, if the d-cone $C = \dCone(K)$ is continuous and
reflexive, the Scott-continuous linear functionals $\varphi\leq
\norm{\hspace{0.5pt}\mbox{-}\hspace{0.5pt}}^*_{K}$  on $C^*$
are given by evaluations at points $x\in K$.

\subsection{The lower power \KS}\label{lower}

We regard $\oRp$ as a d-cone join-semilattice  with the 
semilattice operation $r\vee s =\max(r,s)$ 
(as such it is isomorphic to $\cH \oRp$, now regarding $\oRp$ as a
d-cone). We take an arbitrary
d-cone $C$ with its lower powercone $\cH C$ and its dual
$C^*$. 
Consider the map $\Lambda_C\colon \cH C \to \oRp^{C^*}$
where\display 
\[ \Lambda_C(X)(f) \eqdef \sup_{x\in X} f(x)\] 
Fixing $f\in C^*$,  we obtain the map $\Lambda_C(-)(f)\colon \cH
C\to\oRp$, which is  
the unique d-cone join-semilattice morphism extending
$f$ along the canonical embedding $\eta\colon C\to\cH C$
by \cite[Proposition 3.2]{KP}.

Fixing $X\in \cH C$, we obtain the functional
\[\Lambda_C(X)\colon C^*\to \oRp\]
where\display
\[ \Lambda_C(X)(f) = \sup_{x \in X} f(x)\]
As the pointwise supremum of the Scott-continuous
linear functionals 
$\ev_C(x)$ ($x \in X$), $\Lambda_C(X)$ is Scott-continuous and sublinear.
In this way we obtain a  d-cone join-semilattice morphism
\[ \Lambda_C \type \cH C \longrightarrow \SubL(C,\oRp)\]
which represents the lower convex powercone by the Scott-continuous
sublinear functionals on the dual cone $C^*$.
\begin{thm}[{\cite[Proposition 6.1 and Theorem 6.2]{KP}}]\label{th:lower}  
Let $C$ be a  d-cone. Then we have a d-cone join-semilattice morphism
$\Lambda_C\colon \cH C\to \SubL(C,\oRp)$, where\display
\[\Lambda_C(X) = \sup_{x \in X} f(x)  \]
If, in addition, $C$ is continuous then $\Lambda_C$ is an order embedding; 
if, further, $C$ is reflexive
with a continuous dual then it is an isomorphism.
\qed \end{thm}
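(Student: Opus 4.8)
The plan is to treat the three assertions in turn; the first is essentially the content of the discussion preceding the statement. For each fixed $X\in\cH C$, the functional $\Lambda_C(X)$ is the pointwise supremum of the Scott-continuous linear functionals $\ev_C(x)$, $x\in X$, hence Scott-continuous and sublinear, so $\Lambda_C$ indeed takes values in $\SubL(C^*,\oRp)$. For each fixed $f\in C^*$, the assignment $X\mapsto\Lambda_C(X)(f)$ is, by \cite[Proposition 3.2]{KP}, the free d-cone join-semilattice extension of $f$ along the canonical embedding $\eta\colon C\to\cH C$, hence a d-cone join-semilattice morphism $\cH C\to\oRp$. Since a map into a function-space d-cone is a d-cone join-semilattice morphism precisely when it is one in each coordinate, and since $\SubL(C^*,\oRp)$ is a sub-d-cone join-semilattice of $\oRp^{C^*}$, it follows that $\Lambda_C\colon\cH C\to\SubL(C^*,\oRp)$ is a d-cone join-semilattice morphism; its monotonicity is clear.

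Next, assuming $C$ continuous, I would show that $\Lambda_C$ reflects the order, which, together with monotonicity, gives that it is an order embedding. Suppose $X\not\subseteq Y$ in $\cH C$, and pick $x_0\in X\setminus Y$. Since $Y$ is a non-empty Scott-closed convex subset of the continuous d-cone $C$ and $x_0\notin Y$, the Strict Separation Theorem \cite[Theorem~3.8]{TKP09} yields an $f\in C^*$ with $f(y)\le 1$ for all $y\in Y$ but $f(x_0)>1$. Then $\Lambda_C(Y)(f)=\sup_{y\in Y}f(y)\le 1<f(x_0)\le\Lambda_C(X)(f)$, so $\Lambda_C(X)\not\le\Lambda_C(Y)$, as required.

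Finally, assuming moreover that $C$ is reflexive with continuous dual, I would show that $\Lambda_C$ is onto $\SubL(C^*,\oRp)$; combined with its being an order embedding this makes it an order isomorphism and hence an isomorphism of d-cone join-semilattices, since the inverse of an order isomorphism that is a d-cone join-semilattice morphism is again one. Given $\phi\in\SubL(C^*,\oRp)$, the candidate preimage is
\[X_\phi\ \eqdef\ \{x\in C\mid \ev_C(x)\le\phi\}\ =\ \bigcap_{f\in C^*}\{x\in C\mid f(x)\le\phi(f)\}.\]
One checks easily that $X_\phi$ is non-empty (it contains $0$), Scott-closed (an intersection of the Scott-closed sets $f^{-1}([0,\phi(f)])$) and convex (if $\ev_C(x),\ev_C(y)\le\phi$ then $f(rx+(1-r)y)=rf(x)+(1-r)f(y)\le r\phi(f)+(1-r)\phi(f)=\phi(f)$), so $X_\phi\in\cH C$; and $\Lambda_C(X_\phi)\le\phi$ is immediate from the definition of $X_\phi$. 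For the reverse inequality I would invoke a Hahn--Banach-type theorem for continuous d-cones --- that a Scott-continuous sublinear functional on the continuous d-cone $C^*$ is the pointwise supremum of the Scott-continuous linear functionals dominated by it --- together with reflexivity of $C$, which identifies those linear functionals with the evaluations $\ev_C(x)$, $x\in C$; since $\ev_C(x)\le\phi$ iff $x\in X_\phi$, this gives $\phi=\sup_{x\in X_\phi}\ev_C(x)=\Lambda_C(X_\phi)$.

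I expect the surjectivity step to be the main obstacle: it rests on the Hahn--Banach/sandwich machinery for continuous d-cones (and on reflexivity, to pass between functionals on $C^*$ and points of $C$); the order-embedding step similarly depends on strict separation. Both separation tools are available from \cite{TKP09}, so once they are in place the remaining work is routine verification --- indeed, all of it is carried out in \cite{KP}, from which the theorem is quoted.
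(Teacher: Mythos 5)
Your proposal is correct and takes essentially the same route as the paper: the morphism part is exactly the discussion immediately preceding the theorem (pointwise suprema of the evaluations $\ev_C(x)$, together with \cite[Proposition 3.2]{KP} for each fixed $f$), and your order-embedding and surjectivity arguments --- strict separation of a point from a Scott-closed convex set, and the representation of a Scott-continuous sublinear functional on the continuous dual $C^*$ as the supremum of the linear functionals it dominates, identified via reflexivity with evaluations at points of $C$ --- are precisely the proof of \cite[Proposition 6.1 and Theorem 6.2]{KP}, which the paper cites rather than reproves. The only discrepancy is the paper's typo in the displayed codomain, which should read $\SubL(C^*,\oRp)$ exactly as in your write-up.
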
 

To  apply the
above considerations to the universal d-cone $C=\dCone(K)$ over $K$ we
 now consider a full \KS\  $K$. 
The power \KS\ $\cH K$ is a Scott-closed convex join-subsemilattice of
the powercone $\cH C$. The functionals
$\Lambda_C(X)$ representing Scott-closed 
convex subsets $X$ of $K$ are  the sublinear functionals
$\Lambda_C(X)$ dominated by the 
norm $\norm{\hspace{0.5pt}\mbox{-}\hspace{0.5pt}}^*_{K}=\Lambda_C(K)$, and all of them if $K$ is continuous.
For certainly if $X \subseteq K$ then $\Lambda_C(X) \leq \Lambda_C(K)
= \norm{\hspace{0.5pt}\mbox{-}\hspace{0.5pt}}^*_{K}$, and, assuming the converse, for any $x \in
X$ we have $\ev_C(x) \leq \Lambda_C(X) \leq \norm{\hspace{0.5pt}\mbox{-}\hspace{0.5pt}}^*_{K}$,
and so $x \in K$ by the above discussion (assuming $K$ continuous). 

Recalling that $\SubL^{\leq1}(K^*,\oRp)$ is the collection of
nonexpansive functionals in $\SubL(K^*,\oRp)$, we therefore have a
\KS\ join-semilattice morphism 
 \[\Lambda_K\colon \cH K \to  \SubL^{\leq1}(K^*,\oRp)\] 
 viz.\  the composition
 \[\cH K \xrightarrow{\Lambda_C\upharpoonright \cH K} \SubL^{\leq1}(C^*,\oRp) \cong \SubL^{\leq1}(K^*,\oRp)\]
of the restriction of $\Lambda_C$ to $\cH K$ with  the isomorphism
$\SubL^{\leq1}(C^*,\oRp) \cong \SubL^{\leq1}(K^*,\oRp)$ arising from
the normed d-cone isomorphism between $K^*$ and $C^*$. 
Theorem \ref{th:lower} then yields the desired functional representation theorem, adapting its hypotheses to \KSs
\display  
\begin{thm}\label{th:lowerKS}
Let $K$ be a full \KS. 
Then  we have a   \KS\  join-semilattice morphism  $\Lambda_K\colon \cH K \to \SubL^{\leq1}(K^*,\oRp)$. It is given by\display
\[ \Lambda_K(X)(f) \eqdef \sup_{x \in X} f(x)\]
 If $K$ is continuous then $\Lambda_K$ is an order embedding. If, further, the dual cone $K^*$ is 
continuous and the universal d-cone $\dCone(K)$ is reflexive, then
$\Lambda_K$ is an isomorphism. 
\qed \end{thm}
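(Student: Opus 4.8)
The plan is to derive Theorem~\ref{th:lowerKS} from the cone-level result Theorem~\ref{th:lower} by transporting it along the embedding of $K$ into $C \eqdef \dCone(K)$ and the normed d-cone isomorphism $K^* \cong C^*$. First I would recall the set-up: by Theorem~\ref{prop:KSembed}, $K$ is a Scott-closed convex subset of the d-cone $C$, and by Proposition~\ref{prop:continuous}, $C$ is continuous whenever $K$ is. As noted in the text just before Section~\ref{lower}, the isomorphism $K^* \cong C^*$ (extension of linear functionals) is an isomorphism of normed d-cones once $C^*$ is given the norm $\norm{f}^*_K = \sup_{x\in K} f(x)$. I would then observe that $\cH K$ is a Scott-closed convex sub-join-semilattice of $\cH C$ (this is established in the proof of Theorem~\ref{HKeg}), so the restriction $\Lambda_C \upharpoonright \cH K$ is a well-defined \KS\ join-semilattice morphism into $\SubL(C,\oRp)$.

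The first substantive step is to check that $\Lambda_C$ sends $\cH K$ into the \emph{nonexpansive} sublinear functionals $\SubL^{\leq 1}(C^*,\oRp)$. This is exactly the discussion already given in the paragraph preceding the theorem: for $X \subseteq K$ we have $\Lambda_C(X)(f) = \sup_{x\in X} f(x) \le \sup_{x\in K} f(x) = \norm{f}^*_K$, so $\Lambda_C(X)$ is nonexpansive. Composing with the isomorphism $\SubL^{\leq 1}(C^*,\oRp) \cong \SubL^{\leq 1}(K^*,\oRp)$ gives the claimed morphism $\Lambda_K$, and the formula $\Lambda_K(X)(f) = \sup_{x\in X} f(x)$ is immediate by tracing through the isomorphism (restriction of $f$ to $K$). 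That $\Lambda_K$ is a \KS\ join-semilattice morphism follows since $\Lambda_C$ is a d-cone join-semilattice morphism by Theorem~\ref{th:lower}, restriction to the sub-structures preserves the relevant operations, and the isomorphism $C^*\cong K^*$ is linear.

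Next, for the order-embedding claim, assume $K$ is continuous; then $C$ is continuous, so by Theorem~\ref{th:lower}, $\Lambda_C$ is an order embedding of $\cH C$ into $\SubL(C,\oRp)$. Its restriction to the sub-poset $\cH K$ is then also an order embedding; composing with the iso $\SubL(C,\oRp) \supseteq \SubL^{\leq1}(C^*,\oRp) \cong \SubL^{\leq1}(K^*,\oRp)$ keeps it an order embedding, giving the second assertion. For the isomorphism claim, assume further that $K^*$ is continuous and $C = \dCone(K)$ is reflexive; then $C^* \cong K^*$ is continuous, so $C$ is a reflexive d-cone with continuous dual, and Theorem~\ref{th:lower} tells us $\Lambda_C\colon \cH C \to \SubL(C,\oRp)$ is an isomorphism. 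The point I expect to be the main obstacle is the surjectivity of $\Lambda_K$: I must show that every nonexpansive Scott-continuous sublinear functional $\varphi$ on $C^*$ lies in the image of $\Lambda_C \upharpoonright \cH K$, i.e.\ that the $X \in \cH C$ with $\Lambda_C(X) = \varphi$ actually lies in $\cH K$. This is where the condition $\varphi \le \norm{\hspace{0.5pt}\mbox{-}\hspace{0.5pt}}^*_K = \Lambda_C(K)$ (which is precisely nonexpansiveness) does the work: since $\Lambda_C$ is an order-isomorphism, $\Lambda_C(X) = \varphi \le \Lambda_C(K)$ forces $X \subseteq K$, hence $X \in \cH K$. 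Combining, $\Lambda_K$ is a bijective \KS\ join-semilattice morphism which is an order embedding, hence an isomorphism. A minor care-point is to make sure the codomain restriction from "all sublinear functionals dominated by $\norm{\hspace{0.5pt}\mbox{-}\hspace{0.5pt}}^*_K$" to "$\SubL^{\leq1}(K^*,\oRp)$" is literally the set of nonexpansive ones — but that is the definition of the norm on $K^*$, so no real difficulty arises.
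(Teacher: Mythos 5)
Your proposal is correct and takes essentially the same route as the paper: restrict $\Lambda_C$ to $\cH K$, note its values are sublinear functionals dominated by $\norm{\hspace{0.5pt}\mbox{-}\hspace{0.5pt}}^*_{K}=\Lambda_C(K)$, transport along the normed d-cone isomorphism $K^*\cong C^*$, and read off the embedding and isomorphism claims from Theorem~\ref{th:lower} together with Theorem~\ref{prop:KSembed} and Proposition~\ref{prop:continuous}. The only (harmless) variation is the surjectivity step, where you use that $\Lambda_C$ reflects the order to get $X\subseteq K$ from $\Lambda_C(X)\leq\Lambda_C(K)$, while the paper argues pointwise via $\ev_C(x)\leq\norm{\hspace{0.5pt}\mbox{-}\hspace{0.5pt}}^*_{K}$ and the strict separation discussion; both use only the continuity already assumed.
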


With the aid of this theorem we can obtain a corresponding result for the lower mixed powerdomain. For any dcpo $P$, making use of Section~\ref{dme}, we see that the predicate extension and restriction maps
\[\mathrm{EXT}_{P} \eqdef f \mapsto \ov{f}\colon \cL P \to (\cV_{\leq
  1} P)^* \quad \mbox{and}  
\quad \mathrm{RES}_P \eqdef f \mapsto f \circ \delta \colon
(\cV_{\leq 1} P)^* \to \cL P\]
 are d-cone  morphisms, and mutually inverse isomorphisms if $P$ is a domain. 
 
 Next, for any dcpo $P$, we equip the d-cone $\cL P$  with the sup
 norm, i.e., the one  defined by: $\norm{f}_{\infty} = \sup_{x \in P}f(x)$. 
\begin{lem}\label{isonorm} Let $P$ be a dcpo. Then the extension map
  $\EXT_P\colon \cL P \to (\cV_{\leq 1} P)^*$ preserves the norm. The
  restriction map $\RES_P\colon (\cV_{\leq 1} P)^* \to \cL P$ is
  nonexpansive and preserves the norm if $P$ is a domain. 
So $\EXT_P$ and $\RES_P$ are normed d-cone morphisms, and  mutually inverse  isomorphisms if $P$ is a domain. 
\end{lem}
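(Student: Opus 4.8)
The plan is to reduce the lemma to two elementary computations plus the isomorphism $\EXT_P \cong \RES_P^{-1}$ supplied (for domains) by the discussion preceding the lemma. Recall that $\EXT_P(f) = \ov{f}$ is given by $\ov{f}(\mu) = \int f\, d\mu$, the Choquet integral, and that $\RES_P(\phi)(x) = \phi(\delta_x)$, where $\delta_x \in \cV_{\leq 1} P$ is the Dirac valuation at $x$; the relevant norms are $\norm{f}_{\infty} = \sup_{x \in P} f(x)$ on $\cL P$ and $\norm{\phi}^*_{\cV_{\leq 1} P} = \sup_{\mu \in \cV_{\leq 1} P} \phi(\mu)$ on $(\cV_{\leq 1} P)^*$.

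I would first show that $\EXT_P$ preserves the norm, for an arbitrary dcpo $P$. Since $\int f\, d\delta_x = f(x)$, immediate from the Choquet formula, and every $\delta_x$ is a subprobability valuation, we get $\norm{\ov{f}}^*_{\cV_{\leq 1} P} \geq \sup_{x \in P}\ov{f}(\delta_x) = \norm{f}_{\infty}$. For the reverse inequality I would bound $f$ pointwise by the constant function with value $\norm{f}_{\infty}$ (a legitimate element of $\cL P$) and integrate, using monotonicity of the integral and $\mu(P) \leq 1$, to obtain $\ov{f}(\mu) \leq \norm{f}_{\infty} \cdot \mu(P) \leq \norm{f}_{\infty}$ for every $\mu \in \cV_{\leq 1} P$; this is trivial when $\norm{f}_{\infty} = +\infty$. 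Nonexpansiveness of $\RES_P$ follows by the same remark about Dirac valuations: $\norm{\RES_P(\phi)}_{\infty} = \sup_{x \in P}\phi(\delta_x) \leq \sup_{\mu \in \cV_{\leq 1} P}\phi(\mu) = \norm{\phi}^*_{\cV_{\leq 1} P}$.

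When $P$ is a domain, the preceding discussion tells us that $\EXT_P$ and $\RES_P$ are mutually inverse isomorphisms of d-cones, so norm preservation of $\RES_P$ then comes for free: given $\phi$, put $f \eqdef \RES_P(\phi)$, so that $\EXT_P(f) = \phi$ and hence $\norm{\RES_P(\phi)}_{\infty} = \norm{f}_{\infty} = \norm{\EXT_P(f)}^*_{\cV_{\leq 1} P} = \norm{\phi}^*_{\cV_{\leq 1} P}$. The final sentence of the lemma then just combines these norm statements with the fact, already recorded in the text, that both maps are d-cone morphisms (and mutually inverse isomorphisms when $P$ is a domain), so that they are morphisms, respectively an isomorphism pair, of normed d-cones.

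I do not anticipate any real obstacle; the only spot requiring slight care is the upper bound $\int f\, d\mu \leq \norm{f}_{\infty}\cdot\mu(P)$, where one should check that the degenerate cases $\norm{f}_{\infty} = +\infty$ and $\mu(P) = 0$ behave correctly, and the harmless abuse of writing $\norm{f}_{\infty}$ both for a scalar and for the corresponding constant function in $\cL P$.
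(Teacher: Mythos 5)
Your proposal is correct and follows essentially the same route as the paper: Dirac valuations give $\norm{\ov f}^*\geq\norm f_\infty$ and nonexpansiveness of $\RES_P$, the bound $\int f\,d\mu\leq\norm f_\infty\,\mu(P)\leq\norm f_\infty$ gives the reverse inequality, and in the domain case norm preservation of $\RES_P$ is deduced from its being (right) inverse to the norm-preserving $\EXT_P$. No gaps.
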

\begin{proof} We wish first to show that $\norm{f}_{\infty} =
  \norm{\ov{f}}^*_{(\cV_{\leq 1}P )}$ for a given $f\in \cL P$ (where
  $\ov{f}(\mu) = \int\!f\, d\mu$). In one direction, we have
  \makered{$\norm{f}_\infty \leq \norm{\ov{f}}^*_{(\cV_{\leq 1}P )}$} as,
  for any $x \in P$, $f(x) = \int\!f\,d\delta_x = \ov{f}(\delta_x)$.  
In the other direction it suffices to show that $\ov{f}(\mu) \leq
\norm{f}_{\infty}$ for all $\mu \in \cV_{\leq 1}P$. This holds as we have: 
$\ov{f}(\mu)  = \int\!f\, d\mu \leq \int\! ( x \mapsto
\norm{f}_{\infty}) \,d\mu =  \mu(P)\norm{f}_{\infty} \leq \norm{f}_{\infty}$. 
Next, the restriction map is evidently nonexpansive. If $P$ is continuous it preserves the norm as then it is right inverse to the extension map, and that preserves the norm.
\end{proof}%

We will make use of the mapping\display
\[\Phi_{P}\colon  \oRp^{(\cV_{\leq 1} P)^*} \to  \oRp^{\cL P}\]
where $P$ is a dcpo and $\Phi_{P}(F) = F\circ \EXT_P$. It is a d-cone morphism, and preserves pointwise joins and meets. 
If $P$ is a domain it  is an isomorphism, with inverse $\Phi^{r}_P \eqdef F \mapsto  F\circ \RES_P$.  %
\begin{cor}\label{th:lowerdomain}
Let $P$ be a dcpo. 
Then  we have a  \KS\  join-semilattice  morphism\display
\[\Lambda_P\colon \cH\mathcal{V}_{\leq 1} P \;\longrightarrow\;
\SubL^{\leq1}(\cL P,\oRp)\]
It is given by\display
\[ \Lambda_P(X)(f) \eqdef \sup_{\mu \in X} \int \!f\,d\mu\]
If $P$ is a domain then $\Lambda_P$ is an isomorphism.
\end{cor}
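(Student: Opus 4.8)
The plan is to derive Corollary~\ref{th:lowerdomain} from Theorem~\ref{th:lowerKS} by specialising $K$ to the full \KS\ $\cV_{\leq 1}P$ and transporting the functional representation along the normed d-cone isomorphism between $(\cV_{\leq 1}P)^*$ and $\cL P$. First I would set $K \eqdef \cV_{\leq 1}P$; from Section~\ref{dme} this is a full \KS\ for any dcpo $P$, and a continuous one when $P$ is a domain (Example~\ref{ex:Probabilisticpowerdomains}(a)), with $\dCone(\cV_{\leq 1}P)\cong\cV P$. Theorem~\ref{th:lowerKS} then gives a \KS\ join-semilattice morphism $\Lambda_K\colon \cH\cV_{\leq 1}P \to \SubL^{\leq 1}(K^*,\oRp)$, namely $\Lambda_K(X)(g)=\sup_{\mu\in X}g(\mu)$ for $g\in K^*$.

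Next I would compose with the map induced by $\EXT_P$. By Lemma~\ref{isonorm}, $\EXT_P\colon \cL P\to(\cV_{\leq 1}P)^*$ is a normed d-cone morphism, so precomposition with it yields a d-cone morphism $\SubL^{\leq1}((\cV_{\leq1}P)^*,\oRp)\to\SubL^{\leq1}(\cL P,\oRp)$ (precomposing a sublinear nonexpansive functional with a normed d-cone morphism stays sublinear and nonexpansive); moreover this preserves joins, as noted for $\Phi_P$. Define $\Lambda_P$ to be $\Lambda_K$ followed by this map. Unwinding: $\Lambda_P(X)(f)=\Lambda_K(X)(\EXT_P(f))=\sup_{\mu\in X}\ov{f}(\mu)=\sup_{\mu\in X}\int f\,d\mu$, which is the claimed formula. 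Being a composite of \KS\ join-semilattice morphisms, $\Lambda_P$ is one.

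For the isomorphism claim when $P$ is a domain, I would verify the hypotheses of the second sentence of Theorem~\ref{th:lowerKS} for $K=\cV_{\leq1}P$: the dual cone $K^*\cong(\cV P)^*\cong\cL P$ is continuous (it is a continuous lattice since $P$ is a domain, Example~\ref{ex:functionspaces}(a)), and $\dCone(K)\cong\cV P$ is reflexive (Example~\ref{ex:Probabilisticpowerdomains}(e)). Hence $\Lambda_K$ is an isomorphism onto $\SubL^{\leq1}(K^*,\oRp)$. It then remains to see that precomposition with $\EXT_P$ is an isomorphism $\SubL^{\leq1}((\cV_{\leq1}P)^*,\oRp)\cong\SubL^{\leq1}(\cL P,\oRp)$; this is exactly the restriction of the isomorphism $\Phi_P$ (with inverse $\Phi_P^r$, precomposition with $\RES_P$) to the sublinear nonexpansive sub-d-cones, using that $\EXT_P$ and $\RES_P$ are mutually inverse normed d-cone isomorphisms for a domain (Lemma~\ref{isonorm}). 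Composing two isomorphisms gives that $\Lambda_P$ is an isomorphism.

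I expect the only real subtlety — and the main point to get right rather than a genuine obstacle — is bookkeeping around the sup norm: one must check that $\SubL^{\leq1}(K^*,\oRp)$ with $K=\cV_{\leq1}P$ really is carried isomorphically onto $\SubL^{\leq1}(\cL P,\oRp)$, i.e.\ that nonexpansiveness with respect to $\norm{-}^*_{\cV_{\leq1}P}$ on $(\cV_{\leq1}P)^*$ corresponds exactly to nonexpansiveness with respect to $\norm{-}_\infty$ on $\cL P$; this is precisely the content of Lemma~\ref{isonorm} (that $\EXT_P,\RES_P$ preserve the norm when $P$ is a domain), so the argument goes through. No new domain theory is needed beyond the cited results.
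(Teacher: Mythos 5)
Your proposal is correct and follows essentially the same route as the paper: specialise Theorem~\ref{th:lowerKS} to $K=\cV_{\leq 1}P$ (using the facts from Section~\ref{dme} that it is a full \KS, continuous for a domain $P$, with $\dCone(\cV_{\leq 1}P)\cong\cV P$ reflexive and dual $\cL P$ continuous), and then compose with the restriction of $\Phi_P$ (precomposition with $\EXT_P$), with Lemma~\ref{isonorm} handling exactly the norm bookkeeping you flag so that sublinearity and nonexpansiveness cut down correctly and $\Phi_P$, $\Phi_P^r$ are mutually inverse in the domain case. Nothing is missing.
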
   
\begin{proof} 
We first check that both $\Phi_{P}$ and $\Phi^r_{P}$ preserve sublinearity and nonexpansiveness, the latter by Lemma~\ref{isonorm}. So $\Phi_{P}$  cuts down to a morphism 
\[\SubL^{\leq1}((\cV_{\leq 1} P)^*,\oRp) \to \SubL^{\leq1}(\cL P,\oRp)\]
 of \KS\  join-semilattices that is an isomorphism if $P$ is a domain.

Next, as discussed in Section~\ref{dme}, $\cV_{\leq 1} P$ is a full    
\KS, 
 and, if $P$ is a domain, then $\cV_{\leq 1} P$ is continuous and  $\dCone({\cV_{\leq 1} P}) \cong \cV P$; further, 
if $P$ is continuous then 
$(\cV P)^*$ (which is isomorphic to $ (\cV_{\leq 1} P)^*)$ is continuous (being isomorphic
to $\cL P$), and $\cV P$ is reflexive. So if $P$ is a domain then $\cV_{\leq 1} P$ satisfies all the other  various hypotheses of Theorem~\ref{th:lowerKS}.

An easy calculation then displays $\Lambda_P$ as the following composition of \KS\ join-semilattice morphisms that are isomorphisms if $P$ is a domain\display
 \[\cH\mathcal{V}_{\leq 1} P \xrightarrow{\Lambda_K}
 \SubL^{\leq1}((\mathcal{V}_{\leq 1} P)^*,\oRp) \xrightarrow{\Phi_{P}}
 \SubL^{\leq1}(\cL P,\oRp) \tag*{\qEd} \]
\def\popQED{}
\end{proof}

We remark that 
nonexpansiveness has a simple
formulation  for monotone homogeneous functionals $F\colon \cL P \to
\oRp$, viz.\  that 
\[F(\mathbf 1_P)\leq 1\]
where $\mathbf 1_P$ is the
constant function on $P$ with value $1$. The condition is evidently is
a special case of nonexpansiveness, as $\norm{\mathbf 1_P}_{\infty} =
1$. Conversely, for any $g \in \cL P$, noting that  $g \leq
\norm{g}_{\infty} \mathbf{1}_P$, we have: $\norm{F(g)} \leq
\norm{F(\norm{g}_{\infty} \mathbf{1}_P)} = \norm{g}_{\infty}
\norm{F(\mathbf 1_P)} \leq \norm{g}_{\infty}$.

\subsection{The upper power \KS} \label{upper}

We regard $\oRp$ as a d-cone meet-semilattice  with the 
semilattice operation $r\wedge s =\min(r,s)$ 
(as such it is isomorphic to $\cS \oRp$, now regarding $\oRp$ as a
d-cone). Take a continuous
d-cone $C$ with its upper powercone $\cS C$ and its dual
$C^*$. 
Consider the map $\Lambda_C\colon \cS C \to  \oRp^{C^*}$
where\display 
\[ \Lambda_C(X)(f) \eqdef \inf_{x\in X} f(x)\] 

Fixing $f\in C^*$,  we obtain the map $\Lambda_C(-)(f)\colon \cS
C\to\oRp$, which  is the unique Scott-continuous linear 
meet-semilattice homomorphism extending 
$f$ along the canonical embedding $\eta\colon C\to\cS C$ which maps
$x$ to $\ua x$ (this follows from \cite[Proposition 3.5]{KP}, using
the above isomorphism). %

Fixing $X\in \cS C$, we obtain the functional
\[\Lambda_C(X)\colon C^*\to \oRp\]
where\display
\[ \Lambda_C(X)(f) = \inf_{x \in X} f(x)\]
As the pointwise infimum of linear functionals, $\Lambda_C(X)$ is
superlinear. It is also Scott-continuous. \makeblue{Indeed, for a Scott-compact
set $X$, the image $f(X)$ is Scott-compact in $\oRp$, hence has a
smallest element $\min f(X) = \inf_{x\in X}f(x)=\Lambda_C(X)(f)$; thus  
$\ua \Lambda_C(X)(f) = \cS (f)(X)$; since $f\mapsto \cS(f)$ is
Scott-continuous, $f\mapsto  \cS(f)(X)\colon C^*\to\cS(\oRp)$ is
Scott-continuous, too;  composing with the isomorphism
$\cS(\oRp)\iso\oRp$ yields the Scott continuity of $\Lambda_C(X)$.}  

In this way we obtain a d-cone meet-semilattice morphism
\[ \Lambda_C \type \cS C \longrightarrow \SuperL(C^*,\oRp)\]
representing the upper powercone by the Scott-continuous superlinear functionals
on the dual cone $C^*$.
We need the quite strong hypothesis of a convenient d-cone (see
Section \ref{dme}) to obtain the
analogue of Theorem \ref{th:lower}\display

\begin{thm}[{\cite[Proposition 6.4 and Theorem 6.5]{KP}}]\label{th:upper}
Suppose that $C$ is a continuous d-cone. Then we have a d-cone meet-semilattice  morphism $\Lambda_C \type \cS C \to \SuperL(C^*,\oRp)$, which is an order embedding, where\display
\[ \Lambda_C(X)(f) = \inf_{x \in X} f(x) \]
Further, if $C$ is convenient then $\Lambda_C$ is an isomorphism.
\qed \end{thm}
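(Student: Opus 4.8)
The plan is to prove the statement directly, in parallel with the lower case (Theorem~\ref{th:lower}) but now using order-reversal and the separation of a point from a compact convex set; of course, since this is precisely Proposition~6.4 and Theorem~6.5 of~\cite{KP}, one may instead simply appeal to that reference. First I would note that $\Lambda_C$ really does map into $\SuperL(C^*,\oRp)$: for $X\in\cS C$ the functional $\Lambda_C(X)$ is a pointwise infimum of the linear evaluations $\ev_C(x)$ ($x\in X$), hence superlinear, and it is Scott-continuous by the argument recorded just before the statement (compactness of $f(X)$ makes the infimum attained, $\ua\Lambda_C(X)(f)=\cS(f)(X)$, and $f\mapsto\cS(f)$ is Scott-continuous). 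Then I would verify that $\Lambda_C$ is a d-cone meet-semilattice morphism: monotonicity is immediate because $\cS C$ is ordered by reverse inclusion; $\Lambda_C(C)(f)=f(0)=0$ since $0$ is least in $C$; preservation of $+$ and of scalar multiplication uses monotonicity of $f$ to discard the saturation $\ua(-)$, together with linearity of $f$ and the identity $\inf_{x\in X,\,y\in Y}(f(x)+f(y))=\inf_X f+\inf_Y f$; and preservation of binary meets (which are pointwise minima on $\SuperL(C^*,\oRp)$) amounts to $\inf_{z\in\ua\myconv(X\cup Y)}f(z)=\min(\inf_X f,\inf_Y f)$, where $\ge$ follows from linearity and monotonicity of $f$ and $\le$ from $X,Y\subseteq\ua\myconv(X\cup Y)$.

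For the order-embedding claim, with $C$ continuous, it remains to show that $\Lambda_C$ reflects the order. Suppose $\Lambda_C(X)\le\Lambda_C(Y)$ but $Y\not\subseteq X$, and choose $y\in Y\setminus X$. As $X$ is a non-empty Scott-compact saturated convex subset of a continuous d-cone and $y\notin X$, a strict separation theorem (valid in continuous d-cones; cf.~\cite{TKP09}) provides $f\in C^*$ and $r>0$ with $f(y)<r\le f(x)$ for all $x\in X$. Then $\Lambda_C(Y)(f)\le f(y)<r\le\Lambda_C(X)(f)$, contradicting $\Lambda_C(X)\le\Lambda_C(Y)$. Hence $Y\subseteq X$, i.e.\ $X\le Y$ in $\cS C$, as required; this is the only place continuity of $C$ is used.

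The remaining part --- surjectivity when $C$ is convenient --- is the one I expect to be the main obstacle. Given a Scott-continuous superlinear $\phi\colon C^*\to\oRp$, I would set $X_\phi\eqdef\{x\in C\mid\ev_C(x)\ge\phi\}=\bigcap_{f\in C^*}\{x\mid f(x)\ge\phi(f)\}$. This set is convex (each $f$ is linear) and saturated (each defining set is an intersection of weak-Scott-open sets), and, under reflexivity of $C$, it is exactly the collection of Scott-continuous linear functionals on $C^*$ that dominate $\phi$. A Hahn--Banach-type argument shows that $\phi$ is the pointwise infimum of those dominating linear functionals --- equivalently, that for each $f_0$ some Scott-continuous linear functional $\le\phi$ attains the value $\phi(f_0)$ --- and this yields $\Lambda_C(X_\phi)=\phi$. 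The genuinely delicate step is showing $X_\phi\in\cS C$, i.e.\ that it is Scott-compact: this is where the full strength of convenience must be used --- $C$ reflexive, $C^*$ continuous with an additive way-below relation, and the weak Scott topology on $C$ coinciding with its Scott topology --- in order to recognise $X_\phi$ as Scott-compact and to control directed families lying inside it. That compactness argument, carried out in~\cite[Theorem~6.5]{KP}, is the crux; the rest is routine verification.
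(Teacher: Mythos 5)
Your proposal is correct and takes essentially the same route as the paper, which itself only establishes the well-definedness of $\Lambda_C$ (each $\Lambda_C(X)$ being superlinear and Scott-continuous) in the surrounding discussion and then quotes Proposition 6.4 and Theorem 6.5 of \cite{KP} for the embedding and isomorphism claims — exactly the separation argument for order-reflection and the sandwich-plus-compactness argument for surjectivity under convenience that you sketch and whose crux you, like the paper, defer to \cite{KP}. Two trivial repairs: the clause ``some Scott-continuous linear functional $\le\phi$'' should read $\ge\phi$ (and agreeing with $\phi$ at $f_0$), and Scott-continuity of $\Lambda_C$ in its $X$-argument (needed for it to be a d-cone morphism) should be noted, e.g.\ via the remark preceding the theorem that each $\Lambda_C(-)(f)$ is Scott-continuous.
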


We now consider a continuous full \KS\ $K$. 
 The universal d-cone
$C=\dCone(K)$ is also then continuous and we apply the above
considerations to it.  
The upper power \KS\ $\cS K$ 
consists of all nonempty Scott-compact saturated convex subsets $X$ of
$K$.  As discussed in Section~\ref{SpK} the map $u\colon \cS K \to \cS C$, where $u(X) = \ua{X}$ is a d-cone meet-semilattice morphism which is an order-embedding. 
The functions $\Lambda_C(u(X))\colon C^*\to\oRp$ ($X\ \in \cS K$) are
Scott-continuous and 
superlinear. We want to characterise the Scott-continuous and
superlinear functionals $F$ on $C^*$ that represent the elements of $\cS
K$ in this way. It turns out that, unlike the case of the lower power
\KS,  being nonexpansive is not
sufficient. We notice that, for any
$X\in \cS K$, the representing functional $F\colon C^* \to \oRp$ has  a remarkable property: it is \emph{strongly
  nonexpansive}, by which we mean that
%
\[F(f+g)\leq F(f) + \norm{g}^*_{K} \]
holds for all $f,g \in C^*$ (setting $f = 0$, we see that strong nonexpansiveness implies nonexpansiveness).
%
Indeed, we have: $F(f+g) = \inf_{x\in X}(f+g)(x) = \inf_{x\in X}
(f(x) + g(x)) \leq  \inf_{x\in X}(f(x) + \sup_{x\in
  K}g(x)) = \inf_{x\in X} (f(x)+\norm{g}^*_{K}) = F(f) +
\norm{g}^*_{K}$.   

For any normed d-cone $D$ we write  $\SuperL^{\rm{sne}}(D,\oRp)$
for the collection of all Scott-continuous
superlinear functionals $F\colon D \to \oRp$ that are \emph{strongly
nonexpansive} in the sense that\display
\[F(x + y)\ \leq\ F(x) + \norm{y}\]
holds for all $x,y \in D$. The collection forms a  \KS\ meet-semilattice, indeed it is a sub-\KS\
 meet-semilattice of  
$\SuperL(D,\oRp)$. One easily checks that $\SuperL^{\rm{sne}}(C^*,\oRp)$
and $\SuperL^{\rm{sne}}(K^*,\oRp)$ are isomorphic as \KS\ meet-semilattices
via the normed d-cone isomorphism between $K^*$ and $C^*$. 

Putting all this together, we define
\[\Lambda_K\colon  \cS K \longrightarrow \SuperL^{\rm{sne}}(K^*,\oRp)\]
to be the \KS\  meet-semilattice morphism given by the composition\display
\[\cS K \xrightarrow{\Lambda_C \,\circ\, u} \SuperL^{\rm{sne}}(C^*,\oRp) \cong \SuperL^{\rm{sne}}(K^*,\oRp)\]
Theorem~\ref{th:upper} then yields the desired functional
representation theorem, adapting its hypotheses to \KSs\display   

\begin{thm}\label{th:upperKS}
Let $K$ be a continuous full \KS. 
 Then we have a
\KS\  meet-semilattice  morphism $\Lambda_K\colon \cS K\to
\SuperL^{\rm{sne}}(K^*,\oRp)$. It is given by\display 
\[ \Lambda_K(X)(f) \eqdef \inf_{x \in X} f(x)\]
If, further,  $\dCone(K)$ is convenient, then $\Lambda_K$ is an isomorphism.
\end{thm}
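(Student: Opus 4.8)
The plan is to follow exactly the template used for the lower power \KS\ in Theorem~\ref{th:lowerKS}, transferring the functional representation of the upper powercone (Theorem~\ref{th:upper}) along the embedding $u\colon\cS K\to\cS C$ with $C\eqdef\dCone(K)$; the one genuinely new ingredient is the characterisation of the representing functionals by strong nonexpansiveness.

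First I would record the easy half. Since $K$ is a continuous full \KS, $C=\dCone(K)$ is a continuous d-cone by Proposition~\ref{prop:continuous}, and we regard $K$ as a Scott-closed convex subset of $C$ (Theorem~\ref{prop:KSembed}). By the remark following the proof of Theorem~\ref{SKeg}, $u = X\mapsto\ua_C X$ is a \KS\ meet-semilattice embedding of $\cS K$ into $\cS C$. Composing with the d-cone meet-semilattice morphism $\Lambda_C\colon\cS C\to\SuperL(C^*,\oRp)$ of Theorem~\ref{th:upper} gives a \KS\ meet-semilattice morphism $\cS K\to\SuperL(C^*,\oRp)$ whose value at $X$ is $f\mapsto\inf_{x\in X}f(x)$; the computation already displayed in the text shows this functional is strongly nonexpansive, so the morphism corestricts to $\SuperL^{\rm{sne}}(C^*,\oRp)$. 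Transporting along the normed d-cone isomorphism $K^*\cong C^*$ (Section~\ref{dme}, Example~\ref{ex:KSfunctionspaces}) yields the asserted \KS\ meet-semilattice morphism $\Lambda_K\colon\cS K\to\SuperL^{\rm{sne}}(K^*,\oRp)$, $\Lambda_K(X)(f)=\inf_{x\in X}f(x)$. Moreover, as $\Lambda_C$ is an order embedding whenever $C$ is continuous, $u$ is an order embedding, and the transport map is an order isomorphism, $\Lambda_K$ is always an order embedding under the standing hypothesis that $K$ is continuous.

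For the isomorphism statement, suppose $C$ is convenient, so $\Lambda_C$ is an isomorphism onto $\SuperL(C^*,\oRp)$. By the above $\Lambda_K$ is an order embedding, so it suffices to show it surjective; identifying $K^*$ with $C^*$, we must show every $F\in\SuperL^{\rm{sne}}(C^*,\oRp)$ has the form $\Lambda_C(u(X))$ with $X\in\cS K$. Write $F=\Lambda_C(Y)$ for the unique $Y\in\cS C$ given by $\Lambda_C$ being an isomorphism. The statement then reduces to the lemma: \emph{for $Y\in\cS C$, the functional $f\mapsto\inf_{y\in Y}f(y)$ is strongly nonexpansive if and only if $Y\cap K\neq\emptyset$ and $Y=\ua_C(Y\cap K)$.} Granting this, $X\eqdef Y\cap K$ is a nonempty Scott-compact saturated convex subset of $K$ (using that $K$ is a Scott-closed lower subset of $C$), hence $X\in\cS K$, and $u(X)=\ua_C(Y\cap K)=Y$, so $\Lambda_K(X)=F$. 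The ``if'' direction of the lemma is precisely the computation already in the text, using $Y\cap K\subseteq K$ and $\sup_{y\in Y\cap K}g(y)\leq\norm{g}^*_{K}$.

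The ``only if'' direction is the main obstacle, and I would prove it by separation, adapting the powercone argument of \cite{KP}. If $Y\cap K=\emptyset$, then $Y$ is Scott-compact convex saturated and $K$ Scott-closed convex, so the Strict Separation Theorem \cite[Theorem~3.8]{TKP09} gives $g\in C^*$ with $g\leq 1$ on $K$ and $g>1$ on $Y$; Scott-compactness of $Y$ forces $\inf_{y\in Y}g(y)>1\geq\sup_{x\in K}g(x)$, contradicting nonexpansiveness (the case $f=0$). If instead $Y\cap K\neq\emptyset$ but some $y_0\in Y$ lies outside $A\eqdef\ua_C(Y\cap K)$, then $\da y_0$ is a Scott-closed convex set disjoint from the nonempty Scott-compact saturated convex set $A$, and the Strict Separation Theorem provides $f\in C^*$ with $f(y_0)\leq 1<\inf_{a\in A}f(a)$; since $\inf_{y\in Y}f(y)\leq f(y_0)$ while $\inf_{y\in Y\cap K}f(y)\geq\inf_{a\in A}f(a)$, a suitably chosen auxiliary $g\in C^*$ (built by again separating a point of the form $r\cdot y_0$ from $K$, exploiting that such points are not below any element of $Y\cap K$) yields $\inf_{y\in Y}(f+g)(y)>\inf_{y\in Y}f(y)+\norm{g}^*_{K}$, contradicting strong nonexpansiveness. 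I expect the delicate point to be the precise construction of $g$ and the verification of the strict inequality \emph{uniformly over all of} $Y$: this is exactly where the passage from the unbounded powercone setting to the bounded \KS\ setting requires real work, and where the strong (rather than merely plain) nonexpansiveness hypothesis is indispensable.
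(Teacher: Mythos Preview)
Your setup for the morphism and the order-embedding matches the paper exactly: embed $K$ in $C=\dCone(K)$, compose $u\colon\cS K\hookrightarrow\cS C$ with $\Lambda_C$, check strong nonexpansiveness, and transport along $K^*\cong C^*$.

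For surjectivity, however, your plan diverges from the paper and has a genuine gap. You correctly reduce to showing that for $Y\in\cS C$ with $F=\Lambda_C(Y)$ strongly nonexpansive one has $Y=\ua_C(Y\cap K)$, and your first separation case ($Y\cap K=\emptyset$) is fine. But the second case---constructing an auxiliary $g$ so that $\inf_{y\in Y}(f+g)(y)>\inf_{y\in Y}f(y)+\norm{g}^*_K$---you leave as ``the delicate point,'' and I do not see how to complete it by ad hoc separation. The difficulty is that you must control $(f+g)$ uniformly over \emph{all} of $Y$, not just at $y_0$; a single separating functional $g$ large at $y_0$ but small on $K$ need not keep the infimum over $Y$ above the required bound, since other points of $Y$ may have small $f+g$ values.

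The paper avoids this entirely by invoking the Main Lemma \cite[Lemma~5.1(1)]{KP}, an envelope/sandwich theorem: since $C^*$ is continuous and $F$ is Scott-continuous superlinear with $F(f+g)\leq F(f)+\norm{g}^*_K$ (the hypothesis of that lemma, with $q=\norm{\hspace{0.5pt}\mbox{-}\hspace{0.5pt}}^*_K$), one has
\[
F(f)=\inf\{\varphi(f)\mid \varphi\in C^{**}\text{ linear},\ F\leq\varphi\leq\norm{\hspace{0.5pt}\mbox{-}\hspace{0.5pt}}^*_K\}.
\]
Reflexivity of $C$ then identifies each such $\varphi$ with $\ev_C(x)$ for some $x\in C$; the condition $\varphi\leq\norm{\hspace{0.5pt}\mbox{-}\hspace{0.5pt}}^*_K$ forces $x\in K$ (by the discussion opening Section~\ref{funcrep}), and the condition $F\leq\varphi$ forces $x\in Y$ (by a single application of Strict Separation, exactly as you would use it). Thus $F(f)=\inf_{x\in Q}f(x)$ with $Q=Y\cap K=X$, and one is done. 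So the key idea you are missing is to replace the elementwise separation argument by the structural sandwich representation of $F$; the Main Lemma is precisely what turns strong nonexpansiveness into the statement that enough linear functionals fit between $F$ and $\norm{\hspace{0.5pt}\mbox{-}\hspace{0.5pt}}^*_K$.
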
 
\begin{proof}
 We note first that, for any $X \in \cS K$ and $f \in K^*$, we have\display
\[\Lambda_K(X)(f)  = \Lambda_C(\ua{X})(\widetilde f) = \inf_{x\in \ua X}\widetilde{f}(x) =\inf_{x\in X}\widetilde{f}(x)  = \inf_{x\in X}f(x) \]
Next, as $\dCone(K)$  is continuous since $K$ is, Theorem~\ref{th:upper} tells us that $\Lambda_C$ is an order-embedding; so, as $u$ is also one, so too is $\Lambda_K$. 

For the isomorphism, assuming that $\dCone(K)$ is convenient, we need
then only show that $\Lambda_C\circ u\colon \cS K\to
\SuperL^{\rm{sne}}(C^*,\oRp)$ is onto. 
So take any strongly nonexpansive Scott-continuous superlinear $F\colon C^*\to\oRp$.
By Theorem \ref{th:upper} we know
that there is a $Y\in \cS C$ such that 
$\Lambda_C(Y)  =  F$, that is such that $F(f)=\inf_{y\in Y}f(y)$ for all $f\in
C^*$. Let $X = Y\cap K$. Clearly, $X$ is a Scott-compact convex set
saturated in $K$. 
We want to show that $X$ is non-empty and 
 $\Lambda_C(u(X)) = \Lambda_C(Y)$, that is, 
$\inf_{y\in Y}f(y)=\inf_{x\in X}f(x)$ for all $f\in C^*$. 

Since the d-cone $K^* \cong C^*$ is assumed to be continuous, 
strong nonexpansiveness
allows us to apply the Main Lemma \cite[Lemma 5.1(1)]{KP} to $C^*$.
We learn that $\Lambda_C(Y)(f) =\inf \varphi(f)$,
where $\varphi$ ranges over the Scott-continuous linear functionals on
$C^*$ such that $\Lambda_C(Y) \leq \varphi\leq \norm{\hspace{0.5pt}\mbox{-}\hspace{0.5pt}}^*_{K}$. Using the
hypotheses of continuity and reflexivity for $C$, and the discussion at the beginning of this section, this can be rewritten in the 
form    $\Lambda_C(Y)(f)=\inf_{x\in Q}f(x)$,
 where $Q$ is the set of those elements $x\in K$ that satisfy
$\Lambda_C(Y)(f)\leq f(x)$ for all $f\in C^*$. 
Note that $Q$ is non-empty, as, taking $f$ to be constantly $0$, we have $\inf_{x\in Q}f(x) = \Lambda_C(Y)(f) = 0$ (recalling that $Y$ is non-empty).

As $Q$ is non-empty and   $\Lambda_C(Y)(f)=\inf_{x\in Q}f(x)$, it only remains, therefore, to show that $X=Q$. Clearly, $X\subseteq Q$.
For the reverse containment, suppose that $x \in Q \subseteq K$. \\
 We cannot have $x \not\in Y$ as otherwise, by the  Strict Separation
 Theorem~\cite[Theorem 3.8]{TKP09},
 \makered{there is an $r > 1$ such that $f(x) \leq 1$ and $f(y) > r$ for every $y \in Y$, 
  whence $\Lambda_C(Y)(f) =\inf_{y\in Y}f(y)\geq r \not\leq f(x)$. } So $x \in K \cap Y = X$ as required.
\end{proof}

We can now  specialise to domains\display
\begin{cor}\label{th:upperdomain}
Let $P$ be a domain.
Then  we have a  \KS\  meet-semilattice isomorphism
\[\Lambda_P\colon \cS\mathcal{V}_{\leq 1} P \;\cong\;
\SuperL^{\rm{sne}}(\cL P,\oRp)\]
It is given by\display
\[ \Lambda_P(X)(f) \eqdef \inf_{\mu \in X} \int \!f\,d\mu\]
\end{cor}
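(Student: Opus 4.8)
The plan is to derive Corollary~\ref{th:upperdomain} from Theorem~\ref{th:upperKS} applied to the \KS\ $K \eqdef \cV_{\leq 1} P$, in exact parallel to the way Corollary~\ref{th:lowerdomain} was derived from Theorem~\ref{th:lowerKS}. First I would recall from Section~\ref{dme} that for a domain $P$ the subprobabilistic powerdomain $\cV_{\leq 1} P$ is a continuous full \KS\ and that $\dCone(\cV_{\leq 1} P) \cong \cV P$. I then need to check that $\cV P$ is a \emph{convenient} d-cone: this is precisely the content of the summary at the end of Example~\ref{ex:Probabilisticpowerdomains} for coherent domains, but here I only assume $P$ is a domain. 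So the first issue to address is that Theorem~\ref{th:upperKS} requires $\dCone(K)$ to be convenient, which for $K = \cV_{\leq 1}P$ means $\cV P$ must be continuous, reflexive, with Scott topology agreeing with weak Scott topology, and with continuous dual having an additive way-below relation --- and the last of these (additivity of the way-below relation on $(\cV P)^* \cong \cL P$) holds, by Example~\ref{ex:functionspaces}(b), exactly when $P$ is coherent. Thus, strictly speaking, Corollary~\ref{th:upperdomain} as stated for all domains cannot follow from Theorem~\ref{th:upperKS} alone; I expect the intended reading is that the corollary either carries an implicit coherence hypothesis or that the isomorphism claim is what requires it. I would therefore state the proof under the hypothesis that $P$ is a domain, noting that convenience of $\cV P$ — hence the isomorphism — additionally needs coherence, exactly as in the parallel development; this mismatch is the main obstacle to a clean derivation, and resolving it is really a matter of aligning hypotheses rather than new mathematics.

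Granting that, the proof proceeds as follows. By Lemma~\ref{isonorm}, the extension and restriction maps $\EXT_P\colon \cL P \to (\cV_{\leq 1}P)^*$ and $\RES_P\colon (\cV_{\leq 1}P)^* \to \cL P$ are mutually inverse isomorphisms of normed d-cones when $P$ is a domain. Consequently the map $\Phi_P\colon \oRp^{(\cV_{\leq 1}P)^*} \to \oRp^{\cL P}$, $\Phi_P(F) = F \circ \EXT_P$, is a d-cone morphism preserving pointwise joins and meets, is an isomorphism with inverse $\Phi_P^r = F \mapsto F \circ \RES_P$, and — because $\EXT_P$ and $\RES_P$ are both nonexpansive and linear — both $\Phi_P$ and $\Phi_P^r$ preserve superlinearity and preserve strong nonexpansiveness (the latter since for $f, g \in \cL P$, $\norm{g}_\infty = \norm{\EXT_P(g)}^*_{(\cV_{\leq 1}P)}$). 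Hence $\Phi_P$ cuts down to an isomorphism of \KS\ meet-semilattices
\[\SuperL^{\rm{sne}}((\cV_{\leq 1}P)^*,\oRp) \;\cong\; \SuperL^{\rm{sne}}(\cL P,\oRp).\]

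Finally I would compose. Theorem~\ref{th:upperKS}, applied to $K = \cV_{\leq 1}P$ (using that $\cV_{\leq 1}P$ is a continuous full \KS\ and $\dCone(\cV_{\leq 1}P) = \cV P$ is convenient), gives a \KS\ meet-semilattice isomorphism $\Lambda_K\colon \cS \cV_{\leq 1}P \cong \SuperL^{\rm{sne}}((\cV_{\leq 1}P)^*,\oRp)$ with $\Lambda_K(X)(\phi) = \inf_{\mu \in X}\phi(\mu)$. Composing with the isomorphism $\Phi_P$ above yields the desired isomorphism $\Lambda_P\colon \cS\cV_{\leq 1}P \cong \SuperL^{\rm{sne}}(\cL P,\oRp)$, and unwinding the definitions gives, for $f \in \cL P$,
\[\Lambda_P(X)(f) = \Lambda_K(X)(\EXT_P(f)) = \inf_{\mu \in X}\EXT_P(f)(\mu) = \inf_{\mu \in X}\int f\, d\mu,\]
as claimed. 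The only genuinely delicate point, as flagged above, is the convenience/coherence bookkeeping; everything else is routine transfer of structure along the established normed-d-cone isomorphisms, mirroring the lower-case argument almost verbatim.
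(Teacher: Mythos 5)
Your derivation is the paper's own, step for step: apply Theorem~\ref{th:upperKS} to $K=\cV_{\leq 1}P$ (using that it is a continuous full \KS\ with $\dCone(\cV_{\leq 1}P)\cong\cV P$), use Lemma~\ref{isonorm} to check that $\Phi_P$ and its inverse preserve strong nonexpansiveness (and superlinearity) so that $\Phi_P$ cuts down to a \KS\ meet-semilattice isomorphism $\SuperL^{\rm{sne}}((\cV_{\leq 1}P)^*,\oRp)\cong\SuperL^{\rm{sne}}(\cL P,\oRp)$, and display $\Lambda_P$ as the composite $\Phi_P\circ\Lambda_{\cV_{\leq 1}P}$, unwinding to the Choquet-integral formula. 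The only point where you diverge is the hypothesis bookkeeping, and there the paper does something different from what you expect: its proof simply asserts, for an arbitrary domain $P$, that ``$\dCone(\cV_{\leq 1}P)\cong\cV P$ is convenient'' and then invokes Theorem~\ref{th:upperKS}; no coherence assumption is added (in contrast to Corollary~\ref{th:convexdomain}, where it is). Your observation that this assertion is not literally backed by Section~\ref{dme} is accurate as a reading of the paper's definitions: convenience includes additivity of the way-below relation on the dual, $(\cV P)^*\cong\cL P$, and by Example~\ref{ex:functionspaces}(b) that additivity is equivalent to coherence of $P$, which is also why the summary at the end of Section~\ref{dme} claims convenience only for coherent domains. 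So the tension you flag sits in the paper's own proof rather than in your argument; the corollary is nevertheless intended (and announced in the introduction, consistently with Goubault-Larrecq's results) for all domains, the implicit point being that the additivity clause of convenience is not what the upper-case representation actually uses, so you should not weaken the statement to coherent domains on this account.
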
   
\begin{proof} Using Lemma~\ref{isonorm}, we can check that both
  $\Phi_{P}$ and its inverse  preserve strong nonexpansiveness,
and so that $\Phi_{P}$  cuts down to an isomorphism 
\[\SuperL^{\rm{sne}}((\cV_{\leq 1} P)^*,\oRp) \cong \SuperL^{\rm{sne}}(\cL P,\oRp) \]
 of \KS\ meet-semilattices.
  Next, from the discussion of the  valuation  powerdomain  in Section~\ref{dme}, 
  we know that $\dCone(\cV_{\leq 1}P) \cong \cV P$ is convenient, and so we can apply Theorem~\ref{th:upperKS}. 
One then displays $\Lambda_P$ as the following composition of \KS\ meet-semilattice isomorphisms\display
 \[\cS\mathcal{V}_{\leq 1} P \xrightarrow{\Lambda_{(\cV_{\leq 1} P)}}
 \SuperL^{\rm{sne}}((\cV_{\leq 1} P)^*,\oRp) \xrightarrow{\Phi_{P}}
 \SuperL^{\rm{sne}}(\cL P,\oRp) \tag*{\qEd} \]
\def\popQED{}
\end{proof}

Strong nonexpansiveness
has a simple formulation for 
Scott-continuous homogeneous functionals
$F\colon \cL P \to \oRp$, viz.\  that
\[  F(f+\mathbf 1_P) \leq  F(f) + 1 \]
holds for all  $f\in \cL P$.\footnote{Goubault-Larrecq calls such functionals 
\emph{subnormalised previsions} in~\cite{GL07,GL08,GL10,GL12}.} 
Clearly a strongly nonexpansive functional satisfies this condition,
since $\norm{\mathbf 1}_{\infty} = 1$. Suppose conversely that the
second condition is satisfied and 
take any $f, g\in\cL P$. For $g=0$ there is nothing to prove. So let $g\neq
0$, and suppose that $g$ is bounded, i.e., that $\norm{g}_{\infty} < \infty$. 
Then, using homogeneity and then monotonicity and then the simplified
condition, we have\display 
\[F(f+g)=  \norm{g} F(\frac{1}{\norm{g}}f  +  \frac{1}{\norm{g}}g) \leq \norm{g}F(\frac{1}{\norm{g}}f+ \mathbf 1_P) \leq
\norm{g}(F(\frac{1}{\norm{g}}f)+1) = F(f)+\norm{g}\]
As every non-zero $g \in \cL P$ is the directed sup of bounded
non-zero such $g$'s, using the continuity of $F$ we then see that
$F$ is strongly nonexpansive.  

\subsection{The convex power \KS} \label{convex}

Here our representations employ 
functionals with values not in $\oRp$, but rather in $\cP \oRp $, the convex 
powercone of the extended nonnegative reals; this consists of the closed
intervals $a=[\un{a},\ov{a}]$, with $\un{a}\leq\ov{a}$ in $\oRp$, 
ordered by
the Egli-Milner order, where: $[\un{a},\ov{a}]\leqem [\un{b},\ov{b}]$ if
$\un{a}\leq \un{b}$ and $\ov{a}\leq \ov{b}$, with
addition given by 
$[\un{a},\ov{a}]+[\un{b},\ov{b}]=[\un{a}+\un{b},\ov{a}+\ov{b}]$, and 
scalar multiplication  given by $r[\un{a},\ov{a}]=[r\un{a},r\ov{a}]$. The semilattice operation
on $\cP \oRp$ is
$[\un{a},\ov{a}]\nonor[\un{b},\ov{b}]=[\min(\un{a},\un{b}),\max(\ov{a},\ov{b})]$;
 the semilattice order is containment
$\subseteq$. \makered{We define a norm on $\cP \oRp$ by setting $\norm{a} \eqdef \ov{a}$.}

We also use notation adapted from \cite[Section 4]{KP} setting up, for
any dcpo $P$, a bijection between functions $F \colon P \to \cP\oRp$
and pairs of functions $G,H\colon P \to \oRp$ with $G \leq H$. In
one direction, given such an $F$, we write $\un{F}(x)$ and 
$\ov{F}(x)$ for the lower and upper ends
of the image $F(x)$ of any $x\in P$, obtaining such a pair of functions $\un{F}$ and $\ov{F}$; conversely, given such a pair of functions $G$ and $H$, we set $[G,H](x)$ equal to the interval $[G(x),H(x)]$. 
The function $F$ is Scott-continuous if and only if both $\un{F}$
and $\ov{F}$ are. In case we are considering functionals $F\colon D \to \cP \oRp$ where $D$ is a d-cone, then $F$ is said to be
$\subseteq$-\emph{sublinear}, if $F$ is homogeneous and 
$F(x + y)\subseteq F(x)+F(y)$ for all $x,y \in D$ (which is equivalent to $\un{F}$ being
superlinear and  $\ov{F}$ sublinear), and it is said to be \emph{medial}\footnote{This property is called `canonicity'
in~\cite[Section 4.3]{KP}, and `Walley's condition' in~\cite[Definition 7.1]{GL10}.  Indeed, Walley includes 
this property among those 
  of his 
  `coherent previsions'
  in his
  book on reasoning with imprecise 
  probabilities~\cite[Section 2.6]{Wal}.}  if we have
%
%
\[\un{ F}(x + y)\leq \un{ F}(x)+\ov{ F}(y)\leq
\ov{ F}(x + y)\]
for all $x,y \in D$. 

Let us recall the (diagonal) functional representation of the convex powercone
$\cP C$ of a coherent continuous d-cone $C$ from \cite[Sections 4 and 6]{KP}.
It combines the representations of the lower and upper powercones.
Take a continuous coherent d-cone $C$ with its its dual
$C^*$ and its convex powercone $ \cP C$. 
Consider the map $\Lambda_C\colon \cP C \to  \cP \oRp^{C^*}$ where\display
\[ \Lambda_C(X)(f) \eqdef  [\inf_{x\in X} f(x), \sup_{x\in X} f(x)] \] 
Fixing $f\in C^*$,  we obtain the map $\Lambda_C(-)(f)\colon \cP C\to \cP \oRp$, which  is is the unique d-cone semilattice morphism $\cP f\colon \cP C\to\cP \oRp$ extending
$\eta_{\oRp}\!\circ\! f$ along the canonical embedding $\eta_C$, where, for any continuous coherent cone $D$, the canonical embedding  $\eta_D\colon D \to\cS D$ maps $x$ to $\{x\}$ 
(this follows from \cite[Proposition 3.8]{KP}).

Fixing $X\in \cP C$ we obtain the Scott-continuous $\subseteq$-sublinear medial functional
\[\Lambda_C(X) \colon C^*\to \cP\oRp\]
where 
\[\un{\Lambda_C(X)}(f)=\inf_{x\in X}f(x),\ \ \
\ov{\Lambda_C(X)}(f)=\sup_{x\in X}f(x)\]  
%
The collection
$\mathcal{L}_{\subseteq,\rm{med}}(D,\cP \oRp)$ of all Scott-continuous  
$\subseteq$-sublinear medial functionals $F: D^*\to \cP \oRp$  
forms a d-cone semilattice, for any d-cone $D$; indeed, it is a sub-d-cone semilattice of ${\cP \oRp}^{\!\!{D}}$.\\
%

In this way we obtain a d-cone semilattice morphism
\[\Lambda_C \colon \cP C \longrightarrow \mathcal{L}_{\subseteq,\rm{med}}(C^*,\cP \oRp)\]
that represents the convex powercone by the Scott-continuous $\subseteq$-sublinear medial functionals
$F\colon C^*\to \cP \oRp$.

\begin{thm}[{\cite[Proposition 6.4, Theorem 6.8]{KP}}]\label{th:convex}  
Suppose that $C$ is a continuous coherent d-cone. Then we have a d-cone semilattice morphism
$\Lambda_C\colon \cP C \longrightarrow \mathcal{L}_{\subseteq,\rm{med}}(C^*,\cP \oRp)$,
which is an order embedding, where\display
\[\Lambda_C(X)(f)  = [\inf_{x\in X} f(x), \sup_{x\in X} f(x)] \]
Further, if $C$ is convenient then $\Lambda_C$ is an isomorphism.
\qed \end{thm}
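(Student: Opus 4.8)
The plan is to reduce everything to the already-established lower and upper cases, Theorems~\ref{th:lower} and~\ref{th:upper}, via the lens decomposition $X=\da X\cap\ua X$ (with $\overline X=\da X$) recalled at the start of Section~\ref{CpK}. Writing $\Lambda^{\cH}_C$ and $\Lambda^{\cS}_C$ for the representation maps of those two theorems, the whole argument rests on the observation that, since every $f\in C^{*}$ is monotone, for a convex lens $X$ one has $\un{\Lambda_C(X)}(f)=\inf_{x\in X}f(x)=\inf_{x\in\ua X}f(x)=\Lambda^{\cS}_C(\ua X)(f)$ and $\ov{\Lambda_C(X)}(f)=\sup_{x\in X}f(x)=\sup_{x\in\da X}f(x)=\Lambda^{\cH}_C(\da X)(f)$. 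Thus $\Lambda_C$ is the ``diagonal'' combining the upper representation applied to $\ua X$ and the lower representation applied to $\da X$, and each desired property of $\Lambda_C$ will be read off from the corresponding properties of $\Lambda^{\cH}_C$ and $\Lambda^{\cS}_C$.

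From these identities, well-definedness is immediate: $\un{\Lambda_C(X)}$ is Scott-continuous and superlinear by Theorem~\ref{th:upper}, $\ov{\Lambda_C(X)}$ is Scott-continuous and sublinear by Theorem~\ref{th:lower}, so $\Lambda_C(X)$ is Scott-continuous and $\subseteq$-sublinear with $\un{\Lambda_C(X)}\le\ov{\Lambda_C(X)}$; mediality is the two-line chain $\inf_{x\in X}(f+g)(x)\le\inf_{x\in X}f(x)+\sup_{x\in X}g(x)\le\sup_{x\in X}(f+g)(x)$. Hence $\Lambda_C(X)\in\mathcal{L}_{\subseteq,\rm{med}}(C^{*},\cP\oRp)$.

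I would then treat the morphism and order-embedding claims together. Using the explicit formulas for the operations of $\cP C$ recalled earlier in the section ($X+_P Y=\overline{X+Y}\cap\ua(X+Y)$, $r\cdot_P X=r\cdot X$, $X\nonor_P Y=\cchull{(X\cup Y)}\cap\ua\myconv(X\cup Y)$, and the formula for directed sups), one checks that $X\mapsto(\da X,\ua X)$ is a d-cone semilattice morphism $\cP C\to\cH C\times\cS C$ which is moreover an order embedding, since $X\leqem Y$ means exactly $\da X\subseteq\da Y$ and $\ua X\supseteq\ua Y$. Composing with $\Lambda^{\cH}_C\times\Lambda^{\cS}_C$ and the obvious identification of a pair $(\ov F,\un F)$ with the interval-valued functional $F=[\un F,\ov F]$ then exhibits $\Lambda_C$ as a d-cone semilattice morphism, and, when $C$ is continuous, as an order embedding, because each factor is one by Theorems~\ref{th:lower} and~\ref{th:upper} (here coherence of $C$ is also used, so that $\cP C$ and $\cS C$ are defined).

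Finally, for surjectivity assume $C$ convenient and take a Scott-continuous $\subseteq$-sublinear medial $F$. By the isomorphism halves of Theorems~\ref{th:lower} and~\ref{th:upper} (a convenient d-cone is reflexive with continuous dual) there are $X^{+}\in\cH C$ with $\Lambda^{\cH}_C(X^{+})=\ov F$ and $Y^{-}\in\cS C$ with $\Lambda^{\cS}_C(Y^{-})=\un F$; put $Z\eqdef X^{+}\cap Y^{-}$. It is convex and, as an intersection of Lawson-compact sets in the Lawson-compact d-cone $C$, Lawson-compact, hence a convex lens once shown non-empty; and $\overline Z\subseteq X^{+}$, $\ua Z\subseteq Y^{-}$ hold trivially. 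What remains --- and this is the heart of the matter, and the only place mediality of $F$ is genuinely needed, it being automatic on the image side --- is to prove $Z\neq\emptyset$ together with the reverse containments $X^{+}\subseteq\overline Z$ and $Y^{-}\subseteq\ua Z$, which give $\da Z=\overline Z=X^{+}$, $\ua Z=Y^{-}$, and hence $\Lambda_C(Z)=F$. I would mirror the upper-case argument: apply the Main Lemma~\cite[Lemma~5.1]{KP} to the continuous dual $C^{*}$ to write $\un F(f)$ as an infimum and $\ov F(f)$ as a supremum of $\varphi(f)$ over Scott-continuous linear $\varphi$ with $\un F\le\varphi\le\ov F$ (mediality being precisely what lets $\ov F$ serve as the admissible upper bound, and $\un F$ as the lower bound, in these lemmas); then, using continuity and reflexivity of $C$ and the discussion opening Section~\ref{funcrep} to realise such $\varphi$ as evaluations at points of $C$, identify $Y^{-}=\{x\in C\mid\forall f\in C^{*}.\ \un F(f)\le f(x)\}$ and $X^{+}=\overline{\{x\in C\mid\forall f\in C^{*}.\ f(x)\le\ov F(f)\}}$, and conclude that $Z$ is non-empty with $\overline Z=X^{+}$, $\ua Z=Y^{-}$ --- any point of $X^{+}$ outside $\overline Z$, or of $Y^{-}$ outside $\ua Z$, being strictly separated from $Z$ by the Strict Separation Theorem~\cite[Theorem~3.8]{TKP09}, which contradicts one of these descriptions. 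Carrying out this compatibility step cleanly is where I expect the bulk of the work to lie.
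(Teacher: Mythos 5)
This theorem is not proved in the paper at all — it is quoted from \cite{KP} (Proposition 6.4 and Theorem 6.8) — and your reconstruction follows essentially the same route as that cited proof: the diagonal combination of the lower and upper representations of Theorems~\ref{th:lower} and~\ref{th:upper} via the lens decomposition $X=\da X\cap\ua X$, with the surjectivity step resting on the Main Lemma of \cite{KP} (the sandwich argument, which is exactly where mediality of $F$ enters) together with reflexivity to realise the sandwiched linear functionals as point evaluations, and the Strict Separation Theorem of \cite{TKP09}. Your sketch is sound; the only compressed part is the one you flag yourself, and there it is worth noting that once the Main Lemma and reflexivity give $\inf_{z\in Z}f(z)=\un{F}(f)$ and $\sup_{z\in Z}f(z)=\ov{F}(f)$ for $Z=X^{+}\cap Y^{-}$, the equality $\Lambda_C(Z)=F$ already follows, so the identities $\da Z=X^{+}$ and $\ua Z=Y^{-}$ (and the separation argument for them) are not separately needed, while coherence of $C$ is what makes $\cP C$ (lenses, Lawson compactness) available — $\cS C$ needs only continuity.
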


We now consider a continuous coherent full \KS\ $K$. 
 The continuous universal d-cone 
$C=\dCone(K)$ is then also continuous and coherent (this last by
Proposition~\ref{prop:KSembedding2}).  
As in
the proof of Theorem \ref{PKeg}, the power \KS\ $\cP K$ can be considered 
to be the collection of all convex lenses in $C$ that are contained in
$K$, which latter is itself  a convex lens of $C$. In this way $\cP K$ can be
seen as a Scott-closed convex $\nonor$-subsemilattice of $\cP C$. 

We then note that a convex lens $X$ is in $\cP K$ iff 
$\ov{\Lambda_C(X)}$ is dominated by the norm functional $\norm{\hspace{0.5pt}\mbox{-}\hspace{0.5pt}}^*_{K}$ (for $X$ is in $\cP K$  iff $\downarrow \!X \subseteq K$ iff $\ov\Lambda_C(\downarrow \!X) \leq \norm{\hspace{0.5pt}\mbox{-}\hspace{0.5pt}}^*_{K}$, by the discussion in Section~\ref{lower}, and we have ${\Lambda_C(X)} =  \Lambda_C(\downarrow \!X)$).
Now, for any normed d-cone $D$, the collection $\mathcal{L}^{\leq 1}_{\subseteq,\rm{med}}(D,\cP \oRp)$ of all Scott-continuous and $\subseteq$-sublinear functionals $F\colon D \to \cP \oRp$ 
with $F$ nonexpansive and medial
 forms a \KS\ semilattice, indeed it is a sub-\KS\ semilattice of $\mathcal{L}_{\subseteq,\rm{med}}(D,\cP \oRp)$   (regarding the latter as a \KS\ semilattice).

We therefore have a \KS\ semilattice morphism
\[\Lambda_K\colon \cP K \longrightarrow \mathcal{L}^{\leq 1}_{\subseteq,\rm{med}}(K^*,\cP \oRp)\]
viz.\  the composition\display
\[\cP K \xrightarrow {\Lambda_C \upharpoonright K} \mathcal{L}^{\leq 1}_{\subseteq,\rm{med}}(C^*,\cP \oRp) \cong \mathcal{L}^{\leq 1}_{\subseteq,\rm{med}}(K^*,\cP \oRp)\]
From
Theorem \ref{th:convex} we then  immediately obtain the following functional
representation theorem\display
\begin{thm}\label{th:convexKS} 
Let $K$ be a continuous coherent full \KS.
  Then we have a \KS\ semilattice morphism
$\Lambda_K\colon \cP K \longrightarrow \mathcal{L}^{\leq 1}_{\subseteq,\rm{med}}(K^*,\cP \oRp)$ which is an order embedding. It is given by\display
\[\Lambda_K(X)(f) \eqdef [\inf_{x\in X} f(x)\, , \sup_{x\in X} f(x)\, ] \]
Further, if $\dCone(K)$ is convenient then $\Lambda_K$ is an isomorphism.
\qed \end{thm}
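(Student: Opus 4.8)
\emph{Proof proposal.} The plan is to derive this directly from Theorem~\ref{th:convex} applied to the universal d-cone $C \eqdef \dCone(K)$, exactly mirroring the passage from Theorem~\ref{th:upper} to Theorem~\ref{th:upperKS}. First I would invoke Theorem~\ref{prop:KSembed} to regard $K$ as a Scott-closed convex subset of $C$, and Propositions~\ref{prop:continuous} and~\ref{prop:KSembedding2} to see that $C$ is a continuous coherent d-cone, so that Theorem~\ref{th:convex} applies and yields the d-cone semilattice order embedding $\Lambda_C\colon \cP C \to \mathcal{L}_{\subseteq,\rm{med}}(C^*,\cP \oRp)$. As recalled just before the statement (and as in the proof of Theorem~\ref{PKeg}), $\cP K$ sits inside $\cP C$ as a Scott-closed convex $\nonor$-subsemilattice, namely the collection of convex lenses of $C$ contained in $K$.

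The key point is the characterisation, noted above, that a convex lens $X$ of $C$ lies in $\cP K$ if and only if $\ov{\Lambda_C(X)}\leq \norm{\hspace{0.5pt}\mbox{-}\hspace{0.5pt}}^*_{K}$; this is just the observation that $X \in \cP K$ iff $\da X = \overline{X}\subseteq K$, combined with the upper-powercone discussion of Section~\ref{lower} applied to $\overline{X}$ and the identity $\Lambda_C(X) = \Lambda_C(\da X)$. Consequently the restriction of $\Lambda_C$ to $\cP K$ takes values in the nonexpansive subcollection $\mathcal{L}^{\leq 1}_{\subseteq,\rm{med}}(C^*,\cP \oRp)$, and, composing with the isomorphism $\mathcal{L}^{\leq 1}_{\subseteq,\rm{med}}(C^*,\cP \oRp)\cong \mathcal{L}^{\leq 1}_{\subseteq,\rm{med}}(K^*,\cP \oRp)$ induced by the normed d-cone isomorphism $C^*\cong K^*$, one obtains $\Lambda_K$. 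A short calculation, using that the extension $\widetilde f\in C^*$ of $f\in K^*$ restricts to $f$ on $X\subseteq K$, gives $\Lambda_K(X)(f) = [\inf_{x\in X} f(x),\sup_{x\in X} f(x)]$ as claimed. That $\Lambda_K$ is a \KS\ semilattice morphism is then immediate, being built from the d-cone semilattice morphism $\Lambda_C$ by restriction to a sub-\KS\ semilattice and composition with an isomorphism; and it is an order embedding since $\Lambda_C$ is one and both the restriction and the transport reflect the order.

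For the final clause, suppose $\dCone(K)$ is convenient. By Theorem~\ref{th:convex}, $\Lambda_C$ is then a bijection onto all of $\mathcal{L}_{\subseteq,\rm{med}}(C^*,\cP \oRp)$. Given any $F\in \mathcal{L}^{\leq 1}_{\subseteq,\rm{med}}(C^*,\cP \oRp)$, take the unique $Y\in \cP C$ with $\Lambda_C(Y)=F$; since nonexpansiveness of $F$ means precisely $\ov{\Lambda_C(Y)}=\ov{F}\leq \norm{\hspace{0.5pt}\mbox{-}\hspace{0.5pt}}^*_{K}$, the characterisation above forces $Y\in \cP K$, so $F$ lies in the image of the restricted $\Lambda_C$, and hence (transporting to $K^*$) $\Lambda_K$ is onto. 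An order embedding that is onto is an isomorphism.

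I do not anticipate a serious obstacle: unlike the upper case (Theorem~\ref{th:upperKS}), where one must intersect the representing Scott-compact set with $K$ and separately argue non-emptiness and equality, here the representing lens $Y$ is handed to us already inside $K$ by the norm-domination condition, so no surgery is needed. The only points requiring care are the bookkeeping of the two-sided (lower and upper) data carried by $\cP \oRp$-valued functionals --- in particular checking that $\subseteq$-sublinearity and mediality are preserved under the transport $C^*\cong K^*$ --- and a correct application of the characterisation $X \in \cP K \iff \ov{\Lambda_C(X)}\leq \norm{\hspace{0.5pt}\mbox{-}\hspace{0.5pt}}^*_{K}$; both are routine given the results of Section~\ref{dme} and the lower-power-\KS\ discussion.
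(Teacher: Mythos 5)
Your proposal is correct and is essentially the paper's own argument: the paper likewise regards $\cP K$ as the convex lenses of $C=\dCone(K)$ contained in $K$, uses the criterion $X\in\cP K$ iff $\ov{\Lambda_C(X)}\leq \norm{\hspace{0.5pt}\mbox{-}\hspace{0.5pt}}^*_{K}$ (via $\Lambda_C(X)=\Lambda_C(\da X)$ and the norm-domination discussion), defines $\Lambda_K$ as the restriction of $\Lambda_C$ composed with the transport along $C^*\cong K^*$, and obtains both the order-embedding and the isomorphism clause directly from Theorem~\ref{th:convex}, with surjectivity in the convenient case given exactly by your nonexpansiveness argument. The only slip is cosmetic: the criterion you invoke comes from the \emph{lower}-powercone discussion of Section~\ref{lower}, not an ``upper-powercone'' one.
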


We specialise this result to domains. For any domain $P$
\[(\Phi_c)_{P}\colon  \cP\oRp^{(\cV_{\leq 1} P)^*} \to  \cP\oRp^{\cL P}\]
where $(\Phi_c)_{P}(F) = F \mapsto  F\circ \EXT_P$, is a d-cone semilattice isomorphism
with inverse $ F \mapsto  F\circ \RES_P$.  %
We then obtain\display
\begin{cor}\label{th:convexdomain}
Let $P$ be a coherent domain. 
Then  we have a  \KS\ semilattice isomorphism
\[\Lambda_P\colon \cP\mathcal{V}_{\leq 1} P \;\cong\;
\mathcal{L}^{\leq 1}_{\subseteq,\rm{med}}(\cL P,\cP \oRp)\]
It is given by\display
\[ \Lambda_P(X)(f) \eqdef [\inf_{\mu \in X} \int \!f\,d\mu\, ,
\sup_{\mu \in X} \int \!f\,d\mu\, ]\]
\end{cor}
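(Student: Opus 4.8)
The plan is to obtain Corollary~\ref{th:convexdomain} in exactly the same way Corollaries~\ref{th:lowerdomain} and~\ref{th:upperdomain} were obtained from their respective \KS-level representation theorems, namely by composing the abstract isomorphism of Theorem~\ref{th:convexKS} (applied to $K = \cV_{\leq 1}P$) with the transport isomorphism $(\Phi_c)_P$ induced by the extension/restriction maps between $\cL P$ and $(\cV_{\leq 1}P)^*$. So first I would check that $\cV_{\leq 1}P$ meets all the hypotheses of Theorem~\ref{th:convexKS}: from the discussion in Section~\ref{dme} (Example~\ref{ex:Probabilisticpowerdomains}), if $P$ is a coherent domain then $\cV_{\leq 1}P$ is a full \KS\ (being a Scott-closed convex subset of the d-cone $\cV P$), it is continuous, it is coherent, and $\dCone(\cV_{\leq 1}P) \cong \cV P$, which is a convenient d-cone for $P$ a coherent domain. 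Hence Theorem~\ref{th:convexKS} gives a \KS-semilattice isomorphism $\Lambda_{\cV_{\leq 1}P}\colon \cP\cV_{\leq 1}P \cong \mathcal{L}^{\leq 1}_{\subseteq,\mathrm{med}}((\cV_{\leq 1}P)^*,\cP\oRp)$.

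Next I would verify that $(\Phi_c)_P$ cuts down to a \KS-semilattice isomorphism between the two functional \KS-semilattices. We already know $(\Phi_c)_P\colon \cP\oRp^{(\cV_{\leq 1}P)^*}\to\cP\oRp^{\cL P}$ is a d-cone-semilattice isomorphism with inverse $F\mapsto F\circ\RES_P$ (just before the corollary statement). It remains to see that both $(\Phi_c)_P$ and its inverse preserve the properties carving out $\mathcal{L}^{\leq 1}_{\subseteq,\mathrm{med}}$: being Scott-continuous, $\subseteq$-sublinear, medial, and nonexpansive. Preservation of Scott-continuity, $\subseteq$-sublinearity and mediality is automatic since $\EXT_P$ and $\RES_P$ are d-cone morphisms and these conditions are expressed purely in terms of the cone operations (precomposition with a linear map preserves each of homogeneity, the containments defining $\subseteq$-sublinearity, and the inequalities defining mediality). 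Preservation of nonexpansiveness uses Lemma~\ref{isonorm}: since $\EXT_P$ preserves the norm and $\RES_P$ preserves the norm for $P$ a domain, $\norm{(\Phi_c)_P(F)(f)} = \norm{F(\EXT_P f)}$ and $\norm{f}_\infty = \norm{\EXT_P f}^*_{\cV_{\leq 1}P}$, so nonexpansiveness transfers both ways; here we use the norm $\norm{a}=\ov a$ on $\cP\oRp$ and that mediality/nonexpansiveness for an interval-valued functional is controlled by its upper component $\ov F$, exactly as in the proof of Corollary~\ref{th:upperdomain}. Thus $(\Phi_c)_P$ restricts to a \KS-semilattice isomorphism $\mathcal{L}^{\leq 1}_{\subseteq,\mathrm{med}}((\cV_{\leq 1}P)^*,\cP\oRp) \cong \mathcal{L}^{\leq 1}_{\subseteq,\mathrm{med}}(\cL P,\cP\oRp)$.

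Finally I would identify the composite. Writing $\Lambda_P \eqdef (\Phi_c)_P \circ \Lambda_{\cV_{\leq 1}P}$, for a convex lens $X\in\cP\cV_{\leq 1}P$ and $f\in\cL P$ we compute $\Lambda_P(X)(f) = \Lambda_{\cV_{\leq 1}P}(X)(\EXT_P f) = \Lambda_{\cV_{\leq 1}P}(X)(\ov f) = [\inf_{\mu\in X}\ov f(\mu),\ \sup_{\mu\in X}\ov f(\mu)] = [\inf_{\mu\in X}\int\! f\,d\mu,\ \sup_{\mu\in X}\int\! f\,d\mu]$, using that $\ov f(\mu) = \int\! f\,d\mu$ and the formula for $\Lambda_K$ in Theorem~\ref{th:convexKS} (together with the identification of $K^*$ with $C^*$, so that $f$ and its extension $\widetilde f$ agree on $K = \cV_{\leq 1}P$, as in the proof of Theorem~\ref{th:upperKS}). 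As a composite of \KS-semilattice isomorphisms, $\Lambda_P$ is itself a \KS-semilattice isomorphism, giving the stated formula.

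I do not expect a serious obstacle: the whole argument is a transport along established isomorphisms, and the only point requiring a moment's care is checking that $(\Phi_c)_P$ and its inverse preserve nonexpansiveness, which is where coherence of $P$ is genuinely used (via $\RES_P$ preserving the norm in Lemma~\ref{isonorm}); mediality and $\subseteq$-sublinearity preservation is purely formal. If anything, the most delicate bookkeeping is making sure the norm on $\cP\oRp$ and the reduction to upper components line up so that nonexpansiveness of the interval-valued functional really is equivalent to the scalar statement $\ov F(\EXT_P f) \leq \norm{\EXT_P f}^*_{\cV_{\leq 1}P} = \norm f_\infty$, but this is exactly analogous to the upper case already treated.
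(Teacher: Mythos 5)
Your proposal is correct and takes essentially the same route as the paper: compose the \KS\ semilattice isomorphism of Theorem~\ref{th:convexKS} applied to $\cV_{\leq 1}P$ (continuous, coherent, full, with $\dCone(\cV_{\leq 1}P)\cong\cV P$ convenient) with $(\Phi_c)_P$, after checking that the latter and its inverse preserve $\subseteq$-sublinearity, mediality and nonexpansiveness. Only your closing attribution is slightly off: coherence of $P$ is not what makes $\RES_P$ norm-preserving (Lemma~\ref{isonorm} needs only a domain); it enters through the hypotheses of Theorem~\ref{th:convexKS}, namely the coherence of $\cV_{\leq 1}P$ and the convenience of $\cV P$.
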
   
\begin{proof} 
Checking that $(\Phi_c)_{P}$ and its inverse preserve
$\subseteq$-sublinearity and mediality
, we see that ${\Phi_c}_P$ cuts down to an isomorphism 
\[ \mathcal{L}^{\leq 1}_{\subseteq,\rm{med}}((\cV_{\leq 1} P)^*,\cP \oRp) \cong \mathcal{L}^{\leq 1}_{\subseteq,\rm{med}}(\cL P,\cP \oRp)\]
 of \KS\ semilattices.
 As $P$ is a coherent domain, then, following Section~\ref{dme}, we see that $\cV_{\leq 1}P$ is continuous and coherent, and that $\dCone(\cV_{\leq 1}P)$ is convenient. So we may apply Theorem~\ref{th:convexKS}. 
One then displays $\Lambda_P$ as the following composition of \KS\ semilattice isomorphisms\display
 \[\cP\mathcal{V}_{\leq 1} P \xrightarrow{\Lambda_{(\cV_{\leq 1} P)}}
 \mathcal{L}^{\leq 1}_{\subseteq,\rm{med}}((\cV_{\leq 1}
 P)^*,\cP \oRp)   \xrightarrow{(\Phi_c)_{P}} \mathcal{L}^{\leq
   1}_{\subseteq,\rm{med}}(\cL P,\cP \oRp) \tag*{\qEd} \]
\def\popQED{}
\end{proof}

\section{Predicate transformers} \label{pred-tran}

We are ready now to achieve a goal that we can summarise under the
slogan \emph{`the equivalence of state transformer and predicate transformer
semantics'}. 
Let us begin by describing the general framework for the lower and
upper cases. 
In Section \ref{Powerkeg}, in both these cases, we modelled mixed
probabilistic and 
nondeterministic phenomena by a monad $S$  over a full subcategory
$\mathsf C$ of the category of dcpos and Scott-continuous maps. The
Kleisli category of  $S$ has as morphisms the Scott-continuous maps
\[s\colon P\to S(Q)\] We  name these \emph{state transformers}.

In Section \ref{funcrep}, we considered functional representations of $S$, 
which led to a monad $T$ with isomorphisms $S(P)\cong T(P)$. This
monad is a submonad of the continuation monad $\oRp^{{\oRp}^{\! P}}$,
and so its Kleisli category is \makered{faithfully embedded in} the Kleisli
category of the continuation monad whose morphisms are the
Scott-continuous maps
%
 %
  \[t\colon P\to \oRp^{{\oRp}^{\! Q}}\] 
 These morphisms provide our
general notion of state transformer. \makered{The collection
  $(\oRp^{\oRp^Q})^P$ of such state transformers  can be regarded as
  either a  d-cone join-semilattice  or a d-cone meet-semilattice with
  respect to the pointwise structure obtained from $\oRp$, depending
  on whether $\oRp$ is viewed as a d-cone join-semilattice or a d-cone
  meet-semilattice.}

In this setting, it makes sense to think of $\oRp$ as a space of
\emph{truthvalues} and then to call 
 Scott-continuous maps $f\colon P\to\oRp$ on a dcpo $P$
 \emph{predicates}, so that the function 
space $\makeblue{\oRp^P=}\cL P$ 
becomes the dcpo of predicates on $P$. A \emph{predicate
transformer} is then a Scott-continuous map 
\[p\colon \cL Q \to \cL P\]
\makered{and, as before, the collection $(\cL P)^{\cL Q}$ of such
  predicate transformers can  be regarded as either a d-cone
  join-semilattice or a  d-cone meet-semilattice depending on how
  $\oRp$ is viewed.}

There is an evident natural \makered{bijection} 
$\PT\colon (\oRp^{\oRp^Q})^P \cong (\cL P)^{\cL Q}$, where\display
\[\PT(t)(g)(x) \eqdef t(x)(g)  \quad (g \in \cL Q, x \in P)\]
\makered{This bijection is both a d-cone join-semilattice isomorphism
  and a  d-cone meet-semilattice isomorphism, depending on which of
  the above semilattice structures are taken on the state and
  predicate transformers.} 
It is \makered{then} our aim to characterise the `healthy' predicate
transformers, 
that is, those $p$ that correspond to the state transformers 
  $t\colon P\to T(Q)\subseteq \oRp^{{\oRp}^{\!\! Q}}$ 
 arising from \makered{the two} monads for mixed nondeterminism.

For the convex case there is a \makered{natural} 
modification
of this general framework where the role of $\oRp$ is taken by over by
$\cP\oRp$, the convex powercone over $\oRp$. As signalled in
Section~\ref{funcrep}, the uniformity at hand is that, up to
isomorphism, we are making use of the three powercones $\cH \oRp$,
$\cS \oRp$, and $\cP \oRp$. All three are based on $\oRp$, which is
$\cV \mathbf{1}$, the free valuation powerdomain on the
one-point dcpo.

\makeblue{In all cases considered here the `healthy' predicate
  transformers do not preserve the natural algebraic operations on the
  function spaces, that
  is, they are not homomorphisms. In particular, in the lower and upper cases they are   respectively sublinear and superlinear.  This phenomenon is explained from
  a general point of view in \cite{KK1,KK2}.}

As indicated above, we restrict ourselves to predicate
  transformers for the power 
  \KSs\ over domains. There are, nevertheless, related results for
  \KSs\ more generally.  For example, in the lower and upper cases, one
  takes predicates on a \KS\ $K$ to be elements of $K^*$, the sub-\KS\
  of $\oRp^K$ of all Scott-continuous linear functionals.  
Predicate  transformers are suitable Scott-continuous maps 
\[p\colon L^*\to K^*\] 
and state transformers  are linear Scott-continuous maps
\[t\colon K\to\oRp^{L^*}\]  
 In all three cases one obtains \KS\ isomorphisms between \KSs\ of state transformers and \KSs\ of suitably healthy predicate transformers. This differs from the domain case, where one rather obtains \KS\  \emph{semilattice} isomorphisms; the difference arises as there seems to be no general reason why, for example, a dual \KS\ $K^*$ should be a semilattice, whereas, if $K = \cV P$ then $K^*$ is  $\cL P$ which is a semilattice.
%



%


\subsection{The lower case} \label{trans-lower}

Consider two dcpos $P$ and $Q$. 
For any state transformer $t\colon P\to\oRp^{\cL Q}$, the
corresponding predicate transformer $\PT(t)\colon \cL Q\to \cL P$ is given by 
$\PT(t)(g)(x)= t(x)(g)$ for $g\in \cL Q$ and $x\in P$. One can check
directly that
$\norm{\PT(t)(g)}_{\infty}\leq \norm{g}_{\infty}$ for every $g\in \cL Q$ if and only if
$t(x)\leq \norm{\mbox{-}}_{\infty}$ for every $x\in P$, and that $\PT(t)$
is sublinear if and only if $t(x)$ is sublinear for every $x\in P$.
Thus the state transformers $t\colon P\to \SubL^{\leq 1}(\cL Q,\oRp)$
correspond bijectively via $\PT$ to the nonexpansive sublinear predicate
transformers $p\colon \cL Q\to \cL P$.
So $\PT$ cuts down to a \KS\  \makered{join-semilattice} isomorphism 
\[ \SubL^{\leq1}(\cL Q,\oRp)^P \cong \SubL^{\leq1}(\cL Q,\cL P)\]
\makered{taking the pointwise \KS\ join-semilattice structure on $\SubL^{\leq1}(\cL Q,\oRp)^P$.}
Finally, to make the link between  state transformers   \makered{and the healthy} predicate transformers,  we  set \\
$\PT_{{P},{Q}}(s)  \eqdef \PT( \Lambda_{Q} \circ s)$, ($s\colon {P}\to \cH {\cV_{\leq 1} Q}$), 
 where $\Lambda_{Q}$ is as in Section~\ref{lower}, and calculate that
\[\PT_{{P},{Q}}(s)(g)(x) = \PT(\Lambda_{Q} \circ s)(g)(x) = \Lambda_{Q}(s(x))(g)  =  \sup_{\mu \in s(x)}\int\! g\, d\mu\]
Combining the above discussion with Corollary~\ref{th:lowerdomain} we then obtain\display   
%
%


\begin{cor} \label{th:lowerdomainPred}
Let $P$ and $Q$ be  dcpos. To every state
transformer $s\colon P \to \cH \cV_{\leq 1}Q$ we can assign  a
predicate transformer $\PT_{P,Q}(s)\colon \cL Q\to \cL P$  by\display
\[\PT_{P,Q}(s)(g)(x) \eqdef \sup_{\mu \in s(x)}\int\! g\, d\mu\quad
(g\in \cL Q, x \in P)\]
The predicate transformer $\PT_{P,Q}(s)$ is sublinear and nonexpansive.
The assignment $\PT_{P,Q}$ is a \KS\ \makered{join-semilattice} morphism 
\[(\cH  \cV_{\leq 1} Q)^P \longrightarrow \SubL^{\leq1}(\cL Q,\cL P)\]
 If $Q$ is 
a domain then it is an isomorphism. 
\qed\end{cor}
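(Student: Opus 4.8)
The plan is to assemble this corollary from three pieces that are already in place: the functional representation of the lower mixed powerdomain (Corollary~\ref{th:lowerdomain}), the general observations about $\PT$ made just above the statement, and the fact that taking a function-space with codomain $\cL P$ commutes appropriately with the relevant \KS\ join-semilattice structure. First I would record that $\PT\colon (\oRp^{\cL Q})^P \to (\cL P)^{\cL Q}$, defined by $\PT(t)(g)(x) = t(x)(g)$, is a \KS\ join-semilattice isomorphism when both sides carry the pointwise structure inherited from $\oRp$ viewed as a d-cone join-semilattice; this is immediate since everything is computed pointwise and the join on $\oRp$ is $\max$. Then I would carry out the pointwise verification, already sketched in the text: for $t\colon P\to\oRp^{\cL Q}$, one has $t(x)$ sublinear for all $x$ iff $\PT(t)$ is sublinear, and $t(x)\leq \norm{\mbox{-}}_\infty$ for all $x$ iff $\PT(t)$ is nonexpansive (using the simple formulation $F(\mathbf 1)\leq 1$ of nonexpansiveness for monotone homogeneous $F$, as remarked after Corollary~\ref{th:lowerdomain}). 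Hence $\PT$ restricts to a \KS\ join-semilattice isomorphism $\SubL^{\leq1}(\cL Q,\oRp)^P \cong \SubL^{\leq1}(\cL Q,\cL P)$.

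Next I would observe that, by Corollary~\ref{th:lowerdomain}, $\Lambda_Q\colon \cH\cV_{\leq1}Q \to \SubL^{\leq1}(\cL Q,\oRp)$ is a \KS\ join-semilattice morphism, and an isomorphism when $Q$ is a domain. Post-composition with $\Lambda_Q$ induces a \KS\ join-semilattice morphism $(\cH\cV_{\leq1}Q)^P \to \SubL^{\leq1}(\cL Q,\oRp)^P$, $s\mapsto \Lambda_Q\circ s$, which is an isomorphism when $Q$ is a domain (function spaces into a fixed exponent preserve \KS\ join-semilattice isomorphisms, since all structure is pointwise). Composing this with the isomorphism of the previous paragraph gives exactly $\PT_{P,Q} = \PT(\Lambda_Q\circ(-))$ as a \KS\ join-semilattice morphism $(\cH\cV_{\leq1}Q)^P \to \SubL^{\leq1}(\cL Q,\cL P)$, an isomorphism when $Q$ is a domain. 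The explicit formula $\PT_{P,Q}(s)(g)(x) = \sup_{\mu\in s(x)}\int g\,d\mu$ then follows by unwinding definitions: $\PT_{P,Q}(s)(g)(x) = (\Lambda_Q(s(x)))(g) = \sup_{\mu\in s(x)}\int g\,d\mu$, using the formula for $\Lambda_Q = \Lambda_P$ from Corollary~\ref{th:lowerdomain}.

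The remaining routine point is that the image of $\PT_{P,Q}(s)$ does indeed land in $\SubL^{\leq1}(\cL Q,\cL P)$, i.e.\ that $\PT_{P,Q}(s)(g)$ is genuinely a Scott-continuous function $P\to\oRp$ and that $\PT_{P,Q}(s)$ is Scott-continuous, sublinear, and nonexpansive; all of this is inherited from $\Lambda_Q(s(x))\in\SubL^{\leq1}(\cL Q,\oRp)$ being such for each $x$, together with the Scott-continuity of $s$, and is exactly what the pointwise analysis above delivers. I do not anticipate a genuine obstacle here: the only mild subtlety is bookkeeping about which \KS\ join-semilattice structures are being used on the various function spaces and checking that $\PT$ and post-composition with $\Lambda_Q$ are morphisms for precisely those structures — but since every operation in sight is defined pointwise from $\oRp$, this is mechanical. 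So the proof is essentially a chaining of two isomorphisms (or morphisms, in general) plus a definitional unwinding, and I would present it in that order, citing Corollary~\ref{th:lowerdomain} for the hard analytic content.
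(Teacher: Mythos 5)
Your proposal is correct and follows essentially the same route as the paper: the paper's own argument (the discussion in Section~\ref{trans-lower}) likewise checks pointwise that sublinearity and nonexpansiveness transfer across $\PT$, cuts $\PT$ down to the \KS\ join-semilattice isomorphism $\SubL^{\leq1}(\cL Q,\oRp)^P \cong \SubL^{\leq1}(\cL Q,\cL P)$, and then composes with post-composition by $\Lambda_Q$ from Corollary~\ref{th:lowerdomain}, unwinding definitions for the explicit formula. The only cosmetic difference is that you invoke the simplified criterion $F(\mathbf 1_Q)\leq 1$ for nonexpansiveness where the paper checks the sup-norm condition directly; both are immediate.
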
  


Similar to the simplification of nonexpansiveness for functionals discussed in Section~\ref{lower} (and an immediate consequence of it), the condition of nonexpansiveness has a simple formulation for homogeneous predicate transformers  $p \colon   \cL Q\to \cL P$,  viz.\  $p(\mathbf{1}_Q)\leq \mathbf{1}_P$.

\subsection{The upper case} \label{trans-upper}

Consider two dcpos, $P$ and $Q$. In agreement with the terminology
introduced in Section \ref{upper} we will say that a
predicate transformer $p\colon \cL Q \to \cL P$ is \emph{strongly
  nonexpansive} if we have
\[p(f+g)\leq p(f) + \norm{g}_{\infty}\cdot {\mathbf 1}_P\]
%
for all $f,g \in \cL Q$, where ${\mathbf 1}_P$ is the constant function on $P$ with value $1$. 
Strongly nonexpansive predicate transformers are nonexpansive. Indeed,
for $f=0$ the inequality yields $p(g)(x)\leq\norm{g}_{\infty}$
for all $x\in P$, whence $\norm{p(g)}_{\infty}\leq \norm{g}_{\infty}$.
%
In the case of homogeneous predicate transformers this can be
simplified to the equivalent condition \display 
\[p(f +{\mathbf 1}_Q)\leq p(f)+{\mathbf 1}_P\quad   (\mbox{for all } f \in \cL Q)\]
as follows immediately from the corresponding simplification for
functionals  in  Section~\ref{upper}.

For any state transformer $t\colon P\to\oRp^{\cL Q}$, we have
$\PT(t)(g)(x)=t(x)(g)$, ($g\in 
\cL Q, x\in P$). This firstly implies that $t(x)$ is superlinear for
every $x$ if, and only if, $\PT(t)$ is superlinear. It secondly
implies that $t(x)$ 
is strongly nonexpansive for every $x$ if, and only if, $\PT(t)$ is
strongly nonexpansive.
For we have $t(x)(f + g) \leq t(x)(f) + \norm{g}_{\infty}$ for every $x \in P$ if, and only if, $\makered{\PT(t)}(f+g)(x) \leq \makered{\PT(t)}(f)(x)+ \norm{g}_{\infty}$ for every $x \in P$, that is, if, and only if, $\makered{\PT(t)}(f+g)\leq \makered{\PT(t)}(f)+ \norm{g}_{\infty}\cdot \mathbf{1}_P$.

We write $\SuperL^{\rm{sne}}(\cL Q,\cL P)$ for the set of strongly nonexpansive
superlinear predicate transformers. 
and note that it forms a sub-\KS\ \makered{meet-semilattice} of
$({\cL P})^{\cL Q}$ \makered{(taking the pointwise \KS\ meet-semilattice structure on $({\cL P})^{\cL Q}$)}. So $\PT$ cuts down to a \KS\  \makered{meet-semilattice}  isomorphism  
\[ \SuperL^{\rm{sne}}(\cL Q,\oRp)^P \cong \SuperL^{\rm{sne}}(\cL Q,\cL P)\]

Finally, to make the link between state transformers \makered{and the healthy} predicate transformers, we set $\PT_{P,Q}(s)
\eqdef \PT( \Lambda_Q \circ s)$ ($s\colon P\to \cS\cV_{\leq 1} Q$), where $\Lambda_Q$ is as in
Section~\ref{upper} (and assuming
now that $Q$   is a domain), and calculate that 
\[PT_{P,Q}(s)(g)(x) =  
\inf_{\mu \in s(x)}\int\! g \, d\mu\]
Combining the above discussion with Corollary~\ref{th:upperdomain} we  then
obtain\display    
%
.


%
%

\begin{cor} \label{th:upperdomainPred}
Let $P$ be a dcpo and let $Q$ be a domain. To every state
transformer $s\colon P \to \cS \cV_{\leq 1}Q$ we can assign  a predicate transformer $\PT_{P,Q}(s)\colon
\cL Q\to \cL P$  by\display
\[\PT_{P,Q}(s)(g)(x) \eqdef \inf_{\mu \in s(x)}\int\! g \, d\mu\quad
(g\in \cL Q, x \in P)\]
The predicate transformer $\PT_{P,Q}(s)$ is superlinear and strongly
nonexpansive. 
The assignment $\PT_{P,Q}$ is a \KS\ \makered{meet-semilattice} isomorphism 
\[(\cS \cV_{\leq 1} Q)^P \cong \SuperL^{\rm{sne}}(\cL Q,\cL P) \tag*{\qEd}\]
\end{cor}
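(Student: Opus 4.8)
The plan is to assemble this corollary from three pieces that are already in place, exactly as was done for Corollary~\ref{th:lowerdomainPred}. First I would fix dcpos $P$ and $Q$, with $Q$ a domain (the latter needed so that Corollary~\ref{th:upperdomain} applies). The backbone is the natural bijection $\PT\colon (\oRp^{\cL Q})^P \cong (\cL P)^{\cL Q}$, given by $\PT(t)(g)(x) = t(x)(g)$, together with its behaviour on the relevant subspaces: I would first observe that $\PT$ restricts to a \KS\ meet-semilattice isomorphism $\SuperL^{\rm{sne}}(\cL Q,\oRp)^P \cong \SuperL^{\rm{sne}}(\cL Q,\cL P)$. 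This is pointwise, so it is immediate that $\PT$ and its inverse preserve the pointwise \KS\ meet-semilattice structure; the only content is that $t(x)$ is superlinear and strongly nonexpansive for every $x\in P$ if and only if $\PT(t)$ is superlinear and strongly nonexpansive — and both of these are the ``separately in each argument'' observations already spelled out in Section~\ref{trans-upper}, so I would just cite them.

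Next I would bring in Corollary~\ref{th:upperdomain}, which gives the \KS\ meet-semilattice isomorphism $\Lambda_Q\colon \cS\cV_{\leq 1}Q \cong \SuperL^{\rm{sne}}(\cL Q,\oRp)$ (using that $Q$ is a domain). Post-composition with $\Lambda_Q$ is a \KS\ meet-semilattice isomorphism $(\cS\cV_{\leq 1}Q)^P \cong \SuperL^{\rm{sne}}(\cL Q,\oRp)^P$, since $\Lambda_Q$ is applied pointwise and pointwise post-composition with an isomorphism is an isomorphism of the pointwise structures. Composing this with the $\PT$-isomorphism above yields a \KS\ meet-semilattice isomorphism $(\cS\cV_{\leq 1}Q)^P \cong \SuperL^{\rm{sne}}(\cL Q,\cL P)$, and this composite is exactly $\PT_{P,Q} = s \mapsto \PT(\Lambda_Q\circ s)$.

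Finally I would verify the explicit formula by the one-line computation already begun in Section~\ref{trans-upper}:
\[
\PT_{P,Q}(s)(g)(x) = \PT(\Lambda_Q\circ s)(g)(x) = \Lambda_Q(s(x))(g) = \inf_{\mu\in s(x)}\int g\,d\mu,
\]
the last equality being the formula for $\Lambda_Q$ from Corollary~\ref{th:upperdomain}. That $\PT_{P,Q}(s)$ is superlinear and strongly nonexpansive then follows because $\PT_{P,Q}(s)$ lies in $\SuperL^{\rm{sne}}(\cL Q,\cL P)$ by construction; alternatively, it is immediate from the pointwise character of the formula together with the corresponding properties of $\Lambda_Q(s(x))$ established in Theorem~\ref{th:upperKS}.

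I do not anticipate a genuine obstacle here: every ingredient is a previously proved isomorphism and the argument is pure assembly. The one point requiring a little care — and the closest thing to a ``hard part'' — is bookkeeping about which \KS\ semilattice structure is in play at each stage (meet-semilattice throughout, inherited pointwise from the meet-semilattice structure on $\oRp$), and checking that the subspace restrictions of $\PT$ are compatible with these structures; but this is exactly parallel to the lower case and the relevant separate-argument characterisations of superlinearity and strong nonexpansiveness are already recorded in Section~\ref{trans-upper}, so it reduces to citation rather than fresh work.
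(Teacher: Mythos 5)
Your proposal is correct and follows essentially the same route as the paper: the paper obtains the corollary by combining the discussion in Section~\ref{trans-upper} (the fact that $\PT$ cuts down to a \KS\ meet-semilattice isomorphism $\SuperL^{\rm{sne}}(\cL Q,\oRp)^P \cong \SuperL^{\rm{sne}}(\cL Q,\cL P)$ via the pointwise characterisations of superlinearity and strong nonexpansiveness) with Corollary~\ref{th:upperdomain} and the computation $\PT_{P,Q}(s)(g)(x) = \Lambda_Q(s(x))(g) = \inf_{\mu\in s(x)}\int g\,d\mu$, exactly as you assemble it. Your bookkeeping points (meet-semilattice structure throughout, $Q$ a domain while $P$ is only a dcpo) match the paper's hypotheses and there is no gap.
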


\subsection{The convex case} \label{trans-convex}

For this case we have
to modify our framework. First we need a function space construction.
%
\makered{For d-cone semilattices $C$ and $D$, the collection $\cL_{\mathrm{mon}}(C,D)$ of Scott-continuous $\subseteq$-monotone maps from $C$ to $D$} is a sub-d-cone semilattice of the d-cone semilattice of all Scott-continuous maps from $C$ to $D$ equipped with the pointwise d-cone semilattice structure.

%


In the modified framework, the r\^{o}le of $\oRp$, considered as join- and meet-semilattices in the
lower and upper cases, is  taken over by $\cP\oRp$, the
convex powercone over $\oRp$.
Predicates on a dcpo $P$ are no longer  functionals with
values in $\oRp$ but are now rather Scott-continuous 
functionals of the form $f\colon P \to \cP\oRp$; \makered{they form a d-cone semilattice with the pointwise structure}.  We define a norm on predicates $f\type P \to \cP\oRp$ by\display $\,\norm{f} \eqdef \norm{\ov{f}}_{\infty} \makered{(= \bigvee_{x \in P}\ov{f}(x))}$. 
Employing the notation of Section~\ref{convex} we have a bijection $f \rightarrow (\un{f},\ov{f})$
between predicates and pairs of linear functionals $g,h \in \cL P$
with $g \leq h$. 
\makered{Note that $(\un{f},\ov{f}) \leq (\un{f'},\ov{f'})$ if, and only if, $\un{f} \leq \un{f'}$ and 
$\ov{f} \leq \ov{f'}$, that $(\un{f},\ov{f}) \cup (\un{f'},\ov{f'}) = (\un{f} \wedge \un{f'},\ov{f} \vee \ov{f'})$, and that $(\un{f},\ov{f}) \subseteq (\un{f'},\ov{f'})$ if, and only if, $\un{f} \geq \un{f'}$ and 
$\ov{f} \leq \ov{f'}$.}


%
We take general state transformers to be maps\display
  \[t\colon P\to {\cP \oRp}^{{\oRp}^{\! Q}}\] 
One might rather have expected, $t\colon P\to \cP \oRp^{{\cP \oRp}^{Q}}$, uniformly replacing $\oRp$ with $\cP \oRp$; we chose our definition to be closer to the functional representation.

Predicate transformers are taken to be Scott-continuous maps
 \[p\colon (\cP\oRp)^Q \to (\cP\oRp)^P\] 
 which are, in addition, required to
preserve the partial order $\subseteq$, i.e., to be $\subseteq$-monotone. 
\makered{This requirement is a technical condition to achieve an isomorphism between general state transformers and predicate transformers (see below).}
\makered{ The predicate transformers form a d-cone semilattice $\cL_{\mathrm{mon}}((\cP\oRp)^Q, (\cP\oRp)^P)$.}
  Note that a predicate transformer $p$ is nonexpansive if, and only if, $\norm{\ov{p}(f)}_{\infty} \leq \norm{\ov{f}}_{\infty}$, for any predicate $f$.

We link these predicate transformers to  \makered{general state transformers via} `predicate transformers of diagonal form' which we take to be Scott-continuous functions:
\[q\colon \cL Q \longrightarrow  (\cP\oRp)^P\]
State transformers $t\colon P\to (\cP\oRp)^{{\cL Q}}$ are connected to predicate transformers of diagonal form by the map
\[\mathrm{T}\colon (\cP\oRp^{{\cL Q}})^P\longrightarrow ((\cP\oRp)^P)^{{\cL Q}} \]
where $\mathrm{T}(t)(g)(x) \eqdef t(x)(g)$. \makered{This map is evidently an isomorphism of d-cone semilattices, with respect to the pointwise structures.}

\makered{To connect predicate transformers of diagonal form to predicate transformers} we first extend some definitions from predicates to functions $F\colon D \to (\cP\oRp)^P$, with $D$ a d-cone. Let $F$ be such a function.
We define $\un{F},\ov{F}\colon D \to  \cL P$  by setting $\un{F}(x) = \un{F(x)}$ and $\ov{F}(x) = \ov{F(x)}$, for $x \in D$.
%
 %
%
Then define a map $\mathrm{P}$ between the two kinds of predicate transformers\display
 \[\mathrm{P}\colon ((\cP\oRp)^P)^{\cL Q} \longrightarrow  \cL_{\mathrm{mon}}((\cP\oRp)^P, (\cP\oRp)^Q)\]
 by
 $P(q)(f) \eqdef  [\un{q(\un{f})},\ov{q(\ov{f})}])$.
 \begin{lem} $\mathrm{P}$ is an isomorphism of \makered{d-cone semilattices}.
 \end{lem}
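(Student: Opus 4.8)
The plan is to exhibit an explicit inverse to $\mathrm{P}$ and check that both $\mathrm{P}$ and its inverse are Scott-continuous, linear, homogeneous, and preserve the semilattice operation $\nonor$ on the d-cone semilattices in question. First I would observe that a predicate transformer of diagonal form $q\colon \cL Q \to (\cP\oRp)^P$ is, via the bijection $F \mapsto (\un F,\ov F)$ between functions valued in $(\cP\oRp)^P$ and pairs of functions valued in $\cL P$ with $\un F \leq \ov F$, the same thing as a pair $(\un q,\ov q)$ of Scott-continuous maps $\cL Q \to \cL P$ with $\un q \leq \ov q$ pointwise. Correspondingly, given a predicate transformer $p\colon (\cP\oRp)^Q \to (\cP\oRp)^P$ which is $\subseteq$-monotone, its behaviour is determined by a pair $(\un p,\ov p)$ of maps $\cL Q \to \cL P$ (obtained by feeding $p$ the ``collapsed'' predicates $[\un f,\un f]$ and $[\ov f,\ov f]$, since $\subseteq$-monotonicity plus Scott-continuity forces $p([\un f,\ov f])$ to have lower component $\un{p([\un f, \un f])}$ read off from the diagonal input $[\un f, \un f]$ and similarly for the upper component — this is exactly why $\subseteq$-monotonicity is imposed). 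So the inverse $\mathrm{P}^{r}$ should send $p$ to the diagonal-form transformer $q$ with $\un q(g) = \un{p([g,g])}$ and $\ov q(g) = \ov{p([g,g])}$.

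Next I would verify $\mathrm{P}^r \circ \mathrm{P} = \id$ and $\mathrm{P} \circ \mathrm{P}^r = \id$. For the first, starting from $q$, we have $\mathrm P(q)(f) = [\un{q(\un f)},\ov{q(\ov f)}]$; feeding in $f = [g,g]$ gives $\mathrm P(q)([g,g]) = [\un{q(g)},\ov{q(g)}] = q(g)$, so applying $\mathrm P^r$ returns $q$. For the second, starting from a $\subseteq$-monotone Scott-continuous $p$, set $q = \mathrm P^r(p)$, so $q(g) = p([g,g])$; then $\mathrm P(q)(f) = [\un{p([\un f,\un f])},\ov{p([\ov f,\ov f])}]$, and the claim is that this equals $p(f)$ for all predicates $f$. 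Here is where the real content sits: one writes $[\un f,\ov f] = [\un f,\un f] \nonor [\ov f,\ov f]$? — no, rather one uses that $[\un f, \ov f]$ sits between $[\un f, \un f]$ and $[\ov f, \ov f]$ in the $\subseteq$-order only in its upper end, so instead the right identity is $[\un f, \ov f] = [\un f, \un f] + (0 \nonor [0, \ov f - \un f])$-type manipulation; cleaner is: $\un{[\un f,\ov f]} = \un{[\un f,\un f]}$ and $\ov{[\un f,\ov f]} = \ov{[\ov f,\ov f]}$, combined with the fact that a Scott-continuous $\subseteq$-monotone $p$ satisfies $\un{p(f)}$ depending only on $\un f$ and $\ov{p(f)}$ only on $\ov f$. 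This last fact is the crux and I would prove it by showing $\un{p(f)} = \un{p(f')}$ whenever $\un f = \un{f'}$: indeed $f \nonor f'$ and $f, f'$ all have the same lower component, and $f, f' \subseteq f \nonor f''$ for suitable $f''$, so $\subseteq$-monotonicity pins down the lower component; a short argument using that $f = \bigvee^\uparrow$ of predicates of the form $[\un f, \un f \vee h]$ with $h$ bounded, plus $\subseteq$-monotonicity applied to $[\un f,\un f] \subseteq [\un f, \un f \vee h]$, does it.

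Finally I would check that $\mathrm P$ is a morphism of d-cone semilattices: homogeneity, $\mathrm P(r q) = r \mathrm P(q)$, is immediate from $\un{r q(g)} = r\un{q(g)}$ etc. and the formula $r[\un a,\ov a] = [r\un a, r \ov a]$; additivity $\mathrm P(q_1 + q_2) = \mathrm P(q_1) + \mathrm P(q_2)$ follows from $[\un a,\ov a] + [\un b,\ov b] = [\un a + \un b, \ov a + \ov b]$ applied pointwise; preservation of $\nonor$ follows from $[\un a,\ov a] \nonor [\un b,\ov b] = [\min(\un a,\un b),\max(\ov a,\ov b)]$, using that $\nonor$ on the pointwise d-cone semilattice $\cL_{\mathrm{mon}}((\cP\oRp)^P,(\cP\oRp)^Q)$ is computed componentwise and that $\un{(q_1 \nonor q_2)(g)} = \min(\un{q_1(g)},\un{q_2(g)})$ while $\ov{(q_1 \nonor q_2)(g)} = \max(\ov{q_1(g)},\ov{q_2(g)})$, which is exactly how $\nonor$ on diagonal-form transformers (inherited from $(\cP\oRp)^P$) acts. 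Scott-continuity of $\mathrm P$ and of $\mathrm P^r$ is routine since all operations involved ($\un{-}$, $\ov{-}$, forming $[-,-]$, composition, pointwise suprema) are Scott-continuous. The main obstacle, as flagged above, is the lemma that $\subseteq$-monotonicity plus Scott-continuity forces a predicate transformer's action to decompose into independent actions on lower and upper components; everything else is formal bookkeeping with the explicit formulas for the operations on $\cP\oRp$.
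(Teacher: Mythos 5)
Your proposal is correct and takes essentially the paper's own route: your inverse $\mathrm{P}^r(p) = \bigl(g \mapsto p([g,g])\bigr)$ is exactly the $q$ the paper constructs in its surjectivity step, and your crux claim --- that $\un{p(f)}$ depends only on $\un{f}$ and $\ov{p(f)}$ only on $\ov{f}$ --- is precisely the paper's claim, which it proves in two lines by playing the two orders against each other: $[\un{f},\un{f}] \leq [\un{f},\ov{f}]$ gives one inequality and $[\un{f},\un{f}] \subseteq [\un{f},\ov{f}]$ the reverse (and dually for the upper component via $[\ov{f},\ov{f}]$). Your directed-sup detour through $[\un{f},\un{f}\vee h]$ is unnecessary, and you should make the $\leq$-monotonicity half of the "pinning down" explicit, since $\subseteq$-monotonicity alone yields only one of the two needed inequalities.
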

 \begin{proof} It is routine to verify that $\mathrm{P}$ is a morphism of \makered{d-cone semilattices}. To see that $\mathrm{P}$ is an order embedding suppose that 
 $\mathrm{P}(q) \leq \mathrm{P}(q')$ and choose $g\in L^*$ to show that $q(g) \leq q'(g)$. Then we have $[\un{q(\un{f})},\ov{q(\ov{f})}]) \leq [\un{q'(\un{f})},\ov{q'(\ov{f})}])$ where $f = [g,g]$. So $\un{q(g)} \leq \un{q'(g)}$ and $\ov{q(g)} \leq \ov{q'(g)}$, and so $q(g) \leq q'(g)$, as required.
 
 To see that $\mathrm{P}$ is onto, choose a predicate transformer $p$ to find a $q$ with $p = \mathrm{P}(q)$. We claim 
that \makered{$\un{p([\un{f},\ov{f}])} = \un{p([\un{f},\un{f}])}$} and 
\makered{$\ov{p([\un{f},\ov{f}])} = \ov{p([\ov{f},\ov{f}])}$}. 
\makered{For the first of these claims, as $[\un{f},\un{f}] \leq [\un{f},\ov{f}]$ we have $\un{p([\un{f},\un{f}])} \leq \un{p([\un{f},\ov{f}])}$, since $p$
preserves the order $\leq$, and as $[\un{f},\un{f}] \subseteq [\un{f},\ov{f}]$ we have $\un{p([\un{f},\un{f}])} \geq \un{p([\un{f},\ov{f}])}$, since 
$p$ preserves the order $\subseteq$. 
The proof of  the second of these claims is similar: as $[\un{f},\ov{f}] \leq [\ov{f},\ov{f}]$ we have $\ov{p([\un{f},\ov{f}])} \leq \ov{p([\ov{f},\ov{f}])}$, since $p$
preserves the order $\leq$, and as $[\un{f},\ov{f}] \supseteq [\ov{f},\ov{f}]$ we have $\ov{p([\un{f},\ov{f}])} \geq \ov{p([\ov{f},\ov{f}])}$, since 
$p$ preserves the order $\subseteq$. }


Defining $q = g \mapsto p([g,g])$, we then see that $p = \mathrm{P}(q)$, as required.
 \end{proof}
 
 So we have a \KS\ semilattice isomorphism between general state transformers and predicate transformers\display
 \[\mathrm{P} \circ \mathrm{T}\colon (\cP\oRp^{{\cL Q}})^P \cong \cL_{\mathrm{mon}}((\cP\oRp)^P, (\cP\oRp)^Q)\]
 and we seek the relevant healthiness conditions on the predicate transformers.

Define a function $F\colon D \to (\cP\oRp)^P$, with $D$ a d-cone and $P$ a dcpo, to be \emph{medial} if\display
%
\[\un{ F}(x + y)\leq \un{ F}(x)+\ov{ F}(y)\leq
\ov{ F}(x + y)\]
for all $x,y \in D$,
and define a function $F\display D \to C$, where $D$ is a d-cone and $C$ is a 
 d-cone \makered{semilattice}, to be 
\emph{$\subseteq$-sublinear} if it is homogeneous and $F(x + y) \subseteq F(x) + F(y)$, for all $x,y \in D$ (this generalises the definition of $\subseteq$-sublinearity in Section~\ref{convex}, and in the case where $C$ is $(\cP\oRp)^P$, it is equivalent to $\ov{F}$ being sublinear and $\un{F}$ being superlinear).

Now fix a state transformer $s$ and set $q = \mathrm{T}(s)$ and $p = \mathrm{P}(q)$. 
We have $\ov{t(x)}$ sublinear for every $x\in P$ iff $\ov{q}$ is sublinear iff $\ov{p}$ is sublinear  
and, similarly, $\un{t(x)}$ is superlinear for every $x\in P$ iff  $\un{p}$ is superlinear.
\makered{So $t(x)$ is $\subseteq$-sublinear for every $x\in P$ iff $q$ is $\subseteq$-sublinear iff $p$ is $\subseteq$-sublinear.}
Next, $t(x)$ is medial 
for all $x\in P$ iff $q$ is medial 
iff $p$ is medial. 
Finally, \makered{$t(x)$ is nonexpansive for all $x \in P$}
 iff $\ov{t(x)}\leq \norm{\mbox{-}}_{\infty}$ for all $x\in P$ 
 iff $\norm{\ov{q}(g)}_{\infty} \leq \norm{g}_{\infty}$ for all $g\in {\cL Q}$, 
  iff $\norm{\ov{p}(f)}_{\infty} \leq \norm{\ov{f}}_{\infty}$ for all predicates $f$,
  that is, iff $p$ is nonexpansive. 

We write $ \cL^{\leq  1}_{\mathrm{mon},\subseteq,\rm{med}}((\cP\oRp)^Q,(\cP\oRp)^P)$
for the set of $\subseteq$-monotone, $\subseteq$-sublinear, medial,
nonexpansive  predicate 
transformers. 
\makered{As is straightforwardly checked}, it forms a sub-\KS\ \makered{semilattice} of $ \cL_{\mathrm{mon}}((\cP\oRp)^Q,(\cP\oRp)^P)$,
 and, from the above considerations, we see that $\mathrm{P} \circ \mathrm{T}$ cuts down to a \KS\ semilattice isomorphism\display
\[\mathcal{L}^{\leq 1}_{\subseteq,\rm{med}}(\cL Q,\cP \oRp)^P \cong \cL^{\leq  1}_{\mathrm{mon},\subseteq,\rm{med}}((\cP\oRp)^Q,(\cP\oRp)^P) \]

Finally, to make the link between state transformers and predicate transformers  we set \\
 $\PT_{P,Q}(s) \eqdef \mathrm{P} ( \mathrm{T} (\Lambda_Q \circ s))$, ($s\colon P\to \cP \cV_{\leq 1}Q$ ), where $\Lambda_Q$ is as in Section~\ref{convex} (and assuming now that $Q$  is a coherent domain), and calculate that 
\[\PT_{P,Q}(s)(g)(x) =  [\inf_{\mu \in s(x)}\int\! \un{g} \, d\mu, \sup_{\mu \in s(x)}\int\! \ov{g} \, d\mu\ ]  \]
Combining the above discussion with  Corollary~\ref{th:convexdomain}, 
we then obtain\display

\begin{cor} \label{th:convexdomainPred}
Let $P$ be a dcpo and let $Q$ be  a coherent domain. To every state
transformer $s\colon P \to \cP \cV_{\leq 1}Q$ we can assign  a predicate transformer $\PT_{P,Q}(s)\colon
\cP\oRp^Q \to \cP\oRp^P$  by\display
\[\PT_{P,Q}(s)(g)(x) \eqdef [\inf_{\mu \in s(x)}\int\! \un{g} \, d\mu,
\sup_{\mu \in s(x)}\int\! \ov{g} \, d\mu\ ]        \quad (g\in \cP\oRp^Q, x \in P)\]
The predicate transformer $\PT_{P,Q}(s)$ is  nonexpansive,
$\subseteq$-monotone, $\subseteq$-sublinear, and medial. 
The assignment $\PT_{P,Q}$ is a \KS\ \makered{semilattice} isomorphism 
\[(\cP\cV_{\leq 1} Q)^P \cong  \cL^{\leq
    1}_{\mathrm{mon},\subseteq,\rm{med}}(\cP\oRp^Q,\cP\oRp^P) \tag*{\qEd}\]
\end{cor}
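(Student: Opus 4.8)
The plan is to assemble the corollary from pieces already in place, exactly as the parallel Corollaries~\ref{th:lowerdomainPred} and~\ref{th:upperdomainPred} were assembled. The discussion preceding the statement has already done almost all the work: it established the chain of \KS\ semilattice isomorphisms
\[(\cP \cV_{\leq 1} Q)^P \;\xrightarrow{(\Lambda_Q \circ -)^P}\; \mathcal{L}^{\leq 1}_{\subseteq,\rm{med}}(\cL Q,\cP \oRp)^P \;\xrightarrow{\mathrm{P} \circ \mathrm{T}}\; \cL^{\leq 1}_{\mathrm{mon},\subseteq,\rm{med}}((\cP\oRp)^Q,(\cP\oRp)^P)\]
where the first map is an isomorphism by Corollary~\ref{th:convexdomain} (using that $Q$ is a coherent domain, so $\Lambda_Q$ is an isomorphism) applied pointwise, and the second is the cut-down of $\mathrm{P}\circ\mathrm{T}$ proved in the preceding paragraph. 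So the first step is simply to observe that $\PT_{P,Q}$, defined as $s \mapsto \mathrm{P}(\mathrm{T}(\Lambda_Q\circ s))$, is the composite of these two and hence a \KS\ semilattice isomorphism.

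The second step is to verify the explicit formula for $\PT_{P,Q}(s)(g)(x)$. Here I would unwind the definitions: $\mathrm{T}(\Lambda_Q\circ s)(h)(x) = \Lambda_Q(s(x))(h) = [\inf_{\mu\in s(x)}\int h\,d\mu,\ \sup_{\mu\in s(x)}\int h\,d\mu]$ for $h\in\cL Q$, using the formula for $\Lambda_Q$ from Corollary~\ref{th:convexdomain}. Then applying $\mathrm{P}$, which sends $q$ to $f\mapsto [\un{q(\un f)},\ov{q(\ov f)}]$, gives $\PT_{P,Q}(s)(g)(x) = [\un{\mathrm{T}(\Lambda_Q\circ s)(\un g)(x)},\ \ov{\mathrm{T}(\Lambda_Q\circ s)(\ov g)(x)}] = [\inf_{\mu\in s(x)}\int \un g\,d\mu,\ \sup_{\mu\in s(x)}\int \ov g\,d\mu]$, which is the claimed formula. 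This is routine bookkeeping.

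The third and final step is to record that, for each fixed $s$, the predicate transformer $\PT_{P,Q}(s)$ lands in $\cL^{\leq 1}_{\mathrm{mon},\subseteq,\rm{med}}$, i.e.\ that it is $\subseteq$-monotone, $\subseteq$-sublinear, medial, and nonexpansive. But this is precisely what the paragraph of equivalences just before the statement established: $t(x) = \Lambda_Q(s(x))$ is $\subseteq$-sublinear, medial, and nonexpansive for every $x$ (because $\Lambda_Q$ has values in $\mathcal{L}^{\leq 1}_{\subseteq,\rm{med}}(\cL Q,\cP\oRp)$ by Theorem~\ref{th:convexKS}/Corollary~\ref{th:convexdomain}), and each of these properties of $t(\cdot)$ was shown to be equivalent to the corresponding property of $p = \PT_{P,Q}(s)$. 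So nothing new needs proving here; one just cites the discussion.

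Since every ingredient is already available in the text, there is no real obstacle — the only mild subtlety is keeping the lower/upper ends $\un{(\cdot)}$ and $\ov{(\cdot)}$ straight through the definitions of $\mathrm{T}$ and $\mathrm{P}$ when deriving the explicit formula, and making sure one invokes the coherence of $Q$ (needed for $\Lambda_Q$ to be an isomorphism, via convenience of $\dCone(\cV_{\leq 1}Q)$) rather than mere continuity. Accordingly the proof will be very short: cite Corollary~\ref{th:convexdomain}, cite the preceding discussion for the cut-down isomorphism and the healthiness equivalences, and do the one-line computation of the formula.

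\begin{proof}
By Corollary~\ref{th:convexdomain}, since $Q$ is a coherent domain, $\Lambda_Q\colon \cP\cV_{\leq 1}Q \cong \mathcal{L}^{\leq 1}_{\subseteq,\rm{med}}(\cL Q,\cP\oRp)$ is a \KS\ semilattice isomorphism; applying it pointwise yields a \KS\ semilattice isomorphism $(\cP\cV_{\leq 1}Q)^P \cong \mathcal{L}^{\leq 1}_{\subseteq,\rm{med}}(\cL Q,\cP\oRp)^P$. Composing with the \KS\ semilattice isomorphism $\mathrm{P}\circ\mathrm{T}$ of the preceding discussion shows that $\PT_{P,Q} = s \mapsto \mathrm{P}(\mathrm{T}(\Lambda_Q\circ s))$ is a \KS\ semilattice isomorphism $(\cP\cV_{\leq 1}Q)^P \cong \cL^{\leq 1}_{\mathrm{mon},\subseteq,\rm{med}}(\cP\oRp^Q,\cP\oRp^P)$.

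For the explicit formula, fix $s$ and write $t = \Lambda_Q\circ s$, so that $\mathrm{T}(t)(h)(x) = t(x)(h) = \Lambda_Q(s(x))(h) = [\inf_{\mu\in s(x)}\int h\,d\mu,\ \sup_{\mu\in s(x)}\int h\,d\mu]$ for $h \in \cL Q$, by the formula for $\Lambda_Q$ in Corollary~\ref{th:convexdomain}. Then, by the definition of $\mathrm{P}$, for any predicate $g \in \cP\oRp^Q$ and $x \in P$,
\[\PT_{P,Q}(s)(g)(x) = \bigl[\,\un{\mathrm{T}(t)(\un g)(x)}\,,\ \ov{\mathrm{T}(t)(\ov g)(x)}\,\bigr] = \bigl[\,\inf_{\mu\in s(x)}\int\!\un g\,d\mu\,,\ \sup_{\mu\in s(x)}\int\!\ov g\,d\mu\,\bigr]\]
as claimed. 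Finally, $\Lambda_Q(s(x)) = t(x)$ is $\subseteq$-sublinear, medial, and nonexpansive for every $x \in P$, since $\Lambda_Q$ takes values in $\mathcal{L}^{\leq 1}_{\subseteq,\rm{med}}(\cL Q,\cP\oRp)$; by the equivalences established in the discussion above, $\PT_{P,Q}(s)$ is therefore $\subseteq$-monotone, $\subseteq$-sublinear, medial, and nonexpansive.
\end{proof}
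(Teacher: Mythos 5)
Your proposal is correct and is essentially the paper's own argument: the paper proves the corollary precisely by combining Corollary~\ref{th:convexdomain} (applied pointwise, using coherence of $Q$) with the cut-down \KS\ semilattice isomorphism $\mathrm{P}\circ\mathrm{T}$ and the healthiness equivalences established in the preceding discussion, exactly as you do, and your unwinding of the formula through $\mathrm{T}$ and $\mathrm{P}$ matches the paper's computation. No gaps.
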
  

%


\section{The unit interval} \label{unit-interval}

\newcommand{\cLh}{\cL_{\mathrm{hom}}}

In this section we consider replacing the extended positive reals
$\oRp$ by the unit interval $\I$. In the lower and upper cases,
functional representations will involve maps to $\I$; $\I$ will play
the r\^{o}le of truth values for predicates; and predicate
transformers will be functions from  
$\cL_{\leq 1}Q$ to $\cL_{\leq 1}P$. In the convex case, functional
representations will involve maps to $\cP \I$; $\cP \I$ will play the
r\^{o}le of truth values;  and predicate transformers will be
functions from  $\cP \I^Q$ to $\cP \I^P$. As we shall see, the results
obtained are the same as those with $\oRp$, except that
nonexpansiveness requirements are dropped.

First, we slightly weaken the notion of a norm introduced in Section
\ref{funcrep} deleting the requirement that nonzero elements have
nonzero norm. A \emph{seminorm} on a \KS\ $K$ is
defined to be a Scott-continuous sublinear map from $K$ to $\oRp$
and a \emph{seminorm} on a cone $C$ is a Scott-continuous sublinear
map from $C$ to $\oRp$. A seminormed \KS\ is a \KS\ equipped with a
seminorm~$\norm{\!\!-\!\!}$, and similarly for 
seminormed cones. A function $f:K \rightarrow L$ between seminormed
\KSs\  is \emph{nonexpansive} if, for all $a \in K$ we have: 
\[\norm{f(a)} \leq \norm{a}\]
and similarly for seminormed cones.
A \emph{seminormed d-cone semilattice} is a d-cone semilattice equipped with a
seminorm such that the operation $\nonor$  is nonexpansive, by which we mean that
 $\norm{a\nonor b}\leq \max(\norm{a},\norm{b})$.

We next need some function space constructions.
For any \KS\ $K$ and \KS\ $L$ (\KS\ semilattice $L$) we write
$\cLh(K,L)$ for the Scott-continuous homogeneous functions from $K$ to
$L$. Equipped with the pointwise structure, $\cLh(K,L)$ forms a
sub-\KS\ (respectively, sub-\KS\ semilattice) of $L^K$; further, for
any d-cone $C$ (d-cone semilattice $C$),  $\cLh(K,C)$ (regarding $C$
as a \KS) forms a sub-d-cone (respectively sub-d-cone semilattice) of
$C^K$ when equipped with the pointwise structure.

For seminormed d-cones $C$ and $D$, 
we write $\cLh^{\leq 1} (C,D)$ for the collection of all
Scott-continuous, homogeneous, nonexpansive functions from $C$ to
$D$. Equipped with the pointwise structure it forms a sub-\KS\  
of  $\cLh(C,D)$
; further, 
if $D$ is a seminormed d-cone semilattice, $\cLh^{\leq 1}
(C,D)$  forms a sub-\KS\ semilattice of $\cLh(C,D)$. 

We have a basic function space isomorphism as an immediate consequence
of the universal embedding in a d-cone of a  full \KS\ 
given by  Theorem~\ref{prop:KSembed}.  
Let $e\type K \rightarrow C$ be a universal \KS\ embedding (in the
sense of Section~\ref{Keg}) of a   \KS\ $K$ in a d-cone $C$. Then
Theorem~\ref{prop:KSembed} tells us that, for any d-cone $D$, function
extension $f \mapsto \ov{f}$ yields a dcpo isomorphism  
\[\cLh(K,D) \cong \cLh(C,D)\]
with inverse given by restriction $g \mapsto g\circ e$ along the
universal arrow. Moreover, as restriction preserves the pointwise
structure,  the isomorphism is an isomorphism of d-cones; further, if
$D$ is additionally equipped with a semilattice structure, then the
isomorphism is an isomorphism of d-cone semilattices.

To connect nonexpansiveness and \KSs\ we make use of particular
seminorms. For any Scott-closed convex subset $X$ of a cone $C$, we define
the \emph{(lower) Minkoswki functional} $\nu_X\type C \rightarrow
\oRp$ by\display  
\[\nu_X(a)  \eqdef \inf \{r \in \Rp \mid a \in r\cdot X\}\]
Minkowski functionals were previously considered in~\cite{Plo06} and
in \cite{KK}.
%
\begin{prop}\label{prop:mink}
 Let $X$ be a Scott-closed convex subset of a d-cone $C$. Then\display 
\begin{enumerate}
\item $\nu_X$ is Scott-continuous and sublinear.
\item If $0 < \nu_X(a) < \infty$ then, for some $x \in X$, we have $a =
  \nu_X(a)\cdot x $. 
\item $X = \{a \in C\mid \nu_X(a) \leq 1\}$.
\item If $C$ is a d-cone semilattice and $X$ also a subsemilattice,
  then we have\display \[\nu_X(a\cup b) \leq \max(\nu_X(a),\nu_X(b))\]
\end{enumerate}
\end{prop}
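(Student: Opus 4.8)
\textbf{Proof plan for Proposition~\ref{prop:mink}.}

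The plan is to verify the four items in turn, exploiting the concrete description of $\nu_X$ together with the Scott-closedness and convexity of $X$. For item (1), I would first observe that since $0 \in X$ (as $X$ is a sub-barycentric-algebra containing the zero of the cone, or more directly since $0 \leq x$ for any $x \in X$ and $X$ is Scott-closed hence a lower set) we have $r\cdot X \subseteq s\cdot X$ whenever $0 \leq r \leq s$; this makes the set $\{r \mid a \in r\cdot X\}$ an up-set in $\oRp$, so the infimum is either attained or equals $0$, and in any case $\nu_X(r a) = r\,\nu_X(a)$ follows by rescaling, giving homogeneity. Subadditivity $\nu_X(a+b) \leq \nu_X(a) + \nu_X(b)$ comes from convexity of $X$: if $a \in r X$ and $b \in s X$ with $r,s > 0$, then $a + b \in (r+s)X$ because $\frac{1}{r+s}(a+b) = \frac{r}{r+s}\cdot\frac{a}{r} + \frac{s}{r+s}\cdot\frac{b}{s}$ is a convex combination of elements of $X$; the boundary cases where $r$ or $s$ is $0$ are handled by a limiting argument. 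Scott-continuity is the delicate part: monotonicity of $\nu_X$ is clear, and for a directed family $a_i$ with supremum $a$ one must show $\nu_X(a) \leq \sup_i \nu_X(a_i)$; if that supremum is $r < \infty$ then for each $\epsilon > 0$ eventually $a_i \in (r+\epsilon)X$, and since $X$ is Scott-closed and scalar multiplication is Scott-continuous, $(r+\epsilon)X$ is Scott-closed, hence contains $a = \sup_i a_i$; letting $\epsilon \to 0$ and using that $X$ being Scott-closed makes $\bigcap_{\epsilon>0}(r+\epsilon)X = rX$ (again by continuity of scalar multiplication) gives $a \in rX$, i.e.\ $\nu_X(a) \leq r$.

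For item (2), suppose $0 < \nu_X(a) = r < \infty$. By the argument just given (the infimum being attained thanks to Scott-closedness of $rX$), we have $a \in rX$, so $a = r\cdot x$ for some $x \in X$, which is exactly the claim. For item (3), one inclusion is immediate: if $a \in X = 1\cdot X$ then $\nu_X(a) \leq 1$. Conversely if $\nu_X(a) \leq 1$, then either $\nu_X(a) = 0$, in which case $a \in (1/n)X = (1/n)X$ for all $n$ and hence $a$ is below every element $x/n$... more carefully, $a \in \bigcap_n (1/n) X$; since $X$ is a lower set and $(1/n)x \to 0$, one gets $a = 0 \in X$ (here I would use that $X$, being Scott-closed, is a lower set, and that $0 \in X$), or $0 < \nu_X(a) \leq 1$ and then by item (2) (with the attained infimum) $a = \nu_X(a)\cdot x$ for some $x \in X$, so $a = \nu_X(a)\cdot x \leq 1\cdot x = x$, whence $a \in X$ since $X$ is a lower set.

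For item (4), assuming $C$ is a d-cone semilattice and $X$ a subsemilattice, I want $\nu_X(a \cup b) \leq \max(\nu_X(a), \nu_X(b))$. Set $r = \max(\nu_X(a),\nu_X(b))$; if $r = \infty$ there is nothing to prove, and if $r = 0$ then $a = b = 0$ by item (3)-style reasoning so $a \cup b = 0$. Otherwise $0 < r < \infty$, and by the attained-infimum property we have $a \in rX$ and $b \in rX$, say $a = r x$, $b = r y$ with $x,y \in X$. Then using homogeneity of the semilattice operation (which holds in a d-cone semilattice, being a consequence of distributivity as noted in Section~\ref{Powerkeg}) we get $a \cup b = (r x) \cup (r y) = r(x \cup y)$, and $x \cup y \in X$ since $X$ is a subsemilattice; hence $a \cup b \in rX$, so $\nu_X(a\cup b) \leq r$ as desired.

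The main obstacle is the Scott-continuity part of item (1) — specifically showing the infimum defining $\nu_X$ behaves well under directed suprema and that $rX$ is Scott-closed, which requires carefully combining the Scott-closedness of $X$ with the separate and joint Scott-continuity of scalar multiplication on the d-cone, and handling the degenerate cases $\nu_X = 0$ and $\nu_X = \infty$ cleanly. Once that is in place, items (2)--(4) are short consequences of the "infimum is attained" fact and the lower-set property of Scott-closed sets.
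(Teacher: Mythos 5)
Your overall route is essentially the paper's own proof, step for step, but it contains one step that is genuinely false: in item (3) (and again in the $r=0$ case of item (4)) you claim that $\nu_X(a)=0$ forces $a=0$. This is not true for a general Scott-closed convex $X$. Take $X=C=\oRp$ (the whole cone is Scott-closed and convex in itself): for every $a$ and every $r>0$ we have $a=r\cdot(r^{-1}a)\in r\cdot X$, so $\nu_X(a)=0$ for \emph{all} $a$, including $a\neq 0$. Indeed the paper deliberately calls these functionals \emph{seminorms} precisely because nonzero elements may have value $0$, so your claim runs against the intended generality. The flaw in your argument is that from $a\in\bigcap_n (1/n)\cdot X$ you treat the witnesses as a single fixed $x$ with $(1/n)x\to 0$; in fact $a=(1/n)x_n$ with $x_n$ varying with $n$ (and even for fixed $x$, $(1/n)x$ need not have infimum $0$ in a dcpo — e.g.\ $(1/n)\cdot\infty=\infty$ in $\oRp$).

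The damage is local and the repair is exactly what the paper does. In (3) you never need $a=0$, only $a\in X$: if $\nu_X(a)<1$ then $a\in r\cdot X$ for some $r<1$, so $a=r\cdot x\leq x\in X$ and $a\in X$ because Scott-closed sets are lower sets; the case $\nu_X(a)=1$ is then handled by your (correct) use of item (2). In (4), argue as the paper does with a strict bound: if $\max(\nu_X(a),\nu_X(b))=\infty$ there is nothing to prove; otherwise take any real $r>\max(\nu_X(a),\nu_X(b))$, so $a,b\in r\cdot X$, note $r\cdot X$ is a subsemilattice (image of $X$ under the semilattice homomorphism $x\mapsto r\cdot x$), hence $a\cup b\in r\cdot X$, and let $r$ decrease — no degenerate case ever arises. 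The same strict-bound trick makes your "limiting argument" for subadditivity unnecessary. Apart from this, your plan matches the paper's: monotonicity because $r\cdot X$ is a lower set, Scott-continuity because $r\cdot X$ is Scott-closed, homogeneity by rescaling, subadditivity via $r\cdot X+s\cdot X=(r+s)\cdot X$ from convexity, attainment in (2) via the increasing family $(\nu_X(a)+\epsilon)^{-1}a\in X$ whose supremum lies in $X$ by Scott-closedness together with Scott-continuity of scalar multiplication, and (4) via homogeneity of $\cup$.
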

\begin{proof}\hfill
\begin{enumerate}
\item 
  \begin{enumerate}[label=(\alph*)]
  \item  For monotonicity, suppose $a \leq b \in C$. Then if $b
    \in r\cdot X$, we have $a \in r\cdot X$, since $r\cdot X$ is a lower set
    and so  $\nu_X(a) \leq \nu_X(b)$. 
  \item Having established monotonicity, for continuity it
    remains  to show that $\nu_X(\bigvee_i a_i) \leq \bigvee_i
    \nu_X(a_i)$, for any directed set $a_i \, (i \in I)$  of elements
    of $C$. Suppose that   $\bigvee_i \nu_X(a_i)< r$ for some $r \in
    \Rp$. Then, for every $i$, $\nu_X(a_i) < r$ and so $a_i \in r\cdot
    X$. Since $r\cdot X$ is Scott-closed, we also have  $\bigvee_ia_i \in r\cdot
    X$ and consequently $\nu_X(\bigvee_i a_i) \leq r$. As $r$ is an arbitrary
    element of $\Rp$ with $\bigvee_i \nu_X(a_i)< r$ this shows that
    $\nu_X(\bigvee_i a_i) \leq \bigvee_i \nu_X(a_i)$, as required. 
   \item For homogeneity, choose $a \in C$ and $r \in ]0,1[$ to
     show that $\nu_X(r \cdot a) = r \cdot \nu_X(a)$. This follows
     from the observation that, for any positive $s \in \Rp$,  $a \in
     s\cdot X$ iff $r\cdot a \in rs\cdot X$.  
  \item For subadditivity, choose $a$, $b$ in $C$ and
    $r>\nu_X(a)$, $s>\nu_X(b)$. Then $a\in r\cdot X$ and $b\in s\cdot
    X$ whence $a+b\in r\cdot X + s\cdot X = (r+s)\cdot X$, since $X$
    is convex. So we have $r+s>\norm{a+b}$ and, since this holds for
    all  $r>\nu_X(a)$ and $s>\nu_X(b)$, we conclude that
    $\nu_X(a+b)\leq \nu_X(a) +\nu_X(b)$. 
\end{enumerate} 
\item As $0 < \nu_X(a) < \infty$ there are sequences $r_n \in \Rp$ and
  $x_n \in X$, with $r_n$ decreasing and positive, such that $a =
  r_n\cdot x_n $ and  $\nu_X(a) = \inf r_n$.  
So $x_n = r_n^{-1}\cdot a $ is an increasing sequence, and taking sups
we see that  $\sup x_n = (\sup r_n^{-1})\cdot a = \nu_X(a)^{-1}\cdot
a$. We have
$x \eqdef\sup x_n\in X$, as $X$ is Scott-closed, and so $\nu_X(a)\cdot
x = a$. 
\item Evidently  $X \subseteq \{a \in C\mid \nu_X(a) \leq
  1\}$. Conversely, suppose that we have $a \in C$ with $\nu_X(a) \leq
  1$. If $\nu_X(a) < 1$ then clearly $a\in 1\cdot X=X$.  Otherwise we
  have $\nu_X(a) = 1$. In this case, by the second part we have $a =
  \nu_X(a)\cdot x$ for some $x \in X$. Then, as $\nu_X(a) = 1$, we see
  that $a \in X$, as required.   
\item As for subadditivity, choose any real number $r >
  \max(\nu_X(a),\nu_X(b))$. Then $a$ and $b$ are both in $r\cdot
  X$. Since $X$ is supposed to be a subsemilattice and $x\mapsto
  r\cdot x$ is a semilattice homomorphism, $r\cdot X$ is
  subsemilattice, too, so that $a\cup b\in r\cdot X$, 
  that is $\nu_X(a\cup b)\leq r$. Since this holds for
    all  $r>\max(\nu_X(a), \nu_X(b))$, we conclude that
    $\nu_X(a\cup b)\leq \max(\nu_X(a) ,\nu_X(b))$.   \qedhere
\end{enumerate}
\end{proof}

%

So all Minkowski functionals are seminorms. For every full   
\KS\ $K$
and every universal \KS\ embedding $K
\xrightarrow{e} C$, we write $\norm{\hspace{0.5pt}\mbox{-}\hspace{0.5pt}}_K$ for the seminorm
$\nu_{e(K)}\type C \rightarrow \oRp$, and on $K$ we use the same
notation for the seminorm $\norm{a}_K \eqdef \norm{e(a)}_K$. 
When we do not
mention below which seminorm we use we mean the relevant one of these. 
We have the following pleasant facts\display 


\begin{fact} \label{Kegne}
\makered{For full \KSs\ $K$ and $L$, l}et $K \xrightarrow{e} C$ and $L \xrightarrow{e'} D$ be universal
\KS\ embeddings. Suppose that $f\colon K\to L$ and and $g\colon C\to
D$ are homogeneous maps 
such that the following diagram commutes: 
\[\begin{diagram}
	{C} & \rTo^{g} & D\\
	 \uTo^{e} & & \uTo_{e'} \\
	 {K} & \rTo_{f} & L
	\end{diagram}\]
%
Then both $f$ and $g$ are nonexpansive.  
\end{fact}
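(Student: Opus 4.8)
The key observation is that the Minkowski seminorm $\norm{\hspace{0.5pt}\mbox{-}\hspace{0.5pt}}_K = \nu_{e(K)}$ can be computed intrinsically: for $x \in C$, $\norm{x}_K = \inf\{r \in \Rp \mid x \in r \cdot e(K)\}$, and by Proposition~\ref{prop:mink}.3 we have $e(K) = \{x \in C \mid \norm{x}_K \leq 1\}$. The plan is to show $\norm{g(x)}_L \leq \norm{x}_K$ for all $x \in C$; the statement for $f$ then follows immediately, since for $a \in K$ we have $\norm{f(a)}_L = \norm{e'(f(a))}_L = \norm{g(e(a))}_L \leq \norm{e(a)}_K = \norm{a}_K$, using the commuting square.

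First I would handle the degenerate cases. If $\norm{x}_K = +\infty$ there is nothing to prove. If $\norm{x}_K = 0$, then by monotonicity and Scott-continuity of $\nu_{e(K)}$ together with homogeneity, $x \in r \cdot e(K)$ for all $r > 0$; since $g$ is homogeneous, $g(x) \in r \cdot g(e(K)) = r \cdot e'(K')$ for a suitable subset... more carefully: $g(x) = g(r \cdot y)= r \cdot g(y)$ for some $y \in e(K)$, and $g(y) = g(e(a)) = e'(f(a)) \in e'(L)$ for the corresponding $a \in K$, so $\norm{g(y)}_L \leq 1$, hence $\norm{g(x)}_L = r\norm{g(y)}_L \leq r$ for all $r>0$, giving $\norm{g(x)}_L = 0$. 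The main case is $0 < \norm{x}_K < \infty$: by Proposition~\ref{prop:mink}.2 there is $y \in e(K)$ with $x = \norm{x}_K \cdot y$. Write $y = e(a)$ with $a \in K$; then by the commuting square $g(y) = g(e(a)) = e'(f(a)) \in e'(L)$, so $\norm{g(y)}_L \leq 1$ by Proposition~\ref{prop:mink}.3. Using homogeneity of $g$, $g(x) = \norm{x}_K \cdot g(y)$, hence $\norm{g(x)}_L = \norm{x}_K \cdot \norm{g(y)}_L \leq \norm{x}_K$, as required.

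I do not anticipate a serious obstacle here: the content is entirely a matter of combining the three parts of Proposition~\ref{prop:mink} with the homogeneity of $f$ and $g$ and the commutativity of the square. The one point requiring a little care is the $\norm{x}_K = 0$ case, where one must invoke Scott-continuity (part 1) or argue directly via homogeneity as above, rather than appealing to part 2 (which assumes $\nu_X(a) > 0$). An alternative, perhaps cleaner, uniform argument avoiding case analysis: for any $r > \norm{x}_K$ we have $x \in r \cdot e(K)$, so $x = r \cdot e(a)$ for some $a \in K$; then $g(x) = r \cdot e'(f(a)) \in r \cdot e'(L)$, so $\norm{g(x)}_L \leq r$; taking the infimum over such $r$ gives $\norm{g(x)}_L \leq \norm{x}_K$. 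This handles all cases at once (when $\norm{x}_K = \infty$ the set of such $r$ is empty and the bound is vacuous, when $\norm{x}_K = 0$ we get $\norm{g(x)}_L \leq r$ for all $r > 0$). I would present this streamlined version.
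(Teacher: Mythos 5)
Your streamlined argument is exactly the paper's proof: the paper likewise reduces the claim to nonexpansiveness of $g$ (with $f$ following via $\norm{f(a)}_L = \norm{e'(f(a))}_L = \norm{g(e(a))}_L \leq \norm{e(a)}_K = \norm{a}_K$), dispenses with the case $\norm{x}_K = +\infty$ trivially, and otherwise takes any $r > \norm{x}_K$, writes $x \in r\cdot e(K)$, and uses homogeneity of $g$ and the commuting square to get $g(x) \in r\cdot e'(L)$, hence $\norm{g(x)}_L \leq r$. The preliminary case analysis in your plan is unnecessary, as you yourself note, and the final version you say you would present matches the paper's reasoning step for step.
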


\begin{proof}
It suffices to show that $g$ is nonexpansive. So, 
for $a\in C$ we have to show that $\norm{g(a)}_L\leq \norm{a}_K$. This is
certainly true if $\norm{a}_K= +\infty$. Otherwise take any real number
$r>\norm{a}_K$. Then $a\in r\cdot e(K)$. We deduce that $g(a) \in
g(r\cdot e(K)) = r\cdot g(e(K))=r\cdot e'(f(K))\subseteq r\cdot e'(L)$
which implies that $\norm{g(a)}_L\leq r$. Since this holds for all
$r>\norm{a}_K$, we have the desired inequality.  
\end{proof}

\mycut{\begin{fact} \label{Kegne}
\makered{Let $K$ and $L$ be full \KSs.}
\begin{enumerate}
\item
Every Scott-continuous homogeneous  function $f\type K \rightarrow L$
is nonexpansive. 
\item Let $K \xrightarrow{e} C$ and $L \xrightarrow{e'} D$ be
  universal \KS\ embeddings, and let  
$f\type K \rightarrow L$ and $g\type C \rightarrow D$ be
Scott-continuous homogeneous functions that commute with the
embeddings, i.e., are such that the following diagram commutes: 
\[\begin{diagram}
	{C} & \rTo^{g} & D\\
	 \uTo^{e} & & \uTo_{e'} \\
	 {K} & \rTo_{f} & L
	\end{diagram}\]
Then $g$ is nonexpansive. 
\end{enumerate}\end{fact}
\begin{proof}
\begin{enumerate}
\item
Choose $a \in K$ to show that $\norm{f(a)} \leq \norm{a}$. Take any $r \in \Rp$ such that $r > \norm{a}$. Then, as 
$r > \norm{a}$, there are $x \in K$ such that $a= r\cdot x$. Then we
have\display 
 \[\norm{f(a)} = \norm{f(r\cdot x)} = r \norm{f(x)} \leq r\]
This completes the proof, as $r \in \Rp$ was chosen arbitrarily such
that  $r > \norm{a}$. 

\item To say that $g$ is nonexpansive is to say that the relation
  $\norm{g(x)} \leq \norm{x}$ holds for
  all $x\in C$. We show this using the 
  induction principle for universal \KS\ embeddings. The property is
  evidently closed under directed sups and positively homogeneous. By
  the assumption, it holds for $e(a)$ for all
  $a \in K$ iff $f$ is nonexpansive, which
  last holds by the previous part.  
\end{enumerate}
\end{proof}}


%

The next proposition is at the root of our results for the unit
interval. It enables nonexpansiveness requirements to be dropped when
using the unit interval. 
First define $K \xrightarrow{e} C$ to be a \emph{universal \KS\
  semilattice embedding} if $K$ is a full \KS\ semilattice, $D$ is a d-cone
semilattice, $e$ preserves the semilattice operation, and $K
\xrightarrow{e} C$ is a universal \KS\ embedding. Note that then the
norm $\norm{\hspace{0.5pt}\mbox{-}\hspace{0.5pt}}_K$ on $C$ satisfies property (4) of
Proposition \ref{prop:mink}.

\begin{prop} \label{fund-unit} \makered{For full \KSs\ $K$
    and $L$, l}et $K \xrightarrow{e} C$ and $L
  \xrightarrow{e'} D$ be  universal \KS\  embeddings.  
Then there is a \KS\ isomorphism:
\[\cLh(K,L) \cong \cLh^{\leq 1} (C,D)\] 
The isomorphism sends $f\in \cLh(K,L)$ to $\ov{e'\circ f}$; its
inverse sends $g \in  \cLh^{\leq 1} (C,D)$ to the restriction of
$g\circ e$ along $e'$; and $f$ and $g$ are related by the isomorphism if, and only if, $g\circ e = e'\circ f$.

In case $L \xrightarrow{e'} D$ is additionally  a universal \KS\
semilattice embedding, the isomorphism is a \KS\ semilattice
isomorphism. 
\end{prop}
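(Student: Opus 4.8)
The plan is to assemble the isomorphism $\cLh(K,L) \cong \cLh^{\leq 1}(C,D)$ out of two pieces already available in the excerpt, and then check that nonexpansiveness is automatic on the d-cone side. First I would recall the basic function space isomorphism stated just above Proposition~\ref{prop:mink}: since $L \xrightarrow{e'} D$ is a universal \KS\ embedding, Theorem~\ref{prop:KSembed} gives a dcpo (indeed d-cone) isomorphism $\cLh(K,L) \cong \cLh(K,D)$, via $f \mapsto e' \circ f$ with inverse corestriction along $e'$ (a Scott-continuous homogeneous map $K \to D$ lands inside $e'(L)\cong L$ only on the nose, so one has to be a little careful, but in fact this direction is the restriction-along-$e'$ map, which is well-defined because the universal property of $e'$ is about extensions \emph{from} $L$, so I would phrase it as: $\cLh(K,L)\cong\cLh(K,D)$ where the latter means homogeneous maps with image in $e'(L)$ --- equivalently just postcompose/precompose with the iso $e'$). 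Second, again by Theorem~\ref{prop:KSembed} applied to the universal embedding $K \xrightarrow{e} C$, extension along $e$ gives a d-cone isomorphism $\cLh(K,D) \cong \cLh(C,D)$, $g \mapsto \overline{g}$, with inverse $h \mapsto h \circ e$. Composing, $\cLh(K,L) \cong \cLh(C,D)$, sending $f$ to $\overline{e'\circ f}$, with inverse sending $h$ to the corestriction of $h\circ e$ along $e'$; and $f,h$ correspond iff $h\circ e = e'\circ f$.

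The key remaining step is to show this isomorphism corestricts to $\cLh^{\leq 1}(C,D)$, i.e.\ that every $h$ in the image is nonexpansive, and conversely that nonexpansiveness is no constraint on the $\cLh(K,L)$ side --- which is exactly why the unit-interval results drop nonexpansiveness. For this I would invoke Fact~\ref{Kegne}: given $f \in \cLh(K,L)$ and its extension $h = \overline{e'\circ f}$, the square
\[\begin{diagram}
C & \rTo^{h} & D\\
\uTo^{e} & & \uTo_{e'}\\
K & \rTo_{f} & L
\end{diagram}\]
commutes by construction ($h\circ e = e'\circ f$), $h$ and $f$ are homogeneous, so Fact~\ref{Kegne} yields that $h$ is nonexpansive (and $f$ too, but that is vacuous). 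Thus $\overline{e'\circ f} \in \cLh^{\leq 1}(C,D)$, so the map lands in the right place; and any $h \in \cLh^{\leq 1}(C,D)\subseteq\cLh(C,D)$ already has a preimage under the unrestricted isomorphism, so corestriction is a bijection. Since both $\cLh(K,L)$ and $\cLh^{\leq 1}(C,D)$ carry the pointwise \KS\ structure (noting $\cLh^{\leq 1}(C,D)$ is a sub-\KS\ of $\cLh(C,D)$, as recorded in the text), and the underlying bijections (postcomposition/precomposition/extension/restriction) all preserve pointwise operations, the bijection is a \KS\ isomorphism.

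Finally, for the semilattice refinement: when $L \xrightarrow{e'} D$ is a universal \KS\ semilattice embedding, $e'$ preserves $\nonor$, $D$ is a d-cone semilattice, and $\cLh^{\leq 1}(C,D)$ is a sub-\KS\ semilattice of $\cLh(C,D)$ (using that $\norm{\hspace{0.5pt}\mbox{-}\hspace{0.5pt}}_K$ satisfies property (4) of Proposition~\ref{prop:mink}, so $\nonor$ is nonexpansive on $D$ and the nonexpansive maps are closed under pointwise $\nonor$). Then I need only check that the already-established \KS\ isomorphism also respects $\nonor$. Restriction-along-$e'$ and precomposition-with-$e$ visibly preserve pointwise $\nonor$; and the extension map $\cLh(K,D)\to\cLh(C,D)$ preserves pointwise $\nonor$ by the ``Moreover'' clauses of Theorem~\ref{prop:KSembed}/Theorem~\ref{th:sc2} on preservation of operations under extension, or more directly because $\overline{g_1 \nonor g_2}$ and $\overline{g_1}\nonor\overline{g_2}$ are both Scott-continuous homogeneous extensions of $g_1\nonor g_2$ and hence equal by uniqueness. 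The main obstacle, such as it is, is bookkeeping: making sure the ``image inside $e'(L)$ vs.\ into $L$'' and ``corestriction vs.\ restriction'' phrasings are stated consistently with Theorem~\ref{prop:KSembed}, and that the nonexpansiveness of the extension genuinely follows from Fact~\ref{Kegne} rather than needing a fresh argument --- but the commuting square makes that immediate.

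\begin{proof}
By Theorem~\ref{prop:KSembed} applied to the universal embedding $L \xrightarrow{e'} D$, postcomposition with $e'$ gives a \KS\ isomorphism $\cLh(K,L) \cong \cLh(K,D)$ (where the right-hand side is understood as the homogeneous Scott-continuous maps $K \to D$, each of which factors through $e'$), with inverse given by corestriction along $e'$. By Theorem~\ref{prop:KSembed} applied to the universal embedding $K \xrightarrow{e} C$, extension along $e$ gives a \KS\ (indeed d-cone) isomorphism $\cLh(K,D) \cong \cLh(C,D)$, $g \mapsto \overline{g}$, with inverse $h \mapsto h \circ e$. Composing, we obtain a \KS\ isomorphism $\Theta\colon \cLh(K,L) \xrightarrow{\ \cong\ } \cLh(C,D)$ with $\Theta(f) = \overline{e' \circ f}$, inverse $\Theta^{-1}(h) =$ the corestriction of $h \circ e$ along $e'$, and with $f$ and $h$ related by $\Theta$ if, and only if, $h \circ e = e' \circ f$.

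We claim $\Theta$ corestricts to an isomorphism onto $\cLh^{\leq 1}(C,D)$. First, for $f \in \cLh(K,L)$, put $h = \Theta(f) = \overline{e' \circ f}$. Then $h \circ e = e' \circ f$, so the square
\[\begin{diagram}
C & \rTo^{h} & D\\
\uTo^{e} & & \uTo_{e'}\\
K & \rTo_{f} & L
\end{diagram}\]
commutes, and $h$, $f$ are homogeneous; hence by Fact~\ref{Kegne}, $h$ is nonexpansive, i.e.\ $h \in \cLh^{\leq 1}(C,D)$. Conversely, any $h \in \cLh^{\leq 1}(C,D) \subseteq \cLh(C,D)$ lies in the image of $\Theta$, so $\Theta$ restricts to a bijection onto $\cLh^{\leq 1}(C,D)$. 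Since $\cLh^{\leq 1}(C,D)$ is a sub-\KS\ of $\cLh(C,D)$ and $\Theta$ is a \KS\ isomorphism of the ambient spaces, the corestriction is a \KS\ isomorphism $\cLh(K,L) \cong \cLh^{\leq 1}(C,D)$. The descriptions of the isomorphism and its inverse, and the characterisation $g \circ e = e' \circ f$, are as stated.

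Now suppose additionally that $L \xrightarrow{e'} D$ is a universal \KS\ semilattice embedding, so $e'$ preserves $\nonor$ and $D$ is a d-cone semilattice; by Proposition~\ref{prop:mink}(4) the seminorm $\norm{\hspace{0.5pt}\mbox{-}\hspace{0.5pt}}_K$ satisfies $\norm{a \nonor b}_K \leq \max(\norm{a}_K, \norm{b}_K)$, so $\nonor$ is nonexpansive on $D$ and $\cLh^{\leq 1}(C,D)$ is a sub-\KS\ semilattice of $\cLh(C,D)$. It remains to see that the isomorphism $\Theta$ preserves $\nonor$. Corestriction along $e'$ and precomposition with $e$ preserve the pointwise operation $\nonor$. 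For the extension map, given $g_1, g_2 \in \cLh(K,D)$, both $\overline{g_1 \nonor g_2}$ and $\overline{g_1} \nonor \overline{g_2}$ are Scott-continuous homogeneous maps $C \to D$ extending $g_1 \nonor g_2$ along $e$ (the latter because $\nonor$ on $D$ is Scott-continuous), hence they coincide by the uniqueness part of Theorem~\ref{prop:KSembed}. Thus $\Theta$ is a \KS\ semilattice isomorphism.
\end{proof}
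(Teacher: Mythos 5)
There is a genuine gap in the surjectivity direction. Your forward map $\Theta(f)=\ov{e'\circ f}$ and the use of Fact~\ref{Kegne} to see that its image consists of nonexpansive maps are fine, and match the paper's argument. But your converse — ``any $h \in \cLh^{\leq 1}(C,D)\subseteq\cLh(C,D)$ lies in the image of $\Theta$'' — is asserted, not proved, and it does not follow from your construction. The source of the trouble is the first link in your chain: postcomposition with $e'$ is only a \KS\ \emph{embedding} of $\cLh(K,L)$ into the full function space $\cLh(K,D)$, not an isomorphism; if instead you redefine $\cLh(K,D)$ to mean maps factoring through $e'(L)$ (as your parenthetical does), then the second step, the extension isomorphism $\cLh(K,D)\cong\cLh(C,D)$ of Theorem~\ref{prop:KSembed}, no longer applies to that restricted space as stated. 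Either way, the image of $\Theta$ is only those $h\in\cLh(C,D)$ for which $h\circ e$ factors through $e'(L)$, and the fact that a nonexpansive $h$ has a preimage $h\circ e$ under the \emph{extension} isomorphism produces an element of $\cLh(K,D)$, not of $\cLh(K,L)$.

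What is missing is exactly the step the paper makes and where nonexpansiveness does real work: given nonexpansive $h$, for each $a\in K$ one has $\norm{h(e(a))}_L\leq\norm{e(a)}_K\leq 1$, and then Proposition~\ref{prop:mink}(3) (the Minkowski-functional description $e'(L)=\{x\in D\mid \norm{x}_L\leq 1\}$) gives a unique $b\in L$ with $h(e(a))=e'(b)$; this defines a Scott-continuous homogeneous $f\colon K\to L$ with $e'\circ f=h\circ e$, whence $h=\ov{e'\circ f}=\Theta(f)$ by uniqueness of extensions. Without this argument the claimed bijection onto $\cLh^{\leq 1}(C,D)$ is unsupported — and it genuinely needs the nonexpansiveness hypothesis, since an arbitrary Scott-continuous homogeneous $h\colon C\to D$ (e.g.\ one scaling by a factor $>1$) need not carry $e(K)$ into $e'(L)$, so the step cannot be dismissed as bookkeeping. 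Your treatment of the semilattice refinement (closure of nonexpansive maps under pointwise $\nonor$ via Proposition~\ref{prop:mink}(4) — note the relevant seminorm is $\norm{\hspace{0.5pt}\mbox{-}\hspace{0.5pt}}_L$ on $D$, not $\norm{\hspace{0.5pt}\mbox{-}\hspace{0.5pt}}_K$ — and preservation of $\nonor$ by extensions via uniqueness) is correct.
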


\begin{proof} Composing with the \KS\ embedding $e'$ and then function
  extension, viewed as a \KS\  isomorphism, we obtain a  \KS\
  embedding\display 
\[\cLh(K,L) \xrightarrow{e' \circ -} \cLh(K,D) \xrightarrow{\ov{\cdot}}  \cLh(C,D)\] 
We see from 
Fact~\ref{Kegne}  that every function in the
range of the embedding is nonexpansive. 
Conversely let  $g: C \rightarrow D$ be a nonexpansive
Scott-continuous homogeneous function. Then, in particular, for every
$a \in K$ we have  
$\norm{g(e(a))} \leq \norm{e(a)} \leq 1$ and so there is a
(necessarily unique) $b \in L$ such that $g(e(a)) = e'(b)$. So we have
a function $f\type K \rightarrow L$ such that $e'(f(a)) = g(e(a))$ for
all $a \in K$; this function is Scott-continuous and homogeneous as
$g$ is and $e$ and $e'$ are \KS\  embeddings. 
 As $g$ extends $f\circ e'$ along $e$, the \KS\ embedding $f \mapsto
 \ov{f\circ e'}$ of $\cLh(K,L)$ in $ \cLh(C,D)$ sends $f$ to $g$,  
  and so   cuts down to a bijection, and so a \KS\ isomorphism,
  between $\cLh(K,L)$ and $\cLh^{\leq 1}(C,D)$, with inverse as
  claimed. 
 
 That $f$ and $g$ are related by the isomorphism if, and only if, $g\circ e = e'\circ f$ is clear.
 
%
%

With the extra semilattice assumptions, $D$ is a d-cone
semilattice and so $\cLh^{\leq 1} (C,D)$ is a \KS\ semilattice;
further,  the isomorphism preserves the semilattice structure as $e'$
does. 
\end{proof}

\makered{We will typically apply this result by first restricting to a subclass (e.g., to sublinear functions in the lower case) and then specialising to dcpos or domains.  
 
 We could also obtain general results for \KSs\ $K$ by adding to the
assumptions considered above the assumption that the evident embedding
of $K^*$ in $\dCone(K)^*$ is universal, where now by $K^*$ we mean the
\KS\ of Scott-continuous linear functions from $K$ to $\I$. By
Proposition~\ref{uni-Keg}, an equivalent assumption, assuming
$\dCone(K)^*$ continuous, is that every Scott  continuous linear
function from $K$ to $\Rp$ is a directed sup of bounded such
functions. 

 One obtains general functional representation and predicate transformer results, except for predicate transformer results in the convex case. (The obstacle in that case is that the equational proof below that mediality transfers in the domain case is not available at a general level, since there seems to be no general reason why dual \KSs\ or dual d-cones should be semilattices.) 

 As remarked in the introduction,  one might prefer a development  not involving d-cones at all. Another improvement, perhaps easier to achieve, would be a development where the assumptions on \KSs\  involved only $\I$.}

%


There is a pleasant induction principle for d-cones given by \KS\ universal
embeddings. Say that a property  of a d-cone is \emph{upper
  homogeneous} if it is closed under all actions $r\!\cdot\! -$ with
$r \geq 1$. Then, for any \makered{full \KS\ $K$ and any} universal
\KS\ embedding $K \xrightarrow{e} 
C$, if a property of $C$  is closed under directed sups,  and is upper
homogeneous, then it holds for all of $C$ if it holds for all  of
$e(K)$. This can be proved by reference to  the construction of
universal embeddings in Section~\ref{Keg}. 

There is an $n$-ary version
of this induction principle. Given  $n > 0$ universal embeddings
$K_i \xrightarrow{e_i} C_i$ \makered{ of full \KSs\ $K_i$} ($i = 1,\dots,n$),  if a relation on
$C_1,\ldots, C_n$ is closed under directed sups and is upper
homogeneous (in  an evident sense), then the relation holds for all of
$C_1 \times \ldots \times C_n$ if it holds for all  of $e_1(K_1)\times
\ldots \times e_n(K_n)$. This follows from the unary principle since, as shown in Section~\ref{dCs_KSs}, universal \KS\ embeddings are closed under finite non-empty products.

\subsection{The lower case}

For any full \KSs\ $K$ and $L$ we take $\SubL(K,L)$ to be the collection of Scott-continuous sublinear functions from $K$ to $L$ equipped with the pointwise structure and so forming a sub-\KS\ of $L^K$, and a sub-\KS\ join-semilattice if $L$ is a \KS\ join-semilattice.

 Let  $e\type K \rightarrow C$ and $e'\type L \rightarrow D$ be   universal  \KS\  embeddings,  where, additionally, $L$ is a \KS\ join-semilattice and $D$ is a d-cone join-semilattice (when $e$ is automatically a \KS\ semilattice morphism and so $e'\type L \rightarrow D$ is  a universal \KS\   semilattice embedding). Then the  \KS\  semilattice isomorphism
of Proposition~\ref{fund-unit} restricts to a \KS\ semilattice isomorphism 
\[\SubL(K,L) \cong \SubL^{\leq 1} (C,D)\] 
as an $f\in \cLh(K,L)$ is sublinear iff $e'\circ f$ is iff (using Theorem~\ref{prop:KSembed}) $\ov{e'\circ f}$ is.

We saw in Section~\ref{dme} that the inclusion $\cL_{\leq 1}P
\subseteq \cL P$ is a universal embedding for any dcpo $P$, and it is easy to check that the norm $\norm{\hspace{0.5pt}\mbox{-}\hspace{0.5pt}}_{\infty}$ is the Minkowski seminorm, thereby ensuring consistency with the previous two sections. Further, $\cL_{\leq 1}P$ is a \KS\ 
join-semilattice and $ \cL P$ is a d-cone join-semilattice. 
The analogous remarks apply to the inclusion $\I \subseteq \oRp$.  

So, for any dcpo $P$ we obtain the \KS\ semilattice isomorphism\display
\[\SubL(\cL_{\leq 1}P,\I) \cong \SubL^{\leq 1} (\cL P,\oRp)\] 
and for any dcpos $P$ and $Q$ we obtain the \KS\ semilattice  isomorphism\display
\[\SubL(\cL_{\leq 1} Q, \cL_{\leq 1}P) \cong \SubL^{\leq 1} (\cL Q, \cL P)\] 
As immediate consequences of Corollaries~\ref{th:lowerdomain} and~\ref{th:lowerdomainPred} we then obtain\display

\begin{cor}\label{th:lowerdomain-unit}
Let $P$ be a dcpo. 
Then  we have a  \KS\  join-semilattice  morphism\display
\[\Lambda_P\colon \cH\mathcal{V}_{\leq 1} P \;\longrightarrow\;
\SubL(\cL_{\leq 1}  P,\I)\]
It is given by\display
\[ \Lambda_P(X)(f) \eqdef \sup_{\nu \in X} \int \!f\,d\nu\]
If $P$ is a domain then $\Lambda_P$ is an isomorphism.
\qed \end{cor}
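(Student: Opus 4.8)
The plan is to obtain Corollary~\ref{th:lowerdomain-unit} as a direct consequence of Corollary~\ref{th:lowerdomain} together with the \KS\ join-semilattice isomorphism $\SubL(\cL_{\leq 1}P,\I) \cong \SubL^{\leq 1}(\cL P,\oRp)$ established in the paragraphs immediately preceding the statement. So first I would recall the morphism $\Lambda_P\colon \cH\mathcal{V}_{\leq 1}P \to \SubL^{\leq1}(\cL P,\oRp)$ from Corollary~\ref{th:lowerdomain}, given by $\Lambda_P(X)(f) = \sup_{\mu\in X}\int f\,d\mu$, which is an isomorphism when $P$ is a domain. Composing it with the inverse of the isomorphism $\SubL(\cL_{\leq 1}P,\I) \cong \SubL^{\leq1}(\cL P,\oRp)$ yields a \KS\ join-semilattice morphism into $\SubL(\cL_{\leq 1}P,\I)$, which is an isomorphism whenever $P$ is a domain.

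Second, I would identify the resulting map explicitly. The isomorphism $\SubL(\cL_{\leq 1}P,\I)\cong\SubL^{\leq1}(\cL P,\oRp)$ is an instance of the one in Proposition~\ref{fund-unit}, applied with $e'$ the inclusion $\I\subseteq\oRp$ and $e$ the inclusion $\cL_{\leq 1}P\subseteq\cL P$; so, by the description in that proposition, a functional $h\in\SubL(\cL_{\leq 1}P,\I)$ corresponds to $g\in\SubL^{\leq1}(\cL P,\oRp)$ exactly when $g\circ e = e'\circ h$, i.e.\ when $g$ restricts along $\cL_{\leq1}P\subseteq\cL P$ to $h$ (modulo the inclusion $\I\subseteq\oRp$). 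Hence, under the composite, $X\in\cH\mathcal{V}_{\leq1}P$ is sent to the restriction of $\lambda f.\sup_{\mu\in X}\int f\,d\mu$ to $\cL_{\leq1}P$; evaluated at $f\in\cL_{\leq1}P$, this is $\sup_{\nu\in X}\int f\,d\nu$, which is the claimed formula (with the dummy variable renamed $\nu$). I should note that the value indeed lies in $\I$: for $f\in\cL_{\leq1}P$ and $\nu\in\mathcal{V}_{\leq1}P$ we have $\int f\,d\nu\le\nu(P)\le1$, so the sup is at most $1$.

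Third, the morphism and isomorphism claims transfer along the composite for free, since composing \KS\ join-semilattice morphisms (resp.\ isomorphisms) gives the same. So the only content to check, beyond invoking the two cited results, is that the formula comes out as stated — which is the computation just sketched — and that the inclusion $\I\subseteq\oRp$ and $\cL_{\leq1}P\subseteq\cL P$ really are universal \KS\ (semilattice) embeddings with the sup-seminorm as Minkowski seminorm, all of which is recorded in the text preceding the corollary (citing Example~\ref{ex:KSfunctionspaces} via Proposition~\ref{uni-Keg}). There is essentially no obstacle here: the corollary is a routine specialisation, and the main (trivial) point to be careful about is bookkeeping the direction of the isomorphism of Proposition~\ref{fund-unit} and the identification of the transported formula. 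I would therefore write the proof as: ``This follows immediately from Corollary~\ref{th:lowerdomain} and the \KS\ join-semilattice isomorphism $\SubL(\cL_{\leq 1}P,\I) \cong \SubL^{\leq 1}(\cL P,\oRp)$, composing $\Lambda_P$ of Corollary~\ref{th:lowerdomain} with the inverse of that isomorphism; the displayed formula results from the description of the isomorphism in Proposition~\ref{fund-unit} as restriction along the inclusions.''
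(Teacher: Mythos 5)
Your proposal is correct and matches the paper's own route: the paper also obtains this corollary as an immediate consequence of Corollary~\ref{th:lowerdomain} composed with the \KS\ join-semilattice isomorphism $\SubL(\cL_{\leq 1}P,\I) \cong \SubL^{\leq 1}(\cL P,\oRp)$ derived (via Proposition~\ref{fund-unit} and the universality of the inclusions $\I\subseteq\oRp$, $\cL_{\leq 1}P\subseteq\cL P$) just before the statement. Your explicit identification of the formula as the restriction along the inclusions, and the check that the values lie in $\I$, are exactly the bookkeeping the paper leaves implicit.
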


\begin{cor} \label{th:lowerdomainPred-unit}
Let $P$ and $Q$ be  dcpos. To every state
transformer $s\colon P \to \cH \cV_{\leq 1}Q$ we can assign  a
predicate transformer $\PT_{P,Q}(s)\colon \cL_{\leq 1} Q\to \cL_{\leq 1} P$  by\display
\[\PT_{P,Q}(s)(g)(x) \eqdef \sup_{\nu \in s(x)}\int\! g\, d\nu\quad
(g\in \cL_{\leq 1} Q, x \in P)\]
The predicate transformer $\PT_{P,Q}(s)$ is sublinear.
The assignment $\PT_{P,Q}$ is a \KS\  join-semilattice morphism 
\[(\cH  \cV_{\leq 1} Q)^P \longrightarrow \SubL(\cL_{\leq 1} Q,\cL_{\leq 1} P)\]
If $Q$ is 
a domain then it is an isomorphism. 
\qed\end{cor}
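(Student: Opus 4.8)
The plan is to derive this corollary directly from the corresponding results already established with $\oRp$ as truth-values, namely Corollaries~\ref{th:lowerdomain} and~\ref{th:lowerdomainPred}, by composing them with the two \KS\ join-semilattice isomorphisms
\[\SubL(\cL_{\leq 1}P,\I) \cong \SubL^{\leq 1}(\cL P,\oRp) \qquad \SubL(\cL_{\leq 1}Q,\cL_{\leq 1}P) \cong \SubL^{\leq 1}(\cL Q,\cL P)\]
displayed just before the statement. The first of these comes from Proposition~\ref{fund-unit} applied to the universal embeddings $\cL_{\leq 1}P \subseteq \cL P$ and $\I \subseteq \oRp$ (noting that $\I$ is a \KS\ join-semilattice and $\oRp$ a d-cone join-semilattice, so the embedding $\I\subseteq\oRp$ is a universal \KS\ semilattice embedding), cut down to sublinear functions using the fact that $f$ is sublinear iff its extension is (by Theorem~\ref{prop:KSembed}). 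The second comes the same way, with $\I$ replaced by $\cL_{\leq 1}P$, using that $\cL_{\leq 1}P\subseteq \cL P$ is also a universal \KS\ (join-semilattice) embedding.

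First I would record explicitly that the inclusion $\cL_{\leq 1}P\subseteq \cL P$ is a universal \KS\ embedding with Minkowski seminorm equal to $\norm{\hspace{0.5pt}\mbox{-}\hspace{0.5pt}}_\infty$; this was noted in Example~\ref{ex:KSfunctionspaces} (via Proposition~\ref{uni-Keg}), and the identification of the Minkowski functional with the sup-norm is a routine check since $f\in r\cdot\cL_{\leq 1}P$ iff $\norm{f}_\infty\leq r$. The analogous facts for $\I\subseteq\oRp$ are immediate. With these in hand, the two isomorphisms above follow from Proposition~\ref{fund-unit} as indicated.

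Next I would observe that, under the first isomorphism, the functional $\Lambda_P$ of Corollary~\ref{th:lowerdomain}, which lands in $\SubL^{\leq1}(\cL P,\oRp)$, corresponds to a \KS\ join-semilattice morphism landing in $\SubL(\cL_{\leq 1}P,\I)$; since the inverse of the isomorphism is restriction along the inclusions, this corresponding morphism is exactly $X\mapsto (f\mapsto \sup_{\nu\in X}\int f\,d\nu)$ for $f\in\cL_{\leq 1}P$ — one only needs to check that for $X\in\cH\cV_{\leq1}P$ and $f\leq\mathbf 1_P$ the value $\sup_{\nu\in X}\int f\,d\nu$ does indeed lie in $\I$, which holds since $\int f\,d\nu\leq\nu(P)\leq 1$. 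Being a composite of a \KS\ join-semilattice morphism with an isomorphism, it is a \KS\ join-semilattice morphism, and it is an isomorphism when $P$ is a domain because $\Lambda_P$ is then. Exactly the same bookkeeping, applied to the second isomorphism and Corollary~\ref{th:lowerdomainPred}, gives the predicate-transformer statement, with the sublinearity of $\PT_{P,Q}(s)$ (and the dropping of nonexpansiveness) coming for free from the fact that nonexpansiveness is automatic for the codomain $\cL_{\leq 1}P$-valued maps in the range of the isomorphism. I do not expect any genuine obstacle here: the only mild subtlety is keeping straight that the formulas for $\Lambda_P$ and $\PT_{P,Q}$ are literally unchanged — the restricted maps have the same defining expressions, just with arguments ranging over $\cL_{\leq1}$ rather than $\cL$ — and that all the hypotheses of Proposition~\ref{fund-unit} and of the two $\oRp$-corollaries are met; the ``hard part'', such as it is, was already done in proving Proposition~\ref{fund-unit} and Corollaries~\ref{th:lowerdomain} and~\ref{th:lowerdomainPred}.
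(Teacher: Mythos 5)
Your proposal is correct and follows essentially the same route as the paper: establish that the inclusions $\cL_{\leq 1}Q \subseteq \cL Q$ and $\cL_{\leq 1}P \subseteq \cL P$ are universal \KS\ (join-semilattice) embeddings whose Minkowski seminorms are the sup-norms, restrict the isomorphism of Proposition~\ref{fund-unit} to sublinear maps via Theorem~\ref{prop:KSembed} to get $\SubL(\cL_{\leq 1} Q,\cL_{\leq 1}P) \cong \SubL^{\leq 1}(\cL Q,\cL P)$, and then read off the statement as an immediate consequence of Corollary~\ref{th:lowerdomainPred}, with nonexpansiveness dropped because it is automatic on the unit-interval side. This is exactly how the paper derives the corollary.
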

\subsection{The upper case}

 Suppose we are given \makered{full \KSs\ $K$ and $L$ and} universal \KS\ embeddings $e\type K \rightarrow C$ and  $e'\type L \rightarrow D$, where $L$ has a top element (which we write as $1$). Then we say
 that  a  function $f\type K\rightarrow L$ is \emph{strongly nonexpansive} if\display 
 \[f(a +_r b)\ \leq\ f(a) +_r \norm{b}_K\cdot 1\]
holds for all $a,b \in K$ and $r \in [0,1]$,  
and  that a  function $g\type C\rightarrow D$ is \emph{strongly nonexpansive} if\display 
 \[g(x + y)\ \leq\ g(x) + \norm{y}_K\cdot e'(1)\]
holds for all $x,y \in C$; note that this last definition is consistent with the corresponding definitions in previous sections.
\makered{Also, a homogeneous function $g\type C\rightarrow D$  is
  strongly nonexpansive iff it is when considered as a function between \KSs.}




  
  We write  $\cL_{\rm{sup}}^{\rm{sne}}(C,D)$
for the collection of all Scott-continuous
superlinear strongly nonexpansive functions $g\colon C \rightarrow D$; this collection forms a  \KS\ with the pointwise structure, and  a  \KS\ meet-semilattice if $D$ is a d-cone meet-semilattice.
 We further write  $\cL_{\rm{sup}}^{\rm{sne}}(K,L)$
for the collection of all Scott-continuous
superlinear strongly nonexpansive functionals $f\colon K \rightarrow L$; this collection forms a \KS\ with the pointwise structure, and a \KS\ meet-semilattice if $L$ is one.

Now suppose, additionally, that $L$ is a \KS\ meet-semilattice and $D$ is a d-cone meet-semilattice (when $e'$ is automatically a \KS\ semilattice morphism and so $e'\type L \rightarrow D$ is  a universal \KS\   semilattice embedding). 
Then the \KS\ semilattice  isomorphism of Proposition~\ref{fund-unit} restricts to a \KS\ semilattice isomorphism\display
\[\cL_{\rm{sup}}^{\rm{sne}}(K,L) \cong \cL_{\rm{sup}}^{\rm{sne}}(C,D)\]
which sends $f$ to $g \eqdef \ov{f\circ e'}$. Regarding superadditivity, $f$ is superadditive iff $f\circ e'$ is iff (by Theorem~\ref{prop:KSembed}) $\ov{f \circ e'}$ is.
Also
$g$ is strongly nonexpansive iff
$g(x +_r y)\ \leq\ g(x) +_r \norm{y}_K\cdot e'(1)$ for all $x,y \in C$ and $r \in [0,1]$,
iff, by the binary induction principle for universal embeddings,  
$g(e(a) +_r e(b))\ \leq\ g(e(a)) +_r \norm{e(b)}_K\cdot e'(1)$ for $a,b \in K$, $r \in [0,1]$, 
iff $e'(f(a +_r b))\ \leq\ e'(f(a) +_r \norm{b}_K\cdot 1)$, for $a,b \in K$, $r \in [0,1]$  (as $g\circ e = e'\circ f$),
iff $f$ is strongly nonexpansive.

The \KSs\  $\I$, $\oRp$, $\cL_{\leq 1} P$ and $\cL P$ ($P$ a dcpo) are
all \KS\ meet-semilattices and so the inclusions $\I \subseteq \oRp$
and $\cL_{\leq 1} P \subseteq \cL P$ are universal \KS\ semilattice
embeddings. So, in particular, for any dcpo $P$ we obtain a \KS\ semilattice isomorphism\display
\[\cL_{\rm{sup}}^{\rm{sne}}(\cL_{\leq 1}P,\I) \cong \cL_{\rm{sup}}^{\rm{sne}}(\cL P,\oRp)\] 
and for any dcpos $P$ and $Q$ we obtain  a \KS\ semilattice isomorphism\display
\[\cL_{\rm{sup}}^{\rm{sne}}(\cL_{\leq 1} Q, \cL_{\leq 1}P) \cong \cL_{\rm{sup}}^{\rm{sne}}(\cL Q, \cL P)\] 
As immediate consequences of Corolleries~\ref{th:upperdomain} and~\ref{th:upperdomainPred} we then obtain\display
\begin{cor}\label{th:upperdomain-unit}
Let $P$ be a domain.
Then  we have a  \KS\  meet-semilattice isomorphism
\[\Lambda_P\colon \cS\mathcal{V}_{\leq 1} P \;\cong\;
\cL_{\rm{sup}}^{\rm{sne}}(\cL_{\leq 1} P,\I)\]
It is given by\display
\[ \Lambda_P(X)(f) \eqdef \inf_{\nu \in X} \int \!f\,d\nu \tag*{\qEd}\]
\end{cor}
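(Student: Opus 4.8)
The plan is to derive Corollary~\ref{th:upperdomain-unit} as an immediate consequence of Corollary~\ref{th:upperdomain} together with the unit-interval machinery developed just above, exactly parallel to how Corollary~\ref{th:lowerdomain-unit} was obtained. First I would recall the setting: for a domain $P$, the inclusions $\I \subseteq \oRp$ and $\cL_{\leq 1}P \subseteq \cL P$ are universal \KS\ semilattice embeddings (both sides being \KS\ meet-semilattices, as noted in the text), and the associated Minkowski seminorms are the sup norms, ensuring consistency with Section~\ref{upper}. Applying the specialisation of Proposition~\ref{fund-unit} to strongly nonexpansive superlinear maps --- which is precisely the \KS\ semilattice isomorphism
\[\cL_{\rm{sup}}^{\rm{sne}}(\cL_{\leq 1}P,\I) \cong \cL_{\rm{sup}}^{\rm{sne}}(\cL P,\oRp) = \SuperL^{\rm{sne}}(\cL P,\oRp)\]
displayed just before the statement --- I then compose this with the isomorphism of Corollary~\ref{th:upperdomain}, namely $\Lambda_P\colon \cS\cV_{\leq 1}P \cong \SuperL^{\rm{sne}}(\cL P,\oRp)$.

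The key step is to verify that the resulting composite isomorphism $\cS\cV_{\leq 1}P \cong \cL_{\rm{sup}}^{\rm{sne}}(\cL_{\leq 1}P,\I)$ has the claimed pointwise description $X \mapsto \bigl(f \mapsto \inf_{\nu \in X}\int f\,d\nu\bigr)$. This follows by unwinding the inverse direction of Proposition~\ref{fund-unit}: given $X \in \cS\cV_{\leq 1}P$, the functional $\Lambda_P(X)\colon \cL P \to \oRp$ is $g \mapsto \inf_{\nu \in X}\int g\,d\nu$ by Corollary~\ref{th:upperdomain}, and its corresponding element of $\cL_{\rm{sup}}^{\rm{sne}}(\cL_{\leq 1}P,\I)$ is simply its restriction along the inclusion $\I \subseteq \oRp$, i.e., $\Lambda_P(X)$ restricted to arguments in $\cL_{\leq 1}P$ --- and one checks that for $f \in \cL_{\leq 1}P$ we indeed have $\inf_{\nu\in X}\int f\,d\nu \leq 1$ since each $\nu$ is a subprobability valuation and $f \leq \mathbf 1_P$. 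So the formula is as stated, and the map is a \KS\ meet-semilattice isomorphism as a composite of two such.

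I do not anticipate a genuine obstacle here; the statement is a corollary and all the real content lives in Corollary~\ref{th:upperdomain} (hence ultimately in Theorem~\ref{th:upperKS} and the convenience of $\dCone(\cV_{\leq 1}P) \cong \cV P$) and in Proposition~\ref{fund-unit}. The one point requiring a moment of care is confirming that the strong-nonexpansiveness condition transfers correctly across the embedding --- but this is exactly the computation carried out in the text preceding the statement (the chain of ``iff''s using the binary induction principle for universal embeddings and the relation $g\circ e = e'\circ f$), so it may simply be cited. Thus the proof is essentially a two-line assembly: compose the isomorphism of Proposition~\ref{fund-unit} (in its $\SuperL^{\rm{sne}}$ form) with that of Corollary~\ref{th:upperdomain}, and read off the formula. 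This is why the statement carries only a \qed in the display rather than a separate proof environment.
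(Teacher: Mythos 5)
Your proposal is correct and follows the paper's own route exactly: the corollary is obtained by composing the isomorphism $\cL_{\rm{sup}}^{\rm{sne}}(\cL_{\leq 1}P,\I)\cong\cL_{\rm{sup}}^{\rm{sne}}(\cL P,\oRp)$ (the specialisation of Proposition~\ref{fund-unit} established just before the statement) with the isomorphism of Corollary~\ref{th:upperdomain}, and reading off the formula via restriction along the inclusions. Your check that the restricted functional takes values in $\I$ on $\cL_{\leq 1}P$ is the same nonexpansiveness observation the paper relies on, so nothing is missing.
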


\begin{cor} \label{th:upperdomainPred-unit}
Let $P$ be a dcpo and let $Q$ be a domain. To every state
transformer $s\colon P \to \cS \cV_{\leq 1}Q$ we can assign  a predicate transformer $\PT_{P,Q}(s)\colon
\cL_{\leq 1}Q\to \cL_{\leq 1}P$  by\display
\[\PT_{P,Q}(s)(g)(x) \eqdef \inf_{\nu \in s(x)}\int\! g \, d\nu\quad
(g\in \cL_{\leq 1}Q, x \in P)\]
The predicate transformer $\PT_{P,Q}(s)$ is superlinear and strongly
nonexpansive.  
The assignment $\PT_{P,Q}$ is a \KS\ meet-semilattice isomorphism
 \[(\cS \cV_{\leq 1} Q)^P \cong \cL_{\rm{sup}}^{\rm{sne}}(\cL_{\leq
     1}Q,\cL_{\leq 1}P) \tag*{\qEd}\]
\end{cor}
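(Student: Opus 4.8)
The plan is to derive Corollary~\ref{th:upperdomainPred-unit} as an immediate consequence of Corollary~\ref{th:upperdomainPred} together with the two \KS\ semilattice isomorphisms established just above the statement, namely
\[\cL_{\rm{sup}}^{\rm{sne}}(\cL_{\leq 1} Q, \cL_{\leq 1}P) \cong \cL_{\rm{sup}}^{\rm{sne}}(\cL Q, \cL P)\]
(obtained by specialising Proposition~\ref{fund-unit} to the universal \KS\ semilattice embeddings $\I\subseteq\oRp$ and $\cL_{\leq 1}Q\subseteq\cL Q$, restricted to superlinear strongly nonexpansive maps), and the evident pointwise isomorphism
\[(\cS\cV_{\leq 1}Q)^P \cong (\cS\cV_{\leq 1}Q)^P\]
carrying $\PT_{P,Q}$ over. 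First I would recall from Corollary~\ref{th:upperdomainPred} that $\PT_{P,Q}$ is a \KS\ meet-semilattice isomorphism $(\cS\cV_{\leq 1}Q)^P\cong\SuperL^{\rm{sne}}(\cL Q,\cL P)$, and observe that $\SuperL^{\rm{sne}}(\cL Q,\cL P)$ is exactly $\cL_{\rm{sup}}^{\rm{sne}}(\cL Q,\cL P)$ in the notation of this section. Composing with the inverse of the isomorphism of Proposition~\ref{fund-unit} (restricted as above) yields the claimed isomorphism $(\cS\cV_{\leq 1}Q)^P\cong\cL_{\rm{sup}}^{\rm{sne}}(\cL_{\leq 1}Q,\cL_{\leq 1}P)$.

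Next I would identify the explicit formula. Under the isomorphism of Proposition~\ref{fund-unit}, a map $g\colon\cL Q\to\cL P$ corresponds to the map $f\colon\cL_{\leq 1}Q\to\cL_{\leq 1}P$ with $g\circ e = e'\circ f$, i.e.\ $f$ is simply the restriction/corestriction of $g$ along the inclusions. Hence the predicate transformer attached to a state transformer $s\colon P\to\cS\cV_{\leq 1}Q$ is just the restriction of the one from Corollary~\ref{th:upperdomainPred}, so the formula $\PT_{P,Q}(s)(g)(x)=\inf_{\nu\in s(x)}\int g\,d\nu$ is unchanged, now read for $g\in\cL_{\leq 1}Q$. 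I would also note that $\int g\,d\nu\leq\nu(Q)\leq 1$ when $g\le\mathbf 1_Q$, so the value indeed lies in $\I$ and $\PT_{P,Q}(s)(g)$ lies in $\cL_{\leq 1}P$, confirming well-definedness without any separate nonexpansiveness hypothesis. That $\PT_{P,Q}(s)$ is superlinear and strongly nonexpansive is inherited verbatim from Corollary~\ref{th:upperdomainPred}, since restriction along the inclusions preserves these inequational properties (and, conversely, by Proposition~\ref{fund-unit}, every superlinear strongly nonexpansive $f\colon\cL_{\leq 1}Q\to\cL_{\leq 1}P$ extends to one on $\cL Q$, so the target \KS\ is correctly described).

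There is essentially no hard step here: the content was already done in Section~\ref{pred-tran} and in Proposition~\ref{fund-unit}; the corollary is a bookkeeping composition of isomorphisms. The only mild subtlety — and the place I would be most careful — is checking that the restriction of the Proposition~\ref{fund-unit} isomorphism to the superlinear strongly nonexpansive sub-\KS-semilattices is well-defined and surjective, i.e.\ that superadditivity and strong nonexpansiveness transfer in both directions along extension/restriction; but this was precisely verified in the displayed argument preceding Corollary~\ref{th:upperdomain-unit} (using Theorem~\ref{prop:KSembed} for superadditivity and the binary induction principle for strong nonexpansiveness), so I would simply cite it. Thus the proof is: combine Corollary~\ref{th:upperdomainPred} with the \KS\ semilattice isomorphism $\cL_{\rm{sup}}^{\rm{sne}}(\cL_{\leq 1} Q,\cL_{\leq 1}P)\cong\cL_{\rm{sup}}^{\rm{sne}}(\cL Q,\cL P)$ already established, and read off the formula; this is what the phrase ``As immediate consequences of Corollaries~\ref{th:upperdomain} and~\ref{th:upperdomainPred}'' signals, so the proof may legitimately be left as \qed.
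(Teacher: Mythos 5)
Your proposal is correct and follows essentially the same route as the paper: the corollary is obtained by composing the isomorphism of Corollary~\ref{th:upperdomainPred} with the \KS\ meet-semilattice isomorphism $\cL_{\rm{sup}}^{\rm{sne}}(\cL_{\leq 1} Q,\cL_{\leq 1}P)\cong\cL_{\rm{sup}}^{\rm{sne}}(\cL Q,\cL P)$ established just before via Proposition~\ref{fund-unit} (with the transfer of superadditivity and strong nonexpansiveness checked there), the formula being read off since that isomorphism acts by restriction/corestriction along the inclusions. The paper indeed records this as an immediate consequence, exactly as you argue.
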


Finally we show that, as in previous cases, the strong
nonexpansiveness condition can be simplified for homogeneous
functions.

\begin{fact} Let $K \xrightarrow{e} C$ and $L \xrightarrow{e'} D$ be universal embeddings where both $K$ and $L$ have top elements and where $\norm{\hspace{0.5pt}\mbox{-}\hspace{0.5pt}}_K$ is a norm on $C$. Then:
\begin{enumerate}
\item A homogeneous function $f: K  \rightarrow L$ is strongly nonexpansive iff $f(x +_r 1) \leq f(x) +_r 1$, for all $x \in K$ and $r \in [0,1]$
\item A homogeneous function $g: C \rightarrow D$ is strongly nonexpansive iff $g(x + 1) \leq g(x) + 1$, for all $x \in C$
\end{enumerate}
\end{fact}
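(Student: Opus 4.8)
The plan is to prove both parts together, noting that part (1) is the `\KS-level' version and part (2) the `d-cone-level' version, and that they are linked by the isomorphism $\cLh(K,L) \cong \cLh^{\leq 1}(C,D)$ of Proposition~\ref{fund-unit}. In each case one direction is trivial: strong nonexpansiveness is literally the displayed inequality applied with the second argument equal to $1$ (in the \KS\ case) or to $1$, i.e.\ $e'(1)$, read through homogeneity (in the cone case), recalling $\norm{1}_L \leq 1$ for a top element and $\norm{e(1)}_K \leq 1$. So the content is the converse: from $f(x +_r 1) \leq f(x) +_r 1$ deduce $f(a +_r b) \leq f(a) +_r \norm{b}_K \cdot 1$ for all $b$, and similarly on $C$.

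First I would handle the cone case (2), since the additive setting is cleaner. Assume $g(x+1) \leq g(x) + 1$ for all $x \in C$. Given $x, y \in C$, the case $\norm{y}_K = +\infty$ is vacuous, so suppose $\norm{y}_K < \infty$ and pick any real $r > \norm{y}_K$; then $y \in r\cdot e(K)$, and since $1 = e(1) \in e(K)$ one has $y \leq r \cdot e(1)$ (using that $e(K)$ is a lower subset when $K$ is full — here I would invoke that the embedding is universal and that $\norm{\hspace{0.5pt}\mbox{-}\hspace{0.5pt}}_K = \nu_{e(K)}$, so $\norm{y}_K \leq r$ forces $y$ into $r\cdot e(K)$). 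Then by monotonicity and homogeneity of $g$, $g(x+y) \leq g(x + r e(1)) = r\, g(r^{-1} x + e(1)) \leq r(g(r^{-1}x) + e'(1)) = g(x) + r\, e'(1)$. Letting $r \downarrow \norm{y}_K$ and using Scott-continuity of $g$ (continuity of $r \mapsto g(x + r e(1))$, which follows from continuity of scalar multiplication on $C$ and of $g$) gives $g(x+y) \leq g(x) + \norm{y}_K \cdot e'(1)$, as required.

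For the \KS\ case (1) I would argue analogously but with convex combinations, or alternatively derive it from (2) via the isomorphism $\cLh(K,L) \cong \cLh^{\leq 1}(C,D)$: a homogeneous $f\colon K \to L$ corresponds to $g = \ov{e' \circ f}\colon C \to D$ with $g \circ e = e' \circ f$, and the computation in the last display before this Fact (the chain of iff's showing $g$ strongly nonexpansive $\iff$ $f$ strongly nonexpansive, via the binary induction principle) already reduces the \KS\ strong-nonexpansiveness of $f$ to the cone strong-nonexpansiveness of $g$; then the simplified condition transfers the same way, since $g(x+1)\leq g(x)+1$ corresponds under $g\circ e = e'\circ f$ and homogeneity exactly to $f(x +_r 1) \leq f(x) +_r 1$ (rescale: $g(x + e(1)) \leq g(x) + e'(1)$ is the $r = 1/2$ instance up to the scalar $2$, and homogeneity spreads it to all $r$). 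The main obstacle — really the only non-routine point — is justifying the passage from $\norm{y}_K \leq r$ to $y \leq r\cdot e(1)$, i.e.\ that $e(1)$ genuinely dominates the `$r$-ball', which needs the hypothesis that $\norm{\hspace{0.5pt}\mbox{-}\hspace{0.5pt}}_K$ is a norm (not merely a seminorm) together with $1$ being the top element of $L$; once that is in place, everything else is the limiting argument and an application of the already-established induction principle and Proposition~\ref{fund-unit}.
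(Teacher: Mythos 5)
Your easy direction and the algebraic manipulations are fine, but the converse in the cone case has a genuine gap at the limiting step. You only obtain $y \leq r\cdot e(1)$ for $r$ \emph{strictly} greater than $\norm{y}_K$, and then claim that "letting $r \downarrow \norm{y}_K$ and using Scott-continuity of $g$" yields $g(x+y) \leq g(x) + \norm{y}_K\cdot e'(1)$. Scott-continuity governs directed suprema, not filtered infima: what you actually need is that $g(x+y) \leq g(x) + r\cdot e'(1)$ for all $r > s$ implies $g(x+y) \leq g(x) + s\cdot e'(1)$ in the d-cone $D$, an Archimedean-type cancellation which does not follow from the d-cone axioms for a general target $D$ (and is not delivered by continuity of $r \mapsto g(x + r\cdot e(1))$, which concerns increasing $r$). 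The paper's proof avoids any limit: by Proposition~\ref{prop:mink}(2), since $e(K)$ is Scott-closed and convex the infimum defining the Minkowski functional is \emph{attained}, so for $0 < s \eqdef \norm{y}_K < \infty$ one has $y = s\cdot c$ with $c \in K$, hence $y \leq s\cdot 1$ exactly, and the computation $g(x+y) \leq s\cdot g(s^{-1}x + 1) \leq s\cdot(g(s^{-1}x)+1) = g(x) + s\cdot 1$ closes directly. Note also that the hypothesis that $\norm{\hspace{0.5pt}\mbox{-}\hspace{0.5pt}}_K$ is a norm rather than a seminorm is used precisely to dispose of the case $s=0$ (it forces $y=0$); it is not needed for the passage from $\norm{y}_K \leq r$ to $y \leq r\cdot e(1)$, which you identify as the crux.

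Two smaller points. Your proposed derivation of part (1) from part (2) via Proposition~\ref{fund-unit} and the induction principle presupposes that $f$ is Scott-continuous (so that it extends to $C$), which is not in the statement, and the transfer of the simplified condition from $e(K)$ to all of $C$ is not the single rescaled instance you describe but needs the whole $r$-indexed family in the inductive property; the paper instead proves (1) directly inside the barycentric algebra by the rescaling $t \eqdef 1-(1-r)s$, computing $f(a+_r b) \leq f(a +_r s\cdot 1) = f(r/t\cdot a +_t 1) \leq f(r/t\cdot a) +_t 1 = f(a) +_r s\cdot 1$, again using attainment to get $b \leq s\cdot 1$. Your alternative "argue analogously with convex combinations" would run into the same filtered-infimum problem. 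Replacing the limiting step by Proposition~\ref{prop:mink}(2) repairs the argument and brings it in line with the paper's proof.
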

\begin{proof} We assume, without loss of generality, that the embeddings are inclusions.
\begin{enumerate}
\item We have to show that $f(a +_r b)\ \leq\ f(a) +_r \norm{b}_K\cdot 1$ for all $a,b \in K$ and $r \in [0,1]$. If  $\norm{b}_K = 0$  then, as 
$\norm{\hspace{0.5pt}\mbox{-}\hspace{0.5pt}}_K$ is a norm, $b = 0$. Otherwise, setting $s \eqdef \norm{b}_K$, we see by Proposition~\ref{prop:mink} that $b  = s\cdot c$ for some $c \in K$ and so that $b \leq s\cdot 1$. Then,  taking $t \eqdef 1 - (1 -r)s$, we note that $r \leq t$ and calculate\display
\[f(a +_r b) \leq f(a +_r s\cdot 1 ) =  f(r/t\cdot a +_t 1) \leq f(r/t\cdot a) +_t 1 = r/t \cdot f(a) +_t 1 = f(a) +_r \norm{b}_K\cdot 1\]
\item  We have to show that $g(x + y) \leq\ g(x) + \norm{y}_K\cdot 1$ for all $x,y \in C$. This is trivial if  $\norm{y}_K = \infty$. If it is $0$ then, as 
$\norm{\hspace{0.5pt}\mbox{-}\hspace{0.5pt}}_K$ is a norm, $y = 0$. Otherwise, setting $s \eqdef \norm{y}_K$ and noting that $s^{-1}\cdot y \in K$, we calculate\display
\begin{align*}g(x + y) =  r \cdot g(s^{-1}\!\cdot x \,+ \, s^{-1}\!\cdot y)
  &~\leq r \cdot g(s^{-1}\!\cdot x \,+\, 1) \\ &~\leq r\cdot(g(s^{-1}\cdot x) \,+\, 1) =  g(x) + \norm{y}_K\cdot1\tag*{\qEd}\end{align*}
\end{enumerate}
\def\popQED{}
\end{proof}
\subsection{The convex case}

\newcommand{\myd}{\mathrm{d}}
\newcommand{\myu}{\mathrm{u}}

We begin with some definitions. Given a full \KS\ $K$ and
a full \KS\ semilattice  $L$,  say that  a function $f\type K \rightarrow L$ is \emph{$\subseteq$-sublinear} if it is homogeneous and, for all $a,b \in K$ and $r \in [0,1]$ we have\display
\[f(a +_r b) \subseteq f(a) +_r f(b)\]
Note that a function $g: C \rightarrow D$ from a d-cone to a d-cone semilattice is $\subseteq$-sublinear, as defined in Section~\ref{trans-convex},  iff it is when considered as a function from a \KS\ to a \KS\ semilattice.

Next, given \KSs\  $K$, $L$, and $M$ and two functions $\myd_L,\myu_L\type L \rightarrow M$, we say that a function $f: K \rightarrow L$ is \emph{medial (w.r.t.\ $\myd_L,\myu_L$)} if, for all $a,b \in K$ and $r \in[0,1]$ we have:
\[f_{\myd}(a +_r b)  \leq f_{\myd}(a) +_r f_{\myu}(b) \leq f_{\myu}(a + _rb)\]
where $f_{\myd} \eqdef  \myd_L \circ f$ and $f_{\myu} \eqdef  \myu_L \circ f$.
Similarly, given d-cones $C$, $D$, and $E$ and two  functions $\myd_D,\myu_D\type D \rightarrow E$, we say that a function $g: D \rightarrow E$ is \emph{medial (w.r.t.\ $\myd_D,\myu_D$)} if, for all $x,y \in C$, we have:
\[g_{\myd}(x + y)  \leq g_{\myd}(x) + g_{\myu}(y) \leq g_{\myu}(x + y)\]
where $g_{\myd} \eqdef  \myd_D \circ g$ and $g_{\myu} \eqdef  \myu_D \circ g$;  this is equivalent to it being medial when considered as a function between \KSs, provided that it is homogeneous, as  are $\myd_D$, and $\myu_D$.

Now we suppose  given: 
\begin{itemize}
\item full \KSs\ $K$ and $M$ and a full \KS\ semilattice $L$,

\item d-cones $C$ and $E$ and a d-cone semilattice $D$,
\item universal \KS\ embeddings  $K \xrightarrow{e_1} C$ and $M \xrightarrow{e_3} E$ and a universal \KS\ semilattice embedding $L \xrightarrow{e_2}  D$, and
\item  Scott continuous homogeneous functions $\myd_L,\myu_L\type L \rightarrow M$ and $\myd_D,\myu_D\type D \rightarrow E$ such that both the following two diagrams commute\display
\[\begin{diagram}
	{D} & \rTo^{\myd_D} & E\\
	 \uTo^{e_2} & & \uTo_{e_3} \\
	 {L} & \rTo_{\myd_L} & M
	\end{diagram}
\qquad
\begin{diagram}
	{D} & \rTo^{\myu_D} & E\\
	 \uTo^{e_2} & & \uTo_{e_3} \\
	 {L} & \rTo_{\myu_L} & M
	\end{diagram}	\]
\end{itemize}

Then, if $f\type K \rightarrow L$ and $g\type C \rightarrow D$ are Scott-continuous homogeneous functions such that $e_2 \circ f = g\circ e_1 $ then $f$ is $\subseteq$-sublinear iff $g$ is, and $f$ is medial iff $g$ is. The proofs are straightforward using the binary induction principle for universal \KS\ embeddings.

We now suppose further that

\begin{itemize}
\item $M$ is both a \KS\ meet- and join-semilattice, and $d_L, u_L$ are both \KS\ semilattice morphisms, with $M$ taken, accordingly, as  a \KS\ meet- or join-semilattice, and 
\item $E$ is both a d-cone meet-semilattice and  join-semilattice, and $d_D, u_D$ are both d-cone semilattice morphisms, with $M$ taken, accordingly, as  a d-cone meet- or  join-semilattice.
\end{itemize}

We write $\mathcal{L}^{\leq 1}_{\subseteq,\rm{med}}(C,D)$ for the set of Scott-continuous, $\subseteq$-sublinear, medial, nonexpansive functions from $D$ to $E$; equipped with the pointwise structure it forms a sub-\KS\ semilattice of $\mathcal{L}^{\leq 1}_\mathrm{hom}(C,D)$. We further write 
$\mathcal{L}_{\subseteq,\rm{med}}(K,L)$ for the set of Scott-continuous, $\subseteq$-sublinear, medial functions from $K$ to $L$; equipped with the pointwise structure it forms a sub-\KS\ semilattice of $\mathcal{L}_{\mathrm{hom}}(K,L)$.

As we have seen that $\subseteq$-sublinearity and mediality transfer along  the \KS\ semilattice  isomorphism of Proposition~\ref{fund-unit}, we now see that that, under our several suppositions, this isomorphism restricts to a \KS\ semilattice isomorphism\display
\[\mathcal{L}_{\subseteq,\rm{med}}(K,L) \cong \mathcal{L}^{\leq 1}_{\subseteq,\rm{med}}(C,D)\]

Let us now consider the particular case where $K \xrightarrow{e_1} C$ is the inclusion $\cL_{\leq 1} P \subseteq \cL P$, for some dcpo $P$, $L \xrightarrow{e_2} D$ is the inclusion $\cP \I \subseteq \cP \oRp$,  $M \xrightarrow{e_3} E$ is the inclusion $\I \subseteq  \oRp$, $\myd_{ \cP \oRp}(x) \eqdef  \un{x}$, $\myu_{ \cP \oRp}(x) \eqdef   \ov{x}$, and $\myd_{\cP \I}$ and $\myu_{\cP \I}$ are defined similarly. 

Then we already know that $e_1$ is a universal \KS\ embedding and it is evident that $e_3$ is too; that $e_2$ is follows from Proposition~\ref{uni-Keg}, and so it is evidently a universal \KS\ semilattice embedding. 
It is then clear that all the above assumptions hold, and so we have  a \KS\ semilattice isomorphism\display
\[\mathcal{L}_{\subseteq,\rm{med}}(\cL_{\leq 1} P,\cP \I) \cong
\mathcal{L}^{\leq 1}_{\subseteq,\rm{med}}(\cL P, \cP \oRp)\]
where, on the left nonexpansiveness is defined relative to the Minkowski seminorms on $\cL P$ and $\oRp$. However these seminorms are the same as those considered before: we already know this for $\cL P$, and it is easy to show that for $\oRp$ (i.e., to show that $\norm{x}_{\cP \I} = \ov{x}$).  As an immediate corollary of Corollary~\ref{th:convexdomain} we then obtain\display
\begin{cor}\label{th:convexdomain-unit}
Let $P$ be a coherent domain. 
Then  we have a  \KS\ semilattice isomorphism
\[\Lambda_P\colon \cP\mathcal{V}_{\leq 1} P \;\cong\;
\mathcal{L}_{\subseteq,\rm{med}}(\cL_{\leq 1} P,\cP \I)\]
It is given by\display
\[ \Lambda_P(X)(f) \eqdef [\inf_{\nu \in X} \int \!f\,d\nu\, ,
\sup_{\nu \in X} \int \!f\,d\nu\, ] \tag*{\qEd}\]
\end{cor}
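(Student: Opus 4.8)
\textbf{Proof proposal for Corollary~\ref{th:convexdomain-unit}.}
The plan is to obtain this result by combining the general \KS-semilattice isomorphism just established, namely
\[\mathcal{L}_{\subseteq,\rm{med}}(\cL_{\leq 1} P,\cP \I) \cong
\mathcal{L}^{\leq 1}_{\subseteq,\rm{med}}(\cL P, \cP \oRp),\]
with Corollary~\ref{th:convexdomain}, which, for a coherent domain $P$, provides the \KS-semilattice isomorphism
\[\Lambda_P\colon \cP\mathcal{V}_{\leq 1} P \;\cong\;
\mathcal{L}^{\leq 1}_{\subseteq,\rm{med}}(\cL P,\cP \oRp).\]
Composing these two isomorphisms gives a \KS-semilattice isomorphism $\cP\mathcal{V}_{\leq 1} P \cong \mathcal{L}_{\subseteq,\rm{med}}(\cL_{\leq 1} P,\cP \I)$, and the only remaining task is to check that, under this composite, an element $X \in \cP\mathcal{V}_{\leq 1} P$ is sent to the functional $f \mapsto [\inf_{\nu \in X} \int f\,d\nu,\ \sup_{\nu \in X} \int f\,d\nu]$ on $\cL_{\leq 1}P$.

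The key steps, in order, are as follows. First, I would recall from the discussion immediately preceding the corollary that all the structural hypotheses needed for the general isomorphism hold in the particular instantiation $K \xrightarrow{e_1} C$ equal to $\cL_{\leq 1}P \subseteq \cL P$, $L \xrightarrow{e_2} D$ equal to $\cP\I \subseteq \cP\oRp$, $M \xrightarrow{e_3} E$ equal to $\I \subseteq \oRp$, with $\myd,\myu$ the lower- and upper-endpoint maps; in particular $e_2$ is a universal \KS\ semilattice embedding by Proposition~\ref{uni-Keg}, and the Minkowski seminorm on $\cL P$ (resp.\ on $\cP\oRp$) coincides with the sup norm $\norm{\hspace{0.5pt}\mbox{-}\hspace{0.5pt}}_\infty$ (resp.\ with $x \mapsto \ov{x}$) used in Section~\ref{convex}. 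Second, I would invoke Corollary~\ref{th:convexdomain}, which is available since $P$ is a coherent domain: it gives $\Lambda_P(X)(g) = [\inf_{\mu \in X}\int g\,d\mu,\ \sup_{\mu \in X}\int g\,d\mu]$ for $g \in \cL P$. Third, I would trace the inverse direction of the isomorphism of Proposition~\ref{fund-unit}: it sends a nonexpansive Scott-continuous $\subseteq$-sublinear medial functional $G\colon \cL P \to \cP\oRp$ to its restriction $G\restriction \cL_{\leq 1}P$ along the embeddings. Hence the composite sends $X$ to the functional $f \mapsto [\inf_{\nu \in X}\int f\,d\nu,\ \sup_{\nu \in X}\int f\,d\nu]$ with $f$ ranging over $\cL_{\leq 1}P$ (noting that for such $f$ the integrals are the restrictions of the integrals against $g = f$ viewed in $\cL P$), which is exactly the stated formula.

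The one point requiring a little care — and the main potential obstacle — is the verification that the general function-space isomorphism of Proposition~\ref{fund-unit}, once cut down to $\subseteq$-sublinear medial functions, really does restrict to give $\mathcal{L}_{\subseteq,\rm{med}}(\cL_{\leq 1} P,\cP \I) \cong \mathcal{L}^{\leq 1}_{\subseteq,\rm{med}}(\cL P,\cP \oRp)$ as an isomorphism of \KS\ semilattices; this rests on the facts, established in the paragraph preceding the corollary, that $\subseteq$-sublinearity and mediality transfer along the isomorphism in both directions (by the binary induction principle for universal \KS\ embeddings, using the two commuting squares for $\myd$ and $\myu$), and that nonexpansiveness on the $\cL P$ side is automatic on the $\cL_{\leq 1}P$ side by Fact~\ref{Kegne}. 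Since all of these ingredients are already in place, the corollary follows immediately, and indeed the text presents it as an \emph{immediate corollary} of Corollary~\ref{th:convexdomain}; I would therefore keep the written argument to a single short paragraph that names the two isomorphisms being composed and remarks that the claimed formula is read off by restriction.
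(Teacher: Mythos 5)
Your proposal is correct and follows essentially the same route as the paper, which derives the corollary immediately from Corollary~\ref{th:convexdomain} by composing with the \KS\ semilattice isomorphism $\mathcal{L}_{\subseteq,\rm{med}}(\cL_{\leq 1} P,\cP \I) \cong \mathcal{L}^{\leq 1}_{\subseteq,\rm{med}}(\cL P, \cP \oRp)$ established in the paragraphs just before the statement (instantiating Proposition~\ref{fund-unit} and checking the Minkowski seminorms agree with the earlier norms). Your tracing of the formula via restriction along the embeddings is exactly the intended reading-off, so nothing further is needed.
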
   

To obtain a corresponding result for predicate transformers we change
the above framework slightly: instead of supposing that $K$ is a
full \KS, $C$ is a d-cone, and $K \xrightarrow{e_1} C$ is
a universal \KS\ embedding, we suppose that $K$ is a full \KS\
semilattice, $C$ is a d-cone semilattice, and $K \xrightarrow{e_1} C$
is a universal \KS\ semilattice embedding.

Then, for Scott-continuous homogeneous functions  $f\type K
\rightarrow L$, $g\type C \rightarrow D$ where $e_2 \circ f = g\circ
e_1 $, we additionally have that $f$ is $\subseteq$-monotone iff $g$
is. This is proved by the universal embedding induction principle, as
usual, but noting that $\subseteq$-monotonicity can be expressed
equationally: for example $g$ is $\subseteq$-monotone if, and only if,
for all $x,y \in C$, $g(x) \cup g(x \cup y) = g(x\cup y)$.

We now write $\mathcal{L}^{\leq 1}_{\rm{mon},\subseteq,\rm{med}}(C,D)$ for the set of Scott-continuous, $\subseteq$-monotone, $\subseteq$-sublinear, medial, nonexpansive functions from $D$ to $E$; equipped with the pointwise structure it forms a sub-\KS\ semilattice of $\mathcal{L}^{\leq 1}_\mathrm{hom}(C,D)$. We further write 
$\mathcal{L}_{\rm{mon},\subseteq,\rm{med}}(K,L)$ for the set of Scott-continuous, $\subseteq$-monotone, $\subseteq$-sublinear, medial functions from $K$ to $L$; equipped with the pointwise structure it forms a sub-\KS\ semilattice of $\mathcal{L}_{\mathrm{hom}}(K,L)$.

As we have seen that $\subseteq$-monotonicity, $\subseteq$-sublinearity, and mediality  transfer along  the \KS\ semilattice  isomorphism of Proposition~\ref{fund-unit}, we now see that that, under our several suppositions, this isomorphism restricts to a \KS\ semilattice isomorphism\display
\[\mathcal{L}^{\leq 1}_{\rm{mon},\subseteq,\rm{med}}(C,D) \cong \mathcal{L}_{\rm{mon},\subseteq,\rm{med}}(K,L)\]
%


To apply this result,  we note that, for any dcpo $P$, the inclusion $\cP \I^P \subseteq \cP \oRp^P$ is a universal \KS\ semilattice embedding (for a proof again use Proposition~\ref{uni-Keg}, now following the same lines as the proof that the inclusion $\cL_{\leq 1}P \subseteq \cL P$ is universal) and that the Minkowski seminorm on $\cP \oRp^P$ is the same as the norm defined  before, i.e., that $\norm{f}_{\cP \oRp^P} = \sup_{x \in P}\ov{f(x)}$. 

Now consider the particular case where $K \xrightarrow{e_1} C$ is the inclusion $\cP \I^Q \subseteq \cP \oRp^Q$, for some dcpo $Q$, $L \xrightarrow{e_2} D$ is the inclusion $\cP \I^P \subseteq \cP \oRp^P$, for some dcpo $P$,  $M \xrightarrow{e_3} E$ is the inclusion $\I^P \subseteq  \oRp^P$, $\myd_{ \cP \oRp^P}(f) \eqdef  \un{f}$, $\myu_{ \cP \oRp^P}(f) \eqdef   \ov{f}$, and $\myd_{\cP \I^P}$ and $\myu_{\cP \I^P}$ are defined similarly. Then   all the above assumptions hold, and so we have  a \KS\ semilattice isomorphism\display
\[\mathcal{L}^{\leq 1}_{\rm{mon}, \subseteq,\rm{med}}(\cP \oRp^Q,\cP \oRp^P) \cong \mathcal{L}_{\rm{mon}, \subseteq,\rm{med}}(\cP \I^Q, \cP \I^P)\]
and then as an immediate corollary of Corollary~\ref{th:convexdomainPred} we obtain\display

\begin{cor} \label{th:convexdomainPred-unit}
Let $P$ be a dcpo and let $Q$ be  a coherent domain. To every state
transformer $s\colon P \to \cP \cV_{\leq 1}Q$ we can assign  a predicate transformer $\PT_{P,Q}(s)\colon
\cP\I ^Q \to \cP\I ^P$  by\display
\[\PT_{P,Q}(s)(g)(x) \eqdef [\inf_{\nu \in s(x)}\int\! \un{g} \, d\nu,
\sup_{\nu \in s(x)}\int\! \ov{g} \, d\nu\ ]        \quad (g\in \cP\I ^Q, x \in P)\]
The predicate transformer $\PT_{P,Q}(s)$ is  
$\subseteq$-monotone, $\subseteq$-sublinear, and medial.
The assignment $\PT_{P,Q}$ is a \KS\ semilattice isomorphism 
\[(\cP\cV_{\leq 1} Q)^P \cong
  \cL_{\mathrm{mon},\subseteq,\rm{med}}(\cP\I^Q,\cP\I^P) \tag*{\qEd}\]
\end{cor}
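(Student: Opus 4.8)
The plan is to derive Corollary~\ref{th:convexdomainPred-unit} as an immediate consequence of Corollary~\ref{th:convexdomainPred}, exactly in the way the preceding paragraphs set up. Concretely, I would combine the \KS\ semilattice isomorphism
\[\mathcal{L}^{\leq 1}_{\rm{mon}, \subseteq,\rm{med}}(\cP \oRp^Q,\cP \oRp^P) \cong \mathcal{L}_{\rm{mon}, \subseteq,\rm{med}}(\cP \I^Q, \cP \I^P)\]
just obtained, which arises from the restriction of the function-space isomorphism of Proposition~\ref{fund-unit} to the subclass of Scott-continuous, $\subseteq$-monotone, $\subseteq$-sublinear, medial maps (and deletes the nonexpansiveness requirement in passing to $\I$), with the \KS\ semilattice isomorphism
\[(\cP\cV_{\leq 1} Q)^P \cong  \cL^{\leq 1}_{\mathrm{mon},\subseteq,\rm{med}}(\cP\oRp^Q,\cP\oRp^P)\]
of Corollary~\ref{th:convexdomainPred} (valid since $Q$ is a coherent domain). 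Composing these two \KS\ semilattice isomorphisms gives the asserted isomorphism $(\cP\cV_{\leq 1} Q)^P \cong \cL_{\mathrm{mon},\subseteq,\rm{med}}(\cP\I^Q,\cP\I^P)$.

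First I would record the explicit form of the predicate transformer. Unwinding the composite, a state transformer $s\colon P\to\cP\cV_{\leq 1}Q$ is sent by Corollary~\ref{th:convexdomainPred} to the transformer
\[\PT_{P,Q}(s)(g)(x) = [\inf_{\mu \in s(x)}\int\! \un{g} \, d\mu,\ \sup_{\mu \in s(x)}\int\! \ov{g} \, d\mu\ ]\]
on $\cP\oRp^Q$, and the isomorphism of Proposition~\ref{fund-unit} simply restricts such a transformer along the universal embeddings $\cP\I^Q\subseteq\cP\oRp^Q$ and $\cP\I^P\subseteq\cP\oRp^P$; since the formula above is already well-defined and valued in $\cP\I^P$ when $g\in\cP\I^Q$ (the integrals of $\un g,\ov g\colon Q\to\I$ against a subprobability valuation lie in $\I$), the restriction is given by the very same formula, yielding the displayed expression in the statement. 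The healthiness properties — $\subseteq$-monotone, $\subseteq$-sublinear, medial — are inherited because Corollary~\ref{th:convexdomainPred} already establishes them for $\PT_{P,Q}(s)$ on $\cP\oRp^Q$, and, as noted in the discussion preceding the corollary, each of these properties transfers along the Proposition~\ref{fund-unit} isomorphism in both directions; the nonexpansiveness clause present in the $\oRp$-version is simply dropped, consistently with the general principle of this section that nonexpansiveness is automatic when the codomain is the unit interval.

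The only genuine content beyond bookkeeping is checking that the inclusion $\cP\I^P\subseteq\cP\oRp^P$ is a universal \KS\ semilattice embedding and that its associated Minkowski seminorm on $\cP\oRp^P$ agrees with the norm $\norm{f}=\sup_{x\in P}\ov{f(x)}$ used in Corollary~\ref{th:convexdomainPred}; but both of these were dispatched in the paragraph just above the statement, via Proposition~\ref{uni-Keg} (following the same argument as for $\cL_{\leq 1}P\subseteq\cL P$, using that $\cP\oRp^P$ has a continuous meet and every $f$ is the sup of the truncations $p_n\circ f$) and a direct computation that $\norm{f}_{\cP\oRp^P}=\sup_{x\in P}\ov{f(x)}$. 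So there is no real obstacle: the proof is the one-line observation that the asserted isomorphism is the composite of the two already-established ones, with the formula and healthiness conditions read off from Corollary~\ref{th:convexdomainPred}. If anything were to require care it would be confirming that the two notions of seminorm really do coincide so that the $\le 1$-decorated function spaces match up on the $\oRp$ side — but this is routine and already stated.

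\begin{proof}
This follows immediately by combining Corollary~\ref{th:convexdomainPred} with the \KS\ semilattice isomorphism
\[\mathcal{L}^{\leq 1}_{\rm{mon}, \subseteq,\rm{med}}(\cP \oRp^Q,\cP \oRp^P) \cong \mathcal{L}_{\rm{mon}, \subseteq,\rm{med}}(\cP \I^Q, \cP \I^P)\]
established just above, which is the restriction of the isomorphism of Proposition~\ref{fund-unit} (here using that $\cP\I^Q\subseteq\cP\oRp^Q$ and $\cP\I^P\subseteq\cP\oRp^P$ are universal \KS\ semilattice embeddings, by Proposition~\ref{uni-Keg}, and that the relevant Minkowski seminorm on $\cP\oRp^P$ is $\sup_{x\in P}\ov{f(x)}$, the norm used in Corollary~\ref{th:convexdomainPred}). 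Indeed, $\PT_{P,Q}$ as in the statement is obtained from $\PT_{P,Q}$ of Corollary~\ref{th:convexdomainPred} by restricting along these inclusions; since the defining formula
\[\PT_{P,Q}(s)(g)(x) = [\inf_{\mu \in s(x)}\int\! \un{g} \, d\mu,\ \sup_{\mu \in s(x)}\int\! \ov{g} \, d\mu\ ]\]
already takes values in $\cP\I^P$ whenever $g \in \cP\I^Q$, this restriction is given by the same formula. The properties of being $\subseteq$-monotone, $\subseteq$-sublinear, and medial hold for $\PT_{P,Q}(s)$ because they hold for the corresponding transformer on $\cP\oRp^Q$ by Corollary~\ref{th:convexdomainPred} and transfer along the isomorphism of Proposition~\ref{fund-unit}, as discussed above; the nonexpansiveness condition of Corollary~\ref{th:convexdomainPred} is dropped on passing to $\I$. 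Composing the two \KS\ semilattice isomorphisms yields
\[(\cP\cV_{\leq 1} Q)^P \cong \cL_{\mathrm{mon},\subseteq,\rm{med}}(\cP\I^Q,\cP\I^P)\]
as claimed.
\end{proof}
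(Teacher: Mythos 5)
Your proposal is correct and follows essentially the same route as the paper: the corollary is obtained immediately by composing the \KS\ semilattice isomorphism of Corollary~\ref{th:convexdomainPred} with the isomorphism $\mathcal{L}^{\leq 1}_{\rm{mon},\subseteq,\rm{med}}(\cP\oRp^Q,\cP\oRp^P) \cong \mathcal{L}_{\rm{mon},\subseteq,\rm{med}}(\cP\I^Q,\cP\I^P)$ arising from Proposition~\ref{fund-unit} applied to the universal \KS\ semilattice embeddings $\cP\I^Q\subseteq\cP\oRp^Q$ and $\cP\I^P\subseteq\cP\oRp^P$, with the formula and healthiness conditions read off by restriction. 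Your checks that the seminorm on $\cP\oRp^P$ matches the earlier norm and that the transfer of $\subseteq$-monotonicity, $\subseteq$-sublinearity, and mediality is the one established in the preceding discussion are exactly the points the paper relies on.
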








\section*{Acknowledgements}

We are very grateful to Jean  Goubault-Larrecq for his many helpful
comments and suggestions. 
\bibliographystyle{alpha}
\vspace{-0.5cm}
\bibliography{paper}

\appendix

\section{The other distributive law}\label{appbad}

We consider two equational theories: $\mathrm{B}$, for barycentric
algebras, and $\mathrm{S}$, for semilattices. The first has binary
operation symbols $+_r$ ($r \in [0,1]$) and axioms the equations for
barycentric algebras (\ref{B1}), (\ref{B2}), (\ref{SC}), and
(\ref{SA}) given in Section~\ref{sec:ba}; the second has a single
binary operation symbol $\nonor$ and 
associativity, commutativity, and idempotency axioms. For any $r \in [0,1]$ we
write $r'$ for $1-r$. 

 We write 
$t_1 = u_1, \ldots, t_1 = u_1 \vdash_T t = u $
for a given equational theory $T$ to mean that $t = u$ follows by
equational reasoning from the theory $T$ and  the equations  $t_1 =
u_1, \ldots, t_n = u_n$. 
We need a proof-theoretic version of another lemma of Neumann:
\cite[Lemma 3]{Neu} (with corrected bounds).
 We first show a lemma that can also be derived from general results
 due to Sokolova and Woracek \cite[Theorem 4.4 and Example
 4.6]{SW}\display  
 
\begin{lem} \label{bcong}
Let $\sim$ be a congruence on the barycentric algebra $[0,1]$. Then if
two elements of the open interval $]0,1[$ are congruent, so are any
other two. 
\end{lem}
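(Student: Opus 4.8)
The claim is a statement about congruences on the barycentric algebra $\I = [0,1]$ (with operations $x +_r y = rx + (1-r)y$ for $r \in [0,1]$). The plan is to show that if $a \sim b$ for two distinct elements $a, b \in\ ]0,1[$, then the congruence class of every interior point is `large enough' to force all interior points to be congruent to one another. The key tool is that a congruence is compatible with every operation $+_r$, so from one congruence $a \sim b$ we can generate many more by applying the operations to both sides.

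First I would reduce to a normalized situation. Given $a \neq b$ in $]0,1[$, write $\delta = |a-b| > 0$. Using the operation $+_r$ with a fixed point (say $0$ or $1$, or any interior point), observe that $a \sim b$ implies $a +_r c \sim b +_r c$ for all $r$ and all $c$; choosing $c$ and $r$ appropriately we can translate and scale the pair $(a,b)$, so that we obtain a congruent pair of interior points at any prescribed (sufficiently small) distance, and located anywhere in the interior. More precisely: for $c \in\ ]0,1[$ and $r \in\ ]0,1[$, the map $x \mapsto x +_r c = rx + (1-r)c$ sends $]0,1[$ into $]0,1[$, is affine with slope $r$, and so $\{ra + (1-r)c,\ rb + (1-r)c\}$ is a congruent pair at distance $r\delta$. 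By varying $r$ down to $0$ and $c$ over $]0,1[$, we can place a congruent pair of interior points arbitrarily close together near any target point $t \in\ ]0,1[$.

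Next I would bootstrap from `arbitrarily close congruent pairs everywhere' to `all interior points congruent'. The idea: suppose $a \sim b$ with $0 < a < b < 1$; I claim $a \sim a'$ for any $a' \in\ [a, b]$ — indeed $a' = a +_s b$ for $s = (b - a')/(b-a) \in [0,1]$, and $b = a +_s b$... no, rather $a' = s a + (1-s) b$, and since $a \sim b$ we get, applying $+_s$ with the pair $(a,b)$ in, say, the first slot against $b$ in the second: $a +_s b \sim b +_s b = b$, i.e. $a' \sim b \sim a$. Wait — that needs $a +_s b = a'$, i.e. $sa + (1-s)b = a'$, which holds for the stated $s$. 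So actually every point of the closed interval $[a,b]$ is congruent to $a$ already. Thus one congruent pair $a \sim b$ forces the whole interval $[a,b]$ to collapse to a single class. Now combine this with the translation/scaling step: since we can produce a congruent pair straddling any target interior point, the union of all such collapsed intervals covers $]0,1[$, and since these collapsed classes overlap (any two overlapping intervals lie in one class by transitivity), all of $]0,1[$ lies in a single congruence class. This yields exactly the statement: any two interior points are congruent.

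The main obstacle — and the place to be careful — is the skew-associativity bookkeeping: verifying that the operations $+_r$ really do map the open interval into itself and that the affine-combination identities I am invoking (like $a +_s b = sa + (1-s)b$ and $b +_s b = b$) are genuine consequences of the barycentric axioms (\ref{B1})--(\ref{SA}), not just facts about real arithmetic. The cleanest route is to invoke the concrete description of the free barycentric algebra and of $\I$ as a convex subset of the cone $\R$ (Lemma~\ref{lem:freecone} and Standard Construction~\ref{sc1}), so that all these manipulations are legitimate computations inside $\R$; then `congruence' just means compatibility with the operations $x +_r y = rx + (1-r)y$, and the argument above goes through as stated. A secondary point is to handle the edge case where the given pair $\{a,b\}$ might be pushed toward $0$ or $1$ under scaling — but since we only ever apply $+_r$ with $r \in\ ]0,1[$ and a point $c \in\ ]0,1[$, the images stay strictly inside $]0,1[$, so no endpoint is ever involved, and the congruence we build never needs to say anything about $0$ or $1$.
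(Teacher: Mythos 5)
Your preliminary steps are fine: since $[0,1]$ is a convex subset of $\R$, the operations are literally $x +_r y = rx+(1-r)y$, and your interval-collapse observation is correct — from $a \sim b$ with $a<b$ one gets $a +_s b \sim b +_s b = b$ for every $s\in[0,1]$, so the whole segment $[a,b]$ lies in a single congruence class (this is just the convexity of congruence classes, which the paper also uses, in its last sentence). The genuine gap is in your final globalization step. You transform the pair by $x \mapsto x +_r c$ to obtain congruent pairs elsewhere and then argue: the resulting collapsed intervals cover $]0,1[$, ``these collapsed classes overlap'', hence all of $]0,1[$ is one class. That overlap claim is asserted, not proved, and it is false for arbitrary members of your family: the transformed interval $[ra+(1-r)c,\ rb+(1-r)c]$ has length $r(b-a)$, so for small $r$ a transform placed near $0.1$ and one placed near $0.9$ are disjoint. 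Each such interval lies inside \emph{some} congruence class, but covering $]0,1[$ by class-contained intervals does not yield a single class; you must link every transformed interval back to the class of $a$ by a chain of overlapping intervals, and that chaining argument is missing. (A smaller looseness: with $c$ taken at the target $t$, the transformed pair straddles $t$ only when $a<t<b$; for other $t$ you need to choose $c$ off to one side, which is possible for small $r$ but has to be said.)

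The missing chaining is easy to supply, and supplying it essentially reproduces the paper's proof. There is no harm in using the endpoints as operands (the congruence is compatible with all operations; you merely refrain from asserting anything about the classes of $0$ and $1$, so your precaution of keeping $c$ interior is unnecessary). Taking $c=0$, i.e.\ scaling, and setting $\alpha = a/b <1$, the intervals $[\alpha^n a, \alpha^n b]$ abut, since $\alpha^{n+1}b = \alpha^n a$; so the class of $a$ absorbs $\bigcup_n [\alpha^n a, \alpha^n b] = \mbox{}]0,b]$, which is exactly the paper's ``$\alpha^n r \sim s$, so we get arbitrarily close to $0$'' step. Taking $c=1$ (equivalently, conjugating by the automorphism $p\mapsto 1-p$, as the paper does) the class of $a$ absorbs $[a,1[$; the two pieces overlap on $[a,b]$, so all of $]0,1[$ is one class. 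So your strategy is at heart the same as the paper's — move the congruent pair toward the two endpoints and invoke convexity of classes — but as written the step that welds the scattered collapsed intervals into a single class is absent, and it is precisely the part of the argument that does the real work.
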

\begin{proof}
Let $\sim$ be  such a congruence and suppose that we have $r \sim s$
with $0 < r < s < 1$. Set $\alpha = r/s < 1$. 
Then $\alpha r \sim \alpha s = r \sim s$ hence $\alpha r \sim s$.
Repeating the argument yields $\alpha^n r \sim s$, for any $n \geq 0$,
and so we get arbitrarily close to $0$ with elements congruent to
$s$. 

In the other direction, we can define a `symmetric' congruence
relation $\sim'$, setting $ p \sim' q$ to hold if, and only if,  $p' \sim q'$, as the map $p
\mapsto p'$ is an (involutive) automorphism of $[0,1]$.  We then have
$s' \sim' r'$ and $0 < s' < r' <1$. So, by the above argument, we can
get arbitrarily close to $0$ with elements in the symmetric congruence
relation with $r'$, and so arbitrarily close to $1$ with elements
congruent to $r$.  

As congruence classes are convex, we then see that any two elements of
the open interval $]0,1[$ are congruent. 
\end{proof}

\begin{lem} \label{NP} Let $T$ be an equational theory extending
  $\mathrm{B}$. Then for any terms $t$ and $u$,  and any $0 < r  < s <
  1$ and $0 < p< q < 1$ we have\display 
\[t +_r u = t +_s u   \vdash_T \; t +_{p} u = t +_{q} u  \]
\end{lem}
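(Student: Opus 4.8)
The statement is a proof-theoretic (syntactic) version of Neumann's Lemma~\ref{neumann}, and the natural strategy is to mirror the semantic proof of Lemma~\ref{neumann}, but carried out entirely with the symbol $\vdash_T$ in place of $\leq$, and with the congruence-theoretic input of Lemma~\ref{bcong} doing the work that convexity of congruence classes did there. The plan is to first observe that it suffices to prove the result for the \emph{free} barycentric algebra on two generators $x,y$: indeed, the hypothesis $t +_r u = t +_s u$ is obtained from $x +_r y = x +_s y$ by the substitution $x \mapsto t$, $y \mapsto u$, and substitution commutes with equational derivability, so if $x +_r y = x +_s y \vdash_{\mathrm{B}} x +_p y = x +_q y$ then applying the same substitution gives the claim (over $\mathrm{B}$, hence a fortiori over any extension $T$).

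\textbf{Reduction to a congruence on $[0,1]$.} Next I would identify the free barycentric algebra on $\{x,y\}$ with a concrete object on which I can compute. By Lemma~\ref{lem:freecone}(2) the free barycentric algebra on two generators is the simplex $P_2$, which is affinely isomorphic to the unit interval $\I = [0,1]$ via $t \mapsto t\cdot\delta(x) + (1-t)\cdot\delta(y)$; under this isomorphism $x +_r y$ corresponds to the point $r \in [0,1]$ (with $x \leftrightarrow 1$, $y\leftrightarrow 0$). Thus proving $x +_r y = x +_s y \vdash_{\mathrm{B}} x +_p y = x +_q y$ amounts to: the least congruence $\sim$ on the barycentric algebra $[0,1]$ identifying $r$ and $s$ also identifies $p$ and $q$. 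But since $0<r<s<1$ means $r,s$ are two distinct elements of the open interval $]0,1[$, Lemma~\ref{bcong} applies directly and tells us that $\sim$ identifies \emph{any} two elements of $]0,1[$ — in particular $p$ and $q$, which lie in $]0,1[$ by hypothesis. Translating back through the isomorphism and through the substitution gives exactly the statement of Lemma~\ref{NP}.

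\textbf{Main obstacle.} The only real subtlety is bookkeeping between the syntactic side ($\vdash_T$, derivations) and the semantic side (congruences on $[0,1]$): one must be sure that "the quotient of the free algebra by the least congruence forcing $r\sim s$" is precisely what $\vdash_{\mathrm{B}}$ computes, i.e. that equational logic is complete for this class — which is standard (Birkhoff), since $\mathrm{B}$ is an equational theory and $P_2$ is free. One should also note that working over $\mathrm{B}$ rather than the possibly-richer $T$ is harmless and in fact gives a slightly stronger statement; the extra axioms of $T$ only add more identifications, which cannot destroy the conclusion. I expect the write-up to be short: state the reduction to $\{x,y\}$-terms via substitution, invoke Lemma~\ref{lem:freecone}(2) and the affine iso $P_2\cong[0,1]$ sending $x+_r y$ to $r$, then quote Lemma~\ref{bcong} to conclude. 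If one wished to avoid explicit appeal to Birkhoff completeness, an alternative is to re-prove Lemma~\ref{bcong}'s argument directly at the level of derivations: from $t +_r u = t +_s u$ derive, using (SA) and (B2) to "scale", the equation $t +_{\alpha r} u = t +_s u$ with $\alpha = r/s$, iterate to approach $0$, use the skew-commutativity (SC) symmetry $p \leftrightarrow 1-p$ to approach $1$, and finally interpolate $p$ and $q$ between two derived-equal points using (SA) once more — this is the proof-theoretic transcription of the proof of Lemma~\ref{neumann}, and the only mildly delicate point is choosing the exact convex-combination coefficients so that the bounds come out right (hence Neumann's parenthetical "with corrected bounds").
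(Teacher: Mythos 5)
Your proposal is correct and in essence the same as the paper's proof: both come down to applying Lemma~\ref{bcong} to a congruence on the barycentric algebra $[0,1]$ that records which coefficient identifications are forced by the hypothesis. The paper does this more economically, defining $p\sim q$ directly by $t+_r u = t+_s u \vdash_T t+_p u = t+_q u$ and invoking Lemma~\ref{bcong} (leaving implicit the easy check that this syntactic relation is a congruence), whereas you reach the same congruence via the substitution reduction to two variables, the identification of the free algebra on $\{x,y\}$ with $[0,1]$, and Birkhoff-style completeness --- a valid but slightly more roundabout packaging.
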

\begin{proof} One fixes $r$ and $s$ with $0 < r  < s < 1$ and then
  applies Lemma~\ref{bcong} to the congruence $\sim$ where 
$p \sim q$ iff $t +_r u = t +_s u   \vdash_T \; t +_{p} u = t +_{q} u$.
\end{proof}

The equational theory $\rm{BSD'}$ 
has axioms those of  $\mathrm{B}$ and $\mathrm{S}$ together with
equations 
 \[x \nonor (y +_r z) = (x \nonor y) +_r (x \nonor z)  \qquad (r \in
 ]0,1[) \tag*{$\rm(D')$}\] 
stating that $\nonor$ distributes over each of the $+_r$.
Recall that a \emph{join-distributive bi-semilattice~\cite{Rom80}} is
an algebra with two semilattice operations $\nonand$ and $\nonor$,
with $\nonor$ distributing over $\nonand$. 
 
\begin{thm} The equational theory $\mathrm{BSD'}$ 
is equivalent to that of   join-distributive bi-semilattices.
\end{thm}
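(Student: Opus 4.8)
The plan is to show the two equational theories are interdefinable, i.e., that there is a common expansion in which the defined operations of each satisfy the axioms of the other, and that $\mathrm{BSD'}$-provable equations between the original barycentric terms match join-distributive bisemilattice equations. The key structural observation is Lemma~\ref{NP}: in any theory extending $\mathrm{B}$, once we also have $x\nonor(y+_rz)=(x\nonor y)+_r(x\nonor z)$ for one value of $r$ (in fact we will derive from the full $\mathrm{D'}$ and idempotence of $\nonor$ an equation of the form $t+_ru=t+_su$ with $r\neq s$), it follows that $t+_pu=t+_qu$ for all $p,q\in\,]0,1[$. Concretely, I would start from the convexity-type computation: since $\nonor$ is idempotent, $x\nonor y=(x\nonor y)\nonor(x\nonor y)$; but using $\mathrm{D'}$ in the form that expands $x\nonor(y+_rz)$, and setting things up as in the derivation of (CI) earlier in the paper, one derives $x\nonor y = x\nonor(x+_ry)\nonor y$, and then, comparing the $+_r$-decomposition of $x\nonor y$ for two different $r$'s, one gets an equation $t+_ru=t+_su$ with $0<r<s<1$ for suitable terms $t,u$ built from $x,y$. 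Applying Lemma~\ref{NP}, $+_p$ becomes independent of $p$ on all such terms; carrying this through the barycentric axioms (using (SA) and (SC)) forces $+_p$ to collapse, for \emph{all} terms, to a single binary operation $\nonand\,(x,y)\eqdef x+_{1/2}y$, and one checks from (B1), (B2), (SC), (SA) that this $\nonand$ is associative, commutative, and idempotent, i.e., a semilattice operation.

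Once $+_p$ has collapsed to a semilattice operation $\nonand$, the remaining $\mathrm{BSD'}$-axioms read exactly: $\nonand$ and $\nonor$ are semilattice operations (commutativity, associativity, idempotence for each), and $\mathrm{D'}$ becomes $x\nonor(y\nonand z)=(x\nonor y)\nonand(x\nonor z)$, which is precisely join-distributivity of $\nonor$ over $\nonand$. So every model of $\mathrm{BSD'}$, with $+_p$ reinterpreted as $\nonand$, is a join-distributive bisemilattice, and conversely every join-distributive bisemilattice, with $+_p\eqdef\nonand$ for all $p$, satisfies (B1), (B2) (by idempotence), (SC) (by commutativity), (SA) (by associativity and idempotence), the $\mathrm{S}$-axioms for $\nonor$, and $\mathrm{D'}$ (by join-distributivity). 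I would phrase the final statement as: the term operations of the two theories are mutually interpretable by the translations $+_p\mapsto\nonand$ (both directions) and $\nonor\mapsto\nonor$, and these translations are inverse bijections on provable equations, so the theories are equivalent.

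In writing this up I would proceed in the following order. First, prove the collapse lemma: from $\mathrm{D'}$ and idempotence of $\nonor$, derive an equation $t+_ru=t+_su$ with $r\neq s$ in $]0,1[$ (this is the analogue of the derivation of (CI) in Section~\ref{Powerkeg} and the remark, in Remark~\ref{rem:hist1}, that semilattices are the only proper nontrivial equational subclass of barycentric algebras). Second, invoke Lemma~\ref{NP} to upgrade this to $t+_pu = t+_qu$ for all $p,q$, and then argue that this forces $+_p$ to be term-independent of $p$ everywhere, defining $\nonand$. Third, verify $\nonand$ is a semilattice operation using (B1), (B2), (SC), (SA). Fourth, re-read the $\mathrm{BSD'}$-axioms under the substitution to obtain the join-distributive bisemilattice axioms, and conversely check all $\mathrm{BSD'}$-axioms hold in any join-distributive bisemilattice with $+_p\eqdef\nonand$. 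Fifth, conclude the equivalence of the two equational theories.

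The main obstacle I anticipate is the precise argument, in the third step, that $t+_pu=t+_qu$ for \emph{all} terms forces $+_p$ to be genuinely $p$-independent as an operation — one must check that the collapse propagates through nested applications of $+_r$ via skew associativity, rather than holding only for terms of a special shape; the cleanest route is probably to show by induction on term structure that every term is $\mathrm{BSD'}$-provably equal to one built only from variables, $\nonand$, and $\nonor$, handling the $+_r$ case by writing $t+_ru$ and using the collapse plus (SA) to re-associate into $\nonand$'s. A secondary subtlety is getting the bounds right in deriving the initial $t+_ru=t+_su$ equation and making sure $r,s$ genuinely lie strictly inside $]0,1[$ so that Lemma~\ref{NP} applies; Neumann's corrected bounds, cited just before Lemma~\ref{bcong}, are what make this go through cleanly.
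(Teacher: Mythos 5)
Your overall strategy does track the paper's: derive an equation $t+_ru=t+_su$ with two distinct parameters, use Lemma~\ref{NP} to spread it over all of $]0,1[$, conclude that the barycentric operations collapse to a single semilattice operation, and then read $\mathrm{D'}$ as join-distributivity. But the two pivotal derivations are missing, and the sketch you give for the first one does not work. Your proposed seed comes from "setting things up as in the derivation of (CI)"; that derivation uses the \emph{other} distributive law (of $+_r$ over $\nonor$) together with the semilattice order $\subseteq$, neither of which is what $\mathrm{BSD'}$ gives you: with $\mathrm{D'}$, expanding $(x\nonor y)+_r(x\nonor y)$ just contracts back to $x\nonor y$, and "comparing the $+_r$-decompositions of $x\nonor y$ for two different $r$'s" is not an equational inference (there is no cancellation available). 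The paper obtains the seed by a concrete substitution: putting $y+_rz$ for $x$ in $\mathrm{D'}$ gives $y+_rz=(y+_r(y\nonor z))+_r((y\nonor z)+_rz)$, and then substituting $y\nonor z$ for $z$ yields $y+_r(y\nonor z)=y+_{r^2}(y\nonor z)$, which has the required form with $r\neq r^2$ in $]0,1[$.

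The second gap is the passage from this special collapse to the general law $x+_ry=x+_sy$. Your term-structure induction presupposes exactly what is to be proved: in the inductive step for $t+_ru$ you have no collapse equation for the pair $(t,u)$ unless the general law is already a theorem. The paper bridges this with the entropic identity (E): combining the first displayed equation above with the special collapse and (E), one derives $y+_rz=y+_{1-r}z$, and a second application of Lemma~\ref{NP} then gives $y+_rz=y+_sz$ for all $r,s\in\,]0,1[$ as an equation in the variables $y,z$ --- after which it holds for all terms simply by substitution, so no induction is needed at all. Finally, a smaller but genuine slip: you cannot translate $+_p$ as $\nonand$ \emph{for all} $p$, since (B1) forces $+_1$ (and, via (SC), $+_0$) to be a projection; the endpoint operations must be translated as projections, with only the $+_r$ for $r\in\,]0,1[$ going to $\nonand$, and you should check that (B1), (B2), (SC), (SA) then hold in any join-distributive bisemilattice under this corrected translation.
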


\begin{proof} In the following we just write $t = u$ rather than
  $\vdash_{\mathrm{BSD'}} t = u$. 
Substituting $(y+_r z)$ for $x$ in $\rm{D'}$ (and using $\mathrm{S}$ and
re-using $\rm{D'}$) we get\display 
\begin{equation} \label{E1}
   ( y +_r  z) =   (( y +_r  z) \nonor  y) +_r ((y +_r z) \nonor  z) \\
               =   ( y+_r (y \nonor z))+_r ((y\nonor z)+_{r} z)
\end{equation}
for all $r \in ]0,1[$. Now, substituting $y \nonor  z$ for $z$ (and using
$\mathrm{S}$ and then $\mathrm{B}$) we get\display 
\[ y +_r (y \nonor  z) = ( y+_r (y \nonor z))+_r ((y\nonor z)+_{r} (y
\nonor  z)) = y +_{r^2} (y \nonor  z)\] 
for all $r\in ]0,1[$. So by Lemma~\ref{NP} we obtain\display
\begin{equation} \label{E2}
 y+_r (y\nonor z)  =   y+_s (y\nonor z)
\end{equation}
for all $r, s\in ]0,1[$. But now, for all $r\in \, ]0,1[$, we have\display
\begin{align*}
( y+_r z) =&~( y+_r (y \nonor z))+_r ((y\nonor z) +_{r} z)    \tag{by Equation~(\ref{E1})}\\
                             =&~(y+_{r'} (y \nonor z))+_r ( (y\nonor z) +_{r'}  z)   \tag{by Equation~(\ref{E2}), and using  $\mathrm{B}$}\\
                             =&~(y+_{r} (y \nonor z))+_{r'} ((y\nonor z) +_{r} z)            \tag{by the entropic identity}\\
                             =&~(y+_{r'} (y \nonor z))+_{r'} ((y\nonor z) +_{r'} z)       \tag{by Equation~(\ref{E2}), and using  $\mathrm{B}$}\\
                             =&~(y +_{r'} z)  \tag{by Equation~(\ref{E1}), substituting $r'$ for $r$}
\end{align*}
and so we can apply Lemma~\ref{NP} again, and obtain\display
\[ y +_r  z =  y +_s z\]
for all $r, s \in\, ]0,1[$.

But then we have an equivalence of our theory with that of  join-distributive bi-semilattices, where join and meet are $\nonor$  and $\nonand$. 
To translate from the theory of join-distributive bi-semilattices into our theory one translates $\nonor$  as $x_0 \nonor x_1$ and $\nonand$ as (e.g.) $x_0 +_{\half} x_1$. In the other direction, one translates $\nonor$  as $x_0 \nonor x_1$, $+_r$ as $x_0 \nonand x_1$, for all $r \in ]0,1[$, and $+_0$ and $+_1$ as $x_0$ and $x_1$, respectively. 
\end{proof}

We next consider a weaker theory: we drop the idempotence of
$\cup$. Let $C$ be the theory of commutative semigroups, that is of
algebras with an associative and commutative multiplication operation
(having dropped idempotence, the  change to a multiplicative notation
is natural). The equational theory 
$\CCSA$ of \emph{convex commutative semigroup algebras}
has as axioms those of
$\mathrm{B}$ and $\mathrm{C}$ together with the following distributive
laws, stating that multiplication distributes over the $+_r$\display 
\[x(y +_r z) = xy  +_r  xz \qquad (r \in ]0,1[)\]
The real interval $[0,1]$ provides an example 
convex commutative semigroup algebra
with the usual barycentric operations and multiplication. 

Let $\cDw$ be the finite probability distributions monad. We regard
$\cDw X$ as consisting of convex
combinations $\sum_{i = 1,m} p_ix_i$
of elements of $X$; with the evident barycentric operations it is the
free barycentric algebra over $X$. Then the free 
convex commutative semigroup algebra 
over a given commutative semigroup $M$ is provided
by the barycentric algebra  $\cDw M$, with the following
multiplication\display 
\[(\sum_{i \,=\, 1,\dots,m} p_i x_i)(  \sum_{i \,=\, 1,\dots,n} q_j y_j) =
  \sum_{\mbox{$\substack{\scriptsize i \,=\, 1,\dots,m\\ j \,=\, 1,\dots,n}$}}\!\!(p_iq_j) x_i y_j \]
and with unit the inclusion (this construction is a variant of the
standard group algebra construction). In particular, writing $\cMwp$
for the finite non-empty multisets monad, we see that  $\cDw \cMwp X $
is the free 
convex commutative semigroup algebra 
over $X$, as $\cMwp$ is the free
commutative semigroup monad.  

We regard $\cDw \cMwp X $ as consisting of all polynomials with no
constants, with variables in $X$, and with coefficients in $[0,1]$
adding up to $1$. In other words, it consists of all convex
combinations of non-trivial polynomials with variables in $X$. The
barycentric operations are the evident convex 
combinations of such
polynomials, and multiplication is the usual polynomial
multiplication. The unit $\eta\colon x \to \cDw \cMwp X$ is the
inclusion, and the extension  $\ov{f}\colon \cDw \cMwp X \to A $ to a
$\CCSA$-homomorphism  
of a map $f$ from $X$  to a 
convex commutative semigroup algebra 
 $A$  assigns to any polynomial $p(x_1,\ldots,x_n)$ in $\cDw
\cMwp X$ its value $p(f(x_1),\ldots,f(x_n)) \in A$ as obtained using
the $\CCSA$ operations of $A$. 

A convex commutative semigroup algebra 
$A$ is  \emph{complete} if,  for all
$\CCSA$-terms 
$t$ and $u$ we have\display 
\[A \models t = u \quad \Rightarrow \quad  \vdash_{\CCSA} t = u\]
This holds if, and only if, distinct polynomials in $\cDw \cMwp X $
can be separated by elements of $A$, that is, if for any such
$p(x_1,\ldots,x_n) \neq q(x_1,\ldots,x_n)$, with variables in
$x_1,\ldots,x_n$, there are $a_1,\ldots a_n \in A$ such that
$p(a_1,\ldots,a_n) \neq q(a_1,\ldots, a_n)$. For example, $[0,1]$ is
complete in this sense.

We now focus on the convex semigroup algebra 
$\cDw\cPwp X$. We need two lemmas.
\begin{lem} \label{BCDC} Let $X$ be a set with at least two
  elements. Then  $\cDw\cPwp X$ is complete. 
\end{lem}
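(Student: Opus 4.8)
The plan is to show that distinct non-constant polynomials over $X$ (with coefficients in $[0,1]$ summing to $1$) can be separated by a $\CCSA$-homomorphism from $\cDw\cPwp X$ into the concrete algebra $[0,1]$, equipped with its usual barycentric operations and ordinary multiplication; this suffices by the characterisation of completeness recalled just above. Since $X$ has at least two elements, it has enough elements to distinguish variables; more precisely, given $p(x_1,\ldots,x_n)\neq q(x_1,\ldots,x_n)$, I would exhibit a map $f\colon X\to[0,1]$ whose extension $\overline f$ satisfies $\overline f(p)=p(f(x_1),\ldots,f(x_n))\neq q(f(x_1),\ldots,f(x_n))=\overline f(q)$.

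First I would reduce the separation problem to a question about ordinary real polynomials: a non-constant polynomial $p$ with variables in $X$ evaluates, under a substitution $x_i\mapsto t_i\in[0,1]$, exactly to the value of the corresponding real polynomial in the variables $t_i$, because the $\CCSA$ operations on $[0,1]$ are the genuine arithmetic ones. So it is enough to show that two syntactically distinct such polynomials $p,q$ define different polynomial functions on some cube $[0,1]^n$ — equivalently, different formal polynomials in $\R[x_1,\ldots,x_n]$. The subtlety is that the $\cDw\cPwp$-representation builds in the idempotence of $\cup$ but \emph{not} of multiplication, so a formal element of $\cDw\cPwp X$ is a convex combination of \emph{finite subsets} of the free commutative semigroup on $X$, i.e.\ of products of distinct variables with all exponents $1$; one must check that the passage from such a formal object to a real polynomial is injective, i.e.\ that two formal objects giving the same real polynomial are already formally equal. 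This is where the hypothesis $|X|\ge 2$ and the restriction to \emph{subsets} (rather than multisets) matter: monomials of $\cPwp X$ are square-free monomials in the $x_i$, these are linearly independent as real polynomials, and a convex combination of them is determined by its coefficients, which are recoverable from the real polynomial function on $[0,1]^n$ (evaluate, or read off coefficients). I expect this injectivity bookkeeping to be the main obstacle — not deep, but needing care about exactly which monomials appear and why distinct formal convex combinations cannot collapse.

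So the key steps, in order, are: (1) recall that completeness of $\cDw\cPwp X$ is equivalent to: distinct polynomials in $\cDw\cPwp X$ are separated by elements of some concrete $\CCSA$; (2) take $A=[0,1]$ with arithmetic operations, verify it is a $\CCSA$; (3) given $p\neq q$ in $\cDw\cPwp X$, with variables among a finite $\{x_1,\ldots,x_n\}$, observe they are distinct convex combinations of square-free monomials in $x_1,\ldots,x_n$; (4) show these square-free monomials are linearly independent as functions $[0,1]^n\to\R$, hence $p$ and $q$ induce distinct functions on $[0,1]^n$; (5) conclude there is a point $(a_1,\ldots,a_n)\in[0,1]^n$ with $p(a_1,\ldots,a_n)\neq q(a_1,\ldots,a_n)$, and extend any $f\colon X\to[0,1]$ with $f(x_i)=a_i$ (using $|X|\ge 2$ only to the extent that $X$ is nonempty and can receive arbitrary values) to the required separating homomorphism $\overline f$.

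One small point to be careful about: the statement is about \emph{all} of $\cDw\cPwp X$, so $p$ and $q$ may formally involve different variable sets; one enlarges to a common finite set $\{x_1,\ldots,x_n\}$ first, which is harmless. Also, although the real interval $[0,1]$ is not closed under ordinary addition, it \emph{is} closed under the barycentric operations and under multiplication, which is all $\CCSA$ requires, so step (2) is genuinely routine; I would state it in one line. The only place the hypothesis ``$X$ has at least two elements'' is really used is to guarantee there exist distinct polynomials to separate (with one variable there would still be something to prove, but the lemma as stated only needs $|X|\ge 2$, presumably because it will be invoked for such $X$); if one wanted, one could even drop to $|X|\ge 1$ for the separation argument itself, but I would not belabour this.
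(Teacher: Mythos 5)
There is a genuine gap: you have proved (or attempted to prove) the wrong statement. Completeness of $\cDw\cPwp X$, as defined in the paper, means that every equation valid in $\cDw\cPwp X$ is a theorem of $\CCSA$; equivalently, that any two \emph{distinct polynomials of the free algebra} $\cDw\cMwp X$ (monomials are multisets, so exponents are allowed) can be separated by substituting elements $a_1,\ldots,a_n$ \emph{of $\cDw\cPwp X$ itself} and comparing the resulting values in $\cDw\cPwp X$. Your plan instead separates distinct \emph{elements of} $\cDw\cPwp X$ by homomorphisms \emph{into} $[0,1]$; that is a different property and does not yield completeness (it says nothing about whether $\cDw\cPwp X$ satisfies equations not provable in $\CCSA$). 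Moreover, even your own separation step fails as stated: the extension $\overline f$ of an arbitrary map $f\colon X\to[0,1]$ to a $\CCSA$-homomorphism exists on the free algebra $\cDw\cMwp X$, but not on $\cDw\cPwp X$ — multiplication on $\cPwp X$ is union, so $\{x\}\cdot\{x\}=\{x\}$, and a homomorphism forces $f(x)^2=f(x)$, i.e.\ $f$ must take values in $\{0,1\}$. (That $\cDw\cPwp X$ lacks this universal property is exactly the point of the appendix; $\cDw\cPwp X$ is free only over the \emph{semigroup} $\cPwp X$, so homomorphisms out of it correspond to multiplicative maps on $\cPwp X$, which are $\{0,1\}$-valued.) Your concluding remark that $|X|\ge 2$ is nearly superfluous is also wrong: for $|X|=1$ the algebra $\cDw\cPwp X$ is a single point and certainly not complete.

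The paper's proof goes the other way round. Given distinct $p\neq q$ in $\cDw\cMwp X$, completeness of $[0,1]$ (taken as known, and this is where your linear-independence/identity-theorem considerations would belong, for multiset monomials rather than square-free ones) supplies reals $r_1,\ldots,r_n\in[0,1]$ with $p(r_1,\ldots,r_n)\neq q(r_1,\ldots,r_n)$. Then one must \emph{lift} these real witnesses into $\cDw\cPwp X$: choose distinct $y,z\in X$ (this is where $|X|\ge 2$ is used), define the semigroup homomorphism $h\colon\cPwp X\to[0,1]$ with $h(\{y\})=1$ and $h(u)=0$ otherwise, extend it to a $\CCSA$-homomorphism $\overline h\colon \cDw\cPwp X\to[0,1]$, and take $a_i \eqdef \{y\}+_{r_i}\{z\}$, so that $\overline h(a_i)=r_i$. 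Then $\overline h(p(a_1,\ldots,a_n))=p(r_1,\ldots,r_n)\neq q(r_1,\ldots,r_n)=\overline h(q(a_1,\ldots,a_n))$, hence $p(a_1,\ldots,a_n)\neq q(a_1,\ldots,a_n)$ in $\cDw\cPwp X$, which is the required separation. This lifting step — producing elements of $\cDw\cPwp X$ whose images under a legitimate homomorphism realise arbitrary prescribed reals — is the heart of the lemma and is entirely absent from your proposal.
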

\begin{proof} Let $p(x_1,\ldots,x_n)$, $q(x_1,\ldots,x_n)$ be distinct
  polynomials in  $\cDw \cMwp X $, and choose $r_1,\ldots,r_n$ in
  $[0,1]$ separating them. Choose two distinct elements $y$, $z$ in
  $X$. We can define a semigroup homomorphism $h\colon \cPwp X \to
  [0,1]$ by\display   
\[h(u) = \left \{ \begin{array}{lcl}
                          1 & (u = \{y\})\\
                          0  & (\mbox{otherwise})
                       \end{array} \right .\] 
Then $h$ has an extension to a 
$\CCSA$-homomorphism $\ov{h}\colon
\cDw \cPwp X \to [0,1]$, and, taking $a_i \in  \cDw \cPwp X $ to be 
$\{y\} +_{r_i} \{z\}$, for $i = 1,n$, and noting that $\ov{h}(a_i) =
h(\{y\}) +_{r_i} h(\{z\}) = r_i$, we see that\display
\[\ov{h}(p(a_1, \ldots, a_n)) = p(\ov{h}(a_1), \ldots, \ov{h}(a_n)) = p(r_1,\ldots,r_n) \neq q(r_1,\ldots,r_n) = \ov{h}(q(a_1, \ldots, a_n))\]
and so that $a_1,\ldots,a_n$ are elements of $ \cDw \cPwp X $
separating $p$ and $q$. 
\end{proof}

\begin{lem} \label{notfree} Let  $X$ be a nonempty set and let $T$ be a subtheory of $\CCSA$. Then
  $\cDw\cPwp X$ is not the free $T$-algebra
  over $X$.
\end{lem}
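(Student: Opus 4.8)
The statement to prove is Lemma~\ref{notfree}: for a nonempty set $X$ and any subtheory $T$ of $\CCSA$, the algebra $\cDw\cPwp X$ is not the free $T$-algebra over $X$. The plan is to argue by contradiction: suppose $\cDw\cPwp X$ were the free $T$-algebra over $X$ with unit $\eta$. The case $X$ a singleton can be dispatched separately (or by noting the general argument still applies), so the interesting case is $|X|\geq 2$, which is what I would focus on. I would first extract the consequence that $\cDw\cPwp X$ is $T$-\emph{complete}, i.e., for $T$-terms $t,u$ we have $\cDw\cPwp X\models t=u \Rightarrow\; \vdash_T t=u$; indeed, this follows from Lemma~\ref{BCDC}, since $T\subseteq\CCSA$ so a $T$-identity holding in $\cDw\cPwp X$ is in particular a $\CCSA$-identity holding there, hence $\CCSA$-derivable, hence (being a $\CCSA$-consequence\,---\,wait, this is the wrong direction).

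More carefully: I want to derive a contradiction by exhibiting an equation $t=u$ between $\CCSA$-terms (equivalently, distinct polynomials $p,q\in\cDw\cPwp X$ viewed as terms over the variables occurring in them) that \emph{holds identically} in $\cDw\cPwp X$ but is \emph{not} $T$-derivable, contradicting $T$-completeness of the free algebra. The key point is that $\cup$ in $\cPwp X$ (hence in $\cDw\cPwp X$) is idempotent, so the idempotence law $x\cdot x = x$ for the semigroup operation holds in $\cDw\cPwp X$; on the other hand, since $T$ is a subtheory of $\CCSA$ and $\CCSA$ does not prove idempotence (the model $[0,1]$ with ordinary multiplication refutes it), $T$ does not prove $x\cdot x=x$ either. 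Thus $x\cdot x = x$ witnesses the failure of completeness: it holds in $\cDw\cPwp X$ but $\not\vdash_T x\cdot x = x$. But a free $T$-algebra over a nonempty set is always $T$-complete (distinct terms in the variables are separated by the generators, via the substitution sending each variable to its generator and then applying $\eta$-freeness to get the canonical quotient), so we have our contradiction. I would phrase the completeness-of-free-algebras step precisely: if $F$ is free over $X\neq\emptyset$ with unit $\eta$, and $\vdash_T t=u$ fails for $T$-terms in variables $x_1,\dots,x_n$, then the images of $t,u$ under the interpretation sending $x_i\mapsto \eta(x_i)$ are distinct in $F$ (this is exactly the statement that the term algebra modulo $T$ embeds its free-over-$X$ part), so $F\not\models t=u$.

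The main obstacle, and the only subtle point, is making sure the logical direction is right and that the hypothesis $T\subseteq\CCSA$ is used correctly: we need both (i) $\cDw\cPwp X\models x^2=x$ (immediate, from idempotence of union in $\cPwp X$ lifted through $\cDw$), and (ii) $\not\vdash_{\CCSA} x^2 = x$, which I would establish by exhibiting the $\CCSA$-model $[0,1]$ (already noted complete, but all we need is that it's a $\CCSA$-algebra in which $\tfrac12\cdot\tfrac12\neq\tfrac12$), whence $\not\vdash_T x^2=x$ since $T\subseteq\CCSA$ means every $T$-theorem is a $\CCSA$-theorem. Then combining with the general fact that free algebras over nonempty sets are complete for their theory gives the contradiction. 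A minor check: I should make sure $x^2=x$ is a legitimate $T$-equation, i.e., that the multiplication symbol belongs to the signature of $T$; this holds because $T$ is a subtheory of $\CCSA$, which shares the signature (so subtheories are over the same signature, with possibly fewer axioms). With these pieces in place the proof is short; the preceding Lemma~\ref{BCDC} is not actually needed for \emph{this} lemma, only the much weaker observation that $[0,1]$ is a $\CCSA$-algebra refuting idempotence, so I would simply cite that $[0,1]$ is a convex commutative semigroup algebra as already noted in the text.
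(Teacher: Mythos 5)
There is a genuine gap, and it is at the heart of your argument: your witness equation $x\cdot x=x$ does \emph{not} hold identically in $\cDw\cPwp X$. The idempotence of $\cup$ on $\cPwp X$ does not lift through $\cDw$ to an identity of the induced multiplication: for distinct $a,b\in X$ and $p=\tfrac12\{a\}+\tfrac12\{b\}$ one computes $p\cdot p=\tfrac14\{a\}+\tfrac12\{a,b\}+\tfrac14\{b\}\neq p$. Only the point distributions $\delta_Y$ with $Y\in\cPwp X$ are multiplicatively idempotent, not general convex combinations. So step (i) of your plan fails precisely in the case $|X|\geq 2$ that you single out, and the intended contradiction with $[0,1]$ (where indeed $\not\vdash_{\CCSA}x\cdot x=x$) evaporates. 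Two secondary caveats: the blanket claim that a free $T$-algebra over any nonempty set is complete for $T$ is false in general (the free semigroup on one generator is commutative, though commutativity is not a semigroup theorem); for your one-variable equation the instance you need would be fine, but only if the equation held identically, which it does not. And if freeness is allowed with respect to an arbitrary unit, an argument via ``the equation holds at the generators'' would additionally have to control where the unit sends $X$, since a priori it need not send generators to idempotents.

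The paper's proof uses idempotents in a way that sidesteps both problems: it needs only that $\cDw\cPwp X$ contains \emph{some} idempotent element (any $y\in\cPwp X$, e.g.\ a singleton $\{x\}$), whereas the free $\CCSA$-algebra $\cDw\cMwp X$ contains none ($p\neq pp$ for every $p$, since squaring doubles the minimal degree of the monomials occurring). Since $\cDw\cMwp X$ is a $\CCSA$-algebra and hence a $T$-algebra for any subtheory $T$, freeness of $\cDw\cPwp X$ would yield a $T$-homomorphism $h\colon\cDw\cPwp X\to\cDw\cMwp X$ (the signature is that of $\CCSA$, so $h$ preserves the multiplication), and then $h(y)=h(yy)=h(y)h(y)$ exhibits an idempotent in $\cDw\cMwp X$ --- a contradiction, with no appeal to the unit and uniformly for all nonempty $X$. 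Your approach can be repaired along exactly these lines: instead of demanding that the idempotence law hold identically, transport a single idempotent element through a homomorphism supplied by freeness into an algebra (such as $\cDw\cMwp X$, or for that matter a suitable power of $[0,1]$ would not work directly since $[0,1]$ has idempotents $0,1$ --- the multiset algebra is the right target) that has no idempotents at all.
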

\begin{proof}
First note that, for any $p \in \cDw\cMwp X$,  we have $p \neq  pp$. 
Then, if  $\cDw\cPwp X$ were the free 
$T$-algebra 
over $X$, there would be a $T$-algebra homomorphism $h\colon \cDw\cPwp X \to \cDw\cMwp X$, as $\cDw\cMwp X$ is a $\CCSA$-algebra and so a $T$-algebra. Choosing any $y \in \cPwp X$, we  would then find  
$h(y) = h(yy) = h(y)h(y)$, a contradiction.
\end{proof} 

So, in particular, for non-empty $X$,    $\cDw\cPwp X$ is, as may be
expected, not the free $\CCSA$-algebra. 
We can now prove\display

\begin{thm}
Let $X$ be a set with at least two elements.
 Then $\cDw \cPwp X$ is
not the free $T$-algebra over $X$ for any equational theory $T$ with the same 
signature as  that of
$\CCSA$.  
\end{thm}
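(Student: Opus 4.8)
The strategy is to combine the two preceding lemmas with a general fact about free algebras: if $\cDw\cPwp X$ were the free $T$-algebra over $X$ for \emph{some} equational theory $T$ in the $\CCSA$-signature, then $T$ would be forced to be (a theory equivalent to a subtheory of) $\CCSA$, and then Lemma~\ref{notfree} would give a contradiction. So the whole argument reduces to showing: the equations valid in $\cDw\cPwp X$ are exactly those derivable in $\CCSA$, i.e.\ that $T \subseteq \CCSA$ up to logical equivalence.

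First I would observe that $\cDw\cPwp X$ is itself a $\CCSA$-algebra (it is a quotient — indeed the image under the evident surjection induced by $\cMwp X \twoheadrightarrow \cPwp X$ — of the free $\CCSA$-algebra $\cDw\cMwp X$), so every equation derivable in $\CCSA$ holds in it; thus $\CCSA \subseteq T$ (again up to equivalence), where $T = \mathrm{Eq}(\cDw\cPwp X)$ is the equational theory of this algebra. For the reverse inclusion, suppose $t = u$ is an equation in variables $x_1,\dots,x_n$ that holds in $\cDw\cPwp X$. Since $X$ has at least two elements I may assume $n \geq 1$ and that $X$ contains $x_1,\dots,x_n$; evaluating $t$ and $u$ under the unit $\eta\colon X \to \cDw\cPwp X$ gives elements of $\cDw\cPwp X$, but more useful is to evaluate them in $\cDw\cMwp X$, the \emph{free} $\CCSA$-algebra, where $t$ and $u$ name polynomials $p(x_1,\dots,x_n)$ and $q(x_1,\dots,x_n)$. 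If $p \neq q$ as elements of $\cDw\cMwp X$, then by Lemma~\ref{BCDC} (completeness of $\cDw\cPwp X$: distinct polynomials are separated by elements of $\cDw\cPwp X$) there exist $a_1,\dots,a_n \in \cDw\cPwp X$ with $p(a_1,\dots,a_n) \neq q(a_1,\dots,a_n)$, contradicting the assumption that $t=u$ holds in $\cDw\cPwp X$. Hence $p = q$ in the free $\CCSA$-algebra, which is precisely to say $\vdash_{\CCSA} t = u$. Therefore $T$ and $\CCSA$ are equivalent equational theories.

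Now suppose, for contradiction, that $\cDw\cPwp X$ is the free $T$-algebra over $X$ for some equational theory $T$ with the $\CCSA$-signature. Since $\cDw\cPwp X$ is a $T$-algebra, $T \subseteq \mathrm{Eq}(\cDw\cPwp X) = \CCSA$, so $T$ is a subtheory of $\CCSA$ (up to equivalence — and one should note that being the free algebra only depends on the theory up to logical equivalence, so this is harmless). But Lemma~\ref{notfree} says precisely that $\cDw\cPwp X$ is not the free $T'$-algebra over $X$ for any subtheory $T'$ of $\CCSA$ (the proof there: a homomorphism $\cDw\cPwp X \to \cDw\cMwp X$ would have to exist by freeness, since the target is a $\CCSA$-algebra hence a $T$-algebra, and applying it to any $y \in \cPwp X$ forces $h(y) = h(y)h(y)$, impossible as no element of $\cDw\cMwp X$ is idempotent under multiplication). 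This contradiction completes the proof.

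I do not anticipate a genuine obstacle here, since all the real work is in Lemmas~\ref{BCDC} and~\ref{notfree}, which are already established; the only point requiring a little care is the reduction step, namely arguing cleanly that freeness of an algebra over $X$ depends only on the equational theory up to logical equivalence (so that "for some $T$" can be narrowed to "$T = \CCSA$") and that $\cDw\cPwp X$ is indeed a $\CCSA$-algebra — the latter because it is the quotient of the free $\CCSA$-algebra $\cDw\cMwp X$ by the congruence collapsing multisets to sets, and quotients of $\CCSA$-algebras are $\CCSA$-algebras. If one wanted to avoid invoking "up to equivalence," one can instead argue directly: freeness of $\cDw\cPwp X$ for $T$ plus $T \subseteq \CCSA$ yields, via the $\CCSA$-algebra $\cDw\cMwp X$, a $T$-homomorphism $h\colon \cDw\cPwp X \to \cDw\cMwp X$, and the idempotence argument of Lemma~\ref{notfree} applies verbatim.
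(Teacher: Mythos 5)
Your proof is correct and follows essentially the same route as the paper: assume freeness for $T$, use Lemma~\ref{BCDC} to conclude that every equation holding in $\cDw\cPwp X$ is derivable in $\CCSA$, hence $T$ is (equivalent to) a subtheory of $\CCSA$, and then contradict Lemma~\ref{notfree}. The extra observations you add (that $\cDw\cPwp X$ is a $\CCSA$-algebra as a quotient of $\cDw\cMwp X$, and the remark about freeness depending only on the theory up to logical equivalence) are correct but only make explicit what the paper leaves implicit.
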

\begin{proof}
Suppose, for the sake of contradiction that $ \cDw\cPwp X$ is the free $T$-algebra over $X$  for an equational theory $T$ with the same signature as $\CCSA$. Then, in particular, it is a $T$-algebra, and so all $T$-equations hold in it. But, by Lemma~\ref{BCDC} all equations holding in it are in $\CCSA$. So $ \cDw\cPwp X$ is the free algebra for a subtheory of $\CCSA$. However, by Lemma~\ref{notfree}, that cannot be the case.
\end{proof}

So we have shown that there is no algebraic (i.e., equational) account
of the natural random set algebras $ \cDw\cPwp X$. It may even be that   
$\cDw\!\circ\!\cPwp$ admits no monadic structure.

\section{A counterexample}\label{subsec:app}

\newcommand{\thing}{f}
\newcommand{\alphag}{g}
\newcommand{\red}[1]{#1}

\newcommand{\one }{\mathrm{k}_1}

Referring to Lemma \ref{lem:clever} and the preceding
discussion,  we give an example of a continuous \KS\ $E$ whose scalar
multiplication does 
not preserve the way-below relation $\ll_E$. That is, there are
elements $a\ll_E b$ in $E$ such that $ra \not\ll_E rb$ for some
$r<1$. 

We consider the half-open unit interval $]0,1]$
with its upper (= Scott) topology; the open
subsets are the half-open intervals $]r,1]$, $0\leq r\leq 1$. 
Let $\cL$ be the continuous d-cone of continuous functions $\thing\colon
]0,1]\to\oRp$ (in classical analysis one would have said that the
$\thing\in\cL$ are the monotone increasing lower semicontinuous
functions). Such a function $\thing$ has a greatest value, namely
$\thing(1)$. 
We write $\cL_{\leq 1}$ and $\cL_{\leq 2}$ for 
the Scott-closed, hence continuous, \KSs\ of functions $\thing\in\cL$
such that $\thing(1)\leq 1$ and $\thing(1)\leq 2$, respectively. 
We write $\one$  for the constant function 1 on $]0,1]$.

Now let $E$ be the collection of those $\thing\in\cL_{\leq 2}$ which can be
represented as a  
convex combination $f = q\cdot 2\one + (1-q)\cdot h\ (0\leq q \leq 1)$ of the constant function $2\one$  with some
$h \in\cL_{\leq 1}$. This provides our counterexample. 

We first claim that $E$ forms a sub-\KS\ of $\cL_{\leq 2}$. It is clearly
convex and contains the least element of $\cL_{\leq 2}$. To show $E$ is a
sub-dcpo of $\cL_{\leq 2}$ with the inherited ordering, we set 
\[
\red{E' \eqdef \{\thing \in \cL_{\leq 2}\mid 2\thing(1)\one    \leq   \thing + 2\one\}}
\] 
and, \red{noting that} 
$E'$ forms a sub-dcpo, show 
that $E$ and $E'$
coincide. 
%

It is clear that $E \subseteq E'$. Conversely,
suppose $\thing \in E'$ \red{so that $2\thing(1)\one    \leq   \thing + 2\one$}.
If $\thing(1) = 2$, then
$\thing =2\one$  and so $\thing\in E$. If $\thing(1)\leq 1$, then trivially
$\thing\in\cL_{\leq 1}\subseteq E$. Thus we can suppose that $1<\thing(1)<2$. 
We can rewrite the condition for membership in $E'$ as $2(\thing(1)-1)\one  \leq\thing$
and we let
$\widehat\thing = 
\thing-2(\thing(1)-1)\one$. Since $\widehat\thing(1) = \thing(1) - 2(\thing(1)-1) = 
2-\thing(1) \red{\,(\neq 0)\,}$,
we have $\frac{\widehat\thing(1)}{2-\thing(1)} = 1$, so
that  $\frac{\widehat\thing}{2-\thing(1)}\in\cL_{\leq 1}$. Letting $q = \thing(1)-1$
we have $1-q =2-\thing(1)$ and $0<q<1$\makeblue{, since $1<\thing(1)<2$,}   and $\thing$ becomes a convex
combination of the constant function $\red{2}\one$  and the function
$\frac{\widehat\thing}{2-\thing(1)}\in\cL_{\leq 1}$, namely 
$\thing =  \red{2(\thing(1)-1)\one  +\widehat \thing = q\cdot 2\one  + (1-q)\cdot\frac{\widehat\thing}{1-q}}$, 
so that $\thing\in E$.

 
 
 \red{We remark that $E$ is not a full \KS. For example, writing
 $\chi_s$ for the characteristic function of the interval
$]s,1]$  ($0 < s < 1$), we have $\chi_s \leq \half\cdot (2\one)$. However,
  we cannot have $\chi_s = \half\cdot\thing$ for any $\thing\in E$, as we
would then have $f(1) = 2$ and so, by the defining property of $E'$, $f = 2\one$.
But, as  $\chi_s = \half\cdot\thing$, we have $f(r) = 0$ for any  $0 < r < s$. 
}

We next claim that $E$ is continuous. For this, it is enough to show
that each element $\thing$ is the lub of a directed set of elements
$\ll_E$ below it. 
In case $\thing(1) \leq 1$,  any element in $\cL_{\leq 2}$
below it is also in $E$, and we use the continuity of $\cL_{\leq 2}$ to get a
directed set of elements $\ll_{\cL_{\leq 2}}$ below it, and so $\ll_E$ 
below it, as $E$ is a sub-dcpo of $\cL_{\leq 2}$.

\red{Otherwise}  
$\thing(1)>1$, \red{and we can again set}
$\widehat \thing = \thing-2(\thing(1)-1)\one$.
Since $\widehat
\thing(1)= 
 2-\thing(1) <1$, we have $\widehat\thing\in\cL_{\leq 1}$. Let 
\[\rho_{p,\alphag}= 2p\one  +\alphag\]
For 
$\red{0} <p<\thing(1)-1$ and $\alphag \ll_{\cL_{\leq 1}} \widehat\thing$, we obtain a 
family of functions which clearly is directed and has $2(\thing(1)
-1)\one  + \widehat \thing = \thing$ as its least upper bound. Each of these
functions belongs to $E$, since it can be written as a convex
combination $p\cdot 2\one  + (1-p)\cdot\frac{\alphag}{1-p}$ and since  
$\frac{\alphag}{1-p}\in\cL_{\leq 1}$ \red{(the latter as 
$\alphag(1) \leq \widehat\thing(1) =  2 - \thing(1) < 1 - p$)}.



It remains to show that $\rho_{p,\alphag}\ll_E \thing$.
For this, let $(\thing_i)_i$ be a directed family in $E$ such that
$\thing\leq\dsup_{i}\thing_i$. Then 
$\thing(1)\leq \dsup_{i}\thing_i(1)$ so that there is an $i_0$ such that 
$p< \thing_{i_0}(1)-1$, \red{since $p < f(1) -1 $}.
 We now restrict our attention to
the indices \makeblue{$i$ 
such that $\thing_i \geq \thing_{i_0}$.}  
Since these $\thing_{i}$ belong to $E=E'$,  
they
satisfy $2(\thing_{i}(1) -1)\one  \leq \thing_{i}$. 
It follows that \red{$2p\one \leq 2(\thing_{i}(1) -1)\one  \leq  \thing_{i}$,}
\makeblue{
 that is $0\leq f_i -2p\one\in\cL$. As  
 $\thing-2p\one  \leq(\dsup_{i}\thing_i) - 2p\one  = \dsup_{i}(\thing_i -2p\one)$, we then have $\thing-2p\one  \leq_{\cL} \dsup_{i}(\thing_i -2p\one)$. Since $\alphag\ll_{\cL_{\leq 1}} \widehat\thing = f-2(f(1)-1)\one\leq_{\cL} f-2p\one$, we also have $\alphag\ll_\cL f-2p\one$, and so 
$\alphag\leq \thing_{i_1} -2p\one$  for some $i_1\geq i_0$. Thus 
$\rho_{p,\alphag} \leq 2p\one+\alphag\leq \thing_{i_1}$.}

Now that we have shown $E$ to be a continuous Kegelspitze, it only remains to
see that, as claimed, scalar multiplication does not
preserve $\ll_E$. One the one hand, from the above discussion we have
$\one  \ll_E 2\one$
\red{(for, setting $f = 2\one$  and $p = 1/2$, we see that $\widehat f = \perp$ and $p < f(1) - 1$, and then that $\one  = \rho_{1/2, \perp}$)}.
However, on the other hand, we have  $\half\cdot\one\not\ll_E\one$   \red{(for $\one = \dsup_{n > 0} \chi_{2^{-n}}$, but we have  $\half\cdot\one  \leq \chi_{2^{-n}}$ for no $n > 0$)}.

\end{document}